\newtheorem{theorem}{Theorem}[section]                                          
\newtheorem{lemma}[theorem]{Lemma}
\newtheorem{corollary}[theorem]{Corollary}
\newtheorem{definition}{Definition}[section]
\newtheorem{remark}{Remark}[section]
\newcommand{\ketbra}[2]{\left| #1\right\rangle\left\langle #2\right|}
\newcommand{\Tr}{\hbox{\rm tr}}
\begin{document}

\title{Supercritical Poincar\'e-Andronov-Hopf bifurcation in a mean field quantum laser equation}

\author{ F. Fagnola  
\thanks{Dipartimento di Matematica, Politecnico di Milano, 
I-20133, Milano, Italy.  e-mail: franco.fagnola@polimi.it } 
\and 
C.M. Mora 
\thanks{Departamento de Ingenier\'{\i}a Matem\'{a}tica, Universidad de Concepci\'on, 
4089100, Casilla 160-C, Concepci\'on, Chile.
e-mail: cmora@ing-mat.udec.cl}
\thanks{
Supported by project VRID-Enlace 218.013.043-1.0 of Universidad de Concepci\'on.}
}

\maketitle

\begin{abstract}
We deal with the dynamical system properties of a Gorini-Kossakowski-Sudarshan-Lindblad (GKSL) equation 
with mean-field Hamiltonian that models a simple laser by applying a mean field approximation to 
a quantum system describing a single-mode optical cavity and a set of two level atoms,
each coupled to a reservoir.
We prove that the mean field quantum master equation
has a unique regular stationary solution.
In case a relevant parameter $C_\mathfrak{b} $, 
i.e., the cavity cooperative parameter, 
is less than $1$, 
we prove that any regular solution converges exponentially fast to the equilibrium,
and so the regular stationary state is a globally asymptotically stable equilibrium solution.
We obtain that a locally exponential stable limit cycle 
is born at the regular stationary state as $C_\mathfrak{b} $
passes through the critical value $1$.
Then,
the  mean-field laser equation  has a Poincar\'e-Andronov-Hopf bifurcation at $C_\mathfrak{b} =1 $ of supercritical-like type.
Namely,  
we derive  rigorously, at the level of density matrices --for the first time--, 
the transition from a global attractor quantum state, where the light is not emitted,
to a locally stable set of coherent quantum states producing coherent light.
Moreover,
we establish the local exponential stability of the limit cycle
in case a relevant parameter is between the first and second laser thresholds appearing in the semiclassical laser theory.
Thus,
we get that the coherent laser light  persists over time under this condition. 
In order to prove the exponential convergence of the quantum state, as the time goes to $+ \infty$,
we develop a new technique for proving the exponential convergence in open quantum systems
that is based in a new variation of constant formula,
which is obtained by combining probabilistic techniques with classical arguments from the semigroup theory.  
Furthermore, 
applying our main results we find the long-time behavior of the von Neumann entropy, the photon-number statistics,
and the quantum variance of the quadratures.
\end{abstract}

\vspace{2pc}
\noindent{\it Keywords}: 
Open quantum system, mean-field quantum master equation, laser dynamics,
Hopf bifurcation, attractor, periodic solutions, exponential convergence, variation of constant formula.

\vspace{1pc}
\noindent{\it AMS}: 
37L10, 37L05, 37L15, 37A60, 47A55, 60H30, 81S22, 82C10.

%


\section{Introduction}
\label{sec:Introduction}


In this paper we prove rigorously the occurrence of a supercritical Poincar\'e-Andronov-Hopf bifurcation  (Hopf bifurcation for short),
at the level of density operators, in a full quantum laser model.
Thus,
we develop the understanding of the dynamical systems properties of 
the infinite-dimensional open quantum systems.

We study a laser composed of  many identical  two-level atoms with transition frequency $\omega$, as a gain medium,
that interact with an electromagnetic field, with resonance frequency $\omega$, propagating in one direction
(see, e.g., \cite{BreuerPetruccione,Ohtsubo2013,TartwijkAgrawal1998}).
The atoms make spontaneously  downward and upward transitions at rates 
$\kappa_-$ and $\kappa_+$, respectively.
The photons  leave the resonant mode of the radiation field at rate $2 \kappa$
due to the light output, together with losses in the resonator.
Under the mean field approximation, 
as the number  of non-interacting two-level atoms goes to $\infty$,
the laser evolution is described by 
the following effective Gorini-Kossakowski-Sudarshan-Lindblad  (GKSL for short) equation 
\begin{align}
 \label{eq:Laser1}
 \frac{d }{dt} \rho_t
& =
 -\mathrm{i}  \left[ \mathcal{H} \left( \rho_t \right) ,  \rho_t \right] 
+  
\kappa \left( 2 \,  a\,  \rho_t a^\dagger - 
a^\dagger a  \rho_t -  \rho_t a^\dagger a\right) 
\\
\nonumber
&   \quad
+  \frac{\kappa_-}{2} \left(  2 \, \sigma^{-}  \rho_t \,\sigma^{+} 
- \sigma^{+} \sigma^{-}  \rho_t 
-  \rho_t \,\sigma^+\sigma^{-}\right) 
+ \frac{ \kappa_+ }{ 2 } \left( 2 \,  \sigma^+  \rho_t \,\sigma^{-} 
- \sigma^{-}\sigma^+  \rho_t
- \rho_t \,\sigma^{-}\sigma^+\right)
\end{align}
having the  mean-field Hamiltonian
\[
\mathcal{H} \left( \varrho \right) 
 = 
\omega \left( a^\dagger a + \frac{1}{2} \sigma^{3} \right) 
+ \mathrm{i} \, g 
 \left(
 \Bigl(  \Tr \left( \sigma^{-}  \varrho \right) a^{\dagger}  -  \Tr \left( \sigma^{+}  \varrho  \right) a \Bigr)
  + 
\Bigl(  \Tr\left( a^{\dagger}  \varrho  \right) \sigma^{-}  -  \Tr\left( a \,  \varrho  \right) \sigma^{+} \Bigr)   
  \right) 
\]
(see, e.g., \cite{BreuerPetruccione,Spohn1980,Mori2013}),
where the constant $g \in \mathbb{R} \setminus \left\{ 0 \right\}$ characterizes the coupling between atoms and the field mode.
Here,
the unknown 
$\rho_t$ is a non-negative trace-class operator on $\ell^2 \left(\mathbb{Z}_+ \right) \otimes \mathbb{C}^2 $, 
$\omega$ is a real number, $\kappa,\kappa_+,\kappa_-  \in \left] 0, + \infty \right[$,
$
\sigma^{+} =
\begin{pmatrix}
 0 & 1 \\ 0 & 0
\end{pmatrix} 
$,
$
\sigma^{-} =
\begin{pmatrix}
 0 & 0 \\ 1 & 0
\end{pmatrix}
$,
$
\sigma^{3} =
\begin{pmatrix}
 1 & 0 \\ 0 & -1
\end{pmatrix} 
$,
and the  closed operators $a^{\dagger}$, $a$ on $\ell^2 \left(\mathbb{Z}_+ \right)$
are defined by
$a^{\dagger}e_{n} = \sqrt{ n+1} \, e_{n+1}$ for all $n \in   \mathbb{Z}_{+}$
and 
$
ae_{n} 
=
\begin{cases}  
\sqrt{n} \, e_{n-1}  &  \text{ if }   n \in  \mathbb{N} 
 \\ 0  &  \text{ if } n = 0  
\end{cases} 
$,
where
$(e_n)_{n\ge 0}$ denotes the standard basis of $\ell^2(\mathbb{Z_+})$.
The non-linear quantum master equation \eqref{eq:Laser1}
reproduces the Dicke-Haken-Lax model of the laser,
and can be formally obtained from the dissipative Tavis-Cummings 
model  governing the unidirectional ring-cavity laser with  $n_{atom}$  atoms
by taking the limit as $n_{atom} \rightarrow \infty$ of the partial trace with respect to  $n_{atom}-1$ atoms of the full density operator
(see, e.g., Section 3.7.3 of \cite{BreuerPetruccione}, Section V.E of \cite{Spohn1980}, and \cite{Mori2013}).

We are interested in investigating the long-term stable behavior of the solution to \eqref{eq:Laser1}.
Lasers can show stable or unstable behaviors according to the operating conditions.
Numerous investigations on the qualitative properties of  the laser dynamics
have been devoted essentially to the application of the  linear stability analysis to
complex ordinary differential equations describing  the expectation values of some quantum observables
like rate equations and semiclassical laser models
(see, e.g., \cite{AlliSewell1995,erneux_glorieux_2010,Haken1985,HeppLieb1973,Khanin2006,NingHaken1990,Ohtsubo2013,TartwijkAgrawal1998}).
In the physical situation under consideration,
using \eqref{eq:Laser1} we obtain that 
$ A \left( t \right) = \Tr\left(  \rho_{t}  \, a  \right) $,
$ S \left( t  \right)  = \Tr\left(  \rho_{t} \, \sigma^{-}  \right) $
and
$ D \left( t  \right) = \Tr\left(  \rho_{t} \,  \sigma^{3} \right) $
satisfy
\begin{equation}
 \label{eq:Lorenz}
 \left\{ 
 \begin{aligned} 
  \frac{d}{dt} A \left( t \right)
 & = 
 - \left( \kappa + \mathrm{i} \, \omega \right) A \left( t \right)  + g \ S \left( t \right) 
 \\
  \frac{d}{dt} S \left( t  \right)
&  =  
 - \left( \gamma + \mathrm{i} \, \omega \right)   S \left( t  \right)
 + g \ A  \left( t \right)   D \left( t  \right)
 \\
 \frac{d}{dt} D \left( t  \right)
&  =  
-  4 g \ \Re \left(
 \overline{ A  \left( t \right)}   S \left( t  \right) 
 \right)
 - 2 \gamma \left(  D \left( t  \right) - d \right) 
\end{aligned}
  \right. ,
\end{equation}
where 
$
d = \left( \kappa_+ - \kappa_- \right) / \left(  \kappa_- + \kappa_+ \right) 
$
and
$\gamma = \left(  \kappa_-  + \kappa_+  \right)/2$
 (see, e.g., \cite{FagMora2019}).
 In \cite{AlliSewell1995,HeppLieb1973},
versions of (\ref{eq:Lorenz})  are derived by taking limit in many body linear quantum master equations
(see, e.g., \cite{Bagarello2002} for a study of the relation between the models considered in \cite{AlliSewell1995} and \cite{HeppLieb1973}).
 In the semiclassical laser theory,
\eqref{eq:Lorenz} describes the dynamics of 
the field, polarization and population inversion 
(i.e., $\Tr\left( \rho_{t}   \, a \right)$, $\Tr\left( \rho_{t} \, \sigma^{-}  \right)$ and $ \Tr\left(  \rho_{t} \, \sigma^{3}  \right)$,
respectively)
of ring lasers such as  far-infrared $NH_3$ lasers  (see, e.g., \cite{Haken1985,Ohtsubo2013,TartwijkAgrawal1998}). 
The Maxwell-Bloch equations \eqref{eq:Lorenz} develop a stable set of periodic solutions 
from the stable fixed point $\left( 0, 0, d \right)$ as  the cavity cooperative parameter 
\[
C_\mathfrak{b} 
:= 
 \frac{g^2 \, d}{\kappa \, \gamma}
 =
 \frac{ 2 g^2 \left( \kappa_+ - \kappa_- \right) }{ \kappa \left( \kappa_-  + \kappa_+  \right)^2 }
\]
crosses $1$
(see, e.g., \cite{AlliSewell1995,BreuerPetruccione,Fowler1982,HeppLieb1973,NingHaken1990}),
and so 
\eqref{eq:Lorenz} undergoes a supercritical Hopf bifurcation at $C_\mathfrak{b} =1 $.

Unlike semiclassical models,
quantum master equations in GKSL form, and their mean-field approximations,
describe the quantum mechanical properties, not only mean values, of both the atoms and the light fields,
and hence they capture very well quantum effects like 
coherence, correlations, spontaneous emissions and photon-number statistics (see, e.g., \cite{Haken1985}).
This motivates the study of the dynamical properties of the evolution of density operators
representing laser states.
In this direction,
numerical studies of  the bifurcation structure of the steady state 
of quantum master equations in GKSL form
have been carried out by, e.g., \cite{ArmenMabuchi2006,Ivanchenko2017,Meaney2010,YusipovIvanchenko2019}.
In a different physical context like mirrorless lasers,
the superradiance phase transitions has been studied in depth 
(see, e.g., \cite{BreuerPetruccione,HarocheRaimond2006,HeppLiebAP1973}).


In this paper, we establish rigorously
the qualitative changes in the dynamics of 
the solution to the mean field laser equation \eqref{eq:Laser1},
at the level of density matrices,
as the parameter 
$
C_\mathfrak{b} 
$
passes through the critical value $1$.
We present the first mathematical proof --to the best of our knowledge--
of  a supercritical Hopf bifurcation in infinite dimensional GKSL-like equations,
and at the same time
we get
the transition from a global attractor state where the light is not emitted
to a locally stable set of coherent states producing coherent light.
Moreover, 
we prove that the difference between $\rho_t$ and a certain periodic function of  coherent states
converges exponentially fast  to  $0$ as $t \rightarrow + \infty$ 
whenever
$ \kappa^2 +  5 \kappa \gamma >  \gamma \left( \kappa - 3 \gamma \right) C_\mathfrak{b}$
and 
$\rho_0$ is in a neighborhood of certain coherent states.
Thus,
the coherent laser light  persists over time if $C_\mathfrak{b}$ is between the first and second laser thresholds.
The above two physical phenomena are explained in the semiclassical laser theory 
by the Maxwell-Bloch equations \eqref{eq:Lorenz},
but a full quantum foundation was not yet given.
From the dynamical systems viewpoint,
\eqref{eq:Laser1} is a  model problem for understanding the behavior  of the mean-field GKSL master equations,
which generate  non-linear quantum dynamical semigroups (see, e.g., \cite{AlickiLendi2007,Kolokoltsov2010,MerkliBerman2012}).
Indeed,
\eqref{eq:Laser1} could play a role in open quantum systems
similar to the one played by the Lorenz equations in finite-dimensional dynamical systems.

In our analysis, first we show that
\begin{equation}
 \label{eq:I10}
 \varrho_{\infty} :=
 \ketbra{e_0}{e_0}
\otimes 
\left( \frac{d+1}{2}  \ketbra{e_+}{e_+}  + \frac{1-d}{2}   \ketbra{e_-}{e_-}  \right)
\end{equation}
is the unique $N$-regular stationary state for (\ref{eq:Laser1}) with $\omega \neq 0$, 
the physical situation we are interested in. 
That is, 
if $\omega \neq 0$, then 
(\ref{eq:I10}) is  the unique density operator $\varrho$  such that 
$\rho_t \equiv \varrho$ satisfies (\ref{eq:Laser1}) and $\varrho$  is $N$-regular, 
which means, roughly speaking, that
the trace of  $ a^\dagger a \, \varrho $ is well defined
(see Section \ref{subsec:not} for the definition of $N$-regular density operator).
This invariant solution yields 
the unique stationary solution of (\ref{eq:Lorenz}).
In case $ C_\mathfrak{b} < 1$ and $\omega \neq 0$,
we obtain that $\rho_t$ converges in the trace norm exponentially fast to $\varrho_{\infty}$ as $t \rightarrow + \infty$,
and hence
$\varrho_{\infty}$ is the global attractor for \eqref{eq:Laser1}.
In the state $ \varrho_{\infty}$ the light is not emitted,
and hence we  quickly perceive a faint light output 
when the normalized pump parameter  $d$  is below $ \kappa \, \gamma / g^2 $.

Second,
we  consider the free interaction solutions to \eqref{eq:Laser1} with $\omega \neq 0$,
that is,
the solutions of \eqref{eq:Laser1} 
that also satisfy the Liouville-Von Neumann equation 
\begin{equation*}
\frac{d }{dt} \rho_t
 = - \mathrm{i}  \, \omega
\left[  a^\dagger a +\sigma^{3} / 2, \rho_t  \right] ,
\end{equation*}
which describes 
the evolution of the physical system
in absence of interactions between the laser mode, atoms and the bath.
If $ C_\mathfrak{b} \leq 1$,
then we deduce that $\varrho_{\infty}$ is the unique $N$-regular free interaction solution to (\ref{eq:Laser1}).
In case 
the cavity cooperative parameter $C_\mathfrak{b}$ is greater than $1$ and $\omega \neq 0$,
we obtain that all the non-constant $N$-regular free interaction solutions are of the form
$ t \mapsto \varrho_c \left( \omega t - \theta \right)$ for  any $\theta \in \left[ 0, 2 \pi \right[$,
where for each $\vartheta \in \mathbb{R}$ we set
\begin{align}
\label{eq:Def_orbita}
& 
\varrho_c \left( \vartheta \right)
=
\\ \nonumber
&
\ketbra{  \mathcal{E} \left(  \frac{ \gamma \sqrt{ C_\mathfrak{b} -1 } }{ \sqrt{2} \left| g \right|} 
\hbox{\rm e}^{- \mathrm{i} \vartheta}   \right)} 
{  \mathcal{E} \left(\frac{ \gamma \sqrt{ C_\mathfrak{b} -1 }}{ \sqrt{2} \left| g \right| }  
\hbox{\rm e}^{-\mathrm{ i } \vartheta}   \right)}
\otimes 
\begin{pmatrix}
 \frac{1}{2} \left( 1 + \frac{ d }{ C_\mathfrak{b} }  \right)
 &
 \hbox{\rm e}^{- \mathrm{ i } \vartheta}  \frac{ \kappa \gamma }{ \sqrt{2} g \left| g \right|} \sqrt{ C_\mathfrak{b} -1 }
 \\
 \hbox{\rm e}^{ \mathrm{i} \vartheta }  \frac{ \kappa \gamma }{ \sqrt{2} g \left| g \right|} \sqrt{ C_\mathfrak{b} -1 }
 &
 \frac{1}{2} \left( 1 - \frac{d }{ C_\mathfrak{b} }  \right) 
\end{pmatrix} ,
\end{align}
the coherent vector $ \mathcal{E} \left( \zeta \right) $ associated with $\zeta \in \mathbb{C}$ is defined by 
\begin{equation}
\label{eq:Coherent-Vector}
 \mathcal{E} \left( \zeta \right) 
=
\exp \left(  - \left\vert \zeta \right\vert^2 / 2 \right)
\sum_{n = 0}^{+ \infty} \zeta^n e_n / \sqrt{ n !} ,
\end{equation}
and $(e_n)_{n\ge 0}$ stands for the canonical orthonormal basis of $\ell^2(\mathbb{Z_+})$.
Thus,
the laser emits coherent light when the normalized pump parameter  $d$  exceeds the barrier  $ \kappa \, \gamma / g^2 $,
and the periodic solutions  $ t \mapsto \varrho_c \left( \omega t - \theta \right)$ yield  periodic solutions of \eqref{eq:Lorenz}.
In the dynamical system language,
the phase path of all non-constant $N$-regular free interaction solutions of  (\ref{eq:Laser1}) 
gives the closed orbit $\left\{ \varrho_c \left( \vartheta \right): \vartheta \in \left[ 0, 2 \pi \right] \right\}$.


We prove that the  cycle $\left\{ \varrho_c \left( \vartheta \right): \vartheta \in \left[ 0, 2 \pi \right] \right\}$
is locally exponential stable  whenever $\kappa \leq 3 \gamma $  (the cavity is not too lossy),
or
$\kappa > 3 \gamma $ with  
$
\left(
\kappa^2 +  5 \kappa \gamma
\right)
/
\left(
\gamma \left( \kappa - 3 \gamma \right) 
\right)
>  
C_\mathfrak{b} 
$,
and so 
$\left\{ \varrho_c \left( \vartheta \right): \vartheta \in \left[ 0, 2 \pi \right] \right\}$
is an attractive limit cycle in the phase space if $C_\mathfrak{b} >1 $ is close to $1$.
Hence,
\eqref{eq:Laser1} has a Hopf bifurcation at $C_\mathfrak{b} =1 $ of supercritical-like type.
As far as we know, 
this is the first time that Hopf bifurcation
is rigorously established at the level of (infinite dimensional) density matrices 
in the study of nonlinear evolutions of open quantum systems.
The bad-cavity condition $\kappa > 3 \gamma $,
which is paraphrased as the relaxation time of the atoms  is greater than three times the relaxation time of the field,
takes place in lasers of type C (see, e.g., \cite{Khanin2006,Ohtsubo2013,TartwijkAgrawal1998}).
In this case,
we have proved that the laser beam is stable when  the normalized pump parameter  is in the interval
\[
\frac{\kappa \, \gamma}{g^2 } <
d
<  
\frac{ \kappa^3 +  5 \kappa^2 \gamma}{  g^2 \left( \kappa - 3 \gamma \right)} .
\]
If $C_\mathfrak{b}$ is beyond the second threshold
$
\left(
\kappa^2 +  5 \kappa \gamma
\right)
/
\left(
\gamma \left( \kappa - 3 \gamma \right) 
\right)
$,
then
the set of known periodic solutions of \eqref{eq:Lorenz} 
loses its stability.

Third,
the mean values and quantum fluctuations of unbounded observables like quantum quadratures 
provide important information about the laser behavior.
We study the long time behavior of the unbounded operators $A$ 
that are relatively bounded with respect to the number operator $a^\dagger a$.
If $ C_\mathfrak{b} < 1$,
then we get the exponential convergence of the mean value of $A$
to the trace of $A \, \varrho_{\infty}$ as the time goes to $+ \infty$.
In case  $ C_\mathfrak{b} > 1$ and $ \kappa^2 +  5 \kappa \gamma >  \gamma \left( \kappa - 3 \gamma \right) C_\mathfrak{b}$
we prove that
$
\Tr \left(  \rho_t  \, A \right)
 -
 \Tr \left(   \varrho_c \left( \omega t - \theta \right) A \right)
$
converges exponentially fast to $0$ as $t \rightarrow + \infty$,
for certain $\theta \in \left[ 0, 2 \pi \right[$,
whenever 
$ \rho_0$ is close enough to the limit cycle  $\left\{ \varrho_c \left( \vartheta \right): \vartheta \in \left[ 0, 2 \pi \right] \right\}$.
Thus,
we determine how the full quantum dynamics described by \eqref{eq:Laser1}
leads to the occurrence of the supercritical-like Hopf bifurcation  in \eqref{eq:Lorenz} at $C_\mathfrak{b} =1$.
In addition,
we characterize, for instance,  the long time behavior of
the photon-number statistics, the quantum variance of the quadratures, and the von Neumann entropy.

In \cite{FagMora2019}  we prove that the mean field quantum laser equation (\ref{eq:Laser1}) 
has a unique $N$-regular solution
--in a weak sense--, and we obtain (\ref{eq:Lorenz}) from (\ref{eq:Laser1}).
To this end, in \cite{FagMora2019} we get the existence and uniqueness of the $N$-regular solution to
the non-homogeneous GKSL equation 
\begin{equation}
\label{eq:AuxiliarGKSL}
\begin{aligned}
 \frac{d }{dt} \rho_t
& =
 -\mathrm{i}  \left[ \omega \left( a^\dagger a + \frac{1}{2} \sigma^{3} \right)  ,  \rho_t \right] 
+
\left[ 
\alpha \left( t \right) a^\dagger -  \overline{\alpha \left( t \right)} a 
+ \overline{\beta  \left( t \right) } \sigma^{-}  - \beta  \left( t \right) \sigma^+  ,
\rho_t \right] 
\\
&   \quad
+  
\kappa \left( 2 \,  a\,  \rho_t a^\dagger - 
a^\dagger a  \rho_t -  \rho_t a^\dagger a\right) 
+  \frac{\kappa_-}{2} \left(  2 \, \sigma^{-}  \rho_t \,\sigma^{+} 
- \sigma^{+} \sigma^{-}  \rho_t 
-  \rho_t \,\sigma^+\sigma^{-}\right) 
\\
&   \quad
+ \frac{ \kappa_+ }{ 2 } \left( 2 \,  \sigma^+  \rho_t \,\sigma^{-} 
- \sigma^{-}\sigma^+  \rho_t
- \rho_t \,\sigma^{-}\sigma^+\right) ,
\end{aligned}
\end{equation}
where 
$\alpha, \beta : \left[ 0 , \infty \right[ \rightarrow \mathbb{C}$
are continuous
and
$ \rho_t \in \mathfrak{L}_{1}^{+}\left( \ell^2(\mathbb{Z}_+)\otimes \mathbb{C}^2 \right)$,
as well as  we derive the equation of motions of the mean values of $a$, $\sigma^{-}$ and $ \sigma^{3}$
with respect to the $N$-regular solution to  (\ref{eq:AuxiliarGKSL}).
This study is based on the stochastic Schr\"odinger equations 
(see, e.g., \cite{Barchielli,Benoist2020,BreuerPetruccione,MoraReAAP}),
which provide probabilistic representations --unravelings-- of (\ref{eq:AuxiliarGKSL}).
Thus,
in \cite{FagMora2019} we deduce that the $N$-regular solution of  (\ref{eq:Laser1}) 
coincides with  the $N$-regular solution of  (\ref{eq:AuxiliarGKSL})  with
$\alpha \left( t \right) = g \,  S \left( t  \right)$ and $\beta \left( t \right) = g \,  A\left( t  \right)$.
In the  current paper,
we obtain  dynamical systems properties of  (\ref{eq:Laser1})
by treating the long-time behavior of  (\ref{eq:AuxiliarGKSL}) coupled to (\ref{eq:Lorenz})
via $\alpha \left( t \right) = g \,  S\left( t  \right)$ and $\beta \left( t \right) = g \,  A\left( t  \right)$.
For this purpose,
we develop a new variation of constant formula for (\ref{eq:AuxiliarGKSL}),
which is proved by combining classical arguments from the semigroup theory
with an analysis involving the linear stochastic Schr\"odinger equation (\ref{eq:SSE}) given below. 
Moreover,
we deduce the exponential convergence of the solution of  (\ref{eq:AuxiliarGKSL}) to its equilibrium state
in case $\alpha \left( t \right)$ and $\beta \left( t \right)$ are  constant functions.
To do this, we estimate, loosely speaking,
the rate of decoupling of the atoms and the electromagnetic field,
as well as we obtain the exponential convergence of the atoms and the field to their equilibrium states
when we neglect the interaction between them. 
Then,
we prove the exponential convergence of the solution of  (\ref{eq:Laser1}) to its invariant sets 
by means of perturbation techniques 
applied to (\ref{eq:AuxiliarGKSL}) coupled to (\ref{eq:Lorenz}),
which is a new way  to handle the long-time behavior of open quantum systems.
In this analysis we use  a unitary transformation of  (\ref{eq:Laser1}) to treat the limit cycle of (\ref{eq:Lorenz}),
which leads to study  the asymptotic behavior of (\ref{eq:AuxiliarGKSL}) with $\omega = 0$.

We organize the article in three main sections.
Section \ref{sec:QuantumBifurcation} states the main results of this paper.
In Section \ref{sec:LinearQMEs} we address (\ref{eq:AuxiliarGKSL}).
Section \ref{sec:Proofs} presents the proofs of all theorems,
where 
we use the results given in Section \ref{sec:LinearQMEs} to prove 
the theorems stated in Section \ref{sec:QuantumBifurcation}.

\subsection{Notation}
\label{subsec:not}

As far as possible,
we use the same notation as in \cite{FagMora2019}.
Thus,
we consider a separable complex Hilbert space $\left(\mathfrak{h},\left\langle \cdot,\cdot\right\rangle \right) $,
whose scalar product $\left\langle \cdot,\cdot \right\rangle $ is anti-linear in the first variable and linear in the second one.
The canonical orthonormal basis of $\ell^2(\mathbb{Z_+})$ is denoted by $(e_n)_{n\ge 0}$,
as well as 
$
e_+ =
\begin{pmatrix}
 1 \\ 0
\end{pmatrix}
$
and 
$
e_- =
\begin{pmatrix}
 0 \\ 1
\end{pmatrix} 
$
is the standard basis of $ \mathbb{C}^2 $.
We write $\mathcal{D}\left(A\right)$ for the domain of $A$, whenever $A$ is a linear operator in $\mathfrak{h}$. 
As usual,
we set $ \left[ A,  B \right] = AB - BA$ in case $A,B$  are linear operators   in  $\mathfrak{h}$,
and $N= a^\dagger a $.
We write $\mathfrak{L}\left( \mathfrak{X},\mathfrak{Z}\right) $ for the space of 
all bounded operators from $\mathfrak{X}$ to $\mathfrak{Y}$,
where $\mathfrak{X}$ and $\mathfrak{Y}$ are normed spaces.
By $\mathfrak{L}\left( \mathfrak{X}\right) $ we mean $ \mathfrak{L}\left( \mathfrak{X},\mathfrak{X}\right) $. 
The space  of all trace-class operators on $\mathfrak{h}$,
with the trace norm,
is denoted by $\mathfrak{L}_{1}\left( \mathfrak{h}\right)$.

Suppose that the operator $C : \mathcal{D}\left(C\right) \subset \mathfrak{h} \rightarrow \mathfrak{h} $ 
is positive and self-adjoint. 
We recall that 
$\varrho \in \mathfrak{L}_1\left( \mathfrak{h} \right)$ is a density operator iff  $\varrho$ is a non-negative operator with unit trace.
A non-negative operator  $\varrho \in \mathfrak{L}\left( \mathfrak{h} \right)$ is called $C$-regular 
iff 
there exists $\lambda_{n} \geq 0$ and $ v_{n} \in \mathcal{D}\left( C\right) $, together with a countable set $\mathfrak{I}$,
such that
$
 \varrho=\sum_{n\in\mathfrak{I}}\lambda_{n}\left\vert v_{n}\rangle\langle v_{n}\right\vert
$,
$
 \sum_{n\in \mathfrak{I}} \left( \lambda_{n} \right) ^{2}<\infty
$,
and
$
 \sum_{n\in \mathfrak{I}}\lambda_{n}\left\Vert Cu_{n}\right\Vert ^{2}<\infty
$
(see, e.g., \cite{ChebGarQue98,FagMora2019,MoraAP}).
We write $\mathfrak{L}_{1,C}^{+} \left( \mathfrak{h}\right) $  for the family 
of all  density operators in $\mathfrak{h}$ that are $C$-regular.

Moreover,
for  any $x,y\in \mathcal{D}\left( C\right) $
we define the graph scalar product 
$\left\langle x,y\right\rangle_{C}=\left\langle x,y\right\rangle +\left\langle Cx,Cy\right\rangle $
 and the graph norm 
 $ \left\Vert x\right\Vert _{C}=
\sqrt{\left\langle x,x\right\rangle _{C}}$.
We use the symbol $L^{2}\left( \mathbb{P},\mathfrak{h}\right) $ to denote 
the space of all square integrable functions 
$ X : \left( \Omega ,\mathfrak{F},\mathbb{P}\right) \rightarrow \left( \mathfrak{h},\mathfrak{B}\left( \mathfrak{h}\right) \right)$,
where  $ \mathfrak{B} \left( \mathfrak{h} \right)$ is formed by all Borel set on $\mathfrak{h}$.
Moreover,  $L_{C}^{2}\left( \mathbb{P},\mathfrak{h}\right) $  stands for the set of all $\xi \in L^{2}\left( \mathbb{P},\mathfrak{h}\right) $ such that $\xi \in \mathcal{D}\left( C\right) $ a.s. and $\mathbb{E} \left( \left\Vert \xi \right\Vert _{C}^{2} \right) <\infty $. 
For any $x\in \mathcal{D}\left( C\right) $  we define $\pi_C(x)=x$,
together with $\pi_C(x)=0$ whenever $ x \in \mathfrak{h} \setminus  \mathcal{D}\left( C\right)$.

Recall that 
$\omega \in\mathbb{R}$, $ g \in \mathbb{R} \smallsetminus \left\{ 0 \right\}$,
and 
$\kappa,\kappa_+,\kappa_- > 0$.
Moreover,
in Section \ref{sec:Introduction} we take 
$\gamma = \left( \kappa_+ + \kappa_- \right)/2$,
$
d = \left( \kappa_+ - \kappa_- \right) / \left( \kappa_+ + \kappa_- \right) 
$,
and 
$
C_\mathfrak{b} = g^2 \, d / \left( \kappa \, \gamma \right) 
$.
Then $\kappa_- = \gamma \left(1-d \right)$, and  $\kappa_+ = \gamma \left(1+d \right)$.
Using $\kappa_-, \kappa_+ > 0$ we deduce that $\gamma > 0$ and $d\in \left]-1,1 \right[$.
In what follows, the letters $K \geq 0$ and $ \lambda > 0$ denote generic constants.  
We will write $K \left( \cdot \right)$ for different  non-decreasing 
non-negative functions on the interval $\left[ 0, \infty \right[$ when 
no confusion is possible.


\section{Quantum Hopf bifurcation}
\label{sec:QuantumBifurcation}


\subsection{Invariant sets}

We  begin by determining the stationary solutions to (\ref{eq:Laser1}).
We recall that a $C$-weak solution to (\ref{eq:Laser1}) is a collection of  $C$-regular density operators $\left( \rho_t  \right)_{t \geq 0}$ in 
$\ell^2 \left(\mathbb{Z}_+ \right) \otimes \mathbb{C}^2 $
such that $t \mapsto \Tr\left( a \rho_{t}  \right) $ is continuous
and
\[
\frac{d}{dt}\Tr\left( A \rho_{t}  \right) 
 = 
\Tr\left( A \left(  
\mathcal{L}_{\star}^h \, \rho_t 
+   g \left[ 
\Tr\left( \sigma^{-}  \rho_t  \right) a^\dagger  -  \Tr\left( \sigma^{+} \rho_t  \right) a 
+
 \Tr\left( a^\dagger  \rho_t  \right) \sigma^{-} -  \Tr\left( a \,  \rho_t \right) \sigma^+ 
, \rho_t \right] 
 \right)  \right) 
\]
for all $t \geq 0$
and 
$A \in\mathfrak{L}\left( \ell^2 \left(\mathbb{Z}_+ \right) \otimes \mathbb{C}^2 \right) $,
where
\begin{align}
 \label{eq:3.21}
& \mathcal{L}_{\star}^h \, \varrho
  =
 - \mathrm{i} \, \omega 
\left[ 
\left(  a^\dagger a  +\sigma^{3} / 2 \right)  , \varrho \right] 
+   \kappa \left( 2 \,  a\,\varrho a^\dagger -  a^\dagger a \varrho -  \varrho a^\dagger a\right) 
\\
\nonumber
& \quad      
 + \frac{ \gamma \left(1-d \right)  }{ 2} \left( 2 \, \sigma^{-} \varrho \,\sigma^+ 
- \sigma^+\sigma^{-}\varrho 
- \varrho \,\sigma^+\sigma^{-}\right) 
+  \frac{ \gamma \left(1+d \right) }{ 2 } \left( 2 \,  \sigma^+\varrho \,\sigma^{-} 
- \sigma^{-}\sigma^+\varrho 
- \varrho \,\sigma^{-}\sigma^+\right) .
\end{align}
According to \cite{FagMora2019} we have that (\ref{eq:Laser1})  has a unique $N^p$-weak solution,
as well as that 
the Maxwe  ll-Bloch equations (\ref{eq:Lorenz}) hold
whenever 
$\rho_0  \in \mathfrak{L}_{1,N^p}^{+} \left( \ell^2 \left(\mathbb{Z}_+ \right) \otimes \mathbb{C}^2 \right) $
with $p \in \mathbb{N}$.
Next,
we show that  (\ref{eq:Laser1}) has a unique $N$-regular invariant state whenever $\omega \neq 0$,
which yields the stationary solution of (\ref{eq:Lorenz}),
which is 
$  \Tr\left( a  \rho_t \right) = \Tr\left( \sigma^{-}  \rho_t  \right) = 0 $ and  $\Tr\left( \sigma^{3} \rho_t  \right) = d $
for all $t \geq 0$.

\begin{definition}
Consider a $C$-regular density operator $\varrho$.
We say that $\varrho$ is a stationary state for (\ref{eq:Laser1})
iff $ \rho_t \equiv  \varrho $ is a constant $C$-weak solution to (\ref{eq:Laser1}).
\end{definition}

\begin{theorem}
\label{th:StatState-LaserE}
Let the density operator $\varrho_{\infty}$ be defined by (\ref{eq:I10}).
Then $\varrho_{\infty}$ is a stationary state for  (\ref{eq:Laser1}).
Moreover, in case $\omega \neq 0$,
$\varrho_{\infty}$ is the unique $N$-regular density operator which is a stationary state for  (\ref{eq:Laser1}).
\end{theorem}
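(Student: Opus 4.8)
The plan is to separate the statement into the direct verification that $\varrho_{\infty}$ is stationary and the harder uniqueness assertion, and to reduce the latter to the classical theory of the damped harmonic oscillator and of a two-level system by first eliminating the mean-field coupling. The conceptual key, which is where the hypothesis $\omega\neq0$ enters, is that for $\omega\neq0$ the complex Lorenz system \eqref{eq:Lorenz} has no nontrivial fixed point.

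First I would check directly that $\varrho_{\infty}$ is stationary. Since the atomic factor of $\varrho_{\infty}$ is diagonal in $\{e_+,e_-\}$ and its optical factor is the vacuum $\ketbra{e_0}{e_0}$, one has $\Tr(\sigma^-\varrho_{\infty})=\Tr(\sigma^+\varrho_{\infty})=0$ and $\Tr(a\,\varrho_{\infty})=\Tr(a^\dagger\varrho_{\infty})=0$, so the mean-field coupling operator appearing in \eqref{eq:Laser1} vanishes and $\mathcal{H}(\varrho_{\infty})$ collapses to $\tfrac{\omega}{2}(2a^\dagger a+\sigma^3)$. Because $a^\dagger a\,\ketbra{e_0}{e_0}=0$ and $\sigma^3$ commutes with the diagonal atomic factor, the Hamiltonian commutator is zero; the optical dissipator annihilates the vacuum since $a\,e_0=0$; and the two atomic dissipators cancel by the detailed-balance identity $\gamma(1-d)\tfrac{1+d}{2}=\gamma(1+d)\tfrac{1-d}{2}$. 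Hence $\mathcal{L}_{\star}^h\varrho_{\infty}=0$ and $\varrho_{\infty}$ is a stationary state.

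For uniqueness, let $\varrho$ be any $N$-regular stationary state with $\omega\neq0$. Viewing the constant map $t\mapsto\varrho$ as the $N$-weak solution of \eqref{eq:Laser1} issued from $\varrho$, the moments $A=\Tr(a\varrho)$, $S=\Tr(\sigma^-\varrho)$, $D=\Tr(\sigma^3\varrho)$ are constant and hence a fixed point of \eqref{eq:Lorenz}. The first equation gives $gS=(\kappa+\mathrm{i}\omega)A$, and substituting into the second yields $A\big(g^2D-(\gamma+\mathrm{i}\omega)(\kappa+\mathrm{i}\omega)\big)=0$. If $A\neq0$ then $g^2D=(\gamma+\mathrm{i}\omega)(\kappa+\mathrm{i}\omega)$, whose imaginary part $\omega(\kappa+\gamma)$ must vanish; since $\kappa,\gamma>0$ this forces $\omega=0$, a contradiction. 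Therefore $A=S=0$, the bracketed mean-field coupling term in \eqref{eq:Laser1} is identically zero, and $\varrho$ is in fact a stationary state of the \emph{linear} generator $\mathcal{L}_{\star}^h$ of \eqref{eq:3.21}, which carries no field-atom coupling. It then remains to show $\mathcal{L}_{\star}^h$ admits $\varrho_{\infty}$ as its only $N$-regular equilibrium. The photon-number balance, namely $\Tr(\mathcal{L}_{\star}^{h\ast}(N)\varrho)=-2\kappa\,\Tr(N\varrho)$ evaluated at the stationary $\varrho$, forces $\Tr(N\varrho_c)=0$ for the optical marginal $\varrho_c$, so $\varrho_c=\ketbra{e_0}{e_0}$. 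Because this marginal is the rank-one vacuum, a support argument --- every $(\langle\phi|\otimes I)\varrho(|\phi\rangle\otimes I)$ with $\phi\perp e_0$ is a nonnegative operator of zero trace, hence zero --- shows $\varrho$ factorizes as $\ketbra{e_0}{e_0}\otimes\sigma$. Finally the atomic marginal $\sigma$ solves the two-level stationary equation, whose kernel in $M_2(\mathbb{C})$ is one-dimensional and gives $\sigma=\tfrac{1+d}{2}\ketbra{e_+}{e_+}+\tfrac{1-d}{2}\ketbra{e_-}{e_-}$, whence $\varrho=\varrho_{\infty}$.

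The main technical obstacle is justifying the moment identities for a \emph{merely} $N$-regular stationary state: I would invoke the $N^p$-weak solution theory of \cite{FagMora2019} with $p=1$ to legitimize both the passage to the fixed-point equations of \eqref{eq:Lorenz} and the unbounded number-operator balance $\Tr(\mathcal{L}_{\star}^{h\ast}(N)\varrho)=-2\kappa\,\Tr(N\varrho)$. The remaining care is in the pure-marginal factorization, which is elementary once $\varrho_c$ is known to be the finite-rank vacuum, but which must be stated as a support argument rather than invoked abstractly because the optical tensor factor is infinite dimensional.
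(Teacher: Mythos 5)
Your proposal is correct, and the first half coincides with the paper's argument: the direct verification that $\mathcal{L}_{\star}^h\varrho_{\infty}=0$ and the coupling term vanishes, and then, for uniqueness, passing to the moments $A,S,D$, which must be a fixed point of \eqref{eq:Lorenz}, and using the imaginary part of $(\gamma+\mathrm{i}\omega)(\kappa+\mathrm{i}\omega)$ to force $A=S=0$, $D=d$ when $\omega\neq 0$ (this is exactly the paper's Lemma \ref{lem:ConstSolution}). Where you genuinely diverge is the final step, showing that $\varrho_{\infty}$ is the only $N$-regular equilibrium of the \emph{linear} generator $\mathcal{L}_{\star}^h$. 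The paper does this by invoking its exponential-convergence estimate \eqref{eq:3.22} (Theorem \ref{th:ConvAuxiliarQME} with $\alpha=\beta=0$, proved via Weyl-operator conjugation and a pure-death Kolmogorov system): a stationary state equals its own evolution, which converges in trace norm to $\varrho_{\infty}$, hence equals $\varrho_{\infty}$. You instead argue structurally: the number balance $\frac{d}{dt}\Tr(N\rho_t)=-2\kappa\Tr(N\rho_t)$ forces $\Tr(N\varrho)=0$, hence the support of $\varrho$ lies in $\ker N$, hence $\varrho=\ketbra{e_0}{e_0}\otimes\sigma$, and the two-level stationary equation pins down $\sigma$. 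Both work; the paper's route is economical because Theorem \ref{th:ConvAuxiliarQME} is needed anyway for its long-time results, and it yields uniqueness as a byproduct of global attraction, while yours is more elementary and localizes the only genuinely analytic input in the Ehrenfest-type identity for the unbounded observable $N$ (the paper's Lemma \ref{lem:EvolNumero}), which you rightly flag as the point requiring the $N$-regular solution theory. Two small touch-ups to your argument: in the factorization step, vanishing of the diagonal blocks $(\langle\phi|\otimes I)\varrho(|\phi\rangle\otimes I)$ for $\phi\perp e_0$ kills the off-diagonal blocks only after a Cauchy--Schwarz appeal to positivity of $\varrho$ (or, more directly, note that $\Tr(\varrho N)=0$ with $\varrho,N\ge 0$ already gives $\mathrm{supp}\,\varrho\subseteq\ker N$); and the one-dimensionality of the kernel of the two-level generator should be checked, e.g.\ from the strictly negative-definite Lyapunov computation in the paper's Lemma \ref{lem:ConvAtom}, rather than asserted.
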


\begin{proof}
 Deferred to Section  \ref{sec:Proof:StatState-LaserE}.
\end{proof}

We turn our attention to the regular solutions of  (\ref{eq:Laser1}) 
that are also unitary evolutions generated by the Hamiltonian $  \omega  \left( N +  \sigma^3 / 2 \right)$,
which arises from neglecting the interactions between the laser mode, atoms and the bath.

\begin{definition}
Assume that $\left( \rho_t \right)_{t \geq 0}$ is a $C$-weak solution to (\ref{eq:Laser1}).
We call $\left( \rho_t \right)_{t \geq 0}$ free interaction solution to  (\ref{eq:Laser1}) if and only if 
\begin{equation*}
 \rho_t =
\exp \left( - \mathrm{i}  \omega  \left( N +  \sigma^3 / 2 \right) t \right)
\varrho_0
\exp \left(  \mathrm{i}  \omega  \left( N +  \sigma^3 / 2 \right) t \right)
\hspace{1cm}
\forall t \geq 0 .
\end{equation*}
\end{definition}

\begin{remark}
If  $\left( \rho_t \right)_{t \geq 0}$ is a $N$-regular free interaction solution to (\ref{eq:Laser1}),
then $\left( \rho_t \right)_{t \geq 0}$ also satisfies the  quantum master equation
$
 \frac{d }{dt} \rho_t
 = - \mathrm{i}  \frac{ \omega}{2}
\left[ 2 a^\dagger a +\sigma^{3}, \rho_t  \right] 
$.
\end{remark}

Consider (\ref{eq:Laser1}) with  $\omega \neq 0$.
Now, we find all non-constant free interaction solutions that are born at the regular stationary state
as $C_\mathfrak{b}$ passes through the bifurcation value $1$.
In case $C_\mathfrak{b} > 1$, i.e., $g^2 d  >   \kappa \gamma $,
these free interaction solutions lead to
the periodic solutions of (\ref{eq:Lorenz}),
which are given by 
$ \Tr\left(  \rho_{t} \, \sigma^{3} \right)  =  \gamma \kappa / \left( g^2 \right) $,
$ \Tr\left(  \rho_{t} \, a \right) 
=  z \, \exp \left( - i \omega t \right) \gamma  \sqrt{ C_\mathfrak{b} -1 } / \left(  \sqrt{2} \left| g \right| \right)$
and
$ \Tr\left( \rho_{t} \,  \sigma^{-}  \right) =  
z \, \exp \left( - i \omega t \right) \gamma  \kappa \sqrt{ C_\mathfrak{b} -1 } / \left(  \sqrt{2} \left| g \right| g \right)$
for any $z \in \mathbb{C}$ satisfying $\left\vert z \right\vert = 1 $.

\begin{theorem}
\label{th:FreeS-LaserE}
Let $\omega \neq 0$.
Take 
$
C_\mathfrak{b} =  d g^2 / \left( \gamma \kappa \right) 
$,
and consider  $\varrho_c $ is given by (\ref{eq:Def_orbita}).
If 
$
C_\mathfrak{b} \leq 1
$,
then (\ref{eq:Laser1})  does not have any non-constant $N$-regular free interaction solution.
In case 
$
C_\mathfrak{b} > 1
$,
the family
$
\left\{ 
t \mapsto \varrho_c \left( \omega t - \theta \right) : \theta \in \left[ 0, 2 \pi \right[
\right\}
$
is composed by all non-constant $N$-regular free  interaction solutions to (\ref{eq:Laser1}). 
\end{theorem}

\begin{proof}
 Deferred to Section  \ref{sec:Proof:FreeS-LaserE}.
\end{proof}

\begin{remark}
Suppose that $\omega \neq 0$.
According to the proof of Theorem \ref{th:FreeS-LaserE} we have that
$\varrho_{\infty}$,  described by (\ref{eq:I10}, 
is the unique constant $N$-regular free interaction solution to (\ref{eq:Laser1}).
 \end{remark}
 
 \begin{remark}
 The function $ t \mapsto \varrho_c \left( \omega t - \theta \right) $ is a periodic $N$-weak
 solution to (\ref{eq:Laser1}).
 By Theorem \ref{th:FreeS-LaserE},
 in the phase space all non-constant $N$-regular free interaction solutions of (\ref{eq:Laser1})
have the same closed (or periodic) orbit, which is 
$\left\{ \varrho_c \left( \vartheta \right): \vartheta \in \left[ 0, 2 \pi \right] \right\}$.
 \end{remark}
 
 
\subsection{Long-time behavior}

 Suppose that $\rho_0  \in \mathfrak{L}_{1,N}^{+} \left( \ell^2 \left(\mathbb{Z}_+ \right) \otimes \mathbb{C}^2 \right) $
 satisfies  $\Tr\left(  \rho_{0} \, a   \right) = \Tr\left( \rho_{0} \, \sigma^{-}  \right) = 0$.
 From (\ref{eq:Lorenz}) it follows that 
 $\Tr\left(  \rho_{t} \, a  \right) = \Tr\left(  \rho_{t} \, \sigma^{-}  \right) = 0$ for all $t \geq 0$,
 and so 
 $ 
\frac{d}{dt} \left( \Tr\left(   \rho_{t} \,  \sigma^{3} \right) - d \right) 
 =  
 - 2 \gamma \left(  \Tr\left(    \rho_{t} \, \sigma^{3} \right) - d \right)
$.
Therefore, 
$\Tr\left(   \rho_{t} \, \sigma^{3}  \right)$ converges exponentially fast to $d$ as $t \rightarrow +\infty$.
Theorem \ref{th:LongTimeD} below provides a full quantum  explanation for this long time behavior.

\begin{theorem}
\label{th:LongTimeD}
Let $\varrho$ be a $N$-regular density operator in $\ell^2 \left(\mathbb{Z}_+ \right) \otimes \mathbb{C}^2$
such that $\Tr\left( \varrho \,  a \right) = \Tr\left(  \varrho  \, \sigma^{-} \right) = 0$.
Suppose that $\left( \rho_t  \right)_{t \geq 0}$ is the $N$-weak solution to (\ref{eq:Laser1}) with initial state $\varrho$.
Then $\Tr\left( \rho_{t} \, a  \right) = \Tr\left( \rho_{t} \, \sigma^{-}  \right) = 0$ for all $t \geq 0$, and
\begin{equation}
 \label{eq:8.22}
\Tr \left( \left\vert \rho_t - \varrho_{\infty}  \right\vert  \right) 
\leq
12 \, \exp \left( - \gamma  t \right)    \left( 1 +  \left\vert d \right\vert \right)
+
4 \, \exp \left( - \kappa t \right)  \sqrt{ \Tr \left( \varrho \,  N  \right) } 
\quad \quad \quad \forall t \geq 0,
\end{equation}
with $\varrho_{\infty}$ defined by (\ref{eq:I10}).
\end{theorem}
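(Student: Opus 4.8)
The plan is to turn the nonlinearity off. The hypotheses $\Tr\left(\varrho\, a\right)=\Tr\left(\varrho\,\sigma^{-}\right)=0$ force the whole mean-field coupling to vanish for all time, so that $\left(\rho_t\right)_{t\ge 0}$ actually solves a \emph{linear} GKSL equation that factorizes over the cavity and the atom. First I would prove the invariance of the two scalar constraints: writing $A(t)=\Tr\left(\rho_t a\right)$, $S(t)=\Tr\left(\rho_t\sigma^{-}\right)$, the Maxwell-Bloch system \eqref{eq:Lorenz} shows that $(A,S)=(0,0)$ is a fixed point of its first two equations for every value of $D$, whence $A(0)=S(0)=0$ propagate to $A(t)=S(t)=0$ for all $t$ by uniqueness for \eqref{eq:Lorenz}. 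Consequently every trace $\Tr\left(\sigma^{\pm}\rho_t\right)$, $\Tr\left(a^{\dagger}\rho_t\right)$, $\Tr\left(a\,\rho_t\right)$ entering $\mathcal{H}\left(\rho_t\right)$ is zero, the mean-field Hamiltonian collapses to $\tfrac{\omega}{2}\left(2N+\sigma^{3}\right)$, and $\rho_t$ is an $N$-weak solution of $\tfrac{d}{dt}\rho_t=\mathcal{L}_{\star}^{h}\rho_t$ with $\mathcal{L}_{\star}^{h}$ as in \eqref{eq:3.21}. Since this Hamiltonian separates and the three dissipators act on disjoint tensor factors, $\mathcal{L}_{\star}^{h}$ is the sum of the purely damped-oscillator generator on $\ell^{2}\left(\mathbb{Z}_{+}\right)$ and the pumped-damped two-level generator on $\mathbb{C}^{2}$; the two pieces commute, so the semigroup factorizes as $\rho_t=\left(\Phi_t\otimes\Psi_t\right)\varrho$, where $\Phi_t,\Psi_t$ are quantum dynamical semigroups whose stationary states are $\varrho_c=\ketbra{e_0}{e_0}$ and $\varrho_a=\tfrac{1+d}{2}\ketbra{e_+}{e_+}+\tfrac{1-d}{2}\ketbra{e_-}{e_-}$, so that $\varrho_{\infty}=\varrho_c\otimes\varrho_a$ by \eqref{eq:I10}.

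For the bound I would telescope with the rank-one conditional expectations $E^{c}\left(\cdot\right)=\Tr\left(\cdot\right)\varrho_c$ and $E^{a}\left(\cdot\right)=\Tr\left(\cdot\right)\varrho_a$, via the operator identity $\Phi_t\otimes\Psi_t-E^{c}\otimes E^{a}=\left(\Phi_t-E^{c}\right)\otimes\Psi_t+E^{c}\otimes\left(\Psi_t-E^{a}\right)$ and the triangle inequality. The second term is elementary: because $E^{c}$ erases all cavity correlations, $\left(E^{c}\otimes\left(\Psi_t-E^{a}\right)\right)\varrho=\varrho_c\otimes\left(\Psi_t\left(\varrho^{a}\right)-\varrho_a\right)$ with $\varrho^{a}$ the reduced atom state, and a direct $2\times 2$ computation (the population difference decays like $e^{-2\gamma t}$ and the coherence like $e^{-\gamma t}$, the latter from $\tfrac{d}{dt}S=-\left(\gamma+\mathrm{i}\omega\right)S$ once $A\equiv 0$) bounds its $\left\Vert\cdot\right\Vert_{1}$ by a constant times $e^{-\gamma t}\left(1+\left|d\right|\right)$.

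The first term carries the real difficulty, since $\varrho$ need not be a product state. Using that $I\otimes\Psi_t$ is trace-norm contractive (it is completely positive and trace preserving), I would reduce to estimating $\left\Vert\left(\Phi_t\otimes I\right)\varrho-\varrho_c\otimes\varrho^{a}\right\Vert_{1}$. Here the Heisenberg action $\Phi_t^{*}N=e^{-2\kappa t}N$ (a one-line computation giving $\mathcal{L}^{*}N=-2\kappa N$) yields $\Tr\left(\left(\Phi_t\otimes I\right)\varrho\,\left(N\otimes I\right)\right)=e^{-2\kappa t}\Tr\left(\varrho N\right)$, while $\left(\Phi_t\otimes I\right)\varrho$ retains the marginal $\varrho^{a}$. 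I would then use a gentle-measurement-type inequality: for any state $\sigma$ with atom marginal $\sigma^{a}$ and $P=\ketbra{e_0}{e_0}\otimes I$, splitting off $P\sigma P$ and applying Cauchy-Schwarz in trace norm gives $\left\Vert\sigma-\ketbra{e_0}{e_0}\otimes\sigma^{a}\right\Vert_{1}\le 3\sqrt{1-\Tr\left(\sigma P\right)}$. Since $I-\ketbra{e_0}{e_0}\le N$, the defect $1-\Tr\left(\sigma P\right)$ is controlled by $\Tr\left(\sigma\left(N\otimes I\right)\right)=e^{-2\kappa t}\Tr\left(\varrho N\right)$, producing a bound of order $e^{-\kappa t}\sqrt{\Tr\left(\varrho N\right)}$. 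Adding the two contributions gives \eqref{eq:8.22}, the constants $12$ and $4$ being comfortably above the sharp ones this argument delivers.

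The step I expect to be the main obstacle is the rigorous identification $\rho_t=\left(\Phi_t\otimes\Psi_t\right)\varrho$ at the level of $N$-weak solutions: one must justify the tensor factorization of the semigroup generated by the \emph{unbounded} $\mathcal{L}_{\star}^{h}$ and invoke the uniqueness from \cite{FagMora2019} to replace the nonlinear $N$-weak solution by the linear one, together with the moment identity $\Tr\left(\rho_t N\right)=e^{-2\kappa t}\Tr\left(\varrho N\right)$, whose validity rests on the $N$-regularity of $\varrho$ keeping all the unbounded traces finite. Once this framework is in place, the entangled-state cavity estimate—the technical heart—is cleanly resolved by the photon-number/projection bound above.
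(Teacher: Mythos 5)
Your proposal is correct, and its first half coincides with the paper's proof: you propagate $\Tr\left(\rho_t\, a\right)=\Tr\left(\rho_t\,\sigma^{-}\right)=0$ through the Maxwell--Bloch system \eqref{eq:Lorenz}, conclude that $\rho_t$ is the $N$-weak solution of the linear equation \eqref{eq:3.2}, and reduce the theorem to a convergence estimate for that linear dynamics; at this point the paper simply invokes \eqref{eq:3.22}, i.e.\ the $\alpha=\beta=0$ case of Theorem \ref{th:ConvAuxiliarQME}. Where you genuinely depart from the paper is in how that estimate is produced. The paper writes the (possibly entangled) $N$-regular initial state in the four atomic blocks $\varrho_{\eta\widetilde\eta}\otimes\ketbra{e_\eta}{e_{\widetilde\eta}}$ of \eqref{eq:9.7}, evolves each block as a product \eqref{eq:9.9}, discards the off-diagonal blocks through the trace-zero decay of the atomic semigroup (Lemma \ref{lem:ConvAtom}), and handles the diagonal blocks by combining Lemma \ref{lem:ConvAtom} with Lemma \ref{lem:ConvFoton}, whose proof runs through Weyl conjugation, the explicit pure-death solution of \eqref{eq:8.21}, and the inequality $\Tr\left(\left\vert\widetilde\rho_t-\ketbra{e_0}{e_0}\right\vert\right)\le 2\left(1-\langle e_0,\widetilde\rho_t\, e_0\rangle\right)^{1/2}$ quoted from \cite{AccardiFagnolaHachicha2006}. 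You avoid the block decomposition entirely: the telescoping with the conditional expectations $E^c,E^a$, the trace-norm contractivity of $I\otimes\Psi_t$, and your bipartite gentle-measurement bound $\Tr\left(\left\vert \sigma-\ketbra{e_0}{e_0}\otimes\sigma^a\right\vert\right)\le 3\sqrt{1-\Tr\left(\sigma P\right)}$ (which is correct: splitting $\sigma-P\sigma P$ and using Cauchy--Schwarz gives $2\sqrt{1-\Tr(\sigma P)}$, while the mismatch between the atomic marginal $\sigma^a$ and the $e_0$-block of $\sigma$ is a positive operator of trace $1-\Tr(\sigma P)$) treat entanglement through the marginal instead. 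Your identity $\Phi_t^{*}N=\exp\left(-2\kappa t\right)N$ is Lemma \ref{lem:EvolNumero} with $\alpha=0$, and $I-\ketbra{e_0}{e_0}\le N$ replaces the paper's death-process computation; both routes give $1-\Tr\left(\sigma P\right)\le \exp\left(-2\kappa t\right)\Tr\left(\varrho\, N\right)$. What each buys: your argument is more self-contained, needs no Weyl operators (superfluous when $\alpha=0$), and yields sharper constants (roughly $2$ and $3$ in place of $12$ and $4$); the paper's heavier apparatus is built once for the linear equation \eqref{eq:8.4} with arbitrary constant $\alpha,\beta$ and is then reused in Theorems \ref{th:LongTime} and \ref{th:LimitCycle}, where your projector $P$ would have to be displaced onto a coherent state --- effectively reinstating the Weyl step. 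Finally, the obstacle you flag (identifying $\rho_t$ with the tensorized semigroup at the level of $N$-weak solutions and legitimizing the unbounded-trace manipulations) is exactly what the paper's probabilistic representation from \cite{MoraAP} supplies in \eqref{eq:9.7}--\eqref{eq:9.9} and in Lemma \ref{lem:EvolNumero}; your proof would need to import that infrastructure, as you anticipate.
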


\begin{proof}
 Deferred to Section  \ref{sec:Proof:LongTimeD}.
\end{proof}

Let $C_\mathfrak{b} < 1$.
Then,
the equilibrium solution $\left( 0, 0, d \right)$ of the the Maxwell-Bloch equations (\ref{eq:Lorenz})
is asymptotically stable  (see, e.g., \cite{FagMora2019}).
Hence,
$
\lim_{t \rightarrow + \infty}   \Tr \left(   \rho_{t} \, a \right) 
=
\lim_{t \rightarrow + \infty} \Tr \left( \rho_{t} \, \sigma^{-}  \right) = 0$,
and
$\lim_{t \rightarrow + \infty}  \Tr \left(   \rho_{t} \, \sigma^{3} \right)  = d$.
Now,
we show that $\rho_t$ converges in the trace norm to the stationary state (\ref{eq:I10})
with exponential rate,
as well as 
we get the limiting behavior of the mean values of $N$-bounded  operators like
$  N \otimes  \left(  \sigma^{+} + \sigma^{-} \right)$.

\begin{theorem}
\label{th:LongTime}
Let $\left( \rho_t  \right)_{t \geq 0}$ be the $N$-weak solution to (\ref{eq:Laser1}) 
with $\omega \neq 0$ and initial datum
$\rho_0 \in \mathfrak{L}_{1,N}^{+} \left( \ell^2 \left(\mathbb{Z}_+ \right) \otimes \mathbb{C}^2 \right) $.
Suppose that 
$
C_\mathfrak{b} < 1
$.
Then
\begin{equation}
\label{eq:LTB.1}
 \Tr \left( \left\vert \rho_t - \varrho_{\infty}   \right\vert  \right)  
 \leq
 K_{sys} \left( \left\vert g \right\vert , \Tr\left( \rho_{0} \, N  \right) \right) \exp \left( - \delta_{sys} \, t \right) \hspace{2cm} \forall t \geq 0 ,
\end{equation} 
where
$\varrho_{\infty}$ is given by (\ref{eq:I10}),
\begin{equation}
\label{eq:RateCb<1}
 \delta_{sys}
=
\begin{cases}  
 \min \left\{  \kappa  ,  \gamma \right\} / 2 & \text{ if } d < 0
\\
  \left( 1- C_\mathfrak{b} \right) \min \left\{ \kappa ,   \gamma \right\} / 3 & \text{ if } d \geq 0 
\end{cases}   ,
\end{equation}
and
$ K_{sys} \left( \cdot , \cdot \right) $ is a non-decreasing non-negative function of two variables 
that depends  on  the parameters $d$, $\kappa$ and $\gamma$.
Fix $\widetilde{K} > 0$.
Then, 
for all $t \geq 0$ we have 
\begin{equation}
 \label{eq:8.35}
\begin{aligned}
& 
\left\vert 
\Tr \left( \rho_t \, A  \right)
- 
\frac{d+1}{2} \langle e_0 \otimes e_+, A \, e_0 \otimes e_+ \rangle
-
\frac{1-d}{2} \langle e_0 \otimes e_-, A \, e_0 \otimes e_- \rangle 
\right\vert
\\
& 
\leq
\widetilde{K} \hat{K}_{sys} \left( \left\vert g \right\vert  ,  \Tr\left( \rho_{0} \, N  \right) \right) 
\exp \left( - \delta_{sys} \, t  \right) ,
 \end{aligned}
\end{equation}
for any 
$A :  \ell^2 \left(\mathbb{Z}_+ \right) \otimes \mathbb{C}^2 \rightarrow  \ell^2 \left(\mathbb{Z}_+ \right) \otimes \mathbb{C}^2$
satisfying 

\begin{equation}
\label{eq:LTB.c}
\max \left\{\left\Vert  A  \, x  \right\Vert, \left\Vert  A^{\star}  \, x  \right\Vert \right\}
\leq 
\widetilde{K}   \left\Vert  x  \right\Vert_{N}
\qquad \qquad
\forall x \in \mathcal{D}\left( N \right) .
\end{equation}
Here, 
$\hat{K}_{sys} \left( \cdot , \cdot \right) $
is a non-decreasing non-negative function depending on  $d$, $\kappa$ and $\gamma$.
\end{theorem}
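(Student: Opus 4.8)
The plan is to exploit the linearization established in \cite{FagMora2019}: the $N$-weak solution $\left( \rho_t \right)_{t \geq 0}$ of the nonlinear equation \eqref{eq:Laser1} coincides with the solution of the \emph{linear} GKSL master equation whose generator is $\mathcal{L}_{\star}^h + \mathcal{K}\left( t \right)$, where $\mathcal{K}\left( t \right) \varrho = g \left[ b\left( t \right), \varrho \right]$ and the (time-dependent) operator
$$
b\left( t \right) = \Tr\left( \sigma^{-} \rho_t \right) a^{\dagger} - \Tr\left( \sigma^{+} \rho_t \right) a + \Tr\left( a^{\dagger} \rho_t \right) \sigma^{-} - \Tr\left( a \rho_t \right) \sigma^{+}
$$
is driven by the Maxwell-Bloch trajectories $A\left( t \right) = \Tr\left( a \rho_t \right)$ and $S\left( t \right) = \Tr\left( \sigma^{-} \rho_t \right)$. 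Since $\mathcal{L}_{\star}^h \varrho_{\infty} = 0$ by Theorem \ref{th:StatState-LaserE}, I would write the variation of constants identity
$$
\rho_t - \varrho_{\infty} = e^{t \mathcal{L}_{\star}^h} \left( \rho_0 - \varrho_{\infty} \right) + \int_0^t e^{\left( t - s \right) \mathcal{L}_{\star}^h} \, \mathcal{K}\left( s \right) \rho_s \, ds
$$
and control the two contributions separately in trace norm, the second being the genuinely new ingredient.

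First I would extract from \eqref{eq:Lorenz} the explicit decay of the forcing. Linearizing at $\left( 0,0,d \right)$, the $\left( A, S \right)$ block has eigenvalue real parts governed by $-\left( \kappa + \gamma \right)/2 + \tfrac{1}{2} \Re \sqrt{ \left( \kappa - \gamma \right)^2 + 4 g^2 d }$, which is strictly negative \emph{exactly} when $C_\mathfrak{b} < 1$; a Lyapunov/Gronwall argument absorbing the bilinear term $g A D$ then yields $\left\vert A\left( t \right) \right\vert + \left\vert S\left( t \right) \right\vert \leq K e^{-\delta_{sys} t}$ together with $\left\vert D\left( t \right) - d \right\vert \leq K e^{-2\gamma t}$, with $\delta_{sys}$ as in \eqref{eq:RateCb<1} (the case split $d<0$ versus $d \geq 0$, and the factor $1 - C_\mathfrak{b}$, reflect the degeneration of the spectral gap of this block as $C_\mathfrak{b} \uparrow 1$). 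For the homogeneous term, the linear semigroup $e^{t \mathcal{L}_{\star}^h}$ relaxes exponentially to $\varrho_{\infty}$: this is precisely the mechanism behind Theorem \ref{th:LongTimeD} (the $\alpha = \beta = 0$ dynamics), and the same methods bound $\left\Vert e^{t \mathcal{L}_{\star}^h} \left( \rho_0 - \varrho_{\infty} \right) \right\Vert_1$ by $K e^{- \min\left\{ \kappa, \gamma \right\} t}$, a rate that dominates $\delta_{sys}$.

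The integral term is where the unboundedness of $a, a^{\dagger}$ enters and constitutes the main obstacle. Here $\mathcal{K}\left( s \right) \rho_s$ is a commutator, hence traceless, so $e^{\left( t - s \right) \mathcal{L}_{\star}^h}$ acts on it as a contraction with gap $\min\left\{ \kappa, \gamma \right\}$; and its size is estimated via $\left\Vert a^{\dagger} \rho_s \right\Vert_1 \leq \sqrt{ \Tr\left( \rho_s \left( N + 1 \right) \right) }$, giving $\left\Vert \mathcal{K}\left( s \right) \rho_s \right\Vert_1 \leq K \left\vert g \right\vert \left( \left\vert A\left( s \right) \right\vert + \left\vert S\left( s \right) \right\vert \right) \left( 1 + \sqrt{ \Tr\left( \rho_s N \right) } \right)$. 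A moment estimate exploiting the photon-loss rate $2 \kappa$ furnishes a uniform-in-$s$ bound on $\Tr\left( \rho_s N \right)$; since $\delta_{sys}$ is chosen strictly below $\min\left\{ \kappa, \gamma \right\}$, the convolution $\int_0^t e^{-\min\left\{ \kappa, \gamma \right\}\left( t - s \right)} e^{-\delta_{sys} s} ds$ stays $\leq K e^{-\delta_{sys} t}$ with no polynomial loss, and \eqref{eq:LTB.1} follows. The delicate point is not the rate bookkeeping but the rigorous justification of the variation of constants formula with the unbounded perturbation $\mathcal{K}\left( s \right)$, which is exactly the purpose of the probabilistic representation of the linear evolution (interpreting $\rho_s$ as an expectation of rank-one projections built from the associated stochastic Schr\"odinger flow) combined with the $N$-regularity framework.

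For the estimate \eqref{eq:8.35} on $N$-bounded observables, trace-norm convergence alone is insufficient since $A$ is unbounded. I would instead pair the above identity with $A$ and invoke the relative bound \eqref{eq:LTB.c} to convert $\Tr\left( \left( \rho_t - \varrho_{\infty} \right) A \right)$ into $N$-weighted quantities controlled by $\sqrt{ \Tr\left( \left\vert \rho_t - \varrho_{\infty} \right\vert \left( N + 1 \right) \right) }$; these relax by the same mechanism once the number moments are shown to remain bounded and to decay toward their equilibrium values. As $\Tr\left( \varrho_{\infty} A \right) = \tfrac{d+1}{2} \langle e_0 \otimes e_+, A \, e_0 \otimes e_+ \rangle + \tfrac{1-d}{2} \langle e_0 \otimes e_-, A \, e_0 \otimes e_- \rangle$, this produces the stated limit with rate $\delta_{sys}$ and a constant $\widetilde{K}_{sys}$ depending on $\widetilde{K}$, $\Tr\left( \rho_0 N \right)$, $d$, $\kappa$ and $\gamma$.
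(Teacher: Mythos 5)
Your proposal shares the paper's skeleton: view $\rho_t$ as the solution of the linear equation \eqref{eq:8.4} driven by the Maxwell--Bloch trajectories, write a variation-of-constants identity around the semigroup generated by $\mathcal{L}_{\star}^h$ (justified through the probabilistic $N$-regular framework, which is exactly Theorem \ref{th:VariationConstants}), and feed in the exponential decay of $\left( A, S, D \right)$. However, the step on which your whole rate analysis rests is unjustified, and in its uniform form it is false. You claim that, since $\mathcal{K}\left( s \right) \rho_s$ is traceless, $e^{\left( t-s \right)\mathcal{L}_{\star}^h}$ "acts on it as a contraction with gap $\min\left\{ \kappa,\gamma \right\}$". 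The semigroup generated by $\mathcal{L}_{\star}^h$ has \emph{no} uniform trace-norm spectral gap on traceless operators: on the field factor the relaxation constant necessarily depends on the energy of the input --- this is the $\sqrt{\Tr\left( \varrho N \right)}$ prefactor in Theorem \ref{th:ConvAuxiliarQME}; e.g.\ for $T_n = \ketbra{e_n}{e_n} - \ketbra{e_0}{e_0}$ one has $\Tr\left\vert T_n \right\vert = 2$ while the evolved operator stays at trace norm close to $2$ up to times of order $\ln n$, so no bound $K e^{-\lambda \left( t-s \right)}\Tr\left\vert T \right\vert$ can hold. To assign a usable "energy" to the commutator $\mathcal{K}\left( s \right)\rho_s$ you would need $N$-moments of its positive/negative parts (or, after polarization, of vectors like $a^{\dagger}\xi$), and these require $N^{3/2}$- or $N^{2}$-moments of $\rho_s$, which are not available under the theorem's hypothesis that $\rho_0$ is merely $N$-regular. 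The paper circumvents this entirely: in Corollary \ref{cor:Ineq-Conv} the Duhamel integrand is estimated using \emph{only} the contraction property of $\rho^R_{t-u}$, and the exponential rate is then recovered by choosing the intermediate time $s$ proportional to $t$ (namely $t = 3s/2$ when $d \geq 0$ and $t = 2s$ when $d<0$), balancing the homogeneous decay $e^{-\min\left\{ \kappa,\gamma \right\}\left( t-s \right)}$ against the forcing decay $e^{-c_{sys}s/2}$. This time-splitting is precisely why the achievable rate is $\delta_{sys}$ as in \eqref{eq:RateCb<1} (with the factor $1/3$, resp.\ $1/2$) rather than the stronger rate your gap claim would produce.

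The second part has a similar gap. For \eqref{eq:8.35} you reduce to controlling $\sqrt{\Tr\left( \left\vert \rho_t - \varrho_{\infty}\right\vert \left( N+1 \right) \right)}$, i.e.\ energy-weighted trace-norm convergence. This does not follow from \eqref{eq:LTB.1} plus boundedness and decay of $\Tr\left( \rho_t N \right)$: a Cauchy--Schwarz pass gives $\Tr\left( \left\vert T \right\vert \left( N+1 \right) \right) \leq \sqrt{\Tr\left\vert T \right\vert}\sqrt{\Tr\left( \left( N+1 \right)\left\vert T \right\vert\left( N+1 \right) \right)}$, and the second factor needs second moments, again beyond $N$-regularity. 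The paper's proof instead uses a finite-rank projection trick tailored to the fact that $\varrho_{\infty}$ is supported on the vacuum sector: writing $A = A\,P + P A \left( I-P \right) + \left( I-P \right) A \left( I-P \right)$ with $P$ the rank-two projection onto $\mathrm{span}\left\{ e_0 \otimes e_+, e_0 \otimes e_- \right\}$, the first two pieces are bounded operators (with norm $\leq \widetilde{K}$ by \eqref{eq:LTB.c}), so the trace-norm estimate \eqref{eq:LTB.1} applies directly, while the last piece satisfies $\left\vert \Tr\left( \rho_t \left( I-P \right) A \left( I-P \right) \right) \right\vert \leq 8 \widetilde{K}\, \Tr\left( \rho_t N \right)$ because $N \geq 1$ on the range of $I-P$, and $\Tr\left( \rho_t N \right)$ decays exponentially to $0$ by the moment equation of Lemma \ref{lem:EvolNumero} combined with the Lorenz decay. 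Without this (or an equivalent) decomposition, your reduction does not close under the stated hypotheses.
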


\begin{proof}
 Deferred to Section  \ref{sec:Proof:LongTime}.
\end{proof}

In case $ C_\mathfrak{b}  < 1 $,
from Theorem \ref{th:LongTime} we conclude that the radiation field converges very fast to 
his ground state  $\ketbra{e_0}{e_0}$.
Furthermore,
using Theorem \ref{th:LongTime} one can obtain at once the long time behavior of, for instance, 
the photon-number statistics,  
the quantum variance of the quadratures,
and the quantum linear entropy,
which are relevant physical quantities that are  not given by the Maxwell-Bloch equations \eqref{eq:Lorenz}.

\begin{corollary}
Suppose that 
$ C_\mathfrak{b}  < 1 $ and $\omega \neq 0$.
Let $\delta_{sys}$ be defined by \eqref{eq:RateCb<1}.
Consider the $N$-weak solution $\left( \rho_t  \right)_{t \geq 0}$  to (\ref{eq:Laser1}).
Then:

\begin{itemize}
 
 \item There exists $K > 0$ such that  all $t \geq 0$,
 $
\left\vert 
\Tr \left( \rho_t \, \ketbra{e_0}{e_0}  \right)
- 
1
\right\vert
\leq K \exp \left( - \delta_{sys} \, t \right)
$
and 
$
\left\vert 
\Tr \left( \rho_t \, \ketbra{e_n}{e_n}  \right)
\right\vert
\leq K \exp \left( - \delta_{sys} \, t \right)
$
for any $n \in \mathbb{N}$.

\item 
For all $t \geq 0$,
$
\left\vert 
\Tr \left( \rho_t \, Q^2  \right) - \Tr \left( \rho_t \, Q  \right)^2
- 
1/2
\right\vert
\leq K \exp \left( - \delta_{sys} \, t \right)
$
and
\[
\left\vert 
\Tr \left( \rho_t \, P^2  \right) - \Tr \left( \rho_t \, P  \right)^2
- 
1/2
\right\vert 
\leq K \exp \left( - \delta_{sys} \, t \right) ,
\]
where 
$Q = \left(  a^\dagger  + a  \right) / \sqrt{2}
$ 
and
$
P = \mathrm{i} \left(  a^\dagger  - a  \right) / \sqrt{2}
$.

\item For all $t \geq 0$,
$
\left\vert 
\left( 1 - \Tr \left( \rho_t ^2  \right)  \right) 
- 
 \left( 1- d^2 \right) / 2 \right\vert
\leq
K \exp \left( - \delta_{sys} \, t \right) 
$.

\end{itemize}

\end{corollary}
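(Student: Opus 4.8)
The plan is to reduce every assertion of the corollary to Theorem~\ref{th:LongTime}. For the photon-number probabilities and the quadrature variances I would apply the mean-value estimate \eqref{eq:8.35}, checking in each case that the observable satisfies the relative $N$-bound \eqref{eq:LTB.c}; for the linear entropy $1 - \Tr(\rho_t^2)$, which is quadratic in $\rho_t$ and so is not covered by \eqref{eq:8.35}, I would instead exploit the trace-norm bound \eqref{eq:LTB.1}. Throughout, every field observable is understood as tensored with $I_{\mathbb{C}^2}$, so its matrix elements factorize as $\langle e_0 \otimes e_\pm, (A \otimes I) (e_0 \otimes e_\pm)\rangle = \langle e_0, A\, e_0\rangle$, independently of the atomic label.

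First, for the photon-number statistics I take $A = \ketbra{e_n}{e_n} \otimes I_{\mathbb{C}^2}$, a self-adjoint operator of norm $1$, so that $\max\{\|A x\|, \|A^\star x\|\} \leq \|x\| \leq \|x\|_N$ and \eqref{eq:LTB.c} holds with $\widetilde{K} = 1$. Since $\langle e_0, \ketbra{e_n}{e_n} e_0\rangle = |\langle e_n, e_0\rangle|^2$, the limiting value in \eqref{eq:8.35} is
$$
\frac{d+1}{2}\,|\langle e_n, e_0\rangle|^2 + \frac{1-d}{2}\,|\langle e_n, e_0\rangle|^2 = |\langle e_n, e_0\rangle|^2 ,
$$
which equals $1$ for $n = 0$ and $0$ for $n \geq 1$, giving both estimates of the first bullet.

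Second, for the quadratures I apply \eqref{eq:8.35} to $Q$, $Q^2$, $P$, $P^2$ (each tensored with $I_{\mathbb{C}^2}$). These are essentially self-adjoint; using $\|N^{1/2} x\|^2 = \langle x, N x\rangle \leq \tfrac12(\|x\|^2 + \|Nx\|^2) \leq \tfrac12\|x\|_N^2$ one verifies \eqref{eq:LTB.c} for the first-order operators $Q, P$ and, since $Q^2, P^2$ are of second order in $a, a^\dagger$, for these as well with a suitable $\widetilde{K}$. A direct computation gives $\langle e_0, Q\, e_0\rangle = \langle e_0, P\, e_0\rangle = 0$ and $\langle e_0, Q^2 e_0\rangle = \langle e_0, P^2 e_0\rangle = \tfrac12$, so \eqref{eq:8.35} yields that $\Tr(\rho_t Q)$ and $\Tr(\rho_t P)$ decay like $\exp(-\delta_{sys} t)$, whence their squares decay like $\exp(-2\delta_{sys} t)$, while $\Tr(\rho_t Q^2)$ and $\Tr(\rho_t P^2)$ converge to $\tfrac12$ at rate $\delta_{sys}$. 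Because $\exp(-2\delta_{sys} t) \leq \exp(-\delta_{sys} t)$ for $t \geq 0$, the two contributions combine into the claimed variance bounds after enlarging $K$. For the linear entropy I would not differentiate $\Tr(\rho_t^2)$ but instead write $\rho_t^2 - \varrho_\infty^2 = \rho_t(\rho_t - \varrho_\infty) + (\rho_t - \varrho_\infty)\varrho_\infty$ and use $|\Tr(AB)| \leq \|A\|\,\|B\|_1$ together with $\|\rho_t\| \leq 1$ and $\|\varrho_\infty\| \leq 1$ to get $|\Tr(\rho_t^2) - \Tr(\varrho_\infty^2)| \leq 2\,\Tr(|\rho_t - \varrho_\infty|)$. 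A short computation gives $\Tr(\varrho_\infty^2) = (1 + d^2)/2$, hence $1 - \Tr(\varrho_\infty^2) = (1 - d^2)/2$, and the third bullet follows from \eqref{eq:LTB.1}.

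The only genuinely non-routine points are the verification of the relative $N$-bounds \eqref{eq:LTB.c} for the second-order operators $Q^2$ and $P^2$, which is what licenses the use of \eqref{eq:8.35}, and the realization that the quadratic functional $\Tr(\rho_t^2)$ must be controlled through the trace norm rather than through mean-value convergence. Everything else amounts to evaluating the matrix elements $\langle e_0 \otimes e_\pm, (A \otimes I)\, e_0 \otimes e_\pm\rangle$ and collecting exponential factors, which I expect to pose no difficulty.
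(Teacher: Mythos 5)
Your proposal is correct and follows essentially the same route the paper intends: the corollary is left without a separate proof precisely because, as you do, the first two bullets follow from the mean-value estimate \eqref{eq:8.35} applied to $\ketbra{e_n}{e_n}\otimes I$, $Q$, $P$, $Q^2$, $P^2$ (after verifying the relative bound \eqref{eq:LTB.c}), while the linear-entropy bullet follows from the trace-norm bound \eqref{eq:LTB.1} via $\left\vert \Tr \left( \rho_t^2 \right) - \Tr \left( \varrho_{\infty}^2 \right) \right\vert \leq 2 \, \Tr \left( \left\vert \rho_t - \varrho_{\infty} \right\vert \right)$. Your limiting values ($\delta_{n0}$, $1/2$, and $\left(1-d^2\right)/2$) and the absorption of the faster-decaying term $\Tr \left( \rho_t \, Q \right)^2$ into the constant are all correct.
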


We equip the phase space  
$ \left( \mathfrak{L}_{1,N}^{+} \left( \ell^2 \left(\mathbb{Z}_+ \right) \otimes \mathbb{C}^2 \right) ,   \mathfrak{d}_{N} \right)$
with the distance $ \mathfrak{d}_{N} $ defined below.
Applying  Theorem \ref{th:LongTime} we obtain that
 $ \mathfrak{d}_{N} \left( \rho_t  , \varrho_{\infty} \right) $ converges exponentially fast to $0$ as $t \rightarrow + \infty$
whenever $\rho_0 \in \mathfrak{L}_{1,N}^{+} \left( \ell^2 \left(\mathbb{Z}_+ \right) \otimes \mathbb{C}^2 \right) $
and $ C_\mathfrak{b} < 1 $.
Therefore,
$\varrho_{\infty}$ is the globally stable equilibrium point  of the dynamical system on 
$ \left( \mathfrak{L}_{1,N}^{+} \left( \ell^2 \left(\mathbb{Z}_+ \right) \otimes \mathbb{C}^2 \right) ,   \mathfrak{d}_{N} \right)$
given by   (\ref{eq:Laser1}) with $\omega \neq 0$ and $ C_\mathfrak{b} < 1 $.

\begin{definition}
\label{def_distancia}
For any $  \varrho, \widetilde{ \varrho } \in  \mathfrak{L}_{1,C}^{+} \left( \mathfrak{h}\right) $
we define $ \mathfrak{d}_{C} \left( \varrho , \widetilde{ \varrho } \right) $ to be the supremum over
all $ \left\vert  \Tr  \left(  A  \left(    \varrho  - \widetilde{ \varrho } \right) \right) \right\vert $
with $ A : \mathfrak{h} \rightarrow \mathfrak{h} $ linear operator  satisfying 
\[
\max \left\{ \left\Vert   A \, x \right\Vert^2,  \left\Vert   A^* \, x \right\Vert^2 \right\} 
\leq 
\left( \left\Vert   x \right\Vert^2 + \left\Vert   C \, x \right\Vert^2  \right) / 2
\qquad \qquad 
\forall  x \in \mathcal{D}\left(C\right) .
\]
\end{definition}

\begin{remark}
The space $  \mathfrak{L}_{1,C}^{+} \left( \mathfrak{h}\right) $ equipped with 
$ \mathfrak{d}_{C}  $, given by Definition \ref{def_distancia}, is a metric space. 
Moreover,
$
\Tr  \left(   \left\vert   \varrho  - \widetilde{ \varrho } \right\vert \right)
 \leq
 \sqrt{2} \, \mathfrak{d}_{C} \left( \varrho , \widetilde{ \varrho } \right) 
$
for all $  \varrho, \widetilde{ \varrho } \in  \mathfrak{L}_{1,C}^{+} \left( \mathfrak{h}\right) $,
and 
$ \mathfrak{d}_{I} \left( \varrho , \widetilde{ \varrho } \right) 
=  
\Tr  \left(   \left\vert   \varrho  - \widetilde{ \varrho } \right\vert \right)  
$,
where $I$ is the identity operator in $ \mathfrak{h}$.
\end{remark}

We turn to the case $C_\mathfrak{b} > 1$.
Next, we deal with the long-time convergence of the solution of
 (\ref{eq:Laser1}) with $\omega \neq 0$ and  
$ \kappa^2 +  5 \kappa \gamma >  \gamma \left( \kappa - 3 \gamma \right) C_\mathfrak{b}$.
In this case,
the cavity is good (i.e., $\kappa \leq 3 \gamma $)
or $C_\mathfrak{b} $ is less than the second threshold 
$ 
\left( \kappa^2 +  5 \kappa \gamma \right) / \left(  \gamma \left( \kappa - 3 \gamma \right)  \right)
$.
For these parameter values,
one can obtain the local stability of the  periodic solutions of the Maxwell-Bloch equations  \eqref{eq:Lorenz}
by using linear stability analysis (see, e.g., Lemma  \ref{lem:LorenzBifurcation} and  \cite{NingHaken1990}).

\begin{theorem}
\label{th:LimitCycle}
Let  $C_\mathfrak{b} > 1$ and $\omega \neq 0$.
Assume that $\kappa \leq 3 \gamma $ 
or that $\kappa > 3 \gamma $ and $ \kappa^2 +  5 \kappa \gamma >  \gamma \left( \kappa - 3 \gamma \right) C_\mathfrak{b}$. 
Then,
there exist constants $\epsilon, \lambda > 0$ such that for any
$N$-weak solution $\left( \rho_t  \right)_{t \geq 0}$ to (\ref{eq:Laser1}) with $\omega \neq 0$
we have 
\begin{equation}
 \label{eq:LC4}
 \Tr \left( \left\vert  
 \rho_t 
 - 
 \varrho_c \left( \omega t - \theta_{\infty} \right)
 \right\vert  \right)
\leq
  K \left( \Tr\left( \rho_{0} \, N  \right)  \right) \exp \left(  - \lambda t \right) 
 \hspace{3cm} \forall t \geq 0
\end{equation}
provided that the initial datum $\rho_0 \in \mathfrak{L}_{1,N}^{+} \left( \ell^2 \left(\mathbb{Z}_+ \right) \otimes \mathbb{C}^2 \right) $ satisfies 
\begin{equation}
\label{eq:CondStabEpsilon}
\left\{
\begin{aligned}
 &
  \left\vert   \left\vert  \Tr\left(  \rho_{0}  \, a \right)  \right\vert  -  \frac{\gamma \sqrt{ C_\mathfrak{b} - 1} }{\sqrt{2} \left\vert g \right\vert}  \right\vert 
 < \epsilon ,
 \quad
  \left\vert  \Tr\left( \rho_{0} \, \sigma^{-}  \right)   -   \frac{ \kappa }{ g } \frac{\gamma \sqrt{ C_\mathfrak{b} - 1} }{\sqrt{2} \left\vert g \right\vert}  
 \frac{  \Tr\left(  \rho_{0} \, a \right) }{ \left\vert  \Tr\left(  \rho_{0}  \, a \right)  \right\vert }\right\vert
 < \epsilon ,
 \\
&  
\text{and }
\left\vert \Tr\left( \rho_{0} \, \sigma^{3} \right) -  \frac{d}{C_\mathfrak{b}} \right\vert 
< \epsilon .
\end{aligned}
\right.
\end{equation}
Here,
$\varrho_c $ is given by (\ref{eq:Def_orbita}),
and $\theta_{\infty} \in  \left[ 0, 2 \pi \right[$ is the argument of the unit complex number 
\begin{equation*}
\frac{  \Tr\left(  \rho_{0} \, a \right) }{  \left\vert   \Tr\left(  \rho_{0} \, a \right) \right\vert } 
\exp \left(  \mathrm{i}  g \int_0^{+ \infty} \Im \left( \frac{S  \left( s \right) }{  A \left( s \right) } \right) ds \right) ,
\end{equation*}
where
$\left( A \left( t  \right), S \left( t  \right), D \left( t  \right)  \right)$ is the solution of  \eqref{eq:Lorenz}
with  $\omega = 0$,
$ A \left( 0 \right) = \Tr\left( \rho_0 \, a \right) $,
$ S \left( 0  \right)  = \Tr\left( \rho_0 \, \sigma^{-}  \right) $
and
$ D \left( 0  \right) = \Tr\left(  \rho_0 \,  \sigma^{3} \right) $.
Moreover, 
under the condition (\ref{eq:CondStabEpsilon}) we have
\begin{equation}
 \label{eq:LC5}
  \left\vert  
  \Tr \left(  \rho_t  \, A \right)
  -
   \Tr \left(    \varrho_c \left( \omega t - \theta_{\infty} \right) A \right)
\right\vert  
\leq
 \widetilde{K}  \, K\left( \Tr\left( \rho_{0} \, N  \right)  \right) \exp \left(  - \lambda t \right) 
 \hspace{3cm} \forall t \geq 0
\end{equation}
for any linear operator $A$ in 
$\ell^2 \left(\mathbb{Z}_+ \right) \otimes \mathbb{C}^2 $ satisfying 
\begin{equation}
\label{eq:LC6}
 \max \left\{\left\Vert  A  \, x  \right\Vert, \left\Vert  A^{\star}  \, x  \right\Vert \right\}
\leq 
\widetilde{K}  \left\Vert  x  \right\Vert_{N}
\qquad \forall x \in \mathcal{D}\left( N \right) ,
\end{equation}
where $\widetilde{K}  > 0$
and 
the non-decreasing non-negative function $ K\left( \cdot \right) $ does not depend on $A$.
\end{theorem}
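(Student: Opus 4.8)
The plan is to exploit the fact, established in \cite{FagMora2019}, that the $N$-weak solution $\left( \rho_t \right)_{t\geq 0}$ of the nonlinear equation \eqref{eq:Laser1} coincides with the solution of the \emph{linear} time-dependent GKSL master equation obtained by freezing the mean-field coefficients, i.e. by substituting the known scalar trajectories $\beta(t)=\Tr(\rho_t\,a)=A(t)$ and $\alpha(t)=\Tr(\rho_t\,\sigma^-)=S(t)$ into the mean-field Hamiltonian. Since $\left(A(t),S(t),D(t)\right)$ obeys the closed complex Lorenz system \eqref{eq:Lorenz}, the argument splits into a finite-dimensional part, the qualitative analysis of \eqref{eq:Lorenz}, and an infinite-dimensional part, the transport of the convergence of the coefficients up to the level of density operators. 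Accordingly, I would first pin down the asymptotic phase $\theta_\infty$ from the Lorenz dynamics and then carry the ODE estimate over to states through a variation of constants formula.

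For the finite-dimensional part I would remove the uniform rotation by setting $A(t)=e^{-\mathrm{i}\omega t}\tilde A(t)$, $S(t)=e^{-\mathrm{i}\omega t}\tilde S(t)$, $D(t)=\tilde D(t)$, which turns \eqref{eq:Lorenz} into the same system with $\omega=0$; this is exactly the flow used to define $\theta_\infty$ in the statement. For $C_\mathfrak{b}>1$ the origin $(0,0,d)$ has become unstable and the nonzero steady states of the $\omega=0$ system form an $S^1$-orbit whose modulus $\gamma\sqrt{C_\mathfrak{b}-1}/(\sqrt{2}\,|g|)$ and inversion $d/C_\mathfrak{b}$ match \eqref{eq:Def_CL}. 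The rotational symmetry $(\tilde A,\tilde S,\tilde D)\mapsto(e^{\mathrm{i}\phi}\tilde A,e^{\mathrm{i}\phi}\tilde S,\tilde D)$ makes this a circle of equilibria with one structural zero eigenvalue; I would compute the Jacobian at one point of the circle and apply the Routh--Hurwitz criterion to the transverse block, showing that all transverse eigenvalues have strictly negative real part precisely when $\kappa\le 3\gamma$, or $\kappa>3\gamma$ together with $\kappa^2+5\kappa\gamma>\gamma(\kappa-3\gamma)C_\mathfrak{b}$ (the first case makes the inequality automatic since $\kappa-3\gamma\le 0$). This is the origin of the second-threshold condition and matches the linear stability analysis recalled in Section \ref{sec:LorenzEquations}. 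Normal hyperbolicity of the circle then yields orbital exponential stability with an asymptotic phase: the overall phase velocity of $\tilde A$ equals $g\,\Im(\tilde S/\tilde A)$, its deviation from $0$ decays exponentially because $\tilde S/\tilde A\to\kappa/g\in\mathbb{R}$, so the integral $g\int_0^{+\infty}\Im\left(S(s)/A(s)\right)\,ds$ converges and fixes $\theta_\infty$. Consequently $\left(A(t),S(t),D(t)\right)$ converges, at some rate $\lambda_0>0$, to the mean-value trajectory of $\rho_t^{\theta_\infty}$ read off from \eqref{eq:Def_CL}, whenever \eqref{eq:CondStabEpsilon} places the initial data in the basin of the orbit.

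For the infinite-dimensional lifting I would use the variation of constants formula. Both $\rho_t$ and $\rho_t^{\theta_\infty}$ solve linear time-dependent GKSL equations with the same dissipative part $\mathcal{L}_\star^h$ of \eqref{eq:3.21} but with mean-field coefficients $\left(A(t),S(t)\right)$ and $\left(A_\infty(t),S_\infty(t)\right)$ respectively, the subscript $\infty$ denoting the periodic values; note $\rho_t^{\theta_\infty}$ indeed solves the frozen equation with its own coefficients since, by Theorem \ref{th:FreeS-LaserE}, it is a free interaction solution of \eqref{eq:Laser1}. Subtracting the two equations, the traceless self-adjoint difference $\Delta_t:=\rho_t-\rho_t^{\theta_\infty}$ obeys an inhomogeneous linear equation whose forcing is a commutator $g\,[V_s,\rho_s]$, where $V_s$ collects the coefficient mismatches $A(s)-A_\infty(s)$ and $S(s)-S_\infty(s)$ paired with $a,a^\dagger,\sigma^{\pm}$ as in the mean-field Hamiltonian of \eqref{eq:Laser1}; by the previous step this forcing decays like $e^{-\lambda_0 s}$ in the relevant weighted norm. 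Representing the homogeneous part by the two-parameter propagator of the frozen linear equation and writing $\Delta_t$ as the propagated initial discrepancy plus the Duhamel integral of the forcing, the bound \eqref{eq:LC4} will follow once one shows that this propagator contracts traceless operators exponentially in an $N$-weighted trace norm. Finally, the relative boundedness hypothesis \eqref{eq:LC6} lets one dominate $\left|\Tr(\rho_t A)-\Tr(\rho_t^{\theta_\infty}A)\right|$ by the $N$-weighted norm of $\Delta_t$, yielding \eqref{eq:LC5} from \eqref{eq:LC4}.

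The main obstacle is precisely the exponential contraction of the homogeneous propagator on traceless operators, together with the propagation of the $N$-moment bounds needed to make \eqref{eq:LC6} usable. The periodic target $\rho_t^{\theta_\infty}$ does not itself decay, so the contraction can hold only transversally to the phase direction; the point of choosing $\theta_\infty$ as the asymptotic phase is exactly to annihilate this neutral mode, so that no order-one phase mismatch survives in $\Delta_t$. I expect to obtain the transverse contraction and the uniform control $\sup_t\Tr(\rho_t N^p)<\infty$ through the probabilistic unraveling of the linear GKSL equation by a stochastic Schrödinger equation, estimating second moments and $N$-weighted norms along the associated trajectories, which is the substance of the new variation of constants formula announced in the introduction. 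Combining the ODE rate $\lambda_0$ with the propagator's contraction rate then produces the final rate $\lambda>0$ and the non-decreasing constants $K_1(\cdot)$ and $K_2(\cdot)$ in \eqref{eq:LC4} and \eqref{eq:LC5}.
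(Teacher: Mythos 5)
Your skeleton is the same as the paper's: pass to the rotating frame, prove local exponential stability with an asymptotic phase for the $\omega=0$ Lorenz system (the paper's Lemma \ref{lem:LorenzBifurcation}, done in polar coordinates with Routh--Hurwitz/Descartes, essentially your normal-hyperbolicity argument), write the solution by a variation of constants formula around the frozen constant-coefficient equation \eqref{eq:3.2g} (Theorem \ref{th:VariationConstants}, Corollary \ref{cor:Ineq-Conv}), and control $\Tr(\rho_t N)$ by an Ehrenfest argument (Lemma \ref{lem:EvolNumero}). The genuine gap is in the step you yourself flag as the main obstacle: the exponential decay of the homogeneous term. You defer it to ``probabilistic unraveling \dots estimating second moments and $N$-weighted norms'' and attribute it to the new variation of constants formula; neither is right. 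The variation of constants formula only supplies the Duhamel representation, and second-moment estimates along stochastic Schr\"odinger trajectories yield regularity and moment bounds, not a trace-norm convergence rate to equilibrium. In the paper this step is Theorem \ref{th:ConvAuxiliarQME}, whose proof is model-specific: the frozen generator factorizes into an atom part, which is an explicit finite-dimensional linear ODE (Lemma \ref{lem:ConvAtom}), and a field part, which is conjugated by a Weyl operator to a semigroup whose diagonal dynamics is a classical pure-death process, concluded via the inequality $\Tr \left( \left\vert \widetilde{\rho}_t - \ketbra{e_0}{e_0} \right\vert \right) \leq 2 \left( 1 - \langle e_0 , \widetilde{\rho}_t \, e_0 \rangle \right)^{1/2}$ (Lemma \ref{lem:ConvFoton}). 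Note also that once the reference coefficients are frozen at the asymptotic phase, the frozen semigroup has a \emph{unique globally attracting} equilibrium, so no ``transversal contraction on traceless operators'' is needed at the operator level: the neutral phase mode is disposed of entirely at the ODE level, exactly as you suspected, but this means the operator-level statement you need is full convergence to equilibrium, and you give no viable proof of it.

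The second gap is the passage from \eqref{eq:LC4} to \eqref{eq:LC5}. Estimate \eqref{eq:LC4} is decay in the plain trace norm, and for $A$ satisfying \eqref{eq:LC6} (e.g. $A=N$) the quantity $\left\vert \Tr \left( \Delta_t A \right) \right\vert$ is \emph{not} controlled by $\Tr \left\vert \Delta_t \right\vert$: a trace-norm-small difference supported near photon number $n$ contributes of order $n \, \Tr \left\vert \Delta_t \right\vert$. Nothing in your scheme establishes decay of an $N$-weighted trace norm of $\Delta_t$ — the frozen propagator is a contraction only for the unweighted trace norm, so your Duhamel bound does not propagate weighted estimates — and the uniform bound $\sup_t \Tr ( \rho_t N^p ) < \infty$ you invoke is too weak. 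The paper closes this with three further ingredients: the exponential convergence of the photon-number mean, estimate \eqref{eq:LC3}; the Poissonian tail bounds of the limiting coherent state $\widetilde{\rho}_{\infty}$; and a time-dependent truncation by the projections $P_n$ with $n \sim t$, balancing $\left\Vert A P_n \right\Vert \leq \widetilde{K} ( n+1 )$ against the exponential decays. You would need these, or an equivalent weighted-norm analysis, for \eqref{eq:LC5} to follow.
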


\begin{proof}
 Deferred to Section  \ref{sec:Proof:LimitCycle}.
In order to facilitate the reading of the proof,
we advice to read first the proofs of  Theorems \ref{th:FreeS-LaserE}  and \ref{th:LongTime},
although it is not necessary. 
\end{proof}

According to Theorem \ref{th:LimitCycle} and Lemma \ref{lem:VecindadCL}, given below,
we have that there exist a neighborhood of the limit cycle 
$\left\{ \varrho_c \left( \vartheta \right): \vartheta \in \left[ 0, 2 \pi \right] \right\}$
in 
$ \left( \mathfrak{L}_{1,N}^{+} \left( \ell^2 \left(\mathbb{Z}_+ \right) \otimes \mathbb{C}^2 \right) ,   \mathfrak{d}_{N} \right)$
such that 
$ \rho_t $  approaches exponentially fast to $\left\{ \varrho_c \left( \vartheta \right): \vartheta \in \left[ 0, 2 \pi \right] \right\}$
--in the metric $ \mathfrak{d}_{N}$--
whenever $ \rho_0 $  is in this neighborhood.
Hence,
$\left\{ \varrho_c \left( \vartheta \right): \vartheta \in \left[ 0, 2 \pi \right] \right\}$
is a  stable limit cycle of the dynamical system on 
$ \left( \mathfrak{L}_{1,N}^{+} \left( \ell^2 \left(\mathbb{Z}_+ \right) \otimes \mathbb{C}^2 \right) ,   \mathfrak{d}_{N} \right)$
given by  (\ref{eq:Laser1}) with $\omega \neq 0$ and  
$ \kappa^2 +  5 \kappa \gamma >  \gamma \left( \kappa - 3 \gamma \right) C_\mathfrak{b}$.
By Theorem \ref{th:LongTimeD},
the basis of attraction of 
$\left\{ \varrho_c \left( \vartheta \right): \vartheta \in \left[ 0, 2 \pi \right] \right\}$
is not composed by all  $N$-regular density operator different from $ \varrho_{\infty} $.

\begin{lemma}
\label{lem:VecindadCL}
 Consider $\varrho \in \mathfrak{L}_{1,N}^{+} \left( \ell^2 \left(\mathbb{Z}_+ \right) \otimes \mathbb{C}^2 \right) $
 and  $C_\mathfrak{b} > 1$.
 If 
 $ \Tr\left( \varrho \, a \right) \neq 0$ and 
\begin{equation}
\label{eq:DistLC}
\inf \left\{   \mathfrak{d}_{N} \left( \varrho , \varrho_c \left( \vartheta \right) \right) 
 :  \vartheta \in \left[ 0, 2 \pi \right] \right\}
 < 
 \varepsilon ,
\end{equation}
then  the inequalities (\ref{eq:CondStabEpsilon}) hold with 
 $\rho_{0} = \varrho$ and $\epsilon = \sqrt{2} \, \varepsilon  \left( 1 + 2 \, \kappa / g  \right)$.
Here,  
$\varrho_c$ and  $ \mathfrak{d}_{C} $ are described by (\ref{eq:Def_orbita}) and Definition  \ref{def_distancia},
respectively.
 Moreover, 
 $ \Tr\left( \varrho \, a \right) \neq 0$ 
 whenever 
 $
 \inf \left\{   \mathfrak{d}_{N} \left( \varrho , \varrho_c \left( \vartheta \right) \right) 
 :  \vartheta \in \left[ 0, 2 \pi \right] \right\}
 <
  \gamma \sqrt{ C_\mathfrak{b} -1 } / \left(  2 \left| g \right| \right) 
 $.
\end{lemma}

\begin{proof}
 Deferred to Section  \ref{sec:Proof:VecindadCL}.
\end{proof}

Under the hypotheses of Theorem \ref{th:LimitCycle},
the laser operates stably.  
Applying Theorem \ref{th:LimitCycle} we deduce that 
if the initial density operator is in a small enough neighborhood of 
the orbit $\left\{ \varrho_c \left( \vartheta \right): \vartheta \in \left[ 0, 2 \pi \right] \right\}$,
then, for instance, 
the probability distribution of finding $n$ photons are pulled toward a Poissonian statistics,
the product of the standard deviations of the position and momentum operators converges to the 
lower bound of the Heisenberg's uncertainty principle,
and
the quantum linear entropy converges exponentially fast to 
$
1/2 + d^2 / \left( 2 \, C_{\mathfrak{b}}^2 \right) - d^2/ C_{\mathfrak{b}}
$.

\begin{corollary}
Let  $C_\mathfrak{b} > 1$ and $\omega \neq 0$.
Suppose that  $\kappa \leq 3 \gamma $ 
or that $\kappa > 3 \gamma $ and $ \kappa^2 +  5 \kappa \gamma >  \gamma \left( \kappa - 3 \gamma \right) C_\mathfrak{b}$. 
Then,
there exist constants $\epsilon, \lambda > 0$ such that the fulfillment of \eqref{eq:CondStabEpsilon} implies that:

\begin{itemize}
 
 \item There exists $K > 0$ such that  all $t \geq 0$ and $n \in \mathbb{N}$,
 \[
\left\vert 
\Tr \left( \rho_t \, \ketbra{e_n}{e_n}  \right)
-
 \hbox{\rm e}^{  -  \frac{ \gamma^2 \left( C_\mathfrak{b} -1 \right) }{ 2 g^2 }  }
\left(  \frac{ \gamma^2 \left( C_\mathfrak{b} -1 \right) }{ 2 g^2 } \right)^n / n!
\right\vert
\leq K \exp \left( - \lambda \, t \right) .
\]

\item
For all $t \geq 0$,
$
\left\vert 
\Tr \left( \rho_t \, Q^2  \right) - \Tr \left( \rho_t \, Q  \right)^2
- 
1/2
\right\vert
\leq K \exp \left( - \lambda \, t \right)
$
and
\[
\left\vert 
\Tr \left( \rho_t \, P^2  \right) - \Tr \left( \rho_t \, P  \right)^2
- 
1/2
\right\vert 
\leq K \exp \left( - \lambda \, t \right) .
\]

\item For all $t \geq 0$,
$
\left\vert 
\left( 1 - \Tr \left( \rho_t ^2  \right)  \right) 
- 
\left( \frac{1}{2} + \frac{1}{2} \frac{d^2}{C_{\mathfrak{b}} ^2} - \frac{d^2}{C_{\mathfrak{b}}} \right)
\right\vert
\leq
K \exp \left( -  \lambda \, t \right) 
$.

\end{itemize}

\end{corollary}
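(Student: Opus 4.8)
The plan is to derive all three items from Theorem~\ref{th:LimitCycle} by exploiting the explicit factorized form of $\rho_t^{\theta_\infty}$ in \eqref{eq:Def_CL}. The structural fact I rely on is that the photon part of $\rho_t^{\theta_\infty}$ is the rank-one (pure) projection $\ketbra{\mathcal{E}(\zeta_t)}{\mathcal{E}(\zeta_t)}$ with $|\zeta_t|^2=\gamma^2(C_\mathfrak{b}-1)/(2g^2)$ independent of $t$, while the $2\times 2$ block $M_t$ has unit trace and carries its $t$-dependence only through phases. Hence every trace of $\rho_t^{\theta_\infty}$ against a fixed observable that I shall need is constant in $t$, and each target quantity is shown to lie within $O(e^{-\lambda t})$ of that constant, which I compute by hand.

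For the photon-number statistics I apply the mean-value estimate \eqref{eq:LC5} to the bounded observable $A=\ketbra{e_n}{e_n}\otimes I$; being an orthogonal projection it satisfies \eqref{eq:LC6} with $\widetilde{K}=1$. Since $A$ acts as the identity on $\mathbb{C}^2$ and $\Tr(M_t)=1$, one gets $\Tr(\rho_t^{\theta_\infty}A)=|\langle e_n,\mathcal{E}(\zeta_t)\rangle|^2=e^{-|\zeta_t|^2}|\zeta_t|^{2n}/n!$, the asserted Poisson weight of mean $\gamma^2(C_\mathfrak{b}-1)/(2g^2)$; substituting into \eqref{eq:LC5} gives the first item.

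For the quadratures I first verify that $Q,P,Q^2,P^2$ (tensored with the identity on $\mathbb{C}^2$) belong to the family \eqref{eq:LC6}: the operators $a,a^\dagger$ and their products up to order two are bounded by $N$ in graph norm — for instance $(n+1)(n+2)\le 3(n^2+1)$ shows that $a^{\dagger 2}$ is $N$-bounded — so each of them is $N$-bounded. Estimate \eqref{eq:LC5} then controls $\Tr(\rho_t Q)$, $\Tr(\rho_t Q^2)$ and their $P$-analogues. The only nonlinearity is the square $\Tr(\rho_t Q)^2$, handled by writing $\Tr(\rho_t Q)^2-\Tr(\rho_t^{\theta_\infty}Q)^2=(\Tr(\rho_t Q)-\Tr(\rho_t^{\theta_\infty}Q))(\Tr(\rho_t Q)+\Tr(\rho_t^{\theta_\infty}Q))$: the first factor is $O(e^{-\lambda t})$ by \eqref{eq:LC5}, and the second is bounded because $\Tr(\rho_t^{\theta_\infty}Q)$ is bounded (it equals $\sqrt{2}\,\Re\zeta_t$) and the difference is small. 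A direct computation gives that the variance of $Q$, and likewise of $P$, in the coherent state $\mathcal{E}(\zeta_t)$ equals $1/2$ for every $t$, which yields the second item.

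The purity $\Tr(\rho_t^2)$ is the genuinely new point: it is quadratic in $\rho_t$, hence not of the form $\Tr(\rho_t A)$, so \eqref{eq:LC5} does not apply and I must use the trace-norm estimate \eqref{eq:LC4} instead. Writing $\Tr(\rho_t^2)-\Tr((\rho_t^{\theta_\infty})^2)=\Tr((\rho_t-\rho_t^{\theta_\infty})(\rho_t+\rho_t^{\theta_\infty}))$ and applying the H\"older-type bound $|\Tr(XY)|\le\Tr|X|\,\|Y\|$ (with $\|\cdot\|$ the operator norm) together with $\|\rho_t\|\le 1$ and $\|\rho_t^{\theta_\infty}\|\le 1$, this is at most $2\,\Tr|\rho_t-\rho_t^{\theta_\infty}|\le 2K_1e^{-\lambda t}$. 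It remains to evaluate $\Tr((\rho_t^{\theta_\infty})^2)$: by the tensor factorization and since the photon part is a rank-one projection, $\Tr((\rho_t^{\theta_\infty})^2)=\Tr(M_t^2)$, and an elementary $2\times 2$ computation using $\kappa\gamma=g^2 d/C_\mathfrak{b}$ gives $\Tr(M_t^2)=\tfrac12+d^2/C_\mathfrak{b}-\tfrac12\,d^2/C_\mathfrak{b}^2$, so that $1-\Tr((\rho_t^{\theta_\infty})^2)=\tfrac12+\tfrac12\,d^2/C_\mathfrak{b}^2-d^2/C_\mathfrak{b}$, independent of $t$; combining the two bounds gives the third item. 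I expect this purity estimate — the passage from the trace-norm control of \eqref{eq:LC4} to the quadratic functional via the H\"older bound — to be the main obstacle, the other two items being routine applications of \eqref{eq:LC5}.
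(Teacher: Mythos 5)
Your proposal is correct and is essentially the paper's own argument: the paper gives no separate proof of this corollary, presenting it as a direct application of Theorem \ref{th:LimitCycle}, with the mean-value estimate \eqref{eq:LC5} handling the $N$-bounded observables (number projections, quadratures and their squares, whose relative bounds you verify) and the trace-norm bound \eqref{eq:LC4} combined with the H\"older inequality $|\Tr(XY)|\le \Tr|X|\,\|Y\|$ handling the quadratic purity functional, exactly as you do. One remark: your Poisson weight $e^{-|\zeta_t|^2}|\zeta_t|^{2n}/n!$ carries the correct negative exponent, so your computation in fact corrects a sign typo in the corollary's displayed formula, where the factor is printed as $\exp\left(+\,\gamma^2(C_\mathfrak{b}-1)/(2g^2)\right)$.
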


From the  proof of Theorems \ref{th:LongTime} and \ref{th:LimitCycle} we have 
$
\sup_{t \geq 0}  \Tr\left(  \rho_{t} \, N  \right) 
<
+ \infty
$,
and so combining Lemma 18 of \cite{Winter2016} (see also \cite{Shirokov2010}) 
with  Theorems \ref{th:LongTime} and \ref{th:LimitCycle} 
we obtain the long-time limit of the von Neumann entropy.

\begin{corollary}
Consider the $N$-weak solution $\left( \rho_t  \right)_{t \geq 0}$ 
to (\ref{eq:Laser1}) with $\omega \neq 0$.

\begin{itemize}
 
\item 
If $C_\mathfrak{b} < 1$, then
$
 \lim_{t \rightarrow + \infty} - \Tr \left( \rho_t   \log \left( \rho_t   \right) \right)
 =
 -\frac{1}{2} \log \left( \frac{1}{4} - \frac{d^2}{4} \right)
 - \frac{d}{2} \log \left( \frac{1 + d }{ 1 - d} \right)
 $.

\item If $C_\mathfrak{b} > 1$ and 
$\gamma \left( \kappa - 3 \gamma \right) C_\mathfrak{b} <  \kappa^2 +  5 \kappa \gamma $,
then
there exist $\epsilon > 0$ such that
$ - \Tr \left( \rho_t   \log \left( \rho_t   \right) \right) $ converges to
\[
-
 \frac{1}{2} \log \left( \frac{1}{4} -   \frac{d^2}{4 C_{\mathfrak{b}} ^2} \left( 2 C_{\mathfrak{b}} - 1 \right)
 \right)
 -
 \frac{d \sqrt{2 C_{\mathfrak{b}} - 1} }{2 C_{\mathfrak{b}}} 
 \log \left(
 \frac{ C_{\mathfrak{b}} + d \sqrt{ 2 C_{\mathfrak{b}} - 1} }
 { C_{\mathfrak{b}} - d \sqrt{ 2 C_{\mathfrak{b}} - 1} } 
 \right)
 \]
 as $t \rightarrow + \infty $ in case  \eqref{eq:CondStabEpsilon} holds.
 
\end{itemize}
 
\end{corollary}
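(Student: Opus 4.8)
The plan is to upgrade the trace-norm convergence already proved in Theorems~\ref{th:LongTime} and~\ref{th:LimitCycle} to convergence of the von Neumann entropy $\mathrm{S}(\sigma):=-\Tr\left(\sigma\log\sigma\right)$ by invoking the uniform continuity of $\mathrm{S}$ on sets of states with bounded mean photon number. As recorded just above the statement, the proofs of those two theorems yield the a priori bound $E:=\sup_{t\geq 0}\Tr\left(\rho_t\,N\right)<+\infty$. Lemma~18 of~\cite{Winter2016} then applies: on the convex set $\{\sigma:\Tr\left(\sigma\,N\right)\leq E\}$ the functional $\mathrm{S}$ is uniformly continuous in the trace norm, with a modulus of continuity that depends only on $E$ and vanishes as the trace distance tends to $0$. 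Consequently, for any reference family $\left(\sigma_t\right)_{t\geq 0}$ with $\sup_t\Tr\left(\sigma_t\,N\right)<+\infty$ and $\Tr\left(\left\vert\rho_t-\sigma_t\right\vert\right)\to 0$, one obtains $\mathrm{S}\left(\rho_t\right)-\mathrm{S}\left(\sigma_t\right)\to 0$.

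For the first item I would take $\sigma_t\equiv\varrho_\infty$. By Theorem~\ref{th:LongTime} the right-hand side of~\eqref{eq:LTB.1} tends to $0$, and $\Tr\left(\varrho_\infty\,N\right)=0$ because the field factor of~\eqref{eq:I10} is the pure ground state $\ketbra{e_0}{e_0}$; hence $\mathrm{S}\left(\rho_t\right)\to\mathrm{S}\left(\varrho_\infty\right)$. Since $\varrho_\infty$ factorizes as a pure field state tensored with the diagonal atomic state of eigenvalues $\tfrac{1\pm d}{2}$, its entropy is the binary entropy $-\tfrac{1+d}{2}\log\tfrac{1+d}{2}-\tfrac{1-d}{2}\log\tfrac{1-d}{2}$, which upon regrouping the logarithms equals the stated value.

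For the second item I would take $\sigma_t=\rho_t^{\theta_\infty}$. Under~\eqref{eq:CondStabEpsilon}, Theorem~\ref{th:LimitCycle} gives $\Tr\left(\left\vert\rho_t-\rho_t^{\theta_\infty}\right\vert\right)\to 0$, while the reference family has the constant mean photon number $\Tr\left(\rho_t^{\theta_\infty}\,N\right)=\gamma^2\left(C_\mathfrak{b}-1\right)/\left(2g^2\right)$, so the energy hypothesis holds uniformly in $t$. The decisive point is that $\rho_t^{\theta_\infty}$ is the unitary conjugate $\exp\left(-\mathrm{i}\tfrac{\omega}{2}\left(2N+\sigma^3\right)t\right)\,\rho_0^{\theta_\infty}\,\exp\left(\mathrm{i}\tfrac{\omega}{2}\left(2N+\sigma^3\right)t\right)$ of $\rho_0^{\theta_\infty}$, so that $\mathrm{S}\left(\rho_t^{\theta_\infty}\right)=\mathrm{S}\left(\rho_0^{\theta_\infty}\right)$ is independent of $t$; therefore $\mathrm{S}\left(\rho_t\right)\to\mathrm{S}\left(\rho_0^{\theta_\infty}\right)$. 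Reading off~\eqref{eq:Def_CL}, the state $\rho_0^{\theta_\infty}$ is a pure coherent state tensored with the Hermitian $2\times 2$ block whose eigenvalues, using $\kappa\gamma=g^2 d/C_\mathfrak{b}$, are $\tfrac12\pm d\sqrt{2C_\mathfrak{b}-1}/\left(2C_\mathfrak{b}\right)$; the associated binary entropy is exactly the displayed limit.

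The main obstacle is not any single computation but the uniform verification of the energy constraint along the whole trajectory: one has to make sure the mean photon numbers of both $\rho_t$ and the reference states stay bounded so that the modulus of continuity furnished by Lemma~18 of~\cite{Winter2016} does not degenerate as $t\to+\infty$. Once the bound $\sup_t\Tr\left(\rho_t\,N\right)<+\infty$ is granted from the earlier proofs, the convergence of the entropies is automatic, and only the elementary diagonalization of the atomic $2\times 2$ blocks remains in order to identify the limiting values.
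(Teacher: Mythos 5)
Your proposal is correct and takes essentially the same route as the paper: the authors likewise deduce the corollary by combining the uniform bound $\sup_{t\geq 0}\Tr\left(\rho_t\,N\right)<+\infty$ (extracted from the proofs of Theorems \ref{th:LongTime} and \ref{th:LimitCycle}) with the energy-constrained entropy continuity bound of Lemma 18 of \cite{Winter2016} and the trace-norm convergence supplied by those two theorems. Your explicit diagonalization of the atomic $2\times 2$ blocks and the identification of the limits as binary entropies simply fills in the computation the paper leaves implicit, and it is accurate.
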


\begin{remark}
If we change the phase of the electromagnetic field of the laser modeled by  \eqref{eq:Laser1},
then 
the evolution of the density operator $ \hat{\rho}_t$  describing the  laser,
under the mean field approximation, 
is governed by the GKSL equation \eqref{eq:Laser1} with  the mean-field Hamiltonian $\mathcal{H}$ replaced by 
\begin{align*}
 \mathcal{H} \left( \varrho \right) 
& =
\omega \left(  a^\dagger a + \sigma^{3} / 2 \right) 
+ 
g  \left( \mathrm{e}^{ \mathrm{i} \phi}  \Tr \left( \sigma^{-}  \varrho \right) a^{\dagger}  +  \mathrm{e}^{ - \mathrm{i} \phi} \Tr \left( \sigma^{+}  \varrho  \right) a 
+  \mathrm{e}^{ \mathrm{i} \phi} \Tr\left( a^{\dagger}  \varrho  \right) \sigma^{-}  + \mathrm{e}^{ - \mathrm{i} \phi} \Tr\left( a \,  \varrho  \right) \sigma^{+}    
\right) .
\end{align*}
Since 
$
\rho_t
 = 
 \exp \left(  \mathrm{i} \left(  \pi / 2 - \phi \right) N \right) \,
 \hat{\rho}_t  \,
 \exp \left( -  \mathrm{i} \left(  \pi / 2 - \phi \right) N \right)
$
satisfies  (\ref{eq:Laser1}),
the long time behavior of $\hat{\rho}_t$ is characterized by that of  $\rho_t$.
\end{remark}

\section{Linear quantum master equation}
\label{sec:LinearQMEs}

This section is devoted to the non-homogeneous  linear evolution equation (\ref{eq:AuxiliarGKSL}).
Suppose for a moment that $\rho_t$ is the  $N$-weak solution of  the  mean-field laser equation (\ref{eq:Laser1})
with $\rho_0 \in \mathfrak{L}_{1,N}^{+} \left( \ell^2 \left(\mathbb{Z}_+ \right) \otimes \mathbb{C}^2 \right) $.
According to  \cite{FagMora2019} we have that
$t \mapsto \left( \Tr\left( a \, \rho_{t} \right),  \Tr\left( \sigma^{-} \rho_{t}  \right) ,  \Tr\left(  \sigma^{3}  \rho_{t} \right) \right) $
is the solution to the  Maxwell-Bloch equations (\ref{eq:Lorenz}) with $\left( \Tr\left( a \, \rho_{0} \right),  \Tr\left( \sigma^{-} \rho_{0}  \right) ,  \Tr\left(  \sigma^{3}  \rho_{0} \right) \right) $ as initial condition.
Then, as in \cite{FagMora2019} we replace in  (\ref{eq:Laser1}) the functions 
$  \Tr\left( \sigma^{-} \rho_{t}  \right) $ and  $  \Tr\left( a \, \rho_{t} \right) $ 
by $S \left( t  \right)$ and $ A \left( t \right)$, respectively, 
to obtain that $\rho_t$ satisfies  (\ref{eq:AuxiliarGKSL}) with initial datum 
$\rho_0$ and coefficients  $\alpha \left( t \right) = g \,  S\left( t  \right)$ and $\beta \left( t \right) = g \,  A\left( t  \right)$.
We use this relation 
to study the equilibrium point of the  mean-field laser equation.
Moreover,
in order to study the cycle of  (\ref{eq:Laser1}) with  $C_\mathfrak{b} > 1$, 
applying the unitary transformation 
$
\rho_t
\mapsto 
\exp \left(  \mathrm{i}  \omega  \left( N +  \sigma^3 / 2 \right) t \right)
\rho_t
\exp \left( - \mathrm{i}  \omega  \left( N +  \sigma^3 / 2 \right) t \right)
$
we transform  (\ref{eq:Laser1}) into  (\ref{eq:Laser1}) with $\omega = 0$.
This leads to treat the fixed points of the system formed by  (\ref{eq:Lorenz}) 
and  (\ref{eq:AuxiliarGKSL}) 
with $\alpha \left( t \right) = g \,  S\left( t  \right)$ and $\beta \left( t \right) = g \,  A\left( t  \right)$,
in case $\omega = 0$.

If the functions $\alpha \left( t \right)$ and $\beta \left( t \right)$ are constant,
i.e.,  $\alpha \left( t \right) \equiv  \alpha_0 \in \mathbb{C}$ and $ \beta \left( t \right) \equiv  \beta_0 \in \mathbb{C}$,
then 
(\ref{eq:AuxiliarGKSL}) becomes the autonomous quantum master equation 
\begin{equation}
\label{eq:3.2g}
\left\{
 \begin{aligned}
  \frac{d}{dt}  \rho^R_t \left( \varrho \right)
& =
\mathcal{L}_{\star}^h \, \rho^R_t
+
\left[ 
\alpha_0 \, a^\dagger -  \overline{\alpha_0 } \, a 
+ \overline{\beta_0 } \, \sigma^{-}  - \beta_0  \,\sigma^+  ,
\rho^R_t \right]
\hspace{1cm} \forall t \geq 0 
\\
\rho^R_0 \left( \varrho \right) 
& = \varrho  
\end{aligned}
\right. ,
\end{equation}
where 
$ \mathcal{L}_{\star}^h $ is defined by (\ref{eq:3.21});
we  recall that  
$d\in \left]-1,1 \right[$, $\omega \in\mathbb{R}$ and  $\kappa,\gamma>0$.
For any $\varrho \in \mathfrak{L}_{1, N^p}^{+} \left( \ell^2(\mathbb{Z}_+)\otimes \mathbb{C}^2 \right) $
with $p \in \mathbb{N}$,
(\ref{eq:3.2g}) has a unique $N^p$-weak solution (see  \cite{FagMora2019}).
Using Theorems 4.1 and 4.3 of \cite{MoraAP} we get that
the family of bounded linear operators 
\[
\left(
\rho^R_t :   \mathfrak{L}_{1, N}^{+}  \left( \ell^2(\mathbb{Z}_+)  \otimes   \mathbb{C}^2 \right)  
\rightarrow 
\mathfrak{L}_{1} \left( \ell^2(\mathbb{Z}_+)   \otimes  \mathbb{C}^2 \right) 
\right)_{t \geq 0}
\]
can be extended uniquely to a  one-parameter semigroup of contractions 
$\left( \rho^R_t \left( \cdot \right) \right)_{t \geq 0}$ on 
$\mathfrak{L}_{1}\left( \ell^2(\mathbb{Z}_+)\otimes \mathbb{C}^2 \right) $,
which indeed is a $C_0$-semigroup as the next theorem shows.

\begin{theorem}
\label{th:ConvEqHE}
The family  $\left( \hspace{-0.3pt} \rho^R_t  \right)_{t \geq 0}$ 
is a strongly continuous semigroup on bounded linear operators on
$\mathfrak{L}_{1}   \left( \ell^2(\mathbb{Z}_+)   \otimes  \mathbb{C}^2 \right) $.
\end{theorem}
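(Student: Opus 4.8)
The plan is to verify that $\left(\rho^R_t\right)_{t\ge 0}$ is a $C_0$-semigroup by checking the three defining properties in turn. The semigroup law $\rho^R_{s+t}=\rho^R_s\,\rho^R_t$, the normalization $\rho^R_0=\mathrm{Id}$, and the contraction bound $\|\rho^R_t\|\le 1$ are already furnished by Theorems 4.1 and 4.3 of \cite{MoraAP}, as recalled just before the statement. Hence the only point left is strong continuity, i.e. $\lim_{t\to t_0}\Tr\left(\left\vert\rho^R_t(\varrho)-\rho^R_{t_0}(\varrho)\right\vert\right)=0$ for every $\varrho\in\mathfrak{L}_1(\ell^2(\mathbb{Z}_+)\otimes\mathbb{C}^2)$ and every $t_0\ge 0$. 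Because the semigroup is contractive, hence uniformly bounded, a standard $\varepsilon/3$ argument reduces this to establishing strong right-continuity at the origin on a trace-norm dense subset of $\mathfrak{L}_1$; once right-continuity at $0$ holds for all elements, the semigroup law together with $\|\rho^R_t\|\le 1$ upgrades it to two-sided continuity at every $t_0$.

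For the dense subset I would take the complex linear span of the $N$-regular density operators. By polarization this span contains every rank-one operator $\ketbra{u}{v}$ with $u,v\in\mathcal{D}(N)$, and such operators are dense in $\mathfrak{L}_1$ in trace norm, so by linearity it suffices to treat a single $N$-regular density operator $\varrho$. Two features of the flow are then decisive. First, $\left(\rho^R_t\right)$ preserves the trace and maps density operators to density operators, since the generator in \eqref{eq:3.2g} is the sum of the GKSL part $\mathcal{L}_{\star}^h$ and a commutator with the anti-self-adjoint operator $\alpha_0\,a^\dagger-\overline{\alpha_0}\,a+\overline{\beta_0}\,\sigma^{-}-\beta_0\,\sigma^{+}$; thus $\rho^R_t(\varrho)\ge 0$ and $\Tr\left(\rho^R_t(\varrho)\right)=\Tr(\varrho)=1$ for all $t\ge 0$. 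Second, the defining relation of the $N$-weak solution gives, for every bounded operator $B$, that $t\mapsto\Tr\left(B\,\rho^R_t(\varrho)\right)$ is continuous (indeed differentiable, the derivative being finite thanks to the $N$-regularity of $\rho^R_t(\varrho)$). Consequently $\Tr\left(B\,\rho^R_t(\varrho)\right)\to\Tr\left(B\,\varrho\right)$ as $t\downarrow 0$ for every bounded $B$; taking $B=\ketbra{\psi}{\phi}$ shows that $\rho^R_t(\varrho)\to\varrho$ in the weak operator topology.

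Having weak convergence of positive trace-class operators together with convergence of the traces, I would close the argument by invoking the standard convergence theorem for the trace ideal (Gr\"umm's theorem, the operator analogue of Scheff\'e's lemma): if $\varrho_n,\varrho\ge 0$ are trace class with $\varrho_n\to\varrho$ weakly and $\Tr(\varrho_n)\to\Tr(\varrho)$, then $\Tr\left(\left\vert\varrho_n-\varrho\right\vert\right)\to 0$. Applied along any sequence $t_n\downarrow 0$, this yields $\lim_{t\downarrow 0}\Tr\left(\left\vert\rho^R_t(\varrho)-\varrho\right\vert\right)=0$ for the chosen $N$-regular $\varrho$. Extending by the $\varepsilon/3$ argument to all $\varrho\in\mathfrak{L}_1$ and upgrading via the semigroup law as above then gives the full strong continuity of $\left(\rho^R_t\right)_{t\ge 0}$.

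The step I expect to be the real obstacle is precisely the passage from weak to trace-norm continuity. Weak(-$*$) continuity of $t\mapsto\rho^R_t(\varrho)$ is immediate from the weak-solution formulation, but trace-norm continuity is strictly stronger and fails for general, non-positive or non-trace-preserving families; the argument genuinely exploits both the positivity of $\rho^R_t(\varrho)$ and the conservation of the trace. The care therefore lies in justifying rigorously that $\left(\rho^R_t\right)$ is a trace-preserving, positivity-preserving flow on the $N$-regular density operators and that the Gr\"umm/Scheff\'e convergence theorem applies with exactly these hypotheses.
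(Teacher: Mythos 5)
Your proposal is correct, but it reaches the conclusion by a genuinely different route than the paper at the decisive step. Both arguments reduce the theorem, via the contraction property and an $\varepsilon/3$ density argument, to trace-norm continuity in $t$ for regular initial data; the difference is how that continuity is obtained. The paper does not re-derive it: it invokes Theorem 4.3 of \cite{MoraAP} (whose hypotheses are verified via \cite{FagMora2019}) to get $\lim_{s \rightarrow t} \Tr \left( \left\vert \rho^R_t \left( \varrho_n \right) - \rho^R_s \left( \varrho_n \right) \right\vert \right) = 0$ directly for $N$-regular non-negative $\varrho_n$, approximates an arbitrary non-negative trace-class operator by such $\varrho_n$ using Lemma 7.10 of \cite{MoraAP}, and finally extends to all of $\mathfrak{L}_{1}\left( \ell^2(\mathbb{Z}_+) \otimes \mathbb{C}^2 \right)$ by splitting a general element into real/imaginary and positive/negative parts. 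You instead prove the continuity for $N$-regular density operators from first principles: the weak-solution formulation gives differentiability, hence continuity, of $t \mapsto \Tr \left( B \rho^R_t \left( \varrho \right) \right)$ for every bounded $B$, and positivity together with trace preservation upgrade this weak continuity to trace-norm continuity by the Gr\"umm/Scheff\'e-type theorem (for positive trace-class operators, weak operator convergence plus convergence of traces implies trace-norm convergence — a true statement, though you are right to flag that the textbook form of Gr\"umm's theorem assumes strong convergence, so the positive-operator version is the one genuinely needed); your dense set is the span of $N$-regular densities, which by polarization contains all $\ketbra{u}{v}$ with $u, v \in \mathcal{D}\left( N \right)$, rather than the paper's approximation of positive operators. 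What your route buys is self-containedness: it uses only the defining property of weak solutions plus a standard trace-ideal convergence lemma, instead of the continuity result of \cite{MoraAP}, and it makes transparent exactly where positivity and trace preservation enter. What the paper's route buys is brevity, since the hard analytic work is delegated to results already established in \cite{MoraAP}.
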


\begin{proof}
 Deferred to Section  \ref{sec:Proof:th:ConvEqHE}.
\end{proof}

We rewrite (\ref{eq:AuxiliarGKSL}) as 
\begin{equation}
 \label{eq:AuxiliarGKSLr}
 \begin{aligned}
\frac{d}{dt}  \rho_t
& =
\mathcal{L}_{\star}^h \, \rho_t
+
\left[ 
\alpha_0 \, a^\dagger -  \overline{\alpha_0 } \, a 
+ \overline{\beta_0 } \, \sigma^{-}  - \beta_0  \,\sigma^+  ,
\rho_t \right]
\\
& \quad
+
\left[ 
\left( \alpha \left( t \right)  - \alpha_0 \right)  a^\dagger -  \overline{\left( \alpha \left( t \right)  - \alpha_0 \right)} a 
+ \overline{ \left( \beta  \left( t \right) - \beta_0 \right)} \sigma^{-}  
-  \left( \beta  \left( t \right) - \beta_0 \right) \sigma^+  , \rho_t \right] ,
\end{aligned}
\end{equation}
where $ \mathcal{L}_{\star}^h $ is given by (\ref{eq:3.21}).
Thus,
we see (\ref{eq:AuxiliarGKSL}) as a perturbation of (\ref{eq:3.2g}) in case 
the functions $\alpha  \left( t \right) $ and $\beta  \left( t \right) $
converge to the points $\alpha_0 $ and $ \beta_0 $ as $ t \rightarrow + \infty$.
Since the current  perturbation theory does not apply to \eqref{eq:AuxiliarGKSLr}  --to the best of our knowledge--,
we next develop mathematical  perturbation methods for (\ref{eq:AuxiliarGKSLr}),
and so for (\ref{eq:Laser1}).
First, we establish  a variation of constant formula for (\ref{eq:AuxiliarGKSLr})
by  using techniques from functional analysis and stochastic processes.

\begin{theorem}
\label{th:VariationConstants}
Let $\left( \rho_t \right)_{t \geq 0}$ be the $N$-weak solution of (\ref{eq:AuxiliarGKSL})
with $\alpha, \beta : \left[ 0 , \infty \right[ \rightarrow \mathbb{C}$ continuous functions.
Then, for all $t \geq s \geq 0$ we have 
\begin{equation}
\label{eq:AuxiliarGKSL5}
 \rho_t 
 =
 \rho^R_{t-s} \left( \rho_s  \right) 
 + \int_s^t  \rho^R_{t-u} \left( 
 \left[  \left( \alpha_R \left( u \right) a^\dagger -  \overline{\alpha_R \left( u \right)} a \right)
+\left( \overline{\beta_R  \left( u \right) } \sigma^{-}  - \beta_R  \left( u \right) \sigma^+ \right)   ,
\rho_u \right]  \right) du ,
\end{equation}
where $ \rho^R_t \left( \cdot \right)$ is as in Theorem \ref{th:ConvEqHE},
$ \alpha_R \left( u \right) = \alpha \left( u \right) - \alpha_0 $
and 
$ \beta_R  \left( u \right) = \beta \left( u \right) - \beta_0 $.
\end{theorem}

\begin{proof}
 Deferred to Section  \ref{sec:Proof:th:VariationConstants}.
\end{proof}

If  $\alpha  \left( t \right) $ and $\beta  \left( t \right) $
converge fast enough to  $\alpha_0 $ and $ \beta_0 $ as $ t \rightarrow + \infty$,
then from  (\ref{eq:AuxiliarGKSL5}) we infer that (\ref{eq:AuxiliarGKSL})  and (\ref{eq:3.2g}) have similar long-time behavior.
In more detail, 
using Theorem \ref{th:VariationConstants} we get the following estimate 
of the trace distance between the $N$-weak solution to (\ref{eq:AuxiliarGKSL})
and 
the equilibrium state  of (\ref{eq:3.2g}).

\begin{corollary}
 \label{cor:Ineq-Conv}
Let 
$\left( \rho_t \right)_{t \geq 0}$ be the $N$-weak solution of (\ref{eq:AuxiliarGKSL})
with $\alpha, \beta : \left[ 0 , \infty \right[ \rightarrow \mathbb{C}$ continuous functions.
Suppose that 
$\rho^R_t \left( \cdot \right)$ is the one-parameter semigroup of contractions 
described by the $N$-weak solutions of  (\ref{eq:3.2g}),
$ \alpha_R \left( u \right) = \alpha \left( u \right) - \alpha_0 $,
and that
$ \beta_R  \left( u \right) = \beta \left( u \right) - \beta_0 $.
Then, for all $t  \geq s \geq 0$ we have
\begin{equation}
\label{eq:30n}
 \begin{aligned}
 \Tr \left( \left\vert  \rho_t  - \varrho^f_{\infty} \otimes \varrho^a_{\infty} \right\vert  \right)
& \leq 
\Tr \left( \left\vert   \rho^R_{t-s} \left( \rho_s  \right)   -  \varrho^f_{\infty} \otimes \varrho^a_{\infty} \right\vert  \right)
+
4  \int_s^t \left\vert  \alpha_R \left( u \right) \right\vert \sqrt{ \Tr \left( \rho_u \, N  \right) +1 } \, du
\\
& \quad
+ 
2   \left( \left\Vert  \sigma^{-} \right\Vert +   \left\Vert  \sigma^{+} \right\Vert \right)
 \int_s^t \left\vert \beta_R  \left( u \right) \right\vert du ,
 \end{aligned}
\end{equation}
where 
$
\varrho^f_{\infty} 
 =
\ketbra{ \mathcal{E} \left( \frac{ \alpha_0 }{ \kappa  + \mathrm{i} \omega } \right)} 
{ \mathcal{E} \left( \frac{ \alpha_0 }{ \kappa  + \mathrm{i} \omega } \right)}
$
with
$ \mathcal{E} \left( \cdot \right) $ the coherent vector given by \eqref{eq:Coherent-Vector},
and
\[
\varrho^a_{\infty}
=
\begin{pmatrix}
 \frac{1}{2} +  \frac{d \left( \gamma^2 + \omega^2 \right) }{ 2  \left( \gamma^2 + \omega^2 + 2  \left\vert \beta_0 \right\vert ^2 \right) }
 &
 \frac{d  \beta_0 \left( \gamma - \mathrm{i} \omega \right) }{\gamma^2 + \omega^2 + 2  \left\vert \beta_0 \right\vert ^2 }
 \\
 \frac{d  \bar{\beta_0} \left( \gamma + \mathrm{i} \omega \right) }{\gamma^2 + \omega^2 + 2  \left\vert \beta_0 \right\vert ^2 }
 &
  \frac{1}{2} -  \frac{d \left( \gamma^2 + \omega^2 \right) }{ 2 \left( \gamma^2 + \omega^2 + 2  \left\vert \beta_0 \right\vert ^2 \right)}
\end{pmatrix} .
\]

\end{corollary}
 
\begin{proof}
 Deferred to Section  \ref{sec:Proof:Ineq-Conv}.
\end{proof}

In view of (\ref{eq:30n}),
we now obtain the rate of convergence of the solution of (\ref{eq:3.2g}) 
to its equilibrium state $\varrho^f_{\infty} \otimes \varrho^a_{\infty} $.
For this purpose, 
we make use of  specific features of (\ref{eq:3.2g}),
together with mathematical techniques for proving 
the exponential convergence to the equilibrium state of  a quantum Markov semigroup
(see, e.g.,  \cite{AccardiFagnolaHachicha2006,CarboneFagnola2000,Carbone2008}).
We also employ Theorem \ref{th:ConvAuxiliarQME} and Corollary \ref{cor:ConvEMCSimple}
to assure the uniqueness of the equilibrium point and the limit cycle of the non-linear laser equation (\ref{eq:Laser1}),
in the proofs of Theorems \ref{th:StatState-LaserE} and  \ref{th:FreeS-LaserE}.

\begin{theorem}
 \label{th:ConvAuxiliarQME} 
Let  $\rho^R_t \left( \varrho \right)$ be the $N$-weak solution of (\ref{eq:3.2g})
with
$  \varrho  \in \mathfrak{L}_{1, N}^{+} \left( \ell^2(\mathbb{Z}_+)\otimes \mathbb{C}^2 \right) $.
Then 
\begin{equation}
\label{eq:5.1}
\Tr \left( \left\vert \rho^R_t \left( \varrho \right) - \varrho^f_{\infty} \otimes \varrho^a_{\infty} \right| \right)
\leq
12 \, \hbox{\rm e}^{- \gamma  t}    \left( 1 +  \left\vert d \right\vert \right)
+
\hbox{\rm e}^{- \kappa t}  \left( 
\frac{2 \left\vert  \alpha_0  \right\vert}{ \sqrt{ \kappa^2   +  \omega^2 }}
+
4  \sqrt{ \Tr \left( \varrho  N  \right) } \right)
\end{equation}
for all $ t \geq 0$,
where
$ \varrho^f_{\infty}$ and $\varrho^a_{\infty}$ are defined as in Corollary \ref{cor:Ineq-Conv}.
\end{theorem}

\begin{proof}
 Deferred to Section  \ref{sec:Proofth:th:ConvAuxiliarQME}.
\end{proof}

Applying Theorem \ref{th:ConvAuxiliarQME} with $\alpha_0 = \beta_0 = 0$ we get:

\begin{corollary}
\label{cor:ConvEMCSimple}
 Suppose that $\left( \rho^h_t \left( \varrho \right) \right)_{t \geq 0}$ is the $N$-weak solution of 
\begin{equation}
\label{eq:3.2}
\frac{d}{dt}  \rho^h_t \left( \varrho \right)
= \mathcal{L}_{\star}^h \, \rho^h_t \left( \varrho \right)
\hspace{1cm}
\forall t \geq 0,
\hspace{1cm}
\rho^h_0 \left( \varrho \right) = \varrho  \in \mathfrak{L}_{1, N}^{+} \left( \ell^2(\mathbb{Z}_+)\otimes \mathbb{C}^2 \right) ,
\end{equation}
where
$ \mathcal{L}_{\star}^h $ is as in (\ref{eq:3.21}).
Let $\varrho_{\infty}$ be given by (\ref{eq:I10}).
Then, for all $t \geq 0$ we have
\begin{equation*}
\Tr \left( \left\vert \rho^h_t \left( \varrho \right) - \varrho_{\infty}  \right\vert  \right) 
\leq
12 \, \hbox{\rm e}^{- \gamma  t}    \left( 1 +  \left\vert d \right\vert \right)
+
4 \, \hbox{\rm e}^{- \kappa t}  \sqrt{ \Tr \left( \varrho \,  N  \right) } .
\end{equation*}
\end{corollary}

The proofs of Theorems \ref{th:LongTime} and \ref{th:LimitCycle}
are mainly based on a perturbation method.
Applying Corollary \ref{cor:Ineq-Conv} we approximate the solution of (\ref{eq:Laser1}), 
resp. an unitary transformation of it,
by the quantum evolution $\rho^R_t$ corresponding to $\alpha_0 = \beta_0 = 0$,
resp. $\omega = 0$  and suitable parameters $\alpha_0$, $\beta_0$.
Then,
 using  Theorem \ref{th:ConvAuxiliarQME}, or Corollary \ref{cor:ConvEMCSimple}, 
 we obtain the stability of the equilibrium point or the limit cycle of (\ref{eq:Laser1}),
as appropriate. 
Other perturbation techniques for  GKSL quantum master equations have been developed to treat, for instance,
the  Markov property of quantum Markov semigroups \cite{KumarSinhaSrivastava2020},
the adiabatic elimination (see, e.g., \cite{Forni2018AdiabaticEF} and references therein), 
and the estimation of  the steady-state density matrix  (see, e.g., \cite{Li2014} and references therein).

\section{Proofs}
\label{sec:Proofs}

\subsection{Proofs of theorems from  Section \ref{sec:LinearQMEs}}

\subsubsection{Proof of Theorem \ref{th:ConvEqHE}}
\label{sec:Proof:th:ConvEqHE}

\begin{proof}[Proof of Theorem \ref{th:ConvEqHE}]
Let  $\varrho$ be a non-negative trace-class operator on $ \ell^2(\mathbb{Z}_+)   \otimes  \mathbb{C}^2 $.
According to Lemma 7.10 of \cite{MoraAP} we have that there exists a sequence 
of $N$-regular non-negative operators $\varrho_n$ such as 
$ \lim_{n \rightarrow + \infty} \Tr \left( \left\vert  \varrho - \varrho_n  \right\vert  \right) = 0$,
where  $N$ is the number operator. 
Now,
\begin{align*}
 & \Tr \left( \left\vert   \rho^R_t \left( \varrho \right) - \rho^R_s \left( \varrho \right) \right\vert  \right)
 \\
 & \leq
\Tr \left( \left\vert   \rho^R_t \left( \varrho \right) - \rho^R_t \left( \varrho_n \right) \right\vert  \right)
+
\Tr \left( \left\vert   \rho^R_t \left( \varrho_n \right) - \rho^R_s \left( \varrho_n \right) \right\vert  \right)
+
\Tr \left( \left\vert   \rho^R_s \left( \varrho_n \right) - \rho^R_s \left( \varrho \right) \right\vert  \right)
\\
 & \leq
2 \, \Tr \left( \left\vert  \varrho - \varrho_n  \right\vert  \right)
+
\Tr \left( \left\vert   \rho^R_t \left( \varrho_n \right) - \rho^R_s \left( \varrho_n \right) \right\vert  \right) .
\end{align*}
From, e.g.,  \cite{FagMora2019} it follows that the hypothesis of  Theorem  4.3 of \cite{MoraAP} holds,
and hence  
$
\lim_{s \rightarrow t}  \Tr \left( \left\vert   \rho^R_t \left( \varrho_n \right) - \rho^R_s \left( \varrho_n \right) \right\vert  \right)
= 0
$.
This leads to  
\begin{equation}
\label{eq:8.26}
 \lim_{s \rightarrow t}  \Tr \left( \left\vert   \rho^R_t \left( \varrho \right) - \rho^R_s \left( \varrho \right) \right\vert  \right)
= 0.
\end{equation}
Decomposing the real and imaginary parts of an element of $ \mathfrak{L}_{1}\left( \ell^2(\mathbb{Z}_+)   \otimes  \mathbb{C}^2  \right)$ 
into positive and negative parts (see, e.g., proof of Theorem 4.1 of \cite{MoraAP})
we find that 
(\ref{eq:8.26}) holds for any $ \varrho \in \mathfrak{L}_{1}\left( \ell^2(\mathbb{Z}_+)   \otimes  \mathbb{C}^2  \right) $.
\end{proof}

\subsubsection{Proof of Theorem \ref{th:VariationConstants}}
\label{sec:Proof:th:VariationConstants}

Let $\mathcal{L}_{\star}^{full} \left( t \right)$, $\mathcal{L}_{\star}^{R} \left( t \right)$ 
be the linear operators in  $\mathfrak{L}_{1}\left(  \ell^2 \left(\mathbb{Z}_+ \right) \otimes \mathbb{C}^2  \right) $
defined by 
\[
\mathcal{L}_{\star}^{full} \left( t \right) \, \varrho
=
 \mathcal{L}_{\star}^h \, \varrho
+
\left[ 
 \alpha \left( t \right) a^\dagger -  \overline{\alpha \left( t \right)} a 
+  \overline{\beta  \left( t \right) } \sigma^{-}  - \beta  \left( t \right) \sigma^+  ,
\varrho \right] 
\]
and 
$
\mathcal{L}_{\star}^{R} \left( t \right) \, \varrho
=
\mathcal{L}_{\star}^h \, \varrho
+
\left[ 
\alpha_0 \, a^\dagger -  \overline{\alpha_0 } \, a 
+ \overline{\beta_0 } \, \sigma^{-}  - \beta_0  \,\sigma^+  ,
\varrho  \right]
$
for any $N$-regular density operator 
$ \varrho \in \mathfrak{L}_{1}\left(  \ell^2 \left(\mathbb{Z}_+ \right) \otimes \mathbb{C}^2  \right) $;
we recall that $ \mathcal{L}_{\star}^h $ is given by (\ref{eq:3.21}).
Suppose that $\rho_t$ is the $N^p$-weak solution of (\ref{eq:AuxiliarGKSL})
(see, e.g., \cite{FagMora2019}). 
Therefore, $\rho_t$ satisfies 
\[
\frac{d}{dt}  \rho_t 
=  \mathcal{L}_{\star}^{full} \left( t \right) \, \rho_t 
\]
in the  $N^p$-weak sense.
According to \cite{FagMora2019} we have that
$t \mapsto \Tr \left( A \,  \mathcal{L}_{\star}^{full} \left( t \right) \, \rho_t  \right) $
is a continuous function whenever 
$A:   \ell^2 \left(\mathbb{Z}_+ \right) \otimes \mathbb{C}^2 \rightarrow   \ell^2 \left(\mathbb{Z}_+ \right) \otimes \mathbb{C}^2$
is bounded and $ \rho_0$ is $N$-regular,
which was obtained by using probabilistic techniques.
A delicate issue in the proof of Theorem \ref{th:VariationConstants}
is to establish the continuity of 
$
t \mapsto \mathcal{L}_{\star}^{full} \left( t \right) \, \rho_t
$
and
$
t \mapsto \mathcal{L}_{\star}^{R} \left( t \right) \, \rho_t
$
with respect to the trace norm.
To this end,
we first restrict the initial condition to be $N^2$-regular.
As in  \cite{FagMora2019},
we next profit from the  probabilistic representation of (\ref{eq:AuxiliarGKSL}) given by 
\begin{equation}
 \label{eq:ProbRepAQME}
  \rho_t = \mathbb{E} \left|X_{t} \left( \xi \right)\right\rangle \left\langle X_{t} \left( \xi \right)\right| 
\end{equation}
with $X_{t} \left( \xi \right)$ being the solution of the linear stochastic Schr\"odinger equation on  $ \ell^2(\mathbb{Z}_+)  \otimes   \mathbb{C}^2 $:
\begin{equation}
\label {eq:SSE}
X_{t}\left( \xi \right) 
= \xi +\int_{0}^{t}G \left( s \right) X_{s}\left( \xi \right) ds 
+ \sum_{ \ell =1}^{3 }\int_{0}^{t}
L_{\ell}  \, X_{s}\left( \xi \right) dW_{s}^{\ell} 
\hspace{2cm}
\forall t\geq 0 ,
\end{equation}
where
$G \left( t \right), L_k  $ are the linear operators  in  $  \ell^2(\mathbb{Z}_+)\otimes \mathbb{C}^2$ 
defined by 
$G \left( t \right)  
 =
-  \mathrm{i} \, H \left( t \right) -  \sum_{\ell = 1}^3 L_{\ell}^* L_{\ell} / 2
$,
$ L_1  =  \sqrt{2 \kappa} \, a $, 
$L_2 =  \sqrt{ \gamma \left(1-d \right) } \, \sigma^{-} $,
$L_3 =  \sqrt{ \gamma \left(1 +d \right) } \, \sigma^{+} $,
\[
H \left( t \right) 
 =
 \omega \left(  a^\dagger a  +\sigma^{3} / 2 \right) 
+ \mathrm{i}  \left( \alpha \left( t \right) a^\dagger -  \overline{\alpha \left( t \right)} a 
+  \overline{\beta  \left( t \right) } \sigma^{-}  - \beta  \left( t \right) \sigma^+ \right)
,
\]
and 
$W^1, W^2, \ldots$ are independent real Brownian motions  on a filtered complete probability 
space $\left( \Omega ,\mathfrak{F}, \left(\mathfrak{F}_{t}\right) _{t\geq 0},\mathbb{P}\right) $
(see Theorem 6 of \cite{FagMora2019} for details).

\begin{lemma}
\label{le:Continuity}
Suppose that $\alpha, \beta : \left[ 0 , \infty \right[ \rightarrow \mathbb{C}$ are continuous functions,
and that $\left( \rho_t \right)_{t \geq 0}$ is a $N^2$-weak solution of (\ref{eq:AuxiliarGKSL}).
Then, for any $t \geq 0$ we have
$
\mathcal{L}_{\star}^{full} \left( s \right) \, \rho_s 
 \longrightarrow_{s \rightarrow t} 
 \mathcal{L}_{\star}^{full} \left( t \right) \, \rho_t ,
$
and
$
\mathcal{L}_{\star}^{R} \, \rho_s 
 \longrightarrow_{s \rightarrow t} 
 \mathcal{L}_{\star}^{R} \, \rho_t 
$,
where both limits are taken in $\mathfrak{L}_{1}\left(  \ell^2 \left(\mathbb{Z}_+ \right) \otimes \mathbb{C}^2  \right) $.
\end{lemma}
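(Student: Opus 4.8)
The plan is to pass to the probabilistic representation of the regular solution and reduce the assertion to a mean-square continuity property of the underlying diffusion. Since the initial datum is $N^2$-regular, by \cite{FagMora2019,MoraAP} I may write $\rho_s = \mathbb{E}\ketbra{X_s(\xi)}{X_s(\xi)}$, where $\xi \in L^2_{N^2}(\mathbb{P},\mathfrak{h})$ and $X_s(\xi)$ solves the linear stochastic Schr\"odinger equation \eqref{eq:SSE}; moreover the $N^2$-weak solution property supplies the a priori moment bound $\sup_{s \in [0,T]} \mathbb{E}\left( \left\Vert X_s(\xi) \right\Vert_{N^2}^2 \right) < \infty$ for every $T>0$, which in particular controls $\mathbb{E}\left\Vert N X_s \right\Vert^2$ and $\mathbb{E}\left\Vert X_s \right\Vert_N^2$ uniformly on compacts. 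I would first record that it suffices to prove that $s \mapsto X_s(\xi)$ is continuous from $[0,\infty)$ into $L^2_N(\mathbb{P},\mathfrak{h})$, that is, $\mathbb{E}\left( \left\Vert X_s - X_t \right\Vert_N^2 \right) \to 0$ as $s \to t$; the remainder is algebraic bookkeeping of the two maps followed by the Cauchy--Schwarz inequality $\Tr\left( \ketbra{u}{v} \right) \le \left\Vert u \right\Vert \left\Vert v \right\Vert$.

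For the bookkeeping I would split
\[
\mathcal{L}_{\star}^{full}(s)\,\rho_s - \mathcal{L}_{\star}^{full}(t)\,\rho_t
=
\mathcal{L}_{\star}^{full}(t)\left( \rho_s - \rho_t \right)
+
\left( \mathcal{L}_{\star}^{full}(s) - \mathcal{L}_{\star}^{full}(t) \right) \rho_s .
\]
The second summand equals the commutator of $(\alpha(s)-\alpha(t))a^\dagger - \overline{(\alpha(s)-\alpha(t))}a + \overline{(\beta(s)-\beta(t))}\sigma^- - (\beta(s)-\beta(t))\sigma^+$ with $\rho_s$; writing $a^\dagger\rho_s = \mathbb{E}\ketbra{a^\dagger X_s}{X_s}$ and using $\left\Vert a^\dagger X_s \right\Vert \le \left\Vert X_s \right\Vert_N$ together with the uniform bound above, its trace norm is dominated by $K\left( \left\vert \alpha(s)-\alpha(t) \right\vert + \left\vert \beta(s)-\beta(t) \right\vert \right)$, which vanishes as $s\to t$ by continuity of $\alpha,\beta$. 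The first summand carries the unbounded part $\mathcal{L}_{\star}^h(\rho_s - \rho_t)$ together with first-order commutators in $a,a^\dagger,\sigma^\pm$. Using $N\rho = \mathbb{E}\ketbra{NX}{X}$, $a\rho a^\dagger = \mathbb{E}\ketbra{aX}{aX}$, and the analogous identities, each contribution splits as $\mathbb{E}\ketbra{B(X_s - X_t)}{C X_s} + \mathbb{E}\ketbra{B X_t}{C(X_s - X_t)}$ with $B,C \in \{I,a,a^\dagger,N\}$, so its trace norm is bounded by a finite sum of products $\sqrt{\mathbb{E}\left\Vert B(X_s - X_t) \right\Vert^2}\,\sqrt{\mathbb{E}\left\Vert C X_u \right\Vert^2}$. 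Every factor $\mathbb{E}\left\Vert C X_u \right\Vert^2$ is uniformly bounded by the $N^2$-moment estimate, while $\mathbb{E}\left\Vert B(X_s - X_t) \right\Vert^2 \le K\,\mathbb{E}\left\Vert X_s - X_t \right\Vert_N^2 \to 0$ (using $\left\Vert a y \right\Vert^2 = \langle y, N y\rangle \le \tfrac{1}{2}\left\Vert y \right\Vert_N^2$); hence $\left\Vert \mathcal{L}_{\star}^{full}(t)(\rho_s - \rho_t) \right\Vert_1 \to 0$. The statement for $\mathcal{L}_{\star}^R$ is the special case of constant coefficients $\alpha_0,\beta_0$, in which only the first summand occurs.

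It remains to establish the mean-square $N$-graph continuity $\mathbb{E}\left( \left\Vert X_s - X_t \right\Vert_N^2 \right) \to 0$, which I expect to be the \emph{main obstacle}. I would subtract the integral form \eqref{eq:SSE} at times $s$ and $t$, apply the closed operator $N$, and use that the $N^2$-regular solution theory legitimizes $N X_u \in \mathcal{D}(N)$ and the exchange of $N$ with the stochastic integral. Estimating via the It\^o isometry gives, for $t \le s \le T$,
\[
\mathbb{E}\left\Vert N(X_s - X_t) \right\Vert^2
\le
2(s-t)\int_t^s \mathbb{E}\left\Vert N G(u) X_u \right\Vert^2 du
+
2\sum_{\ell=1}^{3}\int_t^s \mathbb{E}\left\Vert N L_\ell X_u \right\Vert^2 du ,
\]
and the decisive point is that the commutators $[N,G(u)]$ and $[N,L_\ell]$ are of first order in $a,a^\dagger$, so $N G(u)$ and $N L_\ell$ act boundedly relative to the $\left\Vert \cdot \right\Vert_{N^2}$-graph norm, with coefficients that are locally bounded in $u$ because $\alpha,\beta$ are continuous. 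This is precisely where the $N^2$-regularity hypothesis enters: it guarantees that $\mathbb{E}\left\Vert N G(u) X_u \right\Vert^2$ and $\mathbb{E}\left\Vert N L_\ell X_u \right\Vert^2$ are finite and uniformly bounded on $[0,T]$, so the right-hand side is $O(s-t)$ and the limit is immediate; an analogous, simpler estimate handles $\mathbb{E}\left\Vert X_s - X_t \right\Vert^2$. Should uniform bounds on $N G(u) X_u$ prove awkward to obtain directly, the same computation closes by a Gronwall argument in the $\left\Vert \cdot \right\Vert_N$-norm, the a priori $N^2$-moment bound again providing the square-integrable majorant needed to justify dominated convergence.
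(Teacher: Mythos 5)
Your proposal is correct, and its skeleton coincides with the paper's: both pass to the probabilistic representation $\rho_u = \mathbb{E}\ketbra{X_u(\xi)}{X_u(\xi)}$ with $\xi \in L^2_{N^2}\left( \mathbb{P},\mathfrak{h}\right)$, reduce the claim to the mean-square graph-norm continuity $\mathbb{E}\left( \left\Vert X_s(\xi) - X_t(\xi)\right\Vert_N^2\right) \rightarrow 0$ as $s \rightarrow t$ (the paper's \eqref{eq:8.27}), and conclude by the same bilinearity-plus-Cauchy--Schwarz bookkeeping for terms $\mathbb{E}\ketbra{B X_u}{C X_u}$ with $B,C$ relatively bounded with respect to $N$; your explicit split into $\mathcal{L}^{full}_\star(t)\left( \rho_s-\rho_t\right)$ plus $\left( \mathcal{L}^{full}_\star(s)-\mathcal{L}^{full}_\star(t)\right)\rho_s$ is only a cosmetic repackaging of how the paper handles the $t$-dependence of $G(t)$. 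Where you genuinely diverge is in the proof of the key continuity step. The paper defines $Y_t = N\xi + \int_0^t NG(u)X_u\,du + \sum_{\ell} \int_0^t NL_\ell X_u\,dW_u^\ell$, identifies $Y_t = NX_t$ a.s.\ by closedness of $N$ (Proposition 4.15 of \cite{DaPrato}), and obtains $L^2$-continuity qualitatively from the pathwise continuity of $Y$ together with dominated convergence, the majorant being supplied by the maximal-moment bound $\mathbb{E}\left( \sup_{s\le T}\left\Vert Y_s\right\Vert^2\right) < \infty$ (Th.\ 4.2.5 of \cite{Prevot}). You instead estimate the increment directly from the integral equation \eqref{eq:SSE} by Cauchy--Schwarz and the It\^o isometry, obtaining $\mathbb{E}\left\Vert N\left( X_s - X_t\right)\right\Vert^2 = O\left( \left\vert s-t\right\vert\right)$ from the uniform-on-compacts bound $\sup_{u \le T}\mathbb{E}\left\Vert X_u\right\Vert^2_{N^2} < \infty$ and the relative bounds $\left\Vert NG(u)x\right\Vert + \left\Vert NL_\ell x\right\Vert \le K_T\left\Vert x\right\Vert_{N^2}$; both ingredients are available in the cited solution theory (\cite{FagMora2019,MoraAP}), and note that you still rely on the same closed-operator argument as the paper to commute $N$ with the two integrals, as you acknowledge. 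Your route is more quantitative: it yields an explicit modulus of continuity and dispenses with the maximal inequality and with any appeal to pathwise continuity, at the price of requiring uniform (rather than merely locally integrable) bounds on $\mathbb{E}\left\Vert NG(u)X_u\right\Vert^2$; the paper's route is softer, needing only enough integrability for $Y$ to be well defined.
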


\begin{proof}
Since  $\rho_0$ is $N^2$-regular,
there exists  $\xi \in L_{N^2}^{2}\left( \mathbb{P},  \ell^2 \left(\mathbb{Z}_+ \right) \otimes \mathbb{C}^2 \right) $ such that
$
\rho_0 = \mathbb{E}\left\vert \xi \rangle \langle \xi \right\vert 
$
(see, e.g., Theorem 3.1 of \cite{MoraAP}).
Then 
$
 \rho_t = \mathbb{E} \left|X_{t} \left( \xi \right)\right\rangle \left\langle X_{t} \left( \xi \right)\right| 
$,
where $X_{t} \left( \xi \right)$ is the strong $N^2$-solution of (\ref{eq:SSE})
(see, e.g., \cite{FagMora2019}).

Using $X_{t} \left( \xi \right) \in L_{N^2}^{2}\left( \mathbb{P}, \ell^2 \left(\mathbb{Z}_+ \right) \otimes \mathbb{C}^2 \right) $
we deduce that 
\[
Y_t 
:= N \xi +\int_{0}^{t} N G \left( s \right) X_{s}\left( \xi \right) ds 
+ \sum_{\ell=1}^{3}\int_{0}^{t} N L_\ell  X_{s}\left( \xi \right) dW_{s}^{\ell} 
\hspace{2cm}
\forall t \geq 0 
\]
is a well-defined  continuous stochastic process.
As $N$ is a closed operator in $ \ell^2 \left(\mathbb{Z}_+ \right) \otimes \mathbb{C}^2$ we have
$ Y_t = N X_{t} \left( \xi \right)$ for all $t \geq 0$ $\mathbb{P}$-a.s.
(see, e.g., Proposition 4.15 of \cite{DaPrato}).
Moreover, 
$
 \mathbb{E} \left( \sup_{s \in \left[0, t+1 \right]}  \left\Vert  Y_s  \right\Vert ^{2} \right)
 < \infty
$
and
$
 \mathbb{E} \left( \sup_{s \in \left[0, t+1 \right]}  \left\Vert  X_s\left( \xi \right)  \right\Vert ^{2} \right)
 < \infty
$
(see, e.g., Theorem 4.2.5 of  \cite{Prevot}).
Then, applying the dominated convergence theorem we get
\[
\lim_{n \rightarrow + \infty} \mathbb{E} \left\Vert  N X_{s_n} \left( \xi \right) - N X_{t} \left( \xi \right)  \right\Vert^2
=
\lim_{n \rightarrow + \infty} \mathbb{E} \left\Vert  Y_{s_n} - Y_{t} \right\Vert^2
= 
0 
\]
and  
$
\lim_{n \rightarrow + \infty} \mathbb{E} \left\Vert  X_{s_n} \left( \xi \right) -  X_{t} \left( \xi \right)  \right\Vert^2
= 0
$,
where  $s_n \rightarrow t$  as $n \rightarrow + \infty$.
Therefore,
\begin{equation}
\label{eq:8.27}
 \lim_{s \rightarrow t} \mathbb{E} \left\Vert   X_{s} \left( \xi \right) -  X_{t} \left( \xi \right) \right\Vert_N^2 = 0
 \hspace{2cm} \forall t \geq 0 .
\end{equation}

Suppose  that $A,B$ are linear operators in $ \ell^2 \left(\mathbb{Z}_+ \right) \otimes \mathbb{C}^2  $,
which are relatively bounded with respect to $N$.
Then 
$
\mathbb{E} \left\vert  A \,   X_{s}\left( \xi\right) \rangle  \langle B \,  X_{t} \left( \xi \right) \right\vert
$
is well defined as a Bochner integral in  $ \ell^2 \left(\mathbb{Z}_+ \right) \otimes \mathbb{C}^2  $
for all $ s, t  \geq 0$
(see, e.g., \cite{MoraAP}).
Since
$  
\left\Vert     \, \left\vert  x \rangle  \langle  y  \right\vert    \,  
\right\Vert_{\mathfrak{L}_{1}\left(  \ell^2 \left(\mathbb{Z}_+ \right) \otimes \mathbb{C}^2  \right) }
=
\left\Vert   x  \right\Vert   \left\Vert  y \right\Vert  
$
for any $x, y \in  \ell^2 \left(\mathbb{Z}_+ \right) \otimes \mathbb{C}^2  $,
\begin{align*}
&  
\left\Vert    
\mathbb{E} \left\vert  A \,   X_{s}\left( \xi\right) \rangle  \langle B \,  X_{s}\left( \xi \right) \right\vert 
-
\mathbb{E} \left\vert  A \,   X_{t}\left( \xi\right) \rangle  \langle B \,  X_{t} \left( \xi \right) \right\vert 
\right\Vert_{\mathfrak{L}_{1}\left(  \ell^2 \left(\mathbb{Z}_+ \right) \otimes \mathbb{C}^2  \right) }
 \\
& \leq
\left\Vert    
\mathbb{E} \left\vert  A \,   X_{s}\left( \xi\right) -  A \,   X_{t}\left( \xi\right) \rangle  \langle B \,  X_{s}\left( \xi \right) \right\vert 
\right\Vert_{\mathfrak{L}_{1}\left(  \ell^2 \left(\mathbb{Z}_+ \right) \otimes \mathbb{C}^2  \right) }
\\
& \quad +
\left\Vert    
\mathbb{E} \left\vert  A \,   X_{t}\left( \xi\right)  \rangle  \langle B \,  X_{s} \left( \xi\right) -  B \,   X_{t}\left( \xi\right) \right\vert 
\right\Vert_{\mathfrak{L}_{1}\left(  \ell^2 \left(\mathbb{Z}_+ \right) \otimes \mathbb{C}^2  \right) }
\\
& \leq   
\mathbb{E} \left\Vert  \, \left\vert  A \,   X_{s}\left( \xi\right) -  A \,   X_{t}\left( \xi\right) \rangle  \langle B \,  X_{s}\left( \xi \right) \right\vert  \,
\right\Vert_{\mathfrak{L}_{1}\left(  \ell^2 \left(\mathbb{Z}_+ \right) \otimes \mathbb{C}^2  \right) }
\\
& \quad +
  \mathbb{E} \left\Vert   \, \left\vert  A \,   X_{t}\left( \xi\right)  \rangle  \langle B \,  X_{s} \left( \xi\right) -  B \,   X_{t}\left( \xi\right) \right\vert  \,
\right\Vert_{\mathfrak{L}_{1}\left(  \ell^2 \left(\mathbb{Z}_+ \right) \otimes \mathbb{C}^2  \right) }
 \\
 & =
 \mathbb{E} \left( 
 \left\Vert  A \,   X_{s}\left( \xi\right) -  A \,   X_{t}\left( \xi\right)  \right\Vert  \left\Vert B \,  X_{s}\left( \xi \right) \right\Vert 
 \right)
 +
\mathbb{E} \left(
 \left\Vert  A \,   X_{t}\left( \xi\right)  \right\Vert   \left\Vert B \,  X_{s} \left( \xi\right) -  B \,   X_{t}\left( \xi\right) \right\Vert 
 \right)
\end{align*}
for  any $s,t \geq 0$.
Therefore, 
using the Cauchy-Schwarz inequality gives 
\begin{align*}
&
\left\Vert    
\mathbb{E} \left\vert  A \,   X_{s}\left( \xi\right) \rangle  \langle B \,  X_{s}\left( \xi \right) \right\vert 
-
\mathbb{E} \left\vert  A \,   X_{t}\left( \xi\right) \rangle  \langle B \,  X_{t} \left( \xi \right) \right\vert 
\right\Vert_{\mathfrak{L}_{1}\left(  \ell^2 \left(\mathbb{Z}_+ \right) \otimes \mathbb{C}^2  \right) }
 \\
 & \leq
 \sqrt{\mathbb{E} \left( \left\Vert  A  \,   X_{s}\left( \xi\right) -  A \,   X_{t}\left( \xi\right)  \right\Vert^2  \right)}
 \sqrt{\mathbb{E} \left( \left\Vert B \,  X_{s}\left( \xi \right) \right\Vert^2 \right) }
 \\
 & \quad +
 \sqrt{\mathbb{E} \left( \left\Vert A \,  X_{t} \left( \xi \right) \right\Vert^2 \right) }
 \sqrt{\mathbb{E} \left( \left\Vert  B  \,   X_{s}\left( \xi\right) -  B \,   X_{t}\left( \xi\right)  \right\Vert^2  \right)} ,
  \end{align*}
and so combining (\ref{eq:8.27}) with 
$
  \sup_{s \in \left[0, t+1 \right]} \mathbb{E} \left( \left\Vert  X_s\left( \xi \right)  \right\Vert_N^{2} \right) < \infty
$
we get
\begin{equation}
\label{eq:8.277}
 \left\Vert    
\mathbb{E} \left\vert  A \,   X_{s}\left( \xi\right) \rangle  \langle B \,  X_{s}\left( \xi \right) \right\vert 
-
\mathbb{E} \left\vert  A \,   X_{t}\left( \xi\right) \rangle  \langle B \,  X_{t} \left( \xi \right) \right\vert 
\right\Vert_{\mathfrak{L}_{1}\left(  \ell^2 \left(\mathbb{Z}_+ \right) \otimes \mathbb{C}^2  \right) }
 \longrightarrow_{s \rightarrow t} 0 .
\end{equation}

According to
$
 \rho_t = \mathbb{E} \left|X_{t} \left( \xi \right)\right\rangle \left\langle X_{t} \left( \xi \right)\right| 
$
we have that
$
G  \left( t \right) \rho_t  =  \mathbb{E} \left\vert   G  \left( t \right)  X_{t}\left( \xi\right) \rangle  \langle X_{t}\left( \xi\right)  \right\vert 
$,
$
 \rho_t \, G  \left( t \right)^*
=
 \mathbb{E}  \left\vert X_{t}\left( \xi\right)  \rangle  \langle G  \left( t \right)  X_{t}\left( \xi\right) \right\vert 
$
and
$
L_{\ell} \, \rho_t \, L_{\ell}^*
=
\mathbb{E} \left\vert  L_{\ell}   X_{t}\left( \xi\right) \rangle  \langle L_{\ell}  X_{t}\left( \xi \right) \right\vert
$
(see, e.g., Theorem 3.2 of \cite{MoraAP}).
Hence,
\begin{align*}
 \mathcal{L}_{\star}^{full} \left( t \right) \, \rho_t
& =
G  \left( t \right) \rho_t  + \rho_t  \, G  \left( t \right)^* +  \sum_{\ell=1}^{3} L_{\ell} \, \rho_t \, L_{\ell}^*
\\
& =
\mathbb{E} \left\vert   G  \left( t \right)  X_{t}\left( \xi\right) \rangle  \langle X_{t}\left( \xi\right)  \right\vert 
+
\mathbb{E}  \left\vert X_{t}\left( \xi\right)  \rangle  \langle G  \left( t \right)  X_{t}\left( \xi\right) \right\vert 
+ 
\sum_{\ell=1}^{3}\mathbb{E} \left\vert  L_{\ell}   X_{t}\left( \xi\right) \rangle  \langle L_{\ell}  X_{t}\left( \xi \right) \right\vert .
\end{align*}
Since $\alpha, \beta : \left[ 0 , \infty \right[ \rightarrow \mathbb{C}$ are continuous functions
and 
 $\sigma^{-}, \sigma^{+}, \sigma^{3} \in \mathfrak{L}\left(   \ell^2 \left(\mathbb{Z}_+ \right) \otimes \mathbb{C}^2\right)$,
combining (\ref{eq:8.277}) with the fact that 
$ a^\dagger a $, $ a^\dagger $ and $ a $ are  relatively bounded with respect to $N$
we obtain that
$
t \mapsto
 \mathbb{E} \left\vert   G  \left( t \right)  X_{t}\left( \xi\right) \rangle  \langle X_{t}\left( \xi\right)  \right\vert 
+
 \mathbb{E}  \left\vert X_{t}\left( \xi\right)  \rangle  \langle G  \left( t \right)  X_{t}\left( \xi\right) \right\vert 
 + 
 \sum_{\ell=1}^{3}\mathbb{E} \left\vert  L_{\ell}   X_{t}\left( \xi\right) \rangle  \langle L_{\ell}  X_{t}\left( \xi \right) \right\vert 
$
is a continuous function from $\left[ 0 , \infty \right[$ to 
$\mathfrak{L}_{1}\left(  \ell^2 \left(\mathbb{Z}_+ \right) \otimes \mathbb{C}^2  \right)$.
Hence 
\begin{equation*}
 \mathcal{L}_{\star}^{full} \left( s \right) \, \rho_s 
 \longrightarrow_{s \rightarrow t} 
 \mathcal{L}_{\star}^{full} \left( t \right) \, \rho_t
 \hspace{2cm}
 \text{ in }  \mathfrak{L}_{1}\left(  \ell^2 \left(\mathbb{Z}_+ \right) \otimes \mathbb{C}^2  \right) .
\end{equation*}

Similarly,
\begin{align*}
\mathcal{L}_{\star}^{R} \left( t \right) \, \rho_t 
& =
 \mathbb{E} \left\vert   G^R  \left( t \right)  X_{t}\left( \xi\right) \rangle  \langle X_{t}\left( \xi\right)  \right\vert 
+
 \mathbb{E}  \left\vert X_{t}\left( \xi\right)  \rangle  \langle G^R  \left( t \right)  X_{t}\left( \xi\right) \right\vert 
 +
 \sum_{\ell=1}^{3}\mathbb{E} \left\vert  L_{\ell}   X_{t}\left( \xi\right) \rangle  \langle L_{\ell}  X_{t}\left( \xi \right) \right\vert ,
\end{align*}
where $G^R  \left( t \right)$ is defined by $G \left( t \right)$ with $\alpha \left( t \right) = \alpha_0$ and $\beta \left( t \right) = \beta_0$.
Then 
$
\mathcal{L}_{\star}^{R} \left( s \right) \, \rho_s 
$ 
converges in $\mathfrak{L}_{1}\left(  \ell^2 \left(\mathbb{Z}_+ \right) \otimes \mathbb{C}^2  \right)$
to 
$ \mathcal{L}_{\star}^{R} \left( t \right) \, \rho_t $
as $s \rightarrow t$.
\end{proof}

Now,
applying functional analysis techniques we show the assertion of Theorem \ref{th:VariationConstants}
in case $ \rho_0$ is a $N^2$-regular density operator.

\begin{lemma}
\label{lem:StrongSolution}
Under the assumptions of Lemma \ref{le:Continuity},  
\begin{equation*}
\lim_{s \rightarrow t} \frac{\rho_{s} - \rho_{t}}{s-t}
=
\mathcal{L}_{\star}^{full} \left( t \right) \, \rho_t 
\hspace{2cm} \text{ in } \mathfrak{L}_{1}\left(  \ell^2 \left(\mathbb{Z}_+ \right) \otimes \mathbb{C}^2  \right) .
\end{equation*} 
\end{lemma}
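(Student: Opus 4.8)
The plan is to upgrade the scalar (weak) differentiability of $t\mapsto\Tr\left(A\rho_t\right)$ to strong differentiability in $\mathfrak{L}_{1}\left(\ell^2\left(\mathbb{Z}_+\right)\otimes\mathbb{C}^2\right)$ by first deriving a Bochner integral equation for $\left(\rho_t\right)_{t\geq 0}$ and then invoking the fundamental theorem of calculus for vector-valued integrals. The key input is Lemma \ref{le:Continuity}, which guarantees that $u\mapsto\mathcal{L}_{\star}^{full}\left(u\right)\rho_u$ is a continuous map from $\left[0,\infty\right[$ into $\mathfrak{L}_{1}\left(\ell^2\left(\mathbb{Z}_+\right)\otimes\mathbb{C}^2\right)$; in particular it is strongly measurable and bounded on compact intervals, hence Bochner integrable there.

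First I would fix a bounded operator $A$ on $\ell^2\left(\mathbb{Z}_+\right)\otimes\mathbb{C}^2$. Since $\left(\rho_t\right)_{t\geq 0}$ is an $N^2$-weak solution of (\ref{eq:8.4}), the scalar function $t\mapsto\Tr\left(A\rho_t\right)$ is differentiable with derivative $\Tr\left(A\,\mathcal{L}_{\star}^{full}\left(t\right)\rho_t\right)$. Because $\Tr\left(A\,\cdot\,\right)$ is a bounded linear functional on $\mathfrak{L}_{1}$ and $u\mapsto\mathcal{L}_{\star}^{full}\left(u\right)\rho_u$ is continuous by Lemma \ref{le:Continuity}, this derivative is a continuous function of $t$, so $t\mapsto\Tr\left(A\rho_t\right)$ is of class $C^{1}$ and the ordinary fundamental theorem of calculus yields
\[
\Tr\left(A\rho_t\right)-\Tr\left(A\rho_s\right)=\int_{s}^{t}\Tr\left(A\,\mathcal{L}_{\star}^{full}\left(u\right)\rho_u\right)du
\qquad\forall\, t\geq s\geq 0 .
\]

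Next I would transfer this identity to $\mathfrak{L}_{1}$. As a bounded linear functional commutes with the Bochner integral, the right-hand side equals $\Tr\left(A\int_{s}^{t}\mathcal{L}_{\star}^{full}\left(u\right)\rho_u\,du\right)$, whence $\Tr\left(A\left(\rho_t-\rho_s-\int_{s}^{t}\mathcal{L}_{\star}^{full}\left(u\right)\rho_u\,du\right)\right)=0$ for every bounded $A$. Since the rank-one operators already separate the points of $\mathfrak{L}_{1}\left(\ell^2\left(\mathbb{Z}_+\right)\otimes\mathbb{C}^2\right)$, I obtain the strong integral equation $\rho_t-\rho_s=\int_{s}^{t}\mathcal{L}_{\star}^{full}\left(u\right)\rho_u\,du$ in $\mathfrak{L}_{1}\left(\ell^2\left(\mathbb{Z}_+\right)\otimes\mathbb{C}^2\right)$. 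Continuity of $u\mapsto\mathcal{L}_{\star}^{full}\left(u\right)\rho_u$ at $t$ then gives the conclusion, because
\[
\Tr\left(\left\vert\frac{\rho_{s}-\rho_{t}}{s-t}-\mathcal{L}_{\star}^{full}\left(t\right)\rho_t\right\vert\right)
=
\Tr\left(\left\vert\frac{1}{s-t}\int_{t}^{s}\left(\mathcal{L}_{\star}^{full}\left(u\right)\rho_u-\mathcal{L}_{\star}^{full}\left(t\right)\rho_t\right)du\right\vert\right)
\leq
\sup_{u}\,\Tr\left(\left\vert\mathcal{L}_{\star}^{full}\left(u\right)\rho_u-\mathcal{L}_{\star}^{full}\left(t\right)\rho_t\right\vert\right),
\]
where the supremum is taken over $u$ between $s$ and $t$, and the last quantity $\longrightarrow_{s\rightarrow t}0$.

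The main obstacle is the passage from the weak formulation to the strong Bochner integral equation: this is exactly where Lemma \ref{le:Continuity} is indispensable, since without continuity (hence Bochner integrability) of the integrand one could neither integrate the scalar ODE nor remove the test operator $A$ to recover a genuine $\mathfrak{L}_{1}$-valued identity. Once that equation is in hand, the differentiation step is a routine application of the fundamental theorem of calculus for continuous, Banach-space-valued integrands.
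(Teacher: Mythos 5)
Your proposal is correct, and its core mechanism is the same as the paper's: both proofs rest on the Bochner integral identity
\begin{equation*}
\rho_t - \rho_s = \int_s^t \mathcal{L}_{\star}^{full}\left( u \right) \rho_u \, du
\qquad \text{in } \mathfrak{L}_{1}\left( \ell^2 \left(\mathbb{Z}_+ \right) \otimes \mathbb{C}^2 \right) ,
\end{equation*}
followed by the observation that, since Lemma \ref{le:Continuity} makes the integrand trace-norm continuous, the averaged integral converges to $\mathcal{L}_{\star}^{full}\left( t \right) \rho_t$; your final sup-estimate is exactly how the paper handles its identity (\ref{eq:2.28}). Where you genuinely differ is in how that integral identity is obtained. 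The paper simply asserts it, citing \cite{FagMora2019} for the fact that the weak solution satisfies (\ref{eq:2.28}) as a Bochner integral equation in $\mathfrak{L}_{1}\left( \mathfrak{h}\right)$. You instead derive it from the definition of an $N^2$-weak solution: $t \mapsto \Tr\left( A \rho_t \right)$ is $C^1$ with derivative $\Tr\left( A \, \mathcal{L}_{\star}^{full}\left( t \right) \rho_t \right)$ (continuity again by Lemma \ref{le:Continuity}), the scalar fundamental theorem of calculus together with the commutation of the bounded functional $\Tr\left( A \, \cdot \right)$ with the Bochner integral gives $\Tr \left( A \left( \rho_t - \rho_s - \int_s^t \mathcal{L}_{\star}^{full}\left( u \right) \rho_u \, du \right) \right) = 0$ for every bounded $A$, and rank-one operators separate the points of $\mathfrak{L}_{1}$, so the test operator can be removed. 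This weak-to-strong upgrade is sound (continuity gives strong measurability and local boundedness, hence Bochner integrability), and it makes the argument self-contained where the paper delegates precisely this step to an external reference; the price is the extra duality argument, which the paper's shorter route avoids.
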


\begin{proof}
Fix $t \geq 0$.
Since $\left( \rho_t \right)_{t \geq 0}$ satisfies (\ref{eq:AuxiliarGKSL}) in the Bochner integral sense 
(see, e.g., Theorem 6 of \cite{FagMora2019}),
\begin{equation}
 \label{eq:2.28}
 \frac{\rho_{s} - \rho_{t}}{s-t} -  \mathcal{L}_{\star}^{full} \left( t \right) \, \rho_t
 = 
\frac{1}{ s - t } \int_{t}^{s} \left( 
\mathcal{L}_{\star}^{full} \left( u \right) \, \rho_u - \mathcal{L}_{\star}^{full} \left( t \right) \, \rho_t \right) du 
\qquad \text{in }  \mathfrak{L}_{1}\left( \mathfrak{h}\right) 
\end{equation}
for all  $s \geq 0$ with $s \neq t$ .
From Lemma \ref{le:Continuity}  it follows the continuity of the function 
$
 u \mapsto \mathcal{L}_{\star}^{full} \left( u \right) \, \rho_u - \mathcal{L}_{\star}^{full} \left( t \right) \, \rho_t 
$  
defined from  $\left[ 0 , + \infty \right[$ to 
$\mathfrak{L}_{1}\left(  \ell^2 \left(\mathbb{Z}_+ \right) \otimes \mathbb{C}^2  \right)$.
Therefore, 
\[
\lim_{s \rightarrow t} 
\frac{1}{ s - t} \int_{t }^{ s} \left( 
\mathcal{L}_{\star}^{full} \left( u \right) \, \rho_u - \mathcal{L}_{\star}^{full} \left( t \right) \, \rho_t \right) du
=
0 
\hspace{1cm} \text{ in } \mathfrak{L}_{1}\left(  \ell^2 \left(\mathbb{Z}_+ \right) \otimes \mathbb{C}^2  \right) ,
\]
and so  the lemma follows from (\ref{eq:2.28}). 
\end{proof}

\begin{lemma}
\label{th:VariationConstantsRegular}
Assume the hypothesis of Lemma \ref{le:Continuity}.
Then, for all $t \geq s \geq 0$ we have 
\begin{equation*}
 \rho_t 
 =
 \rho^R_{t-s} \left( \rho_s  \right) 
 + \int_s^t  \rho^R_{t-u} \left( 
 \left[  \left( \alpha_R \left( u \right) a^\dagger -  \overline{\alpha_R \left( u \right)} a \right)
\hspace{-1.2pt} + \hspace{-1.2pt}
\left( \overline{\beta_R  \left( u \right) } \sigma^{-}  - \beta_R  \left( u \right) \sigma^+ \right)   ,
\rho_u \right]  \right) du ,
\end{equation*}
where
$ \rho^R_t \left( \cdot \right)$ is as in Theorem \ref{th:ConvEqHE},
$ \alpha_R \left( u \right) = \alpha \left( u \right) - \alpha_0 $
and 
$ \beta_R  \left( u \right) = \beta \left( u \right) - \beta_0 $. 
\end{lemma}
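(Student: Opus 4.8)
The plan is to prove the identity by the classical Duhamel argument, reading \eqref{eq:8.4} as the inhomogeneous abstract Cauchy problem obtained by perturbing the autonomous equation \eqref{eq:3.2g}. Set
\[
f(u) = \left[ \left( \alpha_R(u) a^\dagger - \overline{\alpha_R(u)} a \right) + \left( \overline{\beta_R(u)} \sigma^{-} - \beta_R(u) \sigma^{+} \right), \rho_u \right],
\]
so that $\mathcal{L}_{\star}^{full}(u) = \mathcal{L}_{\star}^{R} + [\,\cdot\,,\,\cdot\,]$ and consequently $\mathcal{L}_{\star}^{full}(u)\,\rho_u = \mathcal{L}_{\star}^{R}\,\rho_u + f(u)$. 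By Lemma \ref{lem:StrongSolution} the curve $u \mapsto \rho_u$ is strongly differentiable in $\mathfrak{L}_{1}\left( \ell^2(\mathbb{Z}_+) \otimes \mathbb{C}^2 \right)$ with derivative $\mathcal{L}_{\star}^{full}(u)\,\rho_u$; by Lemma \ref{le:Continuity} both $u \mapsto \mathcal{L}_{\star}^{full}(u)\,\rho_u$ and $u \mapsto \mathcal{L}_{\star}^{R}\,\rho_u$ are continuous, so their difference $f$ is continuous as well.

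Fix $t \geq 0$ and, for $u \in [0,t]$, define $\Phi(u) = \rho^R_{t-u}\left( \rho_u \right)$. I would differentiate $\Phi$ by applying the product rule to the forward difference
\[
\Phi(u+h) - \Phi(u) = \rho^R_{t-u-h}\left( \rho_{u+h} - \rho_u \right) + \left( \rho^R_{t-u-h} - \rho^R_{t-u} \right)\left( \rho_u \right).
\]
For the first summand, the contractivity and strong continuity of the semigroup $(\rho^R_v)_{v \geq 0}$ of Theorem \ref{th:ConvEqHE}, combined with the differentiability of $u \mapsto \rho_u$, give $h^{-1}\rho^R_{t-u-h}\left( \rho_{u+h} - \rho_u \right) \to \rho^R_{t-u}\left( \dot\rho_u \right)$. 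For the second summand, writing $\rho^R_{t-u-h} - \rho^R_{t-u} = -\rho^R_{t-u-h}\left( \rho^R_{h} - I \right)$ via the semigroup law and using that $\rho_u$ lies in the domain of the generator of $(\rho^R_v)_{v \geq 0}$, on which the generator acts as $\mathcal{L}_{\star}^{R}$, one obtains the limit $-\rho^R_{t-u}\left( \mathcal{L}_{\star}^{R}\,\rho_u \right)$. Adding the two limits and substituting $\dot\rho_u = \mathcal{L}_{\star}^{R}\,\rho_u + f(u)$ produces the cancellation of the two $\mathcal{L}_{\star}^{R}$-contributions, leaving
\[
\Phi'(u) = \rho^R_{t-u}\left( f(u) \right).
\]

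Because $f$ is continuous and $(\rho^R_v)_{v \geq 0}$ is a strongly continuous contraction semigroup, the map $u \mapsto \rho^R_{t-u}\left( f(u) \right)$ is continuous, so $\Phi$ is continuously differentiable on $[s,t]$ and the fundamental theorem of calculus for Bochner integrals applies. Evaluating $\Phi(t) = \rho^R_0\left( \rho_t \right) = \rho_t$ and $\Phi(s) = \rho^R_{t-s}\left( \rho_s \right)$ then yields
\[
\rho_t - \rho^R_{t-s}\left( \rho_s \right) = \int_s^t \rho^R_{t-u}\left( f(u) \right) du,
\]
which is precisely the asserted identity.

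I expect the delicate point to be the justification that $\rho_u$ belongs to the domain of the (abstract) generator of $(\rho^R_v)_{v \geq 0}$ and that this generator coincides there with the formal expression $\mathcal{L}_{\star}^{R}$, since this is exactly where the unboundedness of $a$ and $a^\dagger$ enters. I would settle it by invoking the $N^2$-regularity of the $N^2$-weak solution (propagated in time, as in \cite{FagMora2019}) together with the identification of $N^p$-regular operators as elements of a core of the generator furnished by \cite{MoraAP}. The same $N^2$-regularity ensures that the commutator $f(u)$ is a well-defined trace-class operator, so that the bounded operator $\rho^R_{t-u}$ of Theorem \ref{th:ConvEqHE} may legitimately be applied to it.
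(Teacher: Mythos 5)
Your proposal is correct and follows essentially the same route as the paper's proof: both differentiate $u \mapsto \rho^R_{t-u}\left( \rho_u \right)$ by splitting the difference quotient, both rely on Lemma \ref{lem:StrongSolution} for the strong derivative of $\rho_u$ and for identifying $\mathcal{L}_{\star}^R$ with the generator of $\left( \rho^R_v \right)_{v \geq 0}$ on $N^2$-regular states (the delicate point you flag is exactly the one the paper settles this way, using the autonomous case of that lemma), and both conclude via continuity of the integrand and the fundamental theorem of calculus for Bochner integrals. The only cosmetic difference is that the paper uses a three-term splitting and then commutes $\mathcal{L}_{\star}^R$ through the semigroup, whereas your two-term splitting yields $\rho^R_{t-u}\left( \mathcal{L}_{\star}^R \, \rho_u \right)$ directly.
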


\begin{proof}
Consider $t > s \geq 0$.
For any non-zero real number $\Delta$ such that $-s \leq \Delta < t-s$ we have
\begin{align*}
&
 \frac{1}{\Delta} \left( 
 \rho^R_{t - \left(s+ \Delta \right)}  \left( \rho_{s+\Delta} \right)
 - \rho^R_{t -s}  \left( \rho_{s} \right)
 \right)
 +  \mathcal{L}_{\star}^R \, \rho^R_{t -s}  \left( \rho_{s} \right)
 -
 \rho^R_{t -s}  \left( \mathcal{L}_{\star}^{full} \left( s \right) \,  \rho_{s} \right)
 \\
 & =
 \rho^R_{t - \left(s+ \Delta \right)}  \left( 
 \frac{1}{ \Delta} \left( \rho_{s+ \Delta} - \rho_{s}\right)  -  \mathcal{L}_{\star}^{full} \left( s \right) \,  \rho_{s}
 \right)
  +
  \rho^R_{t - \left(s+  \Delta \right)}  \left( \mathcal{L}_{\star}^{full} \left( s \right) \,  \rho_{s}  \right)
  -
   \rho^R_{t - s}  \left( \mathcal{L}_{\star}^{full} \left( s \right) \,  \rho_{s} \right)
 \\
 & \quad +
 \frac{1}{ \Delta} \left( 
 \rho^R_{t - \left(s+ \Delta \right)}  \left( \rho_{s} \right)
 - \rho^R_{t -s}  \left( \rho_{s} \right)
 \right)
 +  \mathcal{L}_{\star}^R \, \rho^R_{t -s}  \left( \rho_{s} \right) .
\end{align*}

Since $\rho^R_{t - \left(s+ \Delta \right)}$ is a contraction acting on 
$\mathfrak{L}_{1}\left(  \ell^2 \left(\mathbb{Z}_+ \right) \otimes \mathbb{C}^2  \right)$,
\begin{align*}
 & 
 \Tr \left( \left\vert  \rho^R_{t - \left(s+ \Delta \right)}  \left( 
 \frac{1}{ \Delta} \left( \rho_{s+ \Delta} - \rho_{s}\right)  -  \mathcal{L}_{\star}^{full} \left( s \right) \,  \rho_{s}
 \right) \right\vert  \right)
  \leq
 \Tr \left( \left\vert \frac{1}{ \Delta} \left( \rho_{s+ \Delta} - \rho_{s}\right)  -  \mathcal{L}_{\star}^{full} \left( s \right) \,  \rho_{s}  \right\vert  \right),
\end{align*}
and so applying Lemma \ref{lem:StrongSolution} yields 
\begin{equation*}
\Tr \left( \left\vert  \rho^R_{t - \left(s+ \Delta \right)}  \left( 
 \frac{1}{ \Delta} \left( \rho_{s+ \Delta} - \rho_{s}\right)  -  \mathcal{L}_{\star}^{full} \left( s \right) \,  \rho_{s}
 \right) \right\vert  \right)
 \longrightarrow_{ \Delta \rightarrow 0} 0 .
\end{equation*}
The operator $\mathcal{L}_{\star}^R$ is equal to $ \mathcal{L}_{\star}^{full} \left( t \right) $ 
in case
$\alpha \left( t \right) =  \alpha_0 $ and $\beta \left( t \right) = \beta_0 $ for all $t \geq 0$.
Hence,  
using  Lemma \ref{lem:StrongSolution} we deduce that 
$\mathcal{L}_{\star}^R$ coincides with
the infinitesimal generator of the strongly continuous semigroup $\left( \rho^R_t  \right)_{t \geq 0}$
on the subset $\mathfrak{L}_{1,N^2}^{+} \left( \ell^2 \left(\mathbb{Z}_+ \right) \otimes \mathbb{C}^2 \right) $,
as well as 
\begin{equation*}
 \Tr \left( \left\vert   
 \frac{1}{ \Delta} \left( 
 \rho^R_{t - \left(s+ \Delta \right)}  \left( \rho_{s} \right)
 - \rho^R_{t -s}  \left( \rho_{s} \right)
 \right)
 +  \mathcal{L}_{\star}^R \, \rho^R_{t -s}  \left( \rho_{s} \right)
 \right\vert  \right)
 \longrightarrow_{ \Delta \rightarrow 0} 0 .
\end{equation*}
Moreover,
the strong continuity of the semigroup $\left( \rho^R_u  \right)_{u \geq 0}$ implies 
\begin{equation*}
 \Tr \left( \left\vert 
 \rho^R_{t - \left(s+  \Delta \right)}  \left( \mathcal{L}_{\star}^{full} \left( s \right) \,  \rho_{s}  \right)
  -
 \rho^R_{t - s}  \left( \mathcal{L}_{\star}^{full} \left( s \right) \,  \rho_{s} \right)
   \right\vert  \right)
 \longrightarrow_{ \Delta \rightarrow 0} 0 .
\end{equation*}
Therefore,
$
\frac{1}{\Delta} \left( 
 \rho^R_{t - \left(s+ \Delta \right)}  \left( \rho_{s+\Delta} \right)
 - \rho^R_{t -s}  \left( \rho_{s} \right)
 \right)
 +  \mathcal{L}_{\star}^R \, \rho^R_{t -s}  \left( \rho_{s} \right)
 -
 \rho^R_{t -s}  \left( \mathcal{L}_{\star}^{full} \left( s \right) \,  \rho_{s} \right)
 $
converges to $0$ in $\mathfrak{L}_{1}\left(  \ell^2 \left(\mathbb{Z}_+ \right) \otimes \mathbb{C}^2  \right)$
as $\Delta \rightarrow 0$.
Thus
\begin{equation*}
 \frac{d}{ds} \rho^R_{t -s}  \left( \rho_{s} \right)
 =
\rho^R_{t -s}  \left( \mathcal{L}_{\star}^{full} \left( s \right) \,  \rho_{s} \right)
- \mathcal{L}_{\star}^R \, \rho^R_{t -s}  \left( \rho_{s} \right) 
 =
\rho^R_{t -s}  \left( \mathcal{L}_{\star}^{full} \left( s \right) \,  \rho_{s} \right)
-  \rho^R_{t -s}  \left( \mathcal{L}_{\star}^R \, \rho_{s} \right) ,
\end{equation*}
and consequently 
\begin{equation}
\label{eq:2.29}
 \frac{d}{ds} \rho^R_{t -s}  \left( \rho_{s} \right)
 =
\rho^R_{t -s}  \left( \mathcal{L}_{\star}^{full} \left( s \right) \,  \rho_{s} -  \mathcal{L}_{\star}^R \, \rho_{s} \right) .
\end{equation}

Now, we deduce the continuity of the map 
$
s \mapsto
\rho^R_{t -s}  \left( \mathcal{L}_{\star}^{full} \left( s \right) \,  \rho_{s} -  \mathcal{L}_{\star}^R \, \rho_{s} \right)
$.
The contraction property of  $\rho^R_{t -u}$ leads to 
\begin{align*}
 & 
 \Tr \left( \left\vert  
 \rho^R_{t -s}  \left( \mathcal{L}_{\star}^{full} \left( s \right) \,  \rho_{s} -  \mathcal{L}_{\star}^R \, \rho_{s} \right)
 -
 \rho^R_{t -u}  \left( \mathcal{L}_{\star}^{full} \left( u \right) \,  \rho_{u} -  \mathcal{L}_{\star}^R \, \rho_{u} \right)
  \right\vert  \right)
\\
& \leq
\Tr \left( \left\vert  
 \rho^R_{t -s}  \left( \mathcal{L}_{\star}^{full} \left( s \right) \,  \rho_{s} -  \mathcal{L}_{\star}^R \, \rho_{s} \right)
 -
 \rho^R_{t -u}  \left( \mathcal{L}_{\star}^{full} \left( s \right) \,  \rho_{s} -  \mathcal{L}_{\star}^R \, \rho_{s} \right)
\right\vert  \right)
\\
& \quad \quad +
\Tr \left( \left\vert  
 \rho^R_{t -u}  \left( \mathcal{L}_{\star}^{full} \left( s \right) \,  \rho_{s} -  \mathcal{L}_{\star}^R \, \rho_{s} \right)
 -
  \rho^R_{t -u}  \left( \mathcal{L}_{\star}^{full} \left( u \right) \,  \rho_{u} -  \mathcal{L}_{\star}^R \, \rho_{u} \right)
\right\vert  \right)
\\
& \leq
\Tr \left( \left\vert  
 \rho^R_{t -s}  \left( \mathcal{L}_{\star}^{full} \left( s \right) \,  \rho_{s} -  \mathcal{L}_{\star}^R \, \rho_{s} \right)
 -
 \rho^R_{t -u}  \left( \mathcal{L}_{\star}^{full} \left( s \right) \,  \rho_{s} -  \mathcal{L}_{\star}^R \, \rho_{s} \right)
\right\vert  \right)
\\
& \quad \quad +
\Tr \left( \left\vert  
  \left( \mathcal{L}_{\star}^{full} \left( s \right) \,  \rho_{s} -  \mathcal{L}_{\star}^R \, \rho_{s} \right)
 -
   \left( \mathcal{L}_{\star}^{full} \left( u \right) \,  \rho_{u} -  \mathcal{L}_{\star}^R \, \rho_{u} \right)
\right\vert  \right) .
\end{align*}
According to Lemma \ref{le:Continuity} we have
\[
\lim_{u \rightarrow s} \Tr \left( \left\vert  
  \left( \mathcal{L}_{\star}^{full} \left( s \right) \,  \rho_{s} -  \mathcal{L}_{\star}^R \, \rho_{s} \right)
 -
   \left( \mathcal{L}_{\star}^{full} c\left( u \right) \,  \rho_{u} -  \mathcal{L}_{\star}^R \, \rho_{u} \right)
\right\vert  \right)
= 0 ,
\]
and the strong continuity of $\left( \rho^R_r  \right)_{r \geq 0}$ yields
\[
\lim_{u \rightarrow s} 
\Tr \left( \left\vert  
 \rho^R_{t -s}  \left( \mathcal{L}_{\star}^{full} \left( s \right) \,  \rho_{s} -  \mathcal{L}_{\star}^R \, \rho_{s} \right)
 -
 \rho^R_{t -u}  \left( \mathcal{L}_{\star}^{full} \left( s \right) \,  \rho_{s} -  \mathcal{L}_{\star}^R \, \rho_{s} \right)
\right\vert  \right)
= 0 .
\]
Then, 
$
s \mapsto
\rho^h_{t -s}  \left( \mathcal{L}_{\star}^{full} \left( s \right) \,  \rho_{s} -  \mathcal{L}_{\star}^h \, \rho_{s} \right)
$
is continuous.
By the fundamental theorem of calculus for the Bochner integral,
integrating (\ref{eq:2.29}) gives
\[
\rho^R_{0}  \left( \rho_{t} \right) - \rho^R_{t -s}  \left( \rho_{s} \right)
=
\int_{s}^{t} 
\rho^R_{t -u}  \left( \mathcal{L}_{\star}^{full} \left( u \right) \,  \rho_{u} -  \mathcal{L}_{\star}^R \, \rho_{u} \right) 
 du ,
\]
which is the desired conclusion.
\end{proof}

Now,
we  extend Lemma \ref{th:VariationConstantsRegular} to any $N$-regular initial condition
by combining a limit procedure with the probabilistic representation (\ref{eq:ProbRepAQME}).

\begin{proof}[Proof of Theorem \ref{th:VariationConstants}]
We start by approximating the $N$-regular initial condition $\rho_0$ by 
$N^2$-regular density operators.
As $\rho_0 \in \mathfrak{L}_{1, N}^{+} \left( \ell^2(\mathbb{Z}_+)\otimes \mathbb{C}^2 \right) $
we have that
$
\rho_0 = \mathbb{E}\left\vert \xi \rangle \langle \xi \right\vert 
$
for certain $\xi \in  L_{N}^{2}\left( \mathbb{P},  \ell^2 \left(\mathbb{Z}_+ \right) \otimes \mathbb{C}^2 \right) $
(see, e.g., Theorem 3.1 of \cite{MoraAP}).
Let $pr_n$ denote the orthonormal projection of $ \ell^2 \left(\mathbb{Z}_+ \right) \otimes \mathbb{C}^2$ onto
the linear span of $e_0 \otimes e_{-}, e_0 \otimes e_{+},  \ldots, e_n \otimes e_{-}, e_n \otimes e_{+}$. 
Since
\[
\mathbb{E} \left\Vert pr_n \left( \xi \right) \right\Vert ^2 + \mathbb{E} \left\Vert N^2 \, pr_n \left( \xi \right) \right\Vert ^2
\leq
\mathbb{E} \left\Vert  \xi \right\Vert ^2 + n^2 \mathbb{E} \sum_{k=0}^n \sum_{\eta = \pm} 
\left\vert \langle e_{k} \otimes e_{\eta} , \xi \rangle \right\vert^2
\leq 1 + n^2 ,
\]
$ pr_n \left( \xi \right) \in L_{N^2}^{2}\left( \mathbb{P},  \ell^2 \left(\mathbb{Z}_+ \right) \otimes \mathbb{C}^2 \right) $.
The increasing sequence 
$ \mathbb{E} \left\Vert pr_n \left( \xi \right) \right\Vert ^2 $ converges to $\mathbb{E} \left\Vert \xi \right\Vert ^2 = 1$
as $n  \rightarrow + \infty$,
and so there exists $n_0 \in \mathbb{N}$ such that
$ \mathbb{E} \left\Vert pr_n \left( \xi \right) \right\Vert ^2  > 0$  for all $n \geq n_0$.
For any $n \geq n_0$ we set 
$\xi_n := pr_n \left( \xi \right)  / \sqrt{\mathbb{E} \left\Vert pr_n \left( \xi \right) \right\Vert ^2}$.
Then $ \xi_n \in L_{N^2}^{2}\left( \mathbb{P},  \ell^2 \left(\mathbb{Z}_+ \right) \otimes \mathbb{C}^2 \right) $.

Since $N$ commutes with $pr_n $, 
\begin{align*}
& 
\left\Vert 
 \sqrt{\mathbb{E} \left\Vert pr_n \left( \xi \right) \right\Vert ^2} N^p \, \xi 
 -
 N^p  pr_n \left( \xi \right) 
 \right\Vert ^2
 \\
 & =
 \left( \sqrt{\mathbb{E} \left\Vert pr_n \left( \xi \right) \right\Vert ^2} -1 \right)^2
 \left\Vert  pr_n \left( N^p \, \xi \right)  \right\Vert ^2
 +
\left( \mathbb{E} \left\Vert pr_n \left( \xi \right) \right\Vert ^2 \right)
 \left\Vert N^p \xi - pr_n \left( N^p \xi \right) \right\Vert^2 
\end{align*}
with $p = 0, 1$,
and so
\begin{align*}
 \mathbb{E}  \left( \left\Vert \xi - \xi_n  \right\Vert_{N}^2 \right)
& 
\leq 
 \frac{\left( \sqrt{\mathbb{E} \left\Vert pr_n \left( \xi \right) \right\Vert ^2} - 1 \right)^2}
 {\mathbb{E} \left\Vert pr_n \left( \xi \right) \right\Vert ^2}
 \left( 
 \mathbb{E}  \left( \left\Vert \xi \right\Vert^2 \right) +  \mathbb{E}  \left( \left\Vert N \xi \right\Vert^2 \right)
 \right)
 \\
 & \quad
 +
 \mathbb{E}  \left( \left\Vert \xi - pr_n \left( \xi \right) \right\Vert^2 \right) 
 +  \mathbb{E}  \left( \left\Vert N \xi - pr_n \left( N \xi \right) \right\Vert^2 \right) 
\longrightarrow_{n \rightarrow + \infty} 0 .
\end{align*}

Let  $X_{t} \left( \xi \right)$ be the strong $N$-solution of (\ref{eq:SSE}).
Since 
$
\rho_0 = \mathbb{E}\left\vert \xi \rangle \langle \xi \right\vert 
$,
$
 \rho_t = \mathbb{E} \left| X_{t} \left( \xi \right)\right\rangle \left\langle X_{t} \left( \xi \right)\right| 
$
(see, e.g., \cite{FagMora2019}).
Similarly, 
$
 \rho^n_t := \mathbb{E} \left| X_{t} \left( \xi_n \right)\right\rangle \left\langle X_{t} \left( \xi_n \right)\right| 
$
is the $N^2$-weak solution of (\ref{eq:AuxiliarGKSL}) with initial condition 
$
\mathbb{E} \left|  \xi_n \right\rangle \left\langle \xi_n \right| 
$
(see, e.g., \cite{FagMora2019}),
where
$X_{t} \left( \xi_n \right)$ is the strong $N^2$-solution of (\ref{eq:SSE}) with initial datum $\xi_n$.
Lemma \ref{th:VariationConstantsRegular} yields 
\begin{equation}
\label{eq:AuxiliarGKSL3}
\begin{aligned}
  \rho^n_t 
 & =
 \rho^R_{t-s} \left( \rho^n_s  \right) 
 \\
 & \quad
 + \int_s^t  \rho^R_{t-u} \left( 
 \left[  \left( \alpha_R \left( u \right) a^\dagger -  \overline{\alpha_R \left( u \right)} a \right)
+\left( \overline{\beta_R  \left( u \right) } \sigma^{-}  - \beta_R  \left( u \right) \sigma^+ \right)   ,
\rho^n_u \right]  \right) du 
\end{aligned}
\end{equation}
for all $t \geq s \geq 0$ and $n \geq n_0$.

The linearity of (\ref{eq:SSE})  leads to
\[
\mathbb{E} \left( \left\Vert X_{u} \left( \xi \right) - X_{u} \left( \xi_n \right) \right\Vert_{N}^2 \right)
=
\mathbb{E} \left( \left\Vert X_{u} \left( \xi - \xi_n \right) \right\Vert_{N}^2 \right)
\leq 
K \left( u \right) \mathbb{E}  \left( \left\Vert \xi - \xi_n  \right\Vert_{N}^2 \right) 
\]
for all $u \geq 0$
(see, e.g., \cite{FagMora2019}).
Therefore,  
\begin{equation}
 \label{eq:AuxiliarGKSL1}
 \limsup_{n \rightarrow + \infty}
\sup_{u \in \left[ 0 , t \right]}
\mathbb{E} \left( \left\Vert X_{u} \left( \xi \right) - X_{u} \left( \xi_n \right) \right\Vert_{N}^2 \right)
\leq
\lim_{n \rightarrow + \infty} 
K \left( t \right) \mathbb{E}  \left( \left\Vert \xi - \xi_n  \right\Vert_{N}^2 \right)
= 0 .
\end{equation}
Consider the linear operators $A,B$ in $ \ell^2 \left(\mathbb{Z}_+ \right) \otimes \mathbb{C}^2 $
that are relatively bounded with respect to $N$. Then
\begin{align*}
 & \Tr \left( \left\vert  
\mathbb{E} \left\vert  A \,   X_{u} \left( \xi \right) \rangle  \langle B \,  X_{u} \left( \xi \right) \right\vert 
-
\mathbb{E} \left\vert  A \,   X_{u} \left( \xi_n \right) \rangle  \langle B \,  X_{u} \left( \xi_n \right) \right\vert 
 \right\vert  \right)
 \\ 
& \leq
 \Tr \left( \left\vert  
\mathbb{E} \left\vert  A \,   X_{u}\left( \xi \right) -  A \,   X_{u}\left( \xi_n \right) \rangle  \langle B \,  X_{u}\left( \xi \right) \right\vert 
\right\vert  \right)
\\
& \quad
+
 \Tr \left( \left\vert  
\mathbb{E} \left\vert  A \,   X_{u}\left( \xi_n \right)  \rangle  \langle B \,  X_{u} \left( \xi \right) -  B \,   X_{u}\left( \xi_n \right) \right\vert 
 \right\vert  \right)
 \\
 & \leq
 \mathbb{E} \left( 
 \left\Vert  A \,   X_{u}\left( \xi \right) -  A \,   X_{u}\left( \xi_n \right)  \right\Vert  \left\Vert B \,  X_{u}\left( \xi \right) \right\Vert 
 \right)
 +
\mathbb{E} \left(
 \left\Vert  A \,   X_{u}\left( \xi_n \right)  \right\Vert   \left\Vert B \,  X_{u} \left( \xi \right) -  B \,   X_{u}\left( \xi_n \right) \right\Vert 
 \right)
 \\
 & \leq
 \sqrt{\mathbb{E} \left( \left\Vert  A  \,   X_{u}\left( \xi \right) -  A \,   X_{u}\left( \xi_n \right)  \right\Vert^2  \right)}
 \sqrt{\mathbb{E} \left( \left\Vert B \,  X_{u}\left( \xi \right) \right\Vert^2 \right) }
 \\
  & \quad \quad +
 \sqrt{\mathbb{E} \left( \left\Vert A \,  X_{u}\left( \xi_n \right) \right\Vert^2 \right) }
 \sqrt{\mathbb{E} \left( \left\Vert  B  \,   X_{u}\left( \xi \right) -  B \,   X_{u}\left( \xi_n \right)  \right\Vert^2  \right)} .
  \end{align*}
Since
\[
\mathbb{E} \left( \left\Vert A \, X_{u} \left( \xi_n \right) \right\Vert^2 \right)
\leq
K \, \mathbb{E} \left( \left\Vert X_{u} \left( \xi_n \right) \right\Vert_{N}^2 \right)
\leq 
K \left( u \right) \mathbb{E}  \left( \left\Vert \xi_n  \right\Vert_{N}^2 \right) 
\leq 
K \left( u \right) \mathbb{E}  \left(\left\Vert \xi  \right\Vert_{N}^2 \right)  
\]
(see, e.g., \cite{FagMora2019}),
applying (\ref{eq:AuxiliarGKSL1}) gives  
\begin{equation}
\label{eq:AuxiliarGKSL2}
\sup_{u \in \left[ 0 , t \right]}
 \Tr \left( \left\vert  
\mathbb{E} \left\vert  A \,   X_{u}\left( \xi \right) \rangle  \langle B \,  X_{u}\left( \xi \right) \right\vert 
-
\mathbb{E} \left\vert  A \,   X_{u}\left( \xi_n \right) \rangle  \langle B \,  X_{u} \left( \xi_n \right) \right\vert 
 \right\vert  \right)
 \longrightarrow_{n \rightarrow + \infty} 0 .
\end{equation}

We shall now proceed to tend $n$ to $+ \infty$ in (\ref{eq:AuxiliarGKSL3}).
Applying (\ref{eq:AuxiliarGKSL2}) we obtain
\[
\sup_{u \in \left[ 0 , t \right]} \hspace{-1pt} 
\Tr \left( \left\vert  \rho_u - \rho^n_u  \right\vert  \right)
\hspace{-1pt} = \hspace{-2pt} 
\sup_{u \in \left[ 0 , t \right]}
\hspace{-1pt}  \Tr \left( \left\vert  
\mathbb{E} \left\vert   X_{u} \left( \xi \right) \rangle  \langle X_{u} \left( \xi \right) \right\vert 
\hspace{-1pt}  - \hspace{-1pt} 
\mathbb{E} \left\vert   X_{u} \left( \xi_n \right) \rangle  \langle  X_{u} \left( \xi_n \right) \right\vert 
 \right\vert  \right)
 \longrightarrow_{n \rightarrow + \infty} 0 .
\]
Since $\rho^R_{t - s }$ is a contraction acting on 
$\mathfrak{L}_{1}\left(  \ell^2 \left(\mathbb{Z}_+ \right) \otimes \mathbb{C}^2  \right)$,
\[
\Tr \left( \left\vert \rho^R_{t-s} \left( \rho_s  \right) - \rho^R_{t-s} \left( \rho^n_s  \right)  \right\vert  \right)
 \leq 
\Tr \left( \left\vert  \rho_s - \rho^n_s  \right\vert  \right) \longrightarrow_{n \rightarrow + \infty} 0 .
\]
Moreover,
\begin{align*}
\Tr \left( \left\vert   \rho^R_{t-u} \left( 
 \left[  \overline{\beta_R  \left( u \right) } \sigma^{-}  - \beta_R  \left( u \right) \sigma^+   ,
\rho_u  - \rho^n_u \right] \right) 
\right\vert  \right)
 & \leq
\Tr \left( \left\vert    \left[  \overline{\beta_R  \left( u \right) } \sigma^{-}  - \beta_R  \left( u \right) \sigma^+   ,\rho_u  - \rho^n_u \right] 
\right\vert  \right)
\\
& \leq
2  \left\vert \beta_R  \left( u \right)  \right\vert \left(   \left\Vert  \sigma^{-} \right\Vert +   \left\Vert  \sigma^{+} \right\Vert \right) 
\Tr \left( \left\vert  \rho_u  - \rho^n_u \right\vert  \right) .
\end{align*}
This implies
\[
\sup_{u \in \left[ 0 , t \right]}
\Tr \left( \left\vert   \rho^R_{t-u} \left( 
 \left[  \overline{\beta_R  \left( u \right) } \sigma^{-}  - \beta_R  \left( u \right) \sigma^+   ,
\rho_u  - \rho^n_u \right] \right) 
\right\vert  \right)
\longrightarrow_{n \rightarrow + \infty} 0 .
\]

Using that $\rho^R_{t} \left( \cdot \right)$ is a  contraction semigroup gives
\begin{equation}
\label{eq:AuxiliarGKSL4}
\begin{aligned}
&
\Tr \left( \left\vert    \rho^R_{t-u} \left( 
 \left[   \alpha_R \left( u \right) a^\dagger -  \overline{\alpha_R \left( u \right)} a  ,  \rho_u  - \rho^n_u \right] 
\right\vert  \right)
 \right) 
  \leq
\Tr \left( \left\vert   
 \left[  \alpha_R \left( u \right) a^\dagger -  \overline{\alpha_R \left( u \right)} a  ,  \rho_u  - \rho^n_u \right] 
\right\vert  \right)  .
\end{aligned}
\end{equation}
Since
\begin{align*}
 &
 \left[   \alpha_R \left( u \right) a^\dagger -  \overline{\alpha_R \left( u \right)} a  ,  \rho_u  - \rho^n_u \right] 
 \\
 &
 =
 \mathbb{E} \left\vert   \left( \alpha_R \left( u \right) a^\dagger -  \overline{\alpha_R \left( u \right)} a \right)  X_{u} \left( \xi \right) \rangle  \langle X_{u} \left( \xi \right) \right\vert 
 -
 \mathbb{E} \left\vert   \left( \alpha_R \left( u \right) a^\dagger -  \overline{\alpha_R \left( u \right)} a \right)  X_{u} \left( \xi_n \right) \rangle  \langle X_{u} \left( \xi_n \right) \right\vert 
\\
 & \quad
+
 \mathbb{E} \left\vert  X_{u} \left( \xi \right) \rangle  \langle \left( \alpha_R \left( u \right) a^\dagger 
 -  \overline{\alpha_R \left( u \right)} a \right)  X_{u} \left( \xi \right) \right\vert 
 -
  \mathbb{E} \left\vert  X_{u} \left( \xi_n \right) \rangle  \langle \left( \alpha_R \left( u \right) a^\dagger 
  -  \overline{\alpha_R \left( u \right)} a \right)  X_{u} \left( \xi_n \right) \right\vert 
\end{align*}
(see, e.g.,  Theorem 3.2 of \cite{MoraAP}), 
combining (\ref{eq:AuxiliarGKSL2}) with  (\ref{eq:AuxiliarGKSL4}) we deduce that
\[
\sup_{u \in \left[ 0 , t \right]}
\Tr \left( \left\vert    \rho^R_{t-u} \left( 
 \left[  \alpha_R \left( u \right) a^\dagger -  \overline{\alpha_R \left( u \right)} a ,  \rho_u  - \rho^n_u \right] 
\right\vert  \right)
 \right) 
\longrightarrow_{n \rightarrow + \infty} 0 .
\]
Now, taking the limit as $n \rightarrow + \infty$ in (\ref{eq:AuxiliarGKSL3})
we obtain (\ref{eq:AuxiliarGKSL5}).
\end{proof}

\subsection{Proof of Corollary \ref{cor:Ineq-Conv}}
\label{sec:Proof:Ineq-Conv}


\begin{lemma}
\label{lem:Ineq-Trace}
Suppose that the operator $C : \mathcal{D}\left(C\right) \subset \mathfrak{h} \rightarrow \mathfrak{h} $ 
is positive and self-adjoint. 
Asume that  $\varrho$ is a $C$-regular density operator in $\mathfrak{h}$.
Consider the linear operator $A:  \mathcal{D}\left(A\right) \subset \mathfrak{h} \rightarrow \mathfrak{h}$.
Then:
\begin{itemize}
 
 \item
 $
\Tr \left( \left\vert A \varrho \right\vert \right)  
\leq
\sqrt{\Tr \left( \varrho A^{\star}A \right)  }
$
whenever 
$A,A^{\star}A  \in \mathfrak{L}\left( \left( \mathcal{D}\left(C\right) , \left\Vert \cdot \right\Vert_C \right) , \mathfrak{h}\right)$.

 \item
 $
\Tr \left( \left\vert \varrho A  \right\vert \right)  
\leq
\sqrt{\Tr \left(\varrho A A^{\star}  \right)  }
$
provided that
$A^{\star},AA^{\star}  \in \mathfrak{L}\left( \left( \mathcal{D}\left(C\right) , \left\Vert \cdot \right\Vert_C \right) , \mathfrak{h}\right)$. 
\end{itemize}
\end{lemma}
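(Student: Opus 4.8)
The plan is to exploit the probabilistic representation of $C$-regular density operators that the paper uses throughout (and which already powers the Cauchy--Schwarz step in the proof of Lemma~\ref{lem:ConvFoton}). First I would invoke Theorem~3.1 of \cite{MoraAP} to write $\varrho = \mathbb{E}\ketbra{\xi}{\xi}$ for some $\xi \in L_C^{2}(\mathbb{P},\mathfrak{h})$, so that $\xi \in \mathcal{D}(C)$ a.s.\ and $\mathbb{E}(\|\xi\|_C^{2})<\infty$. Since $A \in \mathfrak{L}\big((\mathcal{D}(C),\|\cdot\|_C),\mathfrak{h}\big)$, the vector $A\xi$ is well defined a.s.\ and $\mathbb{E}(\|A\xi\|^{2}) \le K\,\mathbb{E}(\|\xi\|_C^{2})<\infty$, with $K$ the squared operator norm of $A$; hence $A\xi \in L^{2}(\mathbb{P},\mathfrak{h})$ and, by Theorem~3.2 of \cite{MoraAP}, $A\varrho = \mathbb{E}\ketbra{A\xi}{\xi}$ as an identity in $\mathfrak{L}_{1}(\mathfrak{h})$.

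For the first inequality I would then bound the trace norm by pulling it inside the expectation (the trace norm of a Bochner integral does not exceed the integral of the trace norms) and using that a rank-one operator has trace norm $\|u\|\,\|v\|$:
\begin{equation*}
\Tr\left(\left\vert A\varrho\right\vert\right) \le \mathbb{E}\,\Tr\left(\left\vert \ketbra{A\xi}{\xi}\right\vert\right) = \mathbb{E}\big(\|A\xi\|\,\|\xi\|\big) \le \sqrt{\mathbb{E}\big(\|A\xi\|^{2}\big)}\;\sqrt{\mathbb{E}\big(\|\xi\|^{2}\big)},
\end{equation*}
the last step being Cauchy--Schwarz in $L^{2}(\mathbb{P})$. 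Here $\mathbb{E}(\|\xi\|^{2})=\Tr(\varrho)=1$ because $\varrho$ is a density operator, while $\mathbb{E}(\|A\xi\|^{2})=\mathbb{E}\langle \xi, A^{\star}A\,\xi\rangle=\Tr(\varrho\,A^{\star}A)$ by Theorem~3.2 of \cite{MoraAP}, this quantity being finite precisely because $A^{\star}A$ is $C$-bounded. Combining the two evaluations gives $\Tr(|A\varrho|)\le\sqrt{\Tr(\varrho\,A^{\star}A)}$.

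The second inequality I would obtain by the same computation applied to $\varrho A$. Using $\langle\xi|\,A=\langle A^{\star}\xi|$ one has $\varrho A = \mathbb{E}\ketbra{\xi}{A^{\star}\xi}$, which is legitimate since $A^{\star}$ is $C$-bounded and so $A^{\star}\xi\in L^{2}(\mathbb{P},\mathfrak{h})$. The identical chain of estimates yields
\begin{equation*}
\Tr\left(\left\vert \varrho A\right\vert\right)\le \mathbb{E}\big(\|\xi\|\,\|A^{\star}\xi\|\big)\le \sqrt{\mathbb{E}\big(\|\xi\|^{2}\big)}\;\sqrt{\mathbb{E}\big(\|A^{\star}\xi\|^{2}\big)}=\sqrt{\Tr\left(\varrho\,AA^{\star}\right)},
\end{equation*}
where $\mathbb{E}(\|A^{\star}\xi\|^{2})=\mathbb{E}\langle\xi, AA^{\star}\xi\rangle=\Tr(\varrho\,AA^{\star})$ follows from the adjoint relation together with the hypothesis that $AA^{\star}$ be $C$-bounded. (Equivalently, one could argue by symmetry, noting $\varrho A=(A^{\star}\varrho)^{\star}$ and that $|T|$ and $|T^{\star}|$ share the same singular values.)

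The routine displays above conceal the one genuinely delicate point, which I expect to be the main obstacle: rigorously justifying the operator identities $A\varrho=\mathbb{E}\ketbra{A\xi}{\xi}$ and $\mathbb{E}(\|A\xi\|^{2})=\Tr(\varrho\,A^{\star}A)$ for the \emph{unbounded} operator $A$. These are not merely formal; they rest on the $C$-regularity of $\varrho$ combined with the relative-boundedness hypotheses $A,A^{\star}A\in\mathfrak{L}\big((\mathcal{D}(C),\|\cdot\|_C),\mathfrak{h}\big)$, which are exactly what guarantees that $A\xi$ lands in $L^{2}(\mathbb{P},\mathfrak{h})$ and that the quadratic form $\xi\mapsto\langle\xi,A^{\star}A\xi\rangle$ integrates to a finite trace. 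Both facts are supplied by Theorem~3.2 of \cite{MoraAP}, so the heart of the proof is careful domain bookkeeping rather than any new estimate.
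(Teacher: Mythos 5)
Your proof is correct and follows essentially the same route as the paper's: both represent $\varrho = \mathbb{E}\ketbra{\xi}{\xi}$ with $\xi \in L_{C}^{2}\left( \mathbb{P},\mathfrak{h}\right)$ via Theorem 3.1 of \cite{MoraAP}, reduce both bounds to $\mathbb{E}\left( \left\Vert A \xi \right\Vert \left\Vert \xi \right\Vert \right)$ (respectively $\mathbb{E}\left( \left\Vert A^{\star} \xi \right\Vert \left\Vert \xi \right\Vert \right)$), and conclude by Cauchy--Schwarz together with the identities $\mathbb{E}\left\Vert A\xi \right\Vert^{2} = \Tr\left( \varrho A^{\star}A \right)$ and $\mathbb{E}\left\Vert \xi \right\Vert^{2} = 1$ supplied by Theorem 3.2 of \cite{MoraAP}. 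The only cosmetic difference is that the paper reaches $\mathbb{E}\left( \left\Vert A \xi \right\Vert \left\Vert \xi \right\Vert \right)$ through the duality $\Tr\left( \left\vert A\varrho \right\vert \right) = \sup_{\left\Vert B \right\Vert = 1} \left\vert \Tr\left( B A \varrho \right) \right\vert$ rather than by first establishing $A\varrho = \mathbb{E}\ketbra{A\xi}{\xi}$ and pulling the trace norm inside the Bochner integral, so it sidesteps the domain bookkeeping you flag as the delicate point.
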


\begin{proof}
Since  $\varrho \in \mathfrak{L}_{1,C}^{+} \left( \mathfrak{h}\right) $,
there exists  $\xi \in L_{C}^{2}\left( \mathbb{P},\mathfrak{h}\right) $ such that
$
\varrho = \mathbb{E}\left\vert \xi \rangle \langle \xi \right\vert 
$
and 
$ \mathbb{E} \left( \left\Vert \xi  \right\Vert^2 \right) = 1 $
(see, e.g., Theorem 3.1 of \cite{MoraAP}).
If  
$A,A^{\star}A  \in \mathfrak{L}\left( \left( \mathcal{D}\left(C\right) , \left\Vert \cdot \right\Vert_C \right) , \mathfrak{h}\right)$,
then 
using Theorem 3.2 of of \cite{MoraAP} we obtain
\[
\Tr \left( \left\vert A \varrho \right\vert \right)
=
\sup_{\left\Vert B \right\Vert = 1} \left\vert \Tr \left( B A \varrho \right)  \right\vert
=
\sup_{\left\Vert B \right\Vert = 1} \left\vert \mathbb{E} \langle \xi , B A \, \xi \rangle  \right\vert
\leq
 \mathbb{E} \left( \left\Vert  \xi \right\Vert  \left\Vert A \, \xi \right\Vert \right) ,
\]
and so
\[
\mathbb{E} \left( \left\Vert  \xi \right\Vert  \left\Vert A \, \xi \right\Vert \right)
\leq
\sqrt{ \mathbb{E} \left( \left\Vert  \xi \right\Vert^2 \right) } \sqrt{ \mathbb{E} \left( \left\Vert  A \, \xi \right\Vert^2 \right) }
=
\sqrt{ \mathbb{E}  \langle A^{\star} A \, \xi ,  \xi \rangle }
=
\sqrt{\Tr \left( \varrho A^{\star}A  \right)  } ,
\]
because $\mathbb{E} \left( \left\Vert  \xi \right\Vert ^2 \right) = 1$.
Similarly, in case 
$A^{\star},AA^{\star}  \in \mathfrak{L}\left( \left( \mathcal{D}\left(C\right) , \left\Vert \cdot \right\Vert_C \right) , \mathfrak{h}\right)$
we have
$
\Tr \left( \left\vert  \varrho A \right\vert \right)
=
\sup_{\left\Vert B \right\Vert = 1} \left\vert \mathbb{E} \langle A^{\star} \, \xi , B  \xi \rangle  \right\vert
=
\sqrt{ \mathbb{E} \left( \left\Vert  A^{\star}  \, \xi \right\Vert^2 \right) }
=
\sqrt{\Tr \left( \varrho A  A^{\star}  \right)  } 
$.
\end{proof}

\begin{proof}[Proof of Corollary \ref{cor:Ineq-Conv}]
Theorem \ref{th:VariationConstants} leads to
\begin{align*}
& 
\Tr \left( \left\vert  \rho_t  - \varrho^f_{\infty} \otimes \varrho^a_{\infty} \right\vert  \right)
\leq 
\Tr \left( \left\vert   \rho^R_{t-s} \left( \rho_s  \right)   -  \varrho^f_{\infty} \otimes \varrho^a_{\infty} \right\vert  \right)
\\
& \hspace{1cm}
+ \Tr \left( \left\vert
\int_s^t  \rho^R_{t-u} \left( 
 \left[  \left( \alpha_R \left( u \right) a^\dagger -  \overline{\alpha_R \left( u \right)} a \right)
+\left( \overline{\beta_R  \left( u \right) } \sigma^{-}  - \beta_R  \left( u \right) \sigma^+ \right)   ,
\rho_u \right]  \right) du 
\right\vert  \right),
\end{align*}
where $t  \geq s \geq 0$.
Using that $\rho^R_{t -u}$ is a contraction acting on 
$\mathfrak{L}_{1}\left(  \ell^2 \left(\mathbb{Z}_+ \right) \otimes \mathbb{C}^2  \right)$
we obtain
\begin{align*}
 \Tr \left( \left\vert  \rho_t  - \varrho^f_{\infty} \otimes \varrho^a_{\infty} \right\vert  \right)
& \leq 
\Tr \left( \left\vert   \rho^R_{t-s} \left( \rho_s  \right)   -  \varrho^f_{\infty} \otimes \varrho^a_{\infty} \right\vert  \right)
\\
&  \hspace{-1cm}
+ 
\int_s^t  \Tr \left( \left\vert
 \left[  \left( \alpha_R \left( u \right) a^\dagger -  \overline{\alpha_R \left( u \right)} a \right)
+\left( \overline{\beta_R  \left( u \right) } \sigma^{-}  - \beta_R  \left( u \right) \sigma^+ \right)   ,
\rho_u \right] \right\vert   \right) du ,
\end{align*}
and so
\begin{align*}
\Tr \left( \left\vert  \rho_t  - \varrho^f_{\infty} \otimes \varrho^a_{\infty} \right\vert  \right)
& \leq 
\Tr \left( \left\vert   \rho^R_{t-s} \left( \rho_s  \right)   -  \varrho^f_{\infty} \otimes \varrho^a_{\infty} \right\vert  \right)
\\
&  \hspace{-0.3cm}
+ 
\int_s^t   \left\vert \alpha_R \left( u \right)  \right\vert \left( 
\Tr \left( \left\vert a^\dagger  \rho_u  \right\vert  \right) + \Tr \left( \left\vert  \rho_u  \, a^\dagger \right\vert  \right) 
+
\Tr \left( \left\vert a \, \rho_u \right\vert  \right) + \Tr \left( \left\vert \rho_u \, a  \right\vert  \right)
 \right) du 
\\
& \hspace{-1cm}
+ 
\int_s^t \left\vert \beta_R \left( u \right)  \right\vert \left( 
\Tr \left( \left\vert  \sigma^{-}  \rho_u \right\vert  \right)  + \Tr \left( \left\vert \rho_u \, \sigma^{-}  \right\vert  \right) 
+ \Tr \left( \left\vert \sigma^+  \rho_u  \right\vert  \right) + \Tr \left( \left\vert \rho_u \sigma^+  \right\vert  \right)
\right) du .
\end{align*}

As $\sigma^{\pm}$ are bounded operators,
\begin{align*}
  \Tr \left( \left\vert  \sigma^{-}  \rho_u \right\vert  \right)  + \Tr \left( \left\vert \rho_u \, \sigma^{-}  \right\vert  \right) 
+ \Tr \left( \left\vert \sigma^+  \rho_u  \right\vert  \right) + \Tr \left( \left\vert \rho_u \, \sigma^+  \right\vert  \right)
& \leq
\left( 2 \left\Vert  \sigma^{-} \right\Vert +  2 \left\Vert  \sigma^{+} \right\Vert \right) \Tr \left( \rho_u \right)
\\
& 
=
2 \left\Vert  \sigma^{-} \right\Vert +  2 \left\Vert  \sigma^{+} \right\Vert .
\end{align*}
By $a^\dagger a = N$, $a \, a^\dagger = N + I$ and $\Tr \left( \rho_u \right) =1$,
applying Lemma \ref{lem:Ineq-Trace} yields
\begin{align*}
& \Tr \left( \left\vert a^\dagger  \rho_u  \right\vert  \right) + \Tr \left( \left\vert  \rho_u  \, a^\dagger \right\vert  \right) 
+
\Tr \left( \left\vert a \, \rho_u \right\vert  \right) + \Tr \left( \left\vert \rho_u \, a  \right\vert  \right)
\\
&
\leq
 2 \sqrt{ \Tr \left( \rho_u \left( N + I \right) \right) } + 2 \sqrt{ \Tr \left( \rho_u \, N  \right) }
\leq
4 \sqrt{ \Tr \left( \rho_u \, N  \right) +1 }
.
\end{align*}
We thus get  (\ref{eq:30n}). 
\end{proof}

\subsubsection{Proof of Theorem \ref{th:ConvAuxiliarQME}}
\label{sec:Proofth:th:ConvAuxiliarQME}

We rewrite (\ref{eq:3.2g}) as
\begin{equation*}
 \frac{d}{dt}  \rho^R_t \left( \varrho \right)
=
\mathcal{L}_{\star}^f \otimes I \left( \rho^R_t \left( \varrho \right) \right) 
+ 
I \otimes \mathcal{L}_{\star}^a \left( \rho^R_t \left( \varrho \right) \right)  ,
\end{equation*}
where $\mathcal{L}_{\star}^f $ is the unbounded linear operator in $ \mathfrak{L}_1 \left(  \ell^2 \left(\mathbb{Z}_+ \right)\right) $ given by
\begin{equation}
 \label{eq:R8.4f}
 \mathcal{L}_{\star}^f \left( \tilde{\varrho} \right)
= 
\left[  
- \left( \kappa + \mathrm{i} \omega \right) a^\dagger a +  \left( \alpha_0 a^\dagger - \overline{\alpha_0} \, a \right)
, \tilde{ \varrho } \right] 
+
2 \kappa \, a\, \tilde{ \varrho } \, a^\dagger  
\end{equation}
and for any $  \hat{\varrho}   \in \mathbb{C}^{2 \times 2}$ we set
\begin{align}
\nonumber
\mathcal{L}_{\star}^a \left(  \hat{\varrho} \right)
& = 
- \mathrm{i} \omega\left[  \frac{1}{2}  \sigma^{3} ,  \hat{\varrho} \, \right]  
+ \left[ \, \overline{\beta_0} \sigma^{-} - \beta_0 \sigma^{+} ,  \hat{\varrho} \right] 
+  \frac{\gamma \left(1-d \right)}{2} \left( 2 \,  \sigma^{-}  \hat{\varrho} \,\sigma^+  -  \sigma^+\sigma^{-}  \hat{\varrho}
 -  \hat{\varrho} \,\sigma^+\sigma^{-}\right)
\\
\label{eq:R8.4a}
& \quad 
+ \frac{ \gamma \left(1+d \right) }{ 2 } \left(  2 \, \sigma^+  \hat{\varrho} \,\sigma^{-} - \sigma^{-}\sigma^+  \hat{\varrho}
-  \hat{\varrho} \,\sigma^{-}\sigma^+\right) .
\end{align}
Here, 
$d,\omega \in\mathbb{R}$, $\alpha_0 , \beta_0 \in\mathbb{C}$ and $\gamma , \kappa > 0$.
Using matrix analysis tools we now study the long time behavior of the semigroup of bounded operators on $\mathbb{C}^{2 \times 2}$
generated by $\mathcal{L}_{\star}^a$.

\begin{lemma}
\label{lem:ConvAtom}
Consider the linear ordinary differential equation
\begin{equation}
\label{eq:9.8}
 \left\{ 
\frac{d}{dt}  \rho^a_t 
  = 
\mathcal{L}_{\star}^a \left( \rho^a_t \right)
\hspace{2cm}
\forall t \geq 0 ,
 \qquad
  \rho^a_0   =   \varrho^a
\right. ,
\end{equation}
where $ \rho^a_t  \in \mathbb{C}^{2 \times 2}$
and $\mathcal{L}_{\star}^a$ is as in (\ref{eq:R8.4a}) with $d,\omega \in\mathbb{R}$,  $\gamma > 0$ and $\beta_0 \in\mathbb{C}$.
Then
\[
\Tr \left( \left\vert  
 \rho^a_t  
-
\Tr \left(\varrho^a \right) \varrho^a_{\infty} \right\vert \right) 
\leq
4 \exp \left( - \gamma  \, t \right) \left(
 \left\Vert  \varrho^a  \right\Vert_F
 +
 \left\vert d \, \Tr \left(\varrho^a \right) \right\vert 
 \right)  
 \qquad
 \forall t \geq 0 ,
\]
where $\left\Vert  \varrho^a  \right\Vert_F$ stands for the Frobenius norm of $\varrho^a $,
and 
$\varrho^a_{\infty}$ is as in Corollary \ref{cor:Ineq-Conv}.
\end{lemma}

\begin{proof}
Decomposing $\rho^a_t $ in the canonical basis of $ \mathbb{C}^{2 \times 2}$ we obtain
\[
 \rho^a_t 
 =
 \alpha_{+ + }\left( t \right) \ketbra{e_+}{e_+} 
 +  \alpha_{+ - }\left( t \right) \ketbra{e_+}{e_-}
 +  \alpha_{- + }\left( t \right) \ketbra{e_-}{e_+}
 +  \alpha_{- - }\left( t \right) \ketbra{e_-}{e_-} ,
\]
where  $\alpha_{\pm  \pm } \left( t \right)$ and $ \alpha_{\pm  \mp } \left( t \right)$ belong to  $\mathbb{C}$.
Then
\begin{align*}
\frac{d}{dt}  \rho^a_t 
=
 \mathcal{L}_{\star}^a \left( \rho^a_t \right)
& =
 \alpha_{+ + }\left( t \right) \mathcal{L}_{\star}^a \left(  \ketbra{e_+}{e_+}   \right) 
+  \alpha_{+ - } \left( t \right)  \mathcal{L}_{\star}^a \left( \ketbra{e_+}{e_-} \right)
\\
& \quad
+  \alpha_{- + } \left( t \right) \mathcal{L}_{\star}^a \left(  \ketbra{e_-}{e_+}  \right)
+  \alpha_{- - } \left( t \right)  \mathcal{L}_{\star}^a \left( \ketbra{e_-}{e_-}   \right) .
\end{align*}
Computing explicitly $\mathcal{L}_{\star}^a \left(  \ketbra{e_\pm}{e_\pm}   \right) $ and 
$\mathcal{L}_{\star}^a \left(  \ketbra{e_\pm}{e_\mp}   \right) $ yields
\[
 \left\{ 
 \begin{aligned}
 \frac{d}{dt} \alpha_{+ + }\left( t \right)  
 & =
 - \overline{\beta_0} \, \alpha_{+ - }\left( t \right)  -  \beta_0 \, \alpha_{- + }\left( t \right)
 - \gamma \left( 1 -d \right) \alpha_{+ + }\left( t \right) + \gamma \left( 1 + d \right) \alpha_{- -}\left( t \right)
 \\
 \frac{d}{dt} \alpha_{- - }\left( t \right)  
 & =
  \overline{\beta_0} \, \alpha_{+ - }\left( t \right)  +  \beta_0 \, \alpha_{- + }\left( t \right)
 + \gamma \left( 1 -d \right) \alpha_{+ + }\left( t \right) - \gamma \left( 1 + d \right) \alpha_{- -}\left( t \right)
 \\
 \frac{d}{dt} \alpha_{+ - }\left( t \right)  
 & =
 - \left( \gamma + \mathrm{i} \omega \right)  \alpha_{+ - }\left( t \right) +  \beta_0 \, \alpha_{+ + }\left( t \right)
 - \beta_0 \, \alpha_{- - }\left( t \right)
  \\
 \frac{d}{dt} \alpha_{- + }\left( t \right)  
 & =
  \left(- \gamma + \mathrm{i} \omega \right)  \alpha_{- + }\left( t \right) +  \overline{\beta_0} \, \alpha_{+ + }\left( t \right)
 - \overline{\beta_0} \, \alpha_{- - }\left( t \right)
 \end{aligned}
\right. .
\]
Adding the first two equations we get
\begin{equation}
 \label{eq:9.2}
 \alpha_{+ + }\left( t \right) + \alpha_{- - }\left( t \right)
=
\alpha_{+ + }\left( 0 \right) + \alpha_{- - }\left( 0 \right)
=
\Tr \left(\varrho^a \right) ,
\end{equation}
and so subtracting the first two equations we deduce that
\[
  \frac{d}{dt} \left( \alpha_{+ + }\left( t \right)  -  \alpha_{- - }\left( t \right)  \right)
  =
  - 2  \overline{\beta_0} \, \alpha_{+ - }\left( t \right)  - 2 \beta_0 \, \alpha_{- + }\left( t \right)
  - 2 \gamma \left( \alpha_{+ + }\left( t \right)  -  \alpha_{- - }\left( t \right)  \right)
  + 2 \gamma d \, \Tr \left(\varrho^a \right) .
\]
Therefore
\begin{equation}
 \label{eq:9.1}
  \frac{d}{dt} 
  \begin{pmatrix}  \alpha_{+ + }\left( t \right)  -  \alpha_{- - }\left( t \right)
  \\ 
  \alpha_{+ - }\left( t \right) 
  \\ \alpha_{- + }\left( t \right) 
  \end{pmatrix}
  =
  A 
  \begin{pmatrix}
  \alpha_{+ + }\left( t \right)  -  \alpha_{- - }\left( t \right)
  \\ 
  \alpha_{+ - }\left( t \right) 
  \\ \alpha_{- + }\left( t \right) 
  \end{pmatrix}
  +
  \begin{pmatrix}
   2 \gamma d \, \Tr \left(\varrho^a \right)  \\ 0 \\ 0 
  \end{pmatrix},
\end{equation}
where
$
A = 
\begin{pmatrix}
  - 2 \gamma &  - 2  \overline{\beta_0} & - 2 \beta_0 
 \\  
 \beta_0 & - \gamma -  \mathrm{i} \omega & 0
 \\
   \overline{\beta_0} & 0 & - \gamma + \mathrm{i} \omega
\end{pmatrix} .
$

Solving explicitly (\ref{eq:9.1}), together the calculation of 
$ A^{-1} 
\begin{pmatrix}  
2 \gamma d \, \Tr \left(\varrho^a \right)  & 0 & 0 
\end{pmatrix}^{\top} $,
gives 
\begin{equation}
\label{eq:9.11}
 \begin{aligned}
&
 \begin{pmatrix} 
 \alpha_{+ + }\left( t \right)  -  \alpha_{- - }\left( t \right)
  \\ 
  \alpha_{+ - }\left( t \right) \\ \alpha_{- + }\left( t \right) 
 \end{pmatrix}
  - \frac{ d \, \Tr \left(\varrho^a \right)}{ \gamma^2 + \omega^2 + 2 \left\vert \beta_0 \right\vert ^2}
  \begin{pmatrix}
  \gamma^2 + \omega^2 
  \\ 
   \beta_0 \left( \gamma - \mathrm{i} \omega \right) 
  \\
    \overline{\beta_0} \left( \gamma + \mathrm{i} \omega \right) 
  \end{pmatrix}
  \\
 & = 
 \exp \left( A t \right) \left( 
 \begin{pmatrix}
  \alpha_{+ + }\left( 0 \right)  -  \alpha_{- - }\left( 0 \right)
  \\ \alpha_{+ - }\left( 0 \right) 
  \\ \alpha_{- + }\left(0 \right) 
  \end{pmatrix}
 - 
 \frac{d \, \Tr \left(\varrho^a \right)}{ \gamma^2 + \omega^2 + 2 \left\vert \beta_0 \right\vert ^2}
 \begin{pmatrix} 
 \gamma^2 + \omega^2 
  \\
  \beta_0 \left( \gamma - \mathrm{i} \omega \right) 
  \\ 
   \overline{\beta_0} \left( \gamma + \mathrm{i} \omega \right) 
 \end{pmatrix}
\right) .
 \end{aligned}
\end{equation}

Consider $v \in \mathbb{C}^{3}$ and 
$
M = 
\begin{pmatrix}
  1 &  0 & 0
 \\  
0 & 2 & 0
 \\
0 & 0 & 2
\end{pmatrix}
$.
Then
\begin{align*}
 \frac{d}{dt} \langle \exp \left( A t \right) v , M \exp \left( A t \right) v \rangle
& =
\langle \exp \left( A t \right) v , \left( A^{\star} M + M \, A \right) \exp \left( A t \right) v \rangle
\\
& =
- 4 \gamma \left\Vert   \exp \left( A t \right) v \right\Vert^2 
\leq
- 2 \gamma  \langle \exp \left( A t \right) v , M \exp \left( A t \right) v \rangle ,
\end{align*}
and hence for all $t \geq 0$,
\[
\left\Vert   \exp \left( A t \right) v \right\Vert^2 
\hspace{-0.8pt} \leq \hspace{-0.8pt}
 \langle \exp \left( A t \right) v , M \exp \left( A t \right) v \rangle
 \hspace{-0.5pt} \leq
 \exp \left( - 2 \gamma  \, t \right) \langle v , M  \, v \rangle
 \leq
 2 \exp \left( - 2 \gamma  \, t \right)  \left\Vert v \right\Vert^2 \hspace{-1pt}  .
\]
From (\ref{eq:9.11}) it follows 
\begin{align*}
 & \left\Vert  
 \begin{pmatrix} 
 \alpha_{+ + }\left( t \right)  -  \alpha_{- - }\left( t \right)
  \\ 
  \alpha_{+ - }\left( t \right) 
  \\ \alpha_{- + }\left( t \right) 
 \end{pmatrix}
  - \frac{d \, \Tr \left(\varrho^a \right)}{ \gamma^2 + \omega^2 + 2 \left\vert \beta_0 \right\vert ^2}
 \begin{pmatrix} 
 \gamma^2 + \omega^2 
  \\ 
   \beta_0 \left( \gamma - \mathrm{i} \omega \right) 
  \\
    \overline{\beta_0} \left( \gamma + \mathrm{i} \omega \right) 
 \end{pmatrix}
 \right\Vert
 \\
 & \leq
 \sqrt{2} \exp \left( - \gamma  \, t \right) \left(
 \left\Vert  
 \begin{pmatrix}  \alpha_{+ + }\left( 0 \right)  -  \alpha_{- - }\left( 0 \right)
  \\ 
  \alpha_{+ - }\left( 0 \right) 
  \\ 
  \alpha_{- + }\left(0 \right) 
 \end{pmatrix}
 \right\Vert
 +
 \frac{\left\vert d \, \Tr \left(\varrho^a \right) \right\vert  \sqrt{ \gamma^2 + \omega^2 }
 }{ \sqrt{  \gamma^2 + \omega^2 + 2 \left\vert \beta_0 \right\vert ^2 }  }
 \right) 
 \\
 & \leq
2 \exp \left( - \gamma  \, t \right) \left(
 \left\Vert  
 \begin{pmatrix}
 \alpha_{+ + }\left( 0 \right)  
 \\  \alpha_{- - }\left( 0 \right)
  \\ \alpha_{+ - }\left( 0 \right) \\ \alpha_{- + }\left(0 \right) 
 \end{pmatrix}
 \right\Vert
 +
 \left\vert d \, \Tr \left(\varrho^a \right) \right\vert 
 \right).
\end{align*}
%
Using (\ref{eq:9.2}) we deduce that 
\begin{align*}
& \Tr \left( 
\left\vert  
\rho^a_t 
 -
 \frac{ \Tr \left(\varrho^a \right)}{2} 
 \begin{pmatrix}  1  & 0 \\ 0 & 1  \end{pmatrix}
  - \frac{d \, \Tr \left(\varrho^a \right)}{ \gamma^2 + \omega^2 + 2 \left\vert \beta_0 \right\vert ^2}
 \begin{pmatrix} 
\left( \gamma^2 + \omega^2 \right)/2
&
 \beta_0 \left( \gamma - \mathrm{i} \omega \right) 
\\
  \overline{\beta_0} \left( \gamma + \mathrm{i} \omega \right) 
 &
- \left( \gamma^2 + \omega^2 \right)/2
 \end{pmatrix}
 \right\vert
\right)
\\
& \leq \sqrt{2}
\left\Vert  
\begin{pmatrix} 
 \alpha_{+ + }\left( t \right)  \\  \alpha_{- - }\left( t \right)
  \\ \alpha_{+ - }\left( t \right) \\ \alpha_{- + }\left( t \right) 
 \end{pmatrix}
 -
 \frac{ \Tr \left(\varrho^a \right)}{2} 
  \begin{pmatrix}  1 \\ 1 \\  0 \\  0  \end{pmatrix}
  - \frac{d \, \Tr \left(\varrho^a \right)}{ \gamma^2 + \omega^2 + 2 \left\vert \beta_0 \right\vert ^2}
\begin{pmatrix}
\left( \gamma^2 + \omega^2 \right)/2
\\
- \left( \gamma^2 + \omega^2 \right)/2
  \\
   \beta_0 \left( \gamma - \mathrm{i} \omega \right) 
  \\ 
    \overline{\beta_0} \left( \gamma + \mathrm{i} \omega \right) 
  \end{pmatrix}
 \right\Vert
 \\
&  \leq
4 \exp \left( - \gamma  \, t \right) \left(
 \left\Vert  
\begin{pmatrix} 
 \alpha_{+ + }\left( 0 \right)  
 \\  \alpha_{- - }\left( 0 \right)
  \\ \alpha_{+ - }\left( 0 \right) \\ \alpha_{- + }\left(0 \right) 
\end{pmatrix}
 \right\Vert
 +
 \left\vert d \, \Tr \left(\varrho^a \right) \right\vert 
 \right)  .
\end{align*}
\end{proof}

Next,
by means of the Weyl operator we connect $\mathcal{L}_{\star}^f$ with 
a GKSL master equation in 
$\mathfrak{L}^+_{1}\left(  \ell^2 \left(\mathbb{Z}_+ \right)\right) $ 
whose coefficients only involve annihilation, number and identity operators.
Then,
applying techniques used to get the convergence of quantum dynamical semigroups to the ground state 
we  obtain 
the exponential convergence to the equilibrium of the regular solution of (\ref{eq:9.6}) given below.

\begin{lemma}
 \label{lem:ConvFoton}
 Suppose that $\left( \rho^f_t \right)_{t \geq 0}$ is the  $N$-weak solution to 
 \begin{equation}
 \label{eq:9.6}
\frac{d}{dt}  \rho^f_t    = \mathcal{L}_{\star}^f \left( \rho^f_t \right)
\hspace{1cm} \forall t \geq 0 ,
\qquad
\rho^f_0  = \varrho^f  ,
\end{equation}
where $ \varrho^f \in \mathfrak{L}^+_{1}\left(  \ell^2 \left(\mathbb{Z}_+ \right) \right) $ 
is a $N$-regular density operator and 
$\mathcal{L}_{\star}^f$ is described by (\ref{eq:R8.4f}) with $\kappa > 0$, $\alpha_0 \in \mathbb{C}$ and $\omega \in \mathbb{R}$.
Then
\[
\Tr \left( \left\vert
 \rho^f_t 
 -
 \ketbra{ \mathcal{E} \left( \frac{ \alpha_0 }{ \kappa  + \mathrm{i} \omega } \right)} 
{ \mathcal{E} \left( \frac{ \alpha_0 }{ \kappa  + \mathrm{i} \omega } \right)}
 \right\vert  \right) 
\leq
 2 \hbox{\rm e}^{- \kappa \, t} \left( \sqrt{ \Tr \left( \varrho^f \, N  \right) } 
 +  \left\vert  \alpha_0  \right\vert  / \sqrt{ \kappa^2   +  \omega^2 } \right)
\]
for all $  t \geq 0 $, 
where  
$ \mathcal{E} \left( \cdot \right) $ is defined by \eqref{eq:Coherent-Vector}.
\end{lemma}

\begin{proof}
Consider the unitary Weyl operator $W \left(  u \right)$ defined by
\[
W \left(  u \right) e \left(  z \right)
=
\exp \left(  - \left\vert u \right\vert ^2 /2 - \overline{u} z \right)
e \left(  z + u \right)
\hspace{1cm}
\forall z \in \mathbb{C} ,
\]
where $u \in \mathbb{C}$ and the exponential vector associated with $\zeta \in \mathbb{C}$ is given by 
$
 e \left( \zeta \right) 
=
\sum_{n \geq 0} \zeta^n e_n / \sqrt{ n !} 
$
(see, e.g., \cite{Parthasarathy1992}).
Applying the well-known relations
\[
W \left(  u \right) W \left(  - u \right) = I,
\quad
W \left(  u \right) a \, W \left(  - u \right) = a - u I,
\quad
W \left(  u \right) a^\dagger \, W \left(  - u \right) = a^\dagger - \overline{u} I
\]
we obtain
$
W \left(  u \right) a^\dagger a \, W \left(  - u \right)
=
a^\dagger a - u a^\dagger - \overline{u} a +  \left\vert u \right\vert ^2
$.
Take  
\[
v=  \alpha_0 / \left( \kappa  + \mathrm{i} \omega \right).
\]
For any $\xi \in  L_{N}^{2}\left( \mathbb{P}, \ell^2 \left(\mathbb{Z}_+ \right) \right)$,
$
W \left(  - v \right) \mathbb{E}\left\vert \xi \rangle \langle \xi \right\vert  \, W \left(  v \right)
=
\mathbb{E}\left\vert W \left(  - v \right) \xi \rangle \langle W \left(  - v \right) \xi \right\vert 
$
and
\[
\mathbb{E} \left( \left\Vert N \, W \left(  - v \right) \xi \right\Vert^2 \right)
\leq
\left\Vert W \left(  - v \right) \right\Vert^2  \mathbb{E} \left\Vert \left(
a^\dagger a - v a^\dagger - \bar{v} a +  \left\vert v \right\vert ^2
\right)  \xi \right\Vert^2 
\hspace{-1pt} \leq \hspace{-1pt}
K \left( \left\vert v \right\vert \right) \mathbb{E} \left( \left\Vert \xi \right\Vert_N^2 \right) \hspace{-1.5pt}.
\]
Hence,
the application $ \widetilde{  \varrho } \mapsto W \left(  - v \right)  \widetilde{  \varrho } \, W \left(  v \right)$
preserves the property of being $N$-regular.

Set 
$
\widetilde{L} = \sqrt{2 \kappa} \, a + \sqrt{2 \kappa} \, v \, I
$
and
\[
\widetilde{G} =
- \left( \kappa + \mathrm{i} \omega \right) a^\dagger a 
- \frac{2 \kappa \, \overline{\alpha_0} }{ \kappa  - \mathrm{i} \omega } a
+ \left\vert \alpha_0 \right\vert^2 \left( \frac{1 }{ \kappa  - \mathrm{i} \omega } - \frac{2 \kappa }{ \kappa^2  +\omega^2 }\right) I .
\]
Then, 
for all $x$  in the domain of $ a^\dagger a$ we have
\begin{equation*}
 \sqrt{2 \kappa} \, a \, x
=
W \left(   \frac{ \alpha_0 }{ \kappa  + \mathrm{i} \omega }\right) \widetilde{L} \,
W \left(  - \frac{ \alpha_0 }{ \kappa  + \mathrm{i} \omega }\right) x
\end{equation*}
and 
\begin{equation*}
 - \left( \kappa + \mathrm{i} \omega \right) a^\dagger a \, x+  \left( \alpha_0 a^\dagger - \overline{\alpha_0} a \right) x
=
W \left(   \frac{ \alpha_0 }{ \kappa  + \mathrm{i} \omega }\right) \widetilde{G} \,
W \left(  - \frac{ \alpha_0 }{ \kappa  + \mathrm{i} \omega }\right) x .
\end{equation*}
This gives
\begin{equation}
\label{eq:9.3}
 \begin{aligned}
 \mathcal{L}_{\star}^f \left( \widetilde{  \varrho } \right)
& =
W \left(  v \right) \widetilde{G} \, W \left(  - v \right) \widetilde{  \varrho }
+
\widetilde{  \varrho } \, W \left(  v \right) \widetilde{G}^{\, *} \, W \left(  - v \right) 
 \\
& \quad +
W \left(  v \right) \widetilde{L} \, W \left(  - v \right) \widetilde{  \varrho } \, W \left(  v \right) \widetilde{L}^{\, *} \, W \left(  - v \right) ,
\end{aligned}
\end{equation}
for any $N$-regular density operator $\widetilde{  \varrho }$ in $ \ell^2 \left(\mathbb{Z}_+ \right)$.

Choose 
$
\widetilde{\rho}_t
=
W \left(  - v \right)  \rho^f_t \, W \left(  v \right)
$.
Then, the density operator $\widetilde{\rho}_t$ is  $N$-regular.
Combining (\ref{eq:9.6}) with (\ref{eq:9.3}) we obtain that $\left( \widetilde{\rho}_t \right)_{t \geq 0}$ 
is the  $N$-weak solution to 
 \begin{equation}
 \label{eq:9.4}
 \left\{ 
\frac{d}{dt} \widetilde{\rho}_t    = \widetilde{ \mathcal{L}_{\star} } \left( \widetilde{\rho}_t \right)
\hspace{2cm} \forall t \geq 0 ,
\qquad
\widetilde{\rho}_0   =
W \left(  - v \right)  \varrho^f  \, W \left(   v \right)
\right. ,
\end{equation}
where 
$
 \widetilde{ \mathcal{L}_{\star} } \left( \widetilde{  \varrho } \right)
 =
 \widetilde{G} \, \widetilde{  \varrho } + \widetilde{  \varrho } \, \widetilde{G}^{\, *} 
 + \widetilde{L} \, \widetilde{  \varrho } \, \widetilde{L}^{\, *}
$.
A computation yields 
\[
\left\langle   e_j ,   \widetilde{ \mathcal{L}_{\star} } \left( \widetilde{  \varrho } \right)   e_j \right\rangle
=
- 2 \kappa j \left\langle   e_{j} ,   \widetilde{  \varrho } \, e_{j} \right\rangle
+ 
2 \kappa \left(j+1\right) \left\langle   e_{j+1} ,   \widetilde{  \varrho } \, e_{j+1} \right\rangle 
\quad \quad \forall j \geq 0
\]
whenever $\widetilde{  \varrho }$ is a $N$-regular density operator in  $\ell^2 \left(\mathbb{Z}_+ \right)$.
Applying (\ref{eq:9.4}) we deduce that the functions
$
\varphi_j  \left( t \right) := \left\langle   e_j ,    \widetilde{\rho}_t \, e_j \right\rangle 
$
satisfy 
\begin{equation}
\label{eq:8.21}
 \varphi_j^\prime \left( t \right) = - 2 \kappa j \varphi_j \left( t \right) + 2 \kappa (j+1) \varphi_{j+1}\left( t \right) ,
\end{equation}
which are the Kolmogorov equations for a pure-death process.
In case  $\varphi_j(0)=\delta_{jn}$, for all $j \geq 0$,
the solution of (\ref{eq:8.21})  is  
$
\varphi_j \left( t \right) 
=
\begin{pmatrix} n \\ j \end{pmatrix} \hbox{\rm e}^{- 2\kappa \, j \, t} \left( 1-\hbox{\rm e}^{-2 \kappa \, t} \right)^{n- j} 
$
for  $0 \le j \le n $,
and
$
\varphi_j \left( t \right) = 0
$
if $ j > n $.
Therefore,
$
 \left\langle   e_0 ,    \widetilde{\rho}_t \, e_0 \right\rangle 
=
\varphi_0 \left( t \right)
=
 \sum_{n \geq 0} \varphi_n \left( 0 \right)  \left( 1- \hbox{\rm e}^{-2 \kappa t} \right)^{n} 
$.

According to Theorem 4.2 of \cite{AccardiFagnolaHachicha2006} we have 
$
 \Tr \left( \left\vert   
 \widetilde{\rho}_t -  \ketbra{e_0}{e_0} 
\right\vert  \right)
\leq 
2\left(1 -  \left\langle   e_0 ,   \widetilde{\rho}_t e_0 \right\rangle \right)^{1/2} 
$.
Using that $ \varphi_n \left( 0 \right) \geq 0 $ and $ \sum_{n \geq 0} \varphi_n \left( 0 \right) = 1$
we obtain
\[
0 \leq  1 -  \left\langle   e_0 ,   \widetilde{\rho}_t \, e_0 \right\rangle
=
 \sum_{n \geq 1} \varphi_n \left( 0 \right)  \left( 1 - \left( 1- \hbox{\rm e}^{-2 \kappa t} \right)^{n} \right)
 \leq 
\hbox{\rm e}^{-2 \kappa t}  \sum_{n \geq 1} n \, \varphi_n \left( 0 \right)  ,
\]
because 
$
1 - \left( 1 - x \right)^n \leq n \, x
$
for any $n \in \mathbb{N}$ and $x \in \left[ 0 , 1 \right]$.
Hence 
\[
 \Tr \left( \left\vert   
 \widetilde{\rho}_t -  \ketbra{e_0}{e_0} 
\right\vert  \right)
\leq 2 \hbox{\rm e}^{- \kappa t} \left(  \sum_{n \geq 0}  \left\langle   n \, e_n ,    \widetilde{\rho}_0 \, e_n \right\rangle \right)^{1/2}
=
2 \hbox{\rm e}^{- \kappa t} \left(  \sum_{n \geq 0}  \left\langle   N \, e_n ,    \widetilde{\rho}_0 \, e_n \right\rangle \right)^{1/2} ,
\]
and so
\[
\Tr \left( \left\vert   
 \widetilde{\rho}_t -  \ketbra{e_0}{e_0} 
\right\vert  \right)
\leq
2 \hbox{\rm e}^{- \kappa t} \left( \Tr \left(\widetilde{\rho}_0 N \right) \right)^{1/2}
=
2 \hbox{\rm e}^{- \kappa t} \left( \Tr \left( \varrho^f  \, W \left(   v \right) N \, W \left(  - v \right) \right)  \right)^{1/2}
\]
(see, e.g., Theorem 3.2 of \cite{MoraAP}).
Then
\begin{equation}
\label{eq:8.25}
\Tr \left( \left\vert   
 \widetilde{\rho}_t -  \ketbra{e_0}{e_0} 
\right\vert  \right)
\leq
2 \hbox{\rm e}^{- \kappa t} \left( \Tr \left( \varrho^f    \left(   N - v a^\dagger - \bar{v} a +  \left\vert v \right\vert ^2 \right) \right) \right)^{1/2} . 
\end{equation}

Due to $\varrho^f =  \mathbb{E}\left\vert \xi \rangle \langle \xi \right\vert $
for certain $\xi \in  L_{N}^{2}\left( \mathbb{P}, \ell^2 \left(\mathbb{Z}_+ \right) \right)$,
\[
\left\vert \Tr \left( \varrho^f  \, a^\dagger \right)  \right\vert
=
\left\vert  \mathbb{E}   \langle  a \, \xi , \xi \rangle \right\vert
\leq
\sqrt{ \mathbb{E} \left\vert a \, \xi  \right\vert^2 } \sqrt{  \mathbb{E} \left\vert \xi  \right\vert^2 } 
=
\sqrt{  \mathbb{E} \left\vert a \, \xi  \right\vert^2 }
=
\sqrt{  \mathbb{E}  \langle  N \, \xi , \xi \rangle }
=
\sqrt{  \Tr \left( \varrho^f  \, N \right) }
\]
and
$
\left\vert \Tr \left( \varrho^f  \, a \right)  \right\vert
=
\left\vert  \mathbb{E}   \langle  a^\dagger \, \xi , \xi \rangle \right\vert
\leq
\sqrt{  \mathbb{E} \left\vert a \, \xi  \right\vert^2 }
=
\sqrt{  \Tr \left( \varrho^f  \, N \right) }
$
(see, e.g., Theorem 3.2 of \cite{MoraAP}).
From (\ref{eq:8.25}) we deduce that
$
\Tr \left( \left\vert   
 \widetilde{\rho}_t -  \ketbra{e_0}{e_0} 
\right\vert  \right)
\leq
2 \hbox{\rm e}^{- \kappa t} \left( \sqrt{ \Tr \left( \varrho^f \, N  \right) } +  \left\vert v \right\vert \right) , 
$
and consequently 
\begin{align*}
& \Tr \left( \left\vert   \rho^f_t  - W \left( v \right) \ketbra{e_0}{e_0} W \left( - v \right) \right\vert  \right)
=
\Tr \left( \left\vert W \left( v \right) \left(
\widetilde{\rho}_t - \ketbra{e_0}{e_0}  \right) W \left( - v \right) \right\vert  \right)
\\
& \leq 
\left\Vert W \left( v \right) \right\Vert \left\Vert W \left( - v \right) \right\Vert 
\Tr \left( \left\vert    \widetilde{\rho}_t -  \ketbra{e_0}{e_0} \right\vert  \right) 
 =
\Tr \left( \left\vert    \widetilde{\rho}_t -  \ketbra{e_0}{e_0} \right\vert  \right) 
\leq
2 \hbox{\rm e}^{- \kappa t} \left( \sqrt{ \Tr \left( \varrho^f \, N  \right) } +  \left\vert v \right\vert \right) .
\end{align*} 
Since 
$
W \left( v \right) e_0 = W \left( v \right)  e \left( 0 \right)
=
\exp \left(  - \left\vert v \right\vert ^2 /2  \right) e \left(  v \right) 
$,
\[
W \left( v \right) \ketbra{e_0}{e_0} W \left( - v \right)
=
\exp \left(- \left\vert \alpha_0  / \left( \kappa  + \mathrm{i} \omega \right) \right\vert ^2  \right) 
\ketbra{e \left( \frac{\alpha_0 }{ \kappa  + \mathrm{i} \omega } \right)} {e \left( \frac{\alpha_0 }{ \kappa  + \mathrm{i} \omega } \right)} . 
\]
\end{proof}

Finally,
applying Lemma \ref{lem:ConvAtom} we deduce the convergence to $0$ of  the non-diagonal components of some representation of $ \rho^R_t \left( \varrho \right) $ as 
$ \mathfrak{L}_{1}\left(  \ell^2 \left(\mathbb{Z}_+ \right)\right) ^{2,2}$ matrix.
Then,
using Lemmata \ref{lem:ConvAtom}  and \ref{lem:ConvFoton} we get (\ref{eq:5.1}).

\begin{proof}[Proof of Theorem \ref{th:ConvAuxiliarQME}]

The  solution of (\ref{eq:9.8}) is denoted by  $\rho^a_t \left( \varrho^a \right)$,
and 
we write $\left( \rho^f_t \right)_{t \geq 0}$ for the semigroup $N$-solution  of 
the quantum master equation (\ref{eq:9.6}) (see  \cite{MoraAP} for details).
Due to $\varrho$ is $N$-regular,
$
\varrho = \mathbb{E}\left\vert \xi_+ \otimes e_+ + \xi_- \otimes e_- \rangle \langle \xi_+ \otimes e_+ + \xi_- \otimes e_- \right\vert
$
with $\xi_{\pm} \in  L_{N}^{2}\left( \mathbb{P}, \ell^2 \left(\mathbb{Z}_+ \right) \right)$,
and so
\begin{equation}
\label{eq:9.7}
 \varrho   = 
\varrho_{++} \otimes\ketbra{e_{+}}{e_{+}}  + \varrho_{+-} \otimes\ketbra{e_{+}}{e_{-}}  +  \varrho_{-+} \otimes\ketbra{e_{-}}{e_{+}}
 + \varrho_{--} \otimes\ketbra{e_{-}}{e_{-}}  ,
\end{equation}
where
$
\varrho_{\eta \widetilde{\eta} } =  
\mathbb{E}\left\vert \xi_{\eta}  \rangle \langle \xi_{ \widetilde{\eta}} \right\vert
$.
Since the right-hand term of  (\ref{eq:3.2g}) is equal to 
$\mathcal{L}_{\star}^f \otimes I \left(  \rho^R_t \left( \varrho \right) \right) 
+ I \otimes \mathcal{L}_{\star}^a \left(  \rho^R_t \left( \varrho \right) \right)$,
where $\mathcal{L}_{\star}^f$ and  $\mathcal{L}_{\star}^a $ 
are as in (\ref{eq:R8.4f}) and (\ref{eq:R8.4a}), respectively, 
from (\ref{eq:9.7}) we obtain
\begin{equation}
\label{eq:9.9}
 \begin{aligned}
 \rho^R_t \left( \varrho \right) & = 
  \rho^f_t \left( \varrho_{++}  \right)  \otimes \rho^a_t \left( \ketbra{e_{+}}{e_{+}}   \right)
  + \rho^f_t \left( \varrho_{+-}  \right)  \otimes \rho^a_t \left( \ketbra{e_{+}}{e_{-}} \right)
\\ 
& \quad
+ \rho^f_t \left(  \varrho_{-+}  \right)  \otimes \rho^a_t \left( \ketbra{e_{-}}{e_{+}} \right)
+ \rho^f_t \left(  \varrho_{--}  \right)  \otimes \rho^a_t \left( \ketbra{e_{-}}{e_{-}}  \right)   .
 \end{aligned}
\end{equation}

Combining  $\Tr \left( \ketbra{e_{\pm}}{e_{\mp}} \right) = 0$ with Lemma  \ref{lem:ConvAtom} 
we deduce that
\begin{align*}
 \Tr \left( \left\vert   
 \rho^f_t \left( \varrho_{ \substack{+ - \\ - +} }  \right)  \otimes \rho^a_t \left( \ketbra{e_{\pm}}{e_{\mp}} \right)
\right\vert  \right)
& =
\Tr \left( \left\vert   \rho^f_t \left( \varrho_{\substack{+ - \\ - +} }  \right) \right\vert  \right)
\Tr \left( \left\vert  \rho^a_t \left( \ketbra{e_{\pm}}{e_{\mp}} \right)
\right\vert  \right)
\\
& 
 \leq
\Tr \left( \left\vert    \varrho_{\substack{+ - \\ - +} }  \right\vert  \right)
\Tr \left( \left\vert  \rho^a_t \left( \ketbra{e_{\pm}}{e_{\mp}} \right)
\right\vert  \right) 
\leq
4 \exp \left( - \gamma  \, t \right) \Tr \left( \left\vert    \varrho_{\substack{+ - \\ - +} }  \right\vert  \right) ,
\end{align*}
and so
\begin{equation}
 \label{eq:9.10}
 \Tr \left( \left\vert   
 \rho^f_t \left( \varrho_{ \substack{+ - \\ - +} }  \right)  \otimes \rho^a_t \left( \ketbra{e_{\pm}}{e_{\mp}} \right)
\right\vert  \right)
\leq
4 \exp \left( - \gamma  \, t \right) 
\hspace{2cm} \forall t \geq 0 ,
\end{equation}
because
$
\Tr \left( \left\vert    \varrho_{ \substack{+ - \\ - +} }  \right\vert  \right)
\leq 
\mathbb{E} \Tr \left( \left\vert    \xi_{\pm}  \rangle \langle \xi_{ \mp}  \right\vert  \right)
=
\mathbb{E} \left\Vert   \xi_{\pm} \right\Vert \left\Vert   \xi_{\mp} \right\Vert 
\leq
\sqrt{\mathbb{E} \left\Vert   \xi_{\pm} \right\Vert^2 } \sqrt{\mathbb{E} \left\Vert   \xi_{\mp} \right\Vert^2  }
\leq 
1
$.

Since
\begin{align*}
& 
\Tr \left( \left\vert   
 \rho^f_t \left( \varrho_{\pm \pm}  \right)  \otimes \rho^a_t \left( \ketbra{e_{\pm}}{e_{\pm}} \right)
-
 \Tr \left( \varrho_{\pm \pm}  \right)  \varrho^f_{\infty} \otimes \varrho^a_{\infty} 
\right\vert  \right)
\\
& \leq
\Tr \left( \left\vert   
 \rho^f_t \left( \varrho_{\pm \pm}  \right)  \otimes \rho^a_t \left( \ketbra{e_{\pm}}{e_{\pm}} \right)
-
 \Tr \left( \varrho_{\pm \pm}  \right)  \varrho^f_{\infty} \otimes \rho^a_t \left( \ketbra{e_{\pm}}{e_{\pm}} \right)
\right\vert  \right)
\\
&
\quad +
\Tr \left( \left\vert   
 \Tr \left( \varrho_{\pm \pm}  \right)  \varrho^f_{\infty} \otimes \rho^a_t \left( \ketbra{e_{\pm}}{e_{\pm}} \right)
-
 \Tr \left( \varrho_{\pm \pm}  \right)  \varrho^f_{\infty} \otimes \varrho^a_{\infty} 
\right\vert  \right)
\\
& =
\Tr \left( \left\vert   
 \rho^f_t \left( \varrho_{\pm \pm}  \right)  - \Tr \left( \varrho_{\pm \pm}  \right)  \varrho^f_{\infty} 
\right\vert  \right)
\Tr \left( \left\vert  \rho^a_t \left( \ketbra{e_{\pm}}{e_{\pm}} \right)
\right\vert  \right)
\\
&
\quad +
 \Tr \left( \varrho_{\pm \pm}  \right) 
 \Tr \left( \varrho^f_{\infty}  \right)
\Tr \left( \left\vert   
 \rho^a_t \left( \ketbra{e_{\pm}}{e_{\pm}} \right) - \varrho^a_{\infty} 
\right\vert  \right)
\\
& =
\Tr \left( \left\vert   
 \rho^f_t \left( \varrho_{\pm \pm}  \right)  - \Tr \left( \varrho_{\pm \pm}  \right)  \varrho^f_{\infty} 
\right\vert  \right)
+
\Tr \left( \varrho_{\pm \pm}  \right) \Tr \left( \left\vert   
 \rho^a_t \left( \ketbra{e_{\pm}}{e_{\pm}} \right) - \varrho^a_{\infty} 
\right\vert  \right) ,
\end{align*}
applying Lemmata \ref{lem:ConvAtom} and  \ref{lem:ConvFoton} yields 
\begin{align*} 
&
\Tr \left( \left\vert   
 \rho^f_t \left( \varrho_{\pm \pm}  \right)  \otimes \rho^a_t \left( \ketbra{e_{\pm}}{e_{\pm}} \right)
-
 \Tr \left( \varrho_{\pm \pm}  \right)  \varrho^f_{\infty} \otimes \varrho^a_{\infty} 
\right\vert  \right)
\\ 
& \leq
2 \sqrt{\Tr \left( \varrho_{\pm \pm}  \right)}  \hbox{\rm e}^{- \kappa t} \sqrt{ \Tr \left( \varrho_{\pm \pm}  N  \right) }
+ \Tr \left( \varrho_{\pm \pm}  \right) \hbox{\rm e}^{- \kappa t} 
\frac{2 \left\vert  \alpha_0  \right\vert}{ \sqrt{ \kappa^2   +  \omega^2 }}
+ 4 \hbox{\rm e}^{- \gamma  t} \Tr \left( \varrho_{\pm \pm}  \right) \left( 1 +  \left\vert d \right\vert \right) 
\\
& \leq
2   \hbox{\rm e}^{- \kappa t} \sqrt{ \Tr \left( \varrho_{\pm \pm}  N  \right) }
+ \Tr \left( \varrho_{\pm \pm}  \right) \hbox{\rm e}^{- \kappa t} 
\frac{2 \left\vert  \alpha_0  \right\vert}{ \sqrt{ \kappa^2   +  \omega^2 }}
+ 4 \hbox{\rm e}^{- \gamma  t} \Tr \left( \varrho_{\pm \pm}  \right) \left( 1 +  \left\vert d \right\vert \right) .
\end{align*}
Now, using (\ref{eq:9.9}), (\ref{eq:9.10}),
$
 \Tr \left( \varrho_{++}  \right) + \Tr \left( \varrho_{--}  \right) =1
$
and 
$\Tr \left( \varrho_{\pm \pm}  N  \right) \leq \Tr \left( \varrho \, N  \right)$ we  get (\ref{eq:5.1}). 
\end{proof}

\subsection{Proof of Theorem \ref{th:StatState-LaserE}}
\label{sec:Proof:StatState-LaserE}

%

For completeness, 
we start by examining the fix points of the Maxwell-Bloch equations \eqref{eq:Lorenz}.

\begin{lemma}
\label{lem:ConstSolution}
Assume that
$d\in \left]-1,1 \right[$,  
$\kappa,\gamma \in \left] 0, + \infty \right[ $
and 
that 
$ g, \omega $ are real numbers  different  from $0$.
Then,
the unique constant solution of \eqref{eq:Lorenz} is 
 $\left( A \left( t \right), S \left( t \right), D \left( t \right)  \right) = \left( 0, 0, d \right)$.
\end{lemma}

\begin{proof}
 Let $\left( A \left( t \right), S \left( t \right), D \left( t \right)  \right) = \left( A, S, D \right)$
be a constant solution of (\ref{eq:Lorenz}).
Then
\begin{subequations}
 \begin{align}
 \label{eq:8.15a-n}
 - \left( \kappa + \mathrm{i} \omega \right)  A  + g \ S
  = 0 ,
 \\
 \label{eq:8.15b-n}
 - \left( \gamma + \mathrm{i} \omega \right)   \ S  + g \ A \  D 
  = 0 ,
 \\
  \label{eq:8.15c-n}
 - 4 g \ \Re \left( A \  \overline{S}   \right)
 - 2 \gamma \left(  D - d \right) 
  = 0 .
\end{align} 
\end{subequations}
Combining (\ref{eq:8.15a-n}) with (\ref{eq:8.15b-n}) we deduce that
\begin{equation}
\label{eq:9.16_n}
 A \Bigl( 
 -  \left( \gamma + \mathrm{i} \omega \right) \left( \kappa + \mathrm{i} \omega  \right) + g^2  \  D
\Bigr)
 = 0 .
\end{equation}
Using $\omega \neq 0$ and $ \kappa, \gamma > 0$ we get 
$\left( \gamma + \mathrm{i} \omega \right) \left( \kappa + \mathrm{i} \omega  \right) \notin \mathbb{R}$.
Since $D \in \mathbb{R}$,   
$
-    \left( \gamma + \mathrm{i} \omega \right) \left( \kappa + \mathrm{i} \omega  \right)  + g^2  \  D
\neq 0 
$,
and so (\ref{eq:9.16_n}) yields $A = 0$.
From (\ref{eq:8.15a-n})-(\ref{eq:8.15c-n}) we obtain $S = 0$ and  $D = d $.
\end{proof}

\begin{proof}[Proof of Theorem \ref{th:StatState-LaserE}]
First,
we check by direct computation that  $\varrho_{\infty}$, given by  (\ref{eq:I10}),
is a constant solution of  (\ref{eq:Laser1}).
Since 
$
 \Tr\left( \sigma^{-} \varrho_{\infty}  \right) 
= 
\frac{d+1}{2} \langle e_+ , \sigma^{-}  e_+ \rangle
+
\frac{1-d}{2} \langle e_- , \sigma^{-}  e_- \rangle
=
0 
$
and
$ \Tr\left( a \, \varrho_{\infty}  \right) = \langle e_0 , a e_0 \rangle = 0 $,
$
\left[ 
\Tr\left( \sigma^{-}   \varrho_{\infty}   \right) a^\dagger  -  \Tr\left( \sigma^{+}  \varrho_{\infty}   \right) a 
+
 \Tr\left( a^\dagger   \varrho_{\infty}   \right) \sigma^{-} -  \Tr\left( a \,   \varrho_{\infty}  \right) \sigma^+ 
,  \varrho_{\infty}  \right] = 0 
$.
Moreover,
using the fact that  $A  \left| x\rangle \langle y \right| B  =  \left| A  x\rangle \langle B^{\star} y \right|  $
for any operators $A,B$ in $\mathfrak{h}$, $x \in \mathcal{D}\left( A\right)$
and  $y \in \mathcal{D}\left( B^{\star} \right)$,
we obtain 
$ \mathcal{L}_{\star}^h \, \varrho_{\infty} = 0 $,
where $\mathcal{L}_{\star}^h$ is defined by (\ref{eq:3.21}).
Therefore,
\[
\mathcal{L}_{\star}^h \, \varrho_{\infty}
+   
g  \left[ 
\Tr\left( \sigma^{-}   \varrho_{\infty}   \right) a^\dagger  -  \Tr\left( \sigma^{+}  \varrho_{\infty}   \right) a 
+
 \Tr\left( a^\dagger   \varrho_{\infty}   \right) \sigma^{-} -  \Tr\left( a \,   \varrho_{\infty}  \right) \sigma^+ 
,  \varrho_{\infty}  \right]
=
0 ,
\]
and so $ \varrho_{\infty} $ is a stationary state for  (\ref{eq:Laser1}),
which is $N^p$-regular for all $p \in \mathbb{N}$.

Next,
we deal with the uniqueness of the $N$-regular stationary state for  (\ref{eq:Laser1}) with $\omega \neq 0$.
Let $ \widetilde{\varrho} $ be a $N$-regular stationary state for  (\ref{eq:Laser1}).
Then $\rho_t \equiv \widetilde{\varrho}$ satisfies  (\ref{eq:Laser1}),
and so 
$ A \left( t \right)  \equiv \Tr\left( a \,  \widetilde{\varrho}  \right) $,
$ S \left( t  \right) \equiv \Tr\left( \sigma^{-}  \widetilde{\varrho}   \right) $
and
$ D \left( t  \right) \equiv \Tr\left(  \sigma^{3}   \widetilde{\varrho}  \right) $
is a constant solution to the Maxwell-Bloch equations (\ref{eq:Lorenz})
(see, e.g., \cite{FagMora2019}).
According to Lemma \ref{lem:ConstSolution} it follows that
$ \Tr\left( a \, \widetilde{\varrho} \right) = \Tr\left( \sigma^{-}  \widetilde{\varrho}  \right) = 0 $ 
and  $\Tr\left( \sigma^{3} \widetilde{\varrho}  \right) = d $,
because $\omega \neq 0$.
Therefore, 
\[
0
=
\mathcal{L}_{\star}^h \, \widetilde{\varrho} 
+   
g  \left[ 
\Tr\left( \sigma^{-}   \widetilde{\varrho}   \right) a^\dagger  -  \Tr\left( \sigma^{+}  \widetilde{\varrho}    \right) a 
+
 \Tr\left( a^\dagger  \widetilde{\varrho}   \right) \sigma^{-} -  \Tr\left( a \,  \widetilde{\varrho}   \right) \sigma^+ 
,  \widetilde{\varrho}  \right]
=
\mathcal{L}_{\star}^h \, \widetilde{\varrho} ,
\]
that is, $ \widetilde{\varrho} $ satisfies the linear equation 
$\mathcal{L}_{\star}^h \, \widetilde{\varrho} = 0$.
Hence, $ \widetilde{\varrho} $ is a stationary state for the linear quantum master equation  (\ref{eq:3.2}).
Using Corollary \ref{cor:ConvEMCSimple} we obtain 
$
\widetilde{\varrho} =  \rho^h_t \left( \widetilde{\varrho}  \right) \longrightarrow_{t \rightarrow + \infty}  \varrho_{\infty} 
$.
\end{proof}

\subsection{Proof of Theorem \ref{th:FreeS-LaserE}}
\label{sec:Proof:FreeS-LaserE}

Applying Theorem  \ref{th:ConvAuxiliarQME}, 
together with Lemmata  \ref{lem:ConvAtom} and \ref{lem:ConvFoton} used in the proof of Theorem  \ref{th:ConvAuxiliarQME},
we now find the $N$-regular invariant states of the  linear quantum master equation (\ref{eq:3.2g}).

\begin{lemma}
 \label{th:SolEst-Lineal} 
Let $\left( \rho^R_t \left( \varrho \right)  \right)_{t \geq 0}$ be the $N$-weak solution of the linear quantum master equation 
(\ref{eq:3.2g}) with $ \alpha_0,  \beta_0 \in \mathbb{C}$
and initial datum  $\varrho \in \mathfrak{L}_{1,N}^+ \left( \ell^2 \left(\mathbb{Z}_+ \right) \otimes \mathbb{C}^2 \right) $.
Consider the operators $\varrho^f_{\infty}$ and $\varrho^a_{\infty}$ defined in Corollary \ref{cor:Ineq-Conv}.
Then $ \varrho^f_{\infty} \otimes \varrho^a_{\infty}$ is the unique operator
$\hat{\varrho}_{\infty} \in \mathfrak{L}_{1,N}^+ \left( \ell^2 \left(\mathbb{Z}_+ \right) \otimes \mathbb{C}^2 \right) $ for which  
\begin{equation}
 \label{eq:9.12}
 \rho^R_t  \left( \hat{\varrho}_{\infty}  \right)  = \hat{\varrho}_{\infty} 
  \hspace{3cm}
  \forall t \geq 0  .
\end{equation}
\end{lemma}

\begin{proof}
Since (\ref{eq:9.8}) is a complex ordinary differential equation,
using Lemma \ref{lem:ConvAtom} we deduce that $ \varrho^a_{\infty}$ is a fix point of (\ref{eq:9.8}),
and so 
$\mathcal{L}_{\star}^a \left( \varrho^a_{\infty} \right) = 0$.
Moreover,
from  the proof of Lemma \ref{lem:ConvFoton} we obtain that 
for any $N$-regular density operator $\widetilde{  \varrho }$ in $ \ell^2 \left(\mathbb{Z}_+ \right)$,
\[
\mathcal{L}_{\star}^f \left(  
W \left( \frac{\alpha_0}{\kappa  + \mathrm{i} \omega} \right)  \widetilde{  \varrho }  \, 
W \left( -  \frac{\alpha_0}{\kappa  + \mathrm{i} \omega} \right) 
\right) 
=
W \left( \frac{\alpha_0}{\kappa  + \mathrm{i} \omega} \right) 
 \widetilde{ \mathcal{L}_{\star} } \left(  \widetilde{  \varrho }  \right)
W \left( -  \frac{\alpha_0}{\kappa  + \mathrm{i} \omega} \right) 
\]
(see relation (\ref{eq:9.3}), 
where $ \widetilde{ \mathcal{L}_{\star} }$ is as in (\ref{eq:9.4}).
As  $ \widetilde{ \mathcal{L}_{\star} } \left( \ketbra{e_0}{e_0}  \right) = 0$ we have
\[
 \mathcal{L}_{\star}^f \left(  
W \left( \frac{\alpha_0}{\kappa  + \mathrm{i} \omega} \right) \ketbra{e_0}{e_0} 
W \left( -  \frac{\alpha_0}{\kappa  + \mathrm{i} \omega} \right) 
\right) 
 =
W \left( \frac{\alpha_0}{\kappa  + \mathrm{i} \omega} \right)
\widetilde{ \mathcal{L}_{\star} } \left( \ketbra{e_0}{e_0}  \right)
W \left( -  \frac{\alpha_0}{\kappa  + \mathrm{i} \omega} \right) 
=
0 .
\]
Hence,
$ 
\mathcal{L}_{\star}^f \left(   \varrho^f_{\infty} \right) = 0 
$
since
$
W \left( \frac{\alpha_0}{\kappa  + \mathrm{i} \omega} \right) e_{0}
=
\exp \left(  - \left\vert \frac{\alpha_0}{\kappa  + \mathrm{i} \omega} \right\vert ^2 /2  \right)
e \left(  \frac{\alpha_0}{\kappa  + \mathrm{i} \omega} \right)
$.
Therefore, 
\[
\mathcal{L}_{\star}^f \otimes I \left( \varrho^f_{\infty} \otimes \varrho^a_{\infty} \right) 
+ 
I \otimes \mathcal{L}_{\star}^a \left( \varrho^f_{\infty} \otimes \varrho^a_{\infty} \right)
= 0 .
\]
This gives 
$
 \rho^R_t \left( \varrho^f_{\infty} \otimes \varrho^a_{\infty}  \right)  = \varrho^f_{\infty} \otimes \varrho^a_{\infty}
$
for all $t \geq 0$.

In order to prove the uniqueness of the $N$-regular invariant state of  $\rho^R_t \left( \cdot \right)$,
we now consider 
$\hat{\varrho}_{\infty} \in \mathfrak{L}_{1,N}^+ \left( \ell^2 \left(\mathbb{Z}_+ \right) \otimes \mathbb{C}^2 \right) $
satisfying (\ref{eq:9.12}).
Then, applying Theorem \ref{th:ConvAuxiliarQME} yields 
$
 \hat{\varrho}_{\infty} 
 = \lim_{t \rightarrow + \infty}  \rho^R_t \left( \hat{\varrho}_{\infty}  \right) 
 = \varrho^f_{\infty} \otimes \varrho^a_{\infty} 
 $
 in 
 $ \mathfrak{L}_{1} \left(  \ell^2 \left(\mathbb{Z}_+ \right) \otimes \mathbb{C}^2 \right) $.
\end{proof}

\begin{proof}[Proof of Theorem \ref{th:FreeS-LaserE}]
By Stone's theorem,
the self-adjoint operator $  \omega  \left( N +  \sigma^3 / 2 \right)$ generates 
the strongly continuous one-parameter unitary group 
$
\left(\hbox{\rm e}^{   \mathrm{i}  \omega  \left( N +  \sigma^3 / 2 \right) t  }\right)_{t \in \mathbb{R}}.
$
In order to describe the physical system in the interaction picture we set 
\begin{equation*}
\widetilde{ \rho }_t =
\exp \left(  \mathrm{i}  \omega  \left( N +  \sigma^3 / 2 \right) t \right)
\rho_t
\exp \left( - \mathrm{i}  \omega  \left( N +  \sigma^3 / 2 \right) t \right)
\hspace{1cm}
\forall t \geq 0 .
\end{equation*}
Since $N$ commutes with  $\sigma^3$,
$\rho_t \in \mathfrak{L}_{1,N}^+ \left( \ell^2 \left(\mathbb{Z}_+ \right) \otimes \mathbb{C}^2 \right)$ iff 
$\widetilde{ \rho }_t \in \mathfrak{L}_{1,N}^+ \left( \ell^2 \left(\mathbb{Z}_+ \right) \otimes \mathbb{C}^2 \right)$.
Hence, 
$\rho_t $ is a $N$-regular  free interaction solution  to  (\ref{eq:Laser1}) iff
\begin{equation}
\label{eq:I12}
\widetilde{ \rho }_t = \rho_0 \in  \mathfrak{L}_{1,N}^+ \left( \ell^2 \left(\mathbb{Z}_+ \right) \otimes \mathbb{C}^2 \right)
\hspace{2cm}
\forall t \geq 0 .
\end{equation}
A careful computation shows that $\rho_t $ is a $N$-weak solution to (\ref{eq:Laser1})
iff 
$\widetilde{ \rho }_t $ is a $N$-weak solution to
\begin{equation}
 \label{eq:Laser6}
 \begin{aligned}
  \frac{d }{dt} \widetilde{ \rho }_t
 &  = 
  g \left[  \Tr\left( \sigma^{-}  \widetilde{ \rho }_t  \right) a^\dagger
                   -  \Tr\left( \sigma^{+}  \widetilde{ \rho }_t  \right) a
                   +  \Tr\left( a^\dagger  \widetilde{ \rho }_t  \right) \sigma^{-}
                   -  \Tr\left( a \,  \widetilde{ \rho }_t  \right) \sigma^+
                                      ,  \widetilde{ \rho }_t \right]  
 \\
 & \quad +
 \kappa \left( 2 \, a\,  \widetilde{ \rho }_t  a^\dagger 
 -  a^\dagger a  \widetilde{ \rho }_t -   \widetilde{ \rho }_t  a^\dagger a\right) 
+  \frac{ \gamma(1-d) }{ 2 }\left( 2 \, \sigma^{-}  \widetilde{ \rho }_t \,\sigma^+ 
- \sigma^+\sigma^{-}  \widetilde{ \rho }_t 
- \widetilde{ \rho }_t \,\sigma^+\sigma^{-}\right) 
\\ 
& \quad 
+  \frac{ \gamma(1+d) }{ 2 }\left( 2 \, \sigma^+  \widetilde{ \rho }_t \,\sigma^{-} 
- \sigma^{-}\sigma^+  \widetilde{ \rho }_t
- \widetilde{ \rho }_t \,\sigma^{-}\sigma^+\right) .
 \end{aligned}
\end{equation}
Therefore,
$\rho_t $ is a $N$-regular  free interaction solution  to  (\ref{eq:Laser1}) iff 
$ \rho_0 $ is a $N$-regular  stationary state for (\ref{eq:Laser6}).

Suppose that $ \rho_0 $ is a  constant $N$-regular solution to the non-linear evolution equation  (\ref{eq:Laser6}).
Then
$  \Tr\left( a \, \rho_t \right) \equiv \Tr\left( a \, \rho_0 \right) $,
$   \Tr\left( \sigma^{-} \rho_t  \right) \equiv \Tr\left( \sigma^{-} \rho_0  \right) $
and
$ \Tr\left(  \sigma^{3} \rho_t \right)   \equiv  \Tr\left(  \sigma^{3} \rho_0 \right) $,
and so  $ \rho_0 $ is a $N$-regular stationary state of the linear quantum master equation
(\ref{eq:3.2g}) with 
$ \omega = 0$,
$ \alpha_0  =  g \, \Tr\left( \sigma^{-} \rho_0  \right) $
and
$ \beta_0  =  g \, \Tr\left( a \, \rho_0 \right) $.
Moreover,
$\Tr\left( \sigma^{-} \rho_0  \right) $ and $\Tr\left( a \, \rho_0 \right) $
are given by the constant solutions of the Maxwell-Bloch equations \eqref{eq:Lorenz} with $ \omega = 0$.
Thus,
we next obtain all  $N$-regular  stationary states for non-linear evolution equation (\ref{eq:Laser6})
by finding the constant $N$-regular  solutions of the linear evolution equation (\ref{eq:3.2g}) with $ \omega = 0$,
$ \alpha_0  =  g \, S \left( 0 \right)$ and $ \beta_0  = g \, A  \left( 0 \right)$,
where $A  \left( 0 \right)$, $S \left( 0 \right)$ and  $D \left( 0 \right)$ is a fix-point of \eqref{eq:Lorenz} with $ \omega = 0$.

Suppose that (\ref{eq:I12}) holds.
Since the functions
$ t \mapsto  \Tr\left( a \, \rho_0 \right) $,
$  t \mapsto \Tr\left( \sigma^{-} \rho_0  \right) $
and
$  t \mapsto \Tr\left(  \sigma^{3} \rho_0 \right) $
satisfy (\ref{eq:Lorenz}) with $\omega = 0$
(see \cite{FagMora2019}),
\begin{subequations}
 \begin{align}
 \label{eq:8.15a}
  - \kappa \,  \Tr\left( a \, \rho_{0} \right) + g \ \Tr\left( \sigma^{-} \rho_{0} \right)
  = 0 ,
 \\
 \label{eq:8.15b}
  - \gamma  \,  \Tr\left( \sigma^{-} \rho_{0} \right) 
 + g \ \Tr\left( a \, \rho_{0}  \right)  \Tr\left( \sigma^{3} \rho_{0}   \right) 
  = 0 ,
 \\
  \label{eq:8.15c}
  2 g \ \Re \left(
  \Tr\left( a \, \rho_{0}  \right) \  \overline{ \Tr\left(  \sigma^{-}  \rho_{0}  \right) }
 \right)
 +  \gamma \left(  \Tr\left(  \sigma^{3} \rho_{0} \right) -d \right) 
   = 0 .
\end{align} 
\end{subequations}
Combining (\ref{eq:8.15a}) with (\ref{eq:8.15b}) we obtain
$
 \Tr\left( a \,  \rho_{0} \right) \Bigl(
-   \gamma \kappa + g^2  \  \Tr\left( \sigma^{3}  \rho_{0}  \right)
\Bigr)
 = 0 
$.
Then  $\Tr\left( a \,  \rho_{0} \right) = 0$ or  $  g^2  \,  \Tr\left( \sigma^{3}  \rho_{0} \right) =  \gamma \kappa $.

Asume $ \Tr\left( a \,  \rho_{0} \right) = 0$, together with (\ref{eq:I12}).
Then (\ref{eq:8.15a}) and (\ref{eq:8.15c}) lead to
$ \ \Tr\left( \sigma^{-} \rho_{0} \right) = 0$ and $ \Tr\left(  \sigma^{3} \rho_{0} \right) = d$.
So,
\begin{equation}
\label{eq:9.13}
\Tr\left( a  \, \widetilde{ \rho }_t \right) = \Tr\left( \sigma^{-}  \widetilde{ \rho }_t  \right) = 0 
\text{ and } \Tr\left( \sigma^{3}  \widetilde{ \rho }_t  \right) = d .
\end{equation}
Therefore,
$ \rho_0 $ is a N-regular stationary state for  (\ref{eq:3.2}) with $ \omega = 0 $.
Using Corollary \ref{cor:ConvEMCSimple}  gives 
$
\rho_0
=
\varrho_{\infty}
$,
where $\varrho_{\infty}$ is defined by (\ref{eq:I10}).
Since
$\Tr\left( a \, \varrho_{\infty} \right) = \Tr\left( \sigma^{-}   \varrho_{\infty}  \right) = 0 $, 
$\Tr\left( \sigma^{3}   \varrho_{\infty}  \right) = d $
and  
$ \mathcal{L}_{\star}^h \, \varrho_{\infty} = 0 $,
(\ref{eq:I10}) is indeed a constant $N$-regular solution to  (\ref{eq:Laser6}).
Summarizing,
$\varrho_{\infty} $, given by  (\ref{eq:I10}),
is the unique $N$-regular stationary state for (\ref{eq:Laser6}) 
satisfying $ \Tr\left( a \,  \rho_{0} \right) = 0$.
This yields  the free interaction solution  to  (\ref{eq:Laser1}):
 \begin{align*}
 \rho_t
 & =
 \exp \left( - \mathrm{i}  \omega  \left( N +  \sigma^3 / 2 \right) t \right) 
\widetilde{ \rho }_t 
\exp \left(  \mathrm{i}  \omega  \left( N +  \sigma^3 / 2 \right) t \right)
\\
& =
\exp \left( - \mathrm{i}  \omega  \left( N +  \sigma^3 / 2 \right) t \right) 
\varrho_{\infty}
\exp \left(  \mathrm{i}  \omega  \left( N +  \sigma^3 / 2 \right) t \right) 
\\
&
= 
\ketbra{e_0}{e_0}
\otimes 
\left( \frac{d+1}{2}  \ketbra{e_+}{e_+}  + \frac{1-d}{2}   \ketbra{e_-}{e_-}  \right) 
=
\varrho_{\infty} .
\end{align*}

On the other hand, 
suppose that $ \Tr\left( a \,  \rho_{0} \right) \neq 0$ and that (\ref{eq:I12}) holds.
Then, 
$  g^2  \,  \Tr\left( \sigma^{3}  \rho_{0} \right) =  \gamma \kappa $,
and (\ref{eq:8.15a}) implies that $g \neq 0$.
Hence, 
$ \Tr\left( \sigma^{3}  \rho_{0}  \right) = \gamma \kappa / \left( g^2 \right) $.
Using  (\ref{eq:8.15b})  and (\ref{eq:8.15c}) we deduce that
\begin{equation}
 \label{eq:9.31}
  \left\vert \Tr\left( a  \, \rho_0 \right)  \right\vert ^2 
 =
 \frac{\gamma}{2 \kappa g^2} \left( d g^2 - \gamma \kappa \right) .
\end{equation}
Therefore $ d g^2 > \gamma \kappa$, i.e., $ C_\mathfrak{b}  > 1 $.
Hence,
there are no $N$-regular  free interaction solution  to  (\ref{eq:Laser1}) 
with  $ \Tr\left( a \,  \rho_{0} \right) \neq 0$ in case  $ d g^2 \leq \gamma \kappa$,
and so from the previous paragraph we conclude  that 
the state (\ref{eq:I10})  is the unique $N$-regular  free interaction solution  to  (\ref{eq:Laser1}) 
whenever $ C_\mathfrak{b} \leq 1 $.

Let $ C_\mathfrak{b}  > 1 $ and $ \Tr\left( a \,  \rho_{0} \right) \neq 0$.
According to (\ref{eq:9.31}) we have that
there exists  $z \in \mathbb{C}$ with $\left\vert z \right\vert = 1$ such that
$
\Tr\left( a \, \widetilde{ \rho }_t \right)   = z \sqrt{\frac{\gamma}{2 \kappa g^2} \left( d g^2 - \gamma \kappa \right)} 
$,
and so (\ref{eq:8.15a}) yields
$
\Tr\left( \sigma^{-}  \widetilde{ \rho }_t  \right) 
=
 \frac{\kappa \, z }{g} \sqrt{\frac{\gamma}{2 \kappa g^2} \left( d g^2 - \gamma \kappa \right)} 
$.
Since (\ref{eq:I12}) holds, from (\ref{eq:Laser6}) it follows that
$\rho_0$ is a $N$-regular stationary state  for 
(\ref{eq:3.2g}) with   
\begin{equation}
 \label{eq:ValoresAlphaBeta}
 \alpha_0
 =
  z \, \kappa \,  \sqrt{\frac{\gamma}{2 \kappa g^2} \left( d g^2 - \gamma \kappa \right)},
 \qquad 
 \beta_0
 =
 g \, z \, \sqrt{\frac{\gamma}{2 \kappa g^2} \left( d g^2 - \gamma \kappa \right)} ,
\end{equation}
and $\omega = 0$.
Applying Lemma \ref{th:SolEst-Lineal} we obtain  $\varrho_0  = \varrho_{\infty} \left( z \right) $ with 
\begin{equation}
 \label{eq:9.22}
\varrho_{\infty} \left( z \right) 
=
\ketbra{ \mathcal{E} \left( \frac{z \gamma \sqrt{ C_\mathfrak{b} -1 } }{ \sqrt{2} \left| g \right| }  \right)} 
{ \mathcal{E} \left( \frac{z \gamma \sqrt{ C_\mathfrak{b} -1 }}{ \sqrt{2} \left| g \right| } \right)}
\otimes 
\begin{pmatrix}
 \frac{1}{2} \left( 1 + \frac{ d }{ C_\mathfrak{b} }  \right)
 &
 \frac{z \kappa \gamma }{ \sqrt{2} g \left| g \right|} \sqrt{ C_\mathfrak{b} -1 }
 \\
 \frac{ \bar{z} \kappa \gamma }{ \sqrt{2} g \left| g \right|} \sqrt{ C_\mathfrak{b} -1 }
 &
 \frac{1}{2} \left( 1 - \frac{d }{ C_\mathfrak{b} }  \right) 
\end{pmatrix} .
\end{equation}
Then,
the only candidate for 
$N$-regular  stationary states of (\ref{eq:Laser6})  with the property   $ \Tr\left( a \,  \rho_{0} \right) \neq 0$ are:
$ \varrho_{\infty} \left( z \right) $ for any  $ \left\vert z \right\vert = 1$.

Consider $\left\vert z \right\vert = 1$, and let $ d g^2 > \gamma \kappa$.
By $ a \, \mathcal{E} \left( \zeta \right)  =  \zeta \, \mathcal{E} \left( \zeta \right)$
for any  $\zeta \in \mathbb{C}$,
a direct computation yields 
$
\Tr\left( a \, \varrho_{\infty} \left( z \right) \right)   = z \sqrt{\frac{\gamma}{2 \kappa g^2} \left( d g^2 - \gamma \kappa \right)}
$.
Moreover, 
a direct calculation gives  
$
\Tr\left( \sigma^{3} \varrho_{\infty} \left( z \right) \right) = \frac{ \gamma \kappa }{ g^2 }
$,
and
$
\Tr\left( \sigma^{-}  \varrho_{\infty} \left( z \right)  \right) 
= \frac{\kappa \, z }{g}  \sqrt{\frac{\gamma}{2 \kappa g^2} \left( d g^2 - \gamma \kappa \right)}  
$.
Therefore,
$
g \, \Tr\left( \sigma^{-}  \varrho_{\infty} \left( z \right)  \right)  =  \alpha_0 
$
and
$
g \, \Tr\left( a \, \varrho_{\infty} \left( z \right) \right) =  \beta_0
$,
where $ \alpha_0 $ and $  \beta_0 $ are as in (\ref{eq:ValoresAlphaBeta}).
Using Lemma \ref{th:SolEst-Lineal} we get that 
$\varrho_{\infty} \left( z \right)$ is a  $N$-regular  stationary state of (\ref{eq:Laser6}).
Then,
in addition to  (\ref{eq:I10}),
the only $N$-regular  stationary states for (\ref{eq:Laser6}) with $ d g^2 > \gamma \kappa$
are given by (\ref{eq:9.22}) 
for any complex number $\left\vert z \right\vert = 1$.

Since
$
\rho_t
=
\exp \left( - \mathrm{i}  \omega  \left( N +  \sigma^3 / 2 \right) t \right) 
\widetilde{ \rho }_t 
\exp \left(  \mathrm{i}  \omega  \left( N +  \sigma^3 / 2 \right) t \right) 
$,
all  non-constant $N$-regular  free interaction solution  to  (\ref{eq:Laser1}) are:
\begin{align*}
&  
\ketbra{  \hbox{\rm e}^{  - \mathrm{i} \omega N t } \mathcal{E} \left( \frac{z \gamma \sqrt{ C_\mathfrak{b} -1 } }{ \sqrt{2} \left| g \right| }  \right)} 
{  \hbox{\rm e}^{  - \mathrm{i} \omega N t } \mathcal{E} \left( \frac{z \gamma \sqrt{ C_\mathfrak{b} -1 }}{ \sqrt{2} \left| g \right|  } \right)}
\otimes 
\\
& \hspace{6cm}
 \hbox{\rm e}^{ - \mathrm{i} \frac{\omega}{2}  \sigma^3  t }
\begin{pmatrix} 
 \frac{1}{2} \left( 1 + \frac{ d }{ C_\mathfrak{b} }  \right)
 &
 \frac{z \kappa \gamma }{ \sqrt{2} g \left| g \right| } \sqrt{ C_\mathfrak{b} -1 }
 \\
 \frac{ \bar{z} \kappa \gamma }{ \sqrt{2} g \left| g \right|  } \sqrt{ C_\mathfrak{b} -1 }
 &
 \frac{1}{2} \left( 1 - \frac{d }{ C_\mathfrak{b} }  \right) 
\end{pmatrix} 
 \hbox{\rm e}^{ \mathrm{i} \frac{\omega}{2}  \sigma^3  t } ,
\end{align*}
where $\left\vert z \right\vert = 1$,
and therefore they are:
\begin{align*}
& 
\ketbra{ \mathcal{E} \left( \frac{z \gamma \sqrt{ C_\mathfrak{b} -1 } }{ \sqrt{2} \left| g \right|  }   \hbox{\rm e}^{-i \omega t}   \right)} 
{ \mathcal{E} \left(  \frac{z \gamma \sqrt{ C_\mathfrak{b} -1 }}{ \sqrt{2} \left| g \right|  }  \hbox{\rm e}^{-i \omega t}  \right)}
\otimes 
\begin{pmatrix}
 \frac{1}{2} \left( 1 + \frac{ d }{ C_\mathfrak{b} }  \right)
 &
  \hbox{\rm e}^{-i \omega t}  \frac{z \kappa \gamma }{ \sqrt{2} g \left| g \right| } \sqrt{ C_\mathfrak{b} -1 }
 \\
  \hbox{\rm e}^{i \omega t}  \frac{ \bar{z} \kappa \gamma }{ \sqrt{2} g \left| g \right| } \sqrt{ C_\mathfrak{b} -1 }
 &
 \frac{1}{2} \left( 1 - \frac{d }{ C_\mathfrak{b} }  \right) 
\end{pmatrix} 
\end{align*}
for any  $\left\vert z \right\vert = 1$.
\end{proof}

\subsection{Proof of Theorem \ref{th:LongTimeD}}
\label{sec:Proof:LongTimeD}

\begin{proof}[Proof of Theorem \ref{th:LongTimeD}]
Recall that  $\Tr\left( a \, \rho_{t} \right)$, $\Tr\left( \sigma^{-} \rho_{t}  \right)$ and $\Tr\left(  \sigma^{3}  \rho_{t} \right)$
are given by the Maxwell-Bloch equations (\ref{eq:Lorenz})
(see \cite{FagMora2019}).
Since
$\Tr\left( a \, \varrho \right) = \Tr\left( \sigma^{-} \varrho  \right) = 0$,
from (\ref{eq:Lorenz}) it follows that $\Tr\left( a \, \rho_{t} \right ) = \Tr\left( \sigma^{-} \rho_{t}  \right) = 0$ for all $t \geq 0$.
Therefore, 
$ \rho_{t} $ solves  (\ref{eq:3.2}) with initial condition $\varrho$,
and hence $  \rho_{t} = \rho^h_t \left( \varrho \right)$,
where $\rho^h_t \left( \varrho \right)$ is the $N$-weak solution of  (\ref{eq:3.2}).
Applying Corollary \ref{cor:ConvEMCSimple}  gives (\ref{eq:8.22}).
\end{proof}

\subsection{Proof of Theorem \ref{th:LongTime}}
\label{sec:Proof:LongTime}


First, we establish the equation of motion of the mean value of the number operator
by applying an Ehrenfest-type theorem developed in \cite{FagMora2013}.

\begin{lemma}
\label{lem:EvolNumero}
Let  $\left( \rho_t \right)_{t \geq 0}$ be the $N$-weak solution to (\ref{eq:AuxiliarGKSL}) with  $N$-regular  initial datum
and  $\alpha, \beta : \left[ 0 , \infty \right[ \rightarrow \mathbb{C}$ continuous.
Then for all $t \geq 0$,
\begin{equation}
\label{eq:8.32}
\frac{d}{dt} \Tr\left( \rho_{t} N  \right)
=
- 2  \kappa \,  \Tr \left( \rho_t \, N \right)
+
2  \Re \left(  \overline{\alpha \left( t \right)}  \, \Tr \left( \rho_t \, a  \right) \right) .
\end{equation}
\end{lemma}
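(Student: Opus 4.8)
The plan is to pass from the weak formulation of \eqref{eq:8.4}, which only registers bounded observables, to the unbounded observable $N=a^\dagger a$ by invoking the Ehrenfest-type theorem of \cite{FagMora2013}. Concretely, I would start from the probabilistic representation $\rho_t=\mathbb{E}\lvert X_t(\xi)\rangle\langle X_t(\xi)\rvert$, where $X_t(\xi)$ is the strong $N$-solution of the linear stochastic Schr\"odinger equation \eqref{eq:SSE} with coefficients $G(t)$ and $L_1=\sqrt{2\kappa}\,a$, $L_2=\sqrt{\gamma(1-d)}\,\sigma^-$, $L_3=\sqrt{\gamma(1+d)}\,\sigma^+$. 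Then $\Tr(\rho_t N)=\mathbb{E}\langle X_t(\xi),N X_t(\xi)\rangle$, and the Ehrenfest theorem yields the integrated identity $\Tr(\rho_t N)=\Tr(\rho_0 N)+\int_0^t\Tr\big(\rho_s\,\mathcal{L}^{d}(s)N\big)\,ds$, where $\mathcal{L}^{d}(s)N=G(s)^\star N+N\,G(s)+\sum_{\ell}L_\ell^\star N L_\ell$ is the formal dual action of \eqref{eq:8.4} evaluated at $N$. Differentiating this integral, once its integrand is shown to be continuous, will give \eqref{eq:8.32}.

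The core is then a purely algebraic computation of $\mathcal{L}^{d}(s)N$. Writing $K(s)=\alpha(s)a^\dagger-\overline{\alpha(s)}a+\overline{\beta(s)}\sigma^- -\beta(s)\sigma^+$, I would split the contribution of the GKSL part $\mathcal{L}_\star^h$ from that of the driving commutator $[K(s),\,\cdot\,]$. For the GKSL part the Hamiltonian commutator $[\tfrac{\omega}{2}(2N+\sigma^3),N]$ vanishes, since $N$ commutes with itself and with $\sigma^3$; the two atomic dissipators annihilate $N\otimes I$, because $\sigma^\pm$ act on the second factor and $N\otimes I$ commutes with both $\sigma^+\sigma^-$ and $\sigma^-\sigma^+$; and the photon dissipator contributes $2\kappa\,(a^\dagger N a-N^2)$, which equals $-2\kappa N$ by the identity $a^\dagger N a=N^2-N$. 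For the driving part, cyclicity of the trace gives $\Tr\big(N[K(s),\rho_s]\big)=\Tr\big([N,K(s)]\rho_s\big)$, and since $N$ commutes with $\sigma^\pm$ while $[N,a^\dagger]=a^\dagger$ and $[N,a]=-a$, one obtains $[N,K(s)]=\alpha(s)a^\dagger+\overline{\alpha(s)}a$. Using self-adjointness of $\rho_s$ in the form $\Tr(a^\dagger\rho_s)=\overline{\Tr(a\rho_s)}$, this pairs to $2\Re\big(\overline{\alpha(s)}\,\Tr(\rho_s a)\big)$, so that $\Tr\big(\rho_s\,\mathcal{L}^{d}(s)N\big)=-2\kappa\Tr(\rho_s N)+2\Re\big(\overline{\alpha(s)}\Tr(\rho_s a)\big)$, the right-hand side of \eqref{eq:8.32}.

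The main obstacle is the rigorous justification rather than the algebra: $N$ is unbounded and $a,a^\dagger$ are only $N^{1/2}$-bounded, so the terms $\langle X_s,a^\dagger N a\,X_s\rangle$ and $\langle X_s,N^2 X_s\rangle$ that appear individually in the It\^o drift are a priori of order $N^2$. The point I would stress is that these ``$N^2$'' contributions cancel, leaving the net integrand $-2\kappa N$, so the identity requires only $\mathbb{E}\lVert N^{1/2}X_s\rVert^2=\Tr(\rho_s N)<\infty$, which holds because $\rho_s$ is $N$-regular, together with $\lvert\Tr(\rho_s a)\rvert\le\sqrt{\Tr(\rho_s N)}$ from Lemma \ref{lem:Ineq-Trace}. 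To make this airtight I would verify the hypotheses of the Ehrenfest theorem of \cite{FagMora2013} with reference operator $N$; should that theorem require a higher moment, I would first establish \eqref{eq:8.32} for $N^2$-regular initial data, where every intermediate quantity is manifestly finite, and then pass to a general $N$-regular datum by the truncation-and-approximation scheme already employed in the proof of Theorem \ref{th:VariationConstants}. Finally, continuity of $\alpha$ together with the continuity of $t\mapsto\Tr(\rho_t N)$ and $t\mapsto\Tr(\rho_t a)$, both consequences of the moment bounds and of \eqref{eq:8.27}, upgrades the integral identity to the differential form \eqref{eq:8.32}.
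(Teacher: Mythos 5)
Your proposal is correct and follows essentially the same route as the paper's proof: both pass through the probabilistic representation $\rho_t=\mathbb{E}\left\vert X_t(\xi)\right\rangle\left\langle X_t(\xi)\right\vert$ via the stochastic Schr\"odinger equation \eqref{eq:SSE}, invoke the Ehrenfest-type theorem of \cite{FagMora2013} to obtain the integrated identity, perform the same algebra in which the $N^2$ contributions cancel (the paper computes the quadratic form $2\,\Re\langle Nx,G(t)x\rangle+\sum_{\ell}\Vert N^{1/2}L_\ell x\Vert^2=\langle x,(\alpha(t)a^\dagger+\overline{\alpha(t)}a-2\kappa N)x\rangle$ on the core $\mathfrak{D}$ and extends by density, which is exactly your dual-generator cancellation $a^\dagger N a-N^2=-N$ together with $[N,K(s)]=\alpha(s)a^\dagger+\overline{\alpha(s)}a$), and conclude by continuity of $\alpha(t)$, $\Tr(\rho_t\,a)$ and $\Tr(\rho_t\,N)$. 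The paper applies the Ehrenfest theorem directly with $N$-regular data, so your fallback of first treating $N^2$-regular initial states and approximating is a safe but unnecessary precaution.
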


\begin{proof}
Let  $X_{t} \left( \xi \right)$  be the  strong $N$-solution of (\ref{eq:SSE}) with initial datum  
$\xi \in L_{N}^{2}\left( \mathbb{P},  \ell^2 \left(\mathbb{Z}_+ \right) \otimes \mathbb{C}^2 \right) $ satisfying 
$
\rho_0 = \mathbb{E}\left\vert \xi \rangle \langle \xi \right\vert 
$.
According Theorem 4.1 of  \cite{FagMora2013} we have 
\begin{equation}
\label{eq:8.7}
 \begin{aligned}
\Tr \left(N \rho_t \right)
& = 
\Tr \left(N \rho_0 \right)
+
\int_0^t  \mathbb{E} \left( 2 \ \Re \left\langle N  X_{s} \left( \xi \right) , G \left( s \right) X_{s} \left( \xi \right) \right\rangle \right) ds
\\
& \quad
+
\sum_{\ell = 1}^{3} \int_0^t 
\mathbb{E} \left\langle  N^{1/2} L_{\ell}  X_{s} \left( \xi \right) , N^{1/2} L_{\ell}  X_{s} \left( \xi \right) \right\rangle
ds ,
 \end{aligned}
\end{equation}
where  $G \left( t \right)$, $H \left( t \right)$, $L_1$, $L_2$, $L_3$ is defined as in (\ref{eq:SSE}).

We write $\mathfrak{D}$ for the set of all  $x \in  \ell^2(\mathbb{Z}_+)\otimes \mathbb{C}^2$ 
such that 
$ \left\langle e_n \otimes e_{\eta}, x \right\rangle = 0$ 
for all combinations of  $n \in \mathbb{Z}_{+}$ and $\eta = \pm$ except a finite number. 
Then,
for all $x \in \mathfrak{D}$ we have
\begin{align*}
 & 2 \ \Re \left\langle N  x , G \left( t \right) x \right\rangle 
 + \sum_{\ell = 1}^{3} \left\langle  N^{1/2} L_{\ell} \, x , N^{1/2} L_{\ell}  \, x \right\rangle
 \\
 & = 
 \left\langle  x , 
 \left( \mathrm{i} \left[ H \left( t \right) ,  N \right] 
 + \sum_{\ell = 1}^{3} \left( \frac{1}{2} \left[ L_{\ell}^{\ast } ,  N \right] L_{\ell} 
 +  \frac{1}{2}  L_{\ell}^{\ast } \left[ N,  L_{\ell} \right] \right)
\right) x   \right\rangle 
 \\
 & = 
 \left\langle  x , 
 \left( -   \left[ 
     \alpha \left( t \right) a^\dagger -  \overline{\alpha \left( t \right)} a  ,
  N \right] 
 +  \kappa  \left[ a^\dagger ,  N \right] a
 +  \kappa \,  a^\dagger  \left[ N,  a \right] 
\right) x   \right\rangle ,
\end{align*}
and so 
\begin{equation}
\label{eq:8.30}
2 \ \Re \left\langle N  x , G \left( t \right) x \right\rangle 
 + \sum_{\ell = 1}^{3} \left\langle  N^{1/2} L_{\ell} \, x , N^{1/2} L_{\ell}  \, x \right\rangle
= 
 \left\langle  x , 
 \left(     \alpha \left( t \right) a^\dagger +  \overline{\alpha \left( t \right)} a 
 -2  \kappa   N  
\right) x   \right\rangle 
\end{equation}
since
$\left[ N, a^\dagger \right] = a^\dagger $
and
$\left[ a, N \right] = a $.
As $\mathfrak{D}$ is a core for $N$,
(\ref{eq:8.30}) holds for all $x \in \mathcal{D}\left( N \right) $,
and hence (\ref{eq:8.7}) gives 
\begin{equation*}
\Tr \left(N \rho_t \right)
\hspace{-0.8pt} = \hspace{-0.8pt}
\Tr \left(N \rho_0 \right)
+
\int_0^t  \left( 2  \, \Re \left(  \overline{\alpha \left( t \right)}  \mathbb{E} \left\langle  X_{s} \left( \xi \right) , a X_{s} \left( \xi \right) \right\rangle \right) 
- 2  \kappa \,   \mathbb{E} \left\langle  X_{s} \left( \xi \right) , N X_{s} \left( \xi \right) \right\rangle
\right) ds.
\end{equation*}
This, together with $\rho_ s =   \mathbb{E} \ketbra{ X_{s} \left( \xi \right)}{ X_{s} \left( \xi \right)}$, implies 
 \begin{equation*}
\Tr \left(N \rho_t \right)
 = 
\Tr \left(N \rho_0 \right)
+
\int_0^t  \left( 2 \, \Re \left(  \overline{\alpha \left( s \right)} \, \Tr \left(a \rho_s \right) \right) 
- 2  \kappa \, \Tr \left(N \rho_s \right)
\right) ds 
\end{equation*}
(see, e.g, \cite{MoraAP}).
The continuity of $\alpha \left( t \right)$, $\Tr \left(a \rho_t \right)$ and $\Tr \left(N \rho_t \right)$ yields (\ref{eq:8.32}).
\end{proof}

Let $d  \geq 0$. 
From, for instance Theorem 8 of \cite{FagMora2019},  we have that 
\begin{equation}
 \label{eq:LongTimeLE}
 \begin{aligned} 
 & \left\vert A \left( t  \right) \right\vert ^2 + g^2  \left\vert S \left( t  \right)  \right\vert ^2 / \left( \gamma \kappa \right)
+ g^2 \left( D \left( t \right) - d \right)^2  / \left(  4 \gamma \kappa \right)
\\
& \quad \leq
\exp \left( 
- t \min \left\{ \kappa - \frac{g^2d}{\gamma}   ,   \gamma - \frac{g^2d}{\kappa}   \right\} 
\right)
\left(
\left\vert A \left( 0  \right) \right\vert ^2 + \frac{g^2}{\gamma \kappa} \left\vert S \left( 0  \right)  \right\vert ^2
+ \frac{g^2}{4 \gamma \kappa} \left( D \left( 0 \right) - d \right)^2
\right) 
\end{aligned}
\end{equation}
for any $t \geq 0$.
Next,
we improve the upper bound of $\left\vert S \left( t  \right)  \right\vert $ and $ \left( D \left( t \right) - d \right)^2$
given by (\ref{eq:LongTimeLE}) in case $g \approx 0$ and $C_\mathfrak{b} < 1$.

\begin{lemma}
 \label{th:LorenzEquations-Laser}
 Assume that $S \left( t \right)$, $Z \left( t \right)$ and $D \left( t \right)$ is the solution of \eqref{eq:Lorenz} with
$\omega \in\mathbb{R}$,  $d\in \left[ 0 ,1 \right[$, $ g \in \mathbb{R} \smallsetminus \left\{ 0 \right\}$
 and  $\kappa,\gamma>0$.
 Let $C_\mathfrak{b} < 1$.
 Then for any $t \geq 0$ we have
 \begin{align*}
& \left\vert S \left( t  \right)  \right\vert ^2 + \left( D \left( t \right) - d \right)^2 /4
\\  
& \leq
\hbox{\rm e}^{-  \left( 1- C_\mathfrak{b} \right) \min \left\{ \kappa ,   \gamma \right\}  t}
\left(
\frac{4 \kappa  d}{\gamma} \left\vert A \left( 0  \right) \right\vert ^2 
+ \left( \frac{4 \kappa }{\gamma} + 1 \right) \left\vert S \left( 0  \right) \right\vert ^2
+  \left( \frac{ \kappa }{\gamma} + \frac{1}{4} \right) \left( D \left( 0 \right) - d \right)^2
\right) .
\end{align*}
\end{lemma}

\begin{proof}
Define 
$X \left( t \right) =  \hbox{\rm e}^{ i \omega t }  A \left( t \right)$, 
$ Y \left( t \right) =  \hbox{\rm e}^{ i \omega t }  S \left( t \right) $
and 
$ Z \left( t \right)  =  D \left( t \right) - d $
for  any $t \geq 0$.
By  (\ref{eq:Lorenz}), 
computing the derivatives of $ \left\vert Y \left( t \right) \right\vert^2$ and $ Z \left( t \right)^2$ gives 
\begin{equation}
\label{eq:L2}
4 \, \frac{d}{dt} \left\vert Y \left( t \right) \right\vert^2 + \frac{d}{dt}  Z \left( t \right)^2
=
 8 \, d \, g \, \Re \left( X \left( t \right)  \overline{ Y \left( t \right) } \right) 
- 8 \gamma \left\vert  Y \left( t  \right)  \right\vert ^2
- 4 \gamma Z \left( t \right)^2 
\end{equation}
(see, e.g., proof of Theorem 8 of   \cite{FagMora2019}).
Since
\[
2 \, \Re \left( X \left( t \right)  \overline{ Y \left( t \right) } \right) 
=
\frac{4 \, d \, g }{\gamma} \, \Re \left( X \left( t \right)  \overline{ \frac{\gamma}{2 \, d \, g } Y \left( t \right) } \right) 
\leq
\frac{2 \, d \, g }{\gamma} \left\vert X \left( t \right) \right\vert^2 +  \frac{\gamma}{2 \, d \, g } \left\vert Y \left( t \right) \right\vert^2,
\]
(\ref{eq:L2}) leads to
\begin{equation*}
\frac{d}{dt} \left( 4  \left\vert Y \left( t \right) \right\vert^2 +   Z \left( t \right)^2 \right)
\leq
 \frac{8 d^2  g^2 }{\gamma} \left\vert X \left( t \right) \right\vert^2 
- \frac{3}{2} \gamma \left( 4  \left\vert Y \left( t \right) \right\vert^2 +   Z \left( t \right)^2 \right) .
\end{equation*}
This implies
\[
4  \left\vert Y \left( t \right) \right\vert^2 +  Z \left( t \right)^2 
\leq 
\hbox{\rm e}^{ - \frac{3}{2} \gamma \, t } \left( 4  \left\vert Y \left( 0 \right) \right\vert^2 +  Z \left( 0 \right)^2 \right)
+
\frac{8 d^2  g^2 }{\gamma} \hbox{\rm e}^{ - \frac{3}{2} \gamma \, t } \int_{0}^t \hbox{\rm e}^{ \frac{3}{2} \gamma \, s } \left\vert X \left( s \right) \right\vert^2 ds ,
\]
and so
\begin{equation}
\label{eq:L3}
4  \left\vert Y \left( t \right) \right\vert^2 +  Z \left( t \right)^2 
\leq 
\hbox{\rm e}^{ - \frac{3}{2} \gamma \, t } \left( 4  \left\vert Y \left( 0 \right) \right\vert^2 +  Z \left( 0 \right)^2 \right)
+
8 \kappa \, \hbox{\rm e}^{ - \frac{3}{2} \gamma \, t } \int_{0}^t \hbox{\rm e}^{ \frac{3}{2} \gamma \, s } d \left\vert X \left( s \right) \right\vert^2 ds ,
\end{equation}
because $C_\mathfrak{b} < 1$.

Combining (\ref{eq:LongTimeLE}) with $g^2 \, d / \left( \kappa \, \gamma \right) < 1$ we deduce that
\begin{equation*}
d \left\vert X \left( t  \right) \right\vert ^2 
\leq
\hbox{\rm e}^{
-  \left( 1- C_\mathfrak{b} \right) \min \left\{ \kappa ,   \gamma \right\}  t}
\left(
d \left\vert A \left( 0  \right) \right\vert ^2 + \left\vert S \left( 0  \right)  \right\vert ^2
+ \frac{1}{4} \left( D \left( 0 \right) - d \right)^2
\right) .
\end{equation*}
Using (\ref{eq:L3}), together with  
$ 
3 \gamma / 2- \left( 1- C_\mathfrak{b} \right) \min \left\{ \kappa ,   \gamma \right\}  \geq \gamma / 2
$,
yields
\begin{align*}
  4  \left\vert Y \left( t \right) \right\vert^2 +  Z \left( t \right)^2 
 & \leq 
\hbox{\rm e}^{ - \frac{3}{2} \gamma \, t } \left( 4  \left\vert Y \left( 0 \right) \right\vert^2 +  Z \left( 0 \right)^2 \right)
\\
 &  +
\frac{16 \kappa}{\gamma} 
\left(
d \left\vert A \left( 0  \right) \right\vert ^2 + \left\vert S \left( 0  \right)  \right\vert ^2
+ \frac{1}{4} \left( D \left( 0 \right) - d \right)^2
\right)
\left( 
\hbox{\rm e}^{-  \left( 1- C_\mathfrak{b} \right) \min \left\{ \kappa ,   \gamma \right\}  t}
- \hbox{\rm e}^{ - \frac{3}{2} \gamma \, t }
\right) ,
\end{align*}
and the lemma follows.
 \end{proof}

\begin{proof}[Proof of Theorem \ref{th:LongTime}]
First,
we shift the analysis from the non-linear quantum master equation (\ref{eq:Laser1})
to the linear  quantum master equation (\ref{eq:AuxiliarGKSL}).
To this end,
we consider the solution $ A \left( t \right) $, $ S \left( t \right) $ and $ D \left( t \right) $ to (\ref{eq:Lorenz})
with initial datum $  A \left( 0 \right) =  \Tr\left( a \, \rho_0 \right) $, $ S \left( 0 \right) = \Tr\left( \sigma^{-}  \, \rho_0 \right)$ 
and $ D \left( 0 \right) = \Tr\left(  \sigma^{3} \, \rho_0 \right) $.
Since $\rho_0 \in \mathfrak{L}_{1,N}^{+} \left( \ell^2 \left(\mathbb{Z}_+ \right) \otimes \mathbb{C}^2 \right) $, 
the functions 
$ t \mapsto \Tr\left( a \, \rho_{t} \right)$, $  t \mapsto  \Tr\left( \sigma^{-} \rho_{t}  \right)$ 
and $  t \mapsto  \Tr\left(  \sigma^{3}  \rho_{t} \right)$
satisfy (\ref{eq:Lorenz}) (see, e.g., \cite{FagMora2019}),
and so 
the uniqueness of the solution to (\ref{eq:Lorenz})
implies  
$ 
\left( A \left( t \right) ,  S \left( t  \right), D \left( t  \right) \right)
=
\left( \Tr\left( a \, \rho_{t} \right),  \Tr\left( \sigma^{-} \rho_{t}  \right) ,  \Tr\left(  \sigma^{3}  \rho_{t} \right) \right)
$
for all $t \geq 0$.
By  the uniqueness of the $N$-weak solution to (\ref{eq:AuxiliarGKSL}),
the  $N$-weak solution $\rho_t$ to (\ref{eq:Laser1})
is equal to the $N$-weak solution to 
the non-homogeneous  linear evolution equation (\ref{eq:AuxiliarGKSL}) with initial condition 
$\rho_0$ and coefficients  $\alpha \left( t \right) = g \,  S\left( t  \right)$ and $\beta \left( t \right) = g \,  A\left( t  \right)$.

Now, 
we apply Corollary \ref{cor:Ineq-Conv}  with 
$\alpha \left( t \right) = g \,  S\left( t  \right)$ and $\beta \left( t \right) = g \,  A\left( t  \right)$,
together with 
$\alpha_0 = \beta_0 = 0$
since 
$
\Tr \left(   \varrho_{\infty} \, a \right) 
=
\Tr \left(   \varrho_{\infty} \, \sigma^{-}  \right) 
= 0
$.
This gives
\begin{equation}
\label{eq:30}
 \begin{aligned}
 \Tr \left( \left\vert  \rho_t  - \varrho_{\infty} \right\vert  \right)
& \leq 
\Tr \left( \left\vert   \rho^h_{t-s} \left( \rho_s  \right)   - \varrho_{\infty} \right\vert  \right)
+
4 \left\vert g \right\vert  \int_s^t \left\vert S \left( u \right) \right\vert \sqrt{ \Tr \left( \rho_u \, N  \right) +1 } \, du
\\
& \quad
+ 
2 \left\vert g \right\vert  \left( \left\Vert  \sigma^{-} \right\Vert +   \left\Vert  \sigma^{+} \right\Vert \right)
 \int_s^t \left\vert A \left( u \right) \right\vert du 
 \end{aligned}
\end{equation}
for all  $t  \geq s \geq 0$,
because 
$ \mathcal{E} \left( 0 \right) = e_0 $
and 
$
\begin{pmatrix}
 \frac{1}{2} +  \frac{d }{2}
 &
 0
 \\
 0
 &
  \frac{1}{2} -  \frac{d }{ 2}
\end{pmatrix}
=
\left( \frac{d+1}{2}  \ketbra{e_+}{e_+}  + \frac{1-d}{2}   \ketbra{e_-}{e_-}  \right)
$.
Here, 
$
\varrho_{\infty}
$
is defined by (\ref{eq:I10}) and $\left( \rho^h_{u} \left( \rho_s  \right) \right)_{ u \geq 0}$ 
is  the $N$-weak solution of (\ref{eq:3.2}) with initial datum $\rho_s$.
Combining (\ref{eq:30}) with Corollary \ref{cor:ConvEMCSimple}  yields
\begin{align}
 \label{eq:8.38}
\Tr \left( \left\vert  \rho_t  - \varrho_{\infty} \right\vert  \right)
 & \leq 
12 \, \hbox{\rm e}^{- \gamma  \left(t-s \right) }    \left( 1 +  \left\vert d \right\vert \right)
+
4 \, \hbox{\rm e}^{- \kappa  \left(t-s \right) }  \sqrt{ \Tr \left( \rho_{s}  \, N  \right) }
\\
\nonumber
&  \quad +
4 \left\vert g \right\vert  \int_s^t \left\vert S \left( u \right) \right\vert \sqrt{ \Tr \left( \rho_u  \, N  \right) +1 } \, du
+ 
2 \left\vert g \right\vert  \left( \left\Vert  \sigma^{-} \right\Vert +   \left\Vert  \sigma^{+} \right\Vert \right)
\int_s^t \left\vert A \left( u \right) \right\vert du .
\end{align}

Next,
we estimate the right-hand of (\ref{eq:8.38}).
Applying Lemma \ref{lem:EvolNumero} we obtain
\begin{align}
\nonumber
 \Tr\left( \rho_{t} \, N  \right) 
& =
\hbox{\rm e}^{- 2  \kappa \, t} \Tr\left( \rho_{0} \,  N  \right) 
+
2 g \, \int_0^t \hbox{\rm e}^{- 2  \kappa \left( t - s \right)}  \Re \left(  \overline{ S \left( t \right)}  \, \Tr \left( \rho_s \, a  \right) \right) ds
\\
\nonumber
& =
\hbox{\rm e}^{- 2  \kappa \, t} \Tr\left( \rho_{0} \,  N  \right) 
+
2 g   \int_0^t \hbox{\rm e}^{- 2  \kappa \left( t - s \right)}  \Re \left(  \overline{ S \left( s \right)}  A \left( s \right)  \right) ds 
\\
\label{eq:8.31}
&
\leq
\hbox{\rm e}^{- 2  \kappa \, t} \Tr\left( \rho_{0} \, N  \right) 
+
\left\vert g \right\vert \int_0^t \hbox{\rm e}^{- 2  \kappa \left( t - s \right)} 
\left(  \left\vert S \left( s \right) \right\vert^2 + \left\vert A \left( s \right) \right\vert^2   \right) 
ds .
\end{align}
If $d < 0$,
then for all  $t \geq 0$  we have 
\[
\left\vert d \right\vert \left\vert A \left( t  \right) \right\vert ^2 
+ \left\vert S \left( t  \right)  \right\vert ^2
+ \left( D \left( t \right) - d \right)^2 / 4
\leq
\hbox{\rm e}^{
-  2 t\, \min \left\{  \kappa  ,  \gamma \right\} 
}
\left(
\left\vert d \right\vert \left\vert A \left( 0  \right) \right\vert ^2 +  \left\vert S \left( 0  \right)  \right\vert ^2
+  \left( D \left( 0 \right) - d \right)^2 / 4 \right)
\]
(see, e.g., \cite{FagMora2019}).
Using this inequality, (\ref{eq:LongTimeLE}) and Lemma \ref{th:LorenzEquations-Laser},
together with $ d g^2 / \left( \gamma \kappa \right) < 1$,
we deduce that  
\begin{equation}
\label{eq:8.33}
 \left\vert A \left( t  \right) \right\vert^2 \leq K_A  \exp \left( - c_{sys} \, t \right)
 \hspace{1cm} \text{and}  \hspace{1cm}
\left\vert S \left( t  \right)  \right\vert^2 \leq K_S  \exp \left( - c_{sys} \, t \right) 
\qquad \forall t \geq 0 ,
\end{equation}
where:
\begin{itemize}
 
 \item In case $d < 0$, 
$
c_{sys}
=
2 \, \min \left\{  \kappa  ,  \gamma \right\}
$,
$
K_S
=
\left\vert d \right\vert \left\vert A \left( 0  \right) \right\vert ^2 +  \left\vert S \left( 0  \right)  \right\vert ^2
+  \left( D \left( 0 \right) - d \right)^2/4 
$,
and
$
K_A
=
\left\vert A \left( 0  \right) \right\vert ^2 +  \left\vert S \left( 0  \right)  \right\vert ^2 / \left\vert d \right\vert 
+  \left( D \left( 0 \right) - d \right)^2 / \left( 4 \left\vert d \right\vert \right)
$.
 
 \item In case $d \geq 0 $, 
$
c_{sys}
=
 \left( 1- C_\mathfrak{b} \right) \min \left\{ \kappa ,   \gamma \right\}
$,
$
K_A
=
\left\vert A \left( 0  \right) \right\vert ^2 + \frac{g^2}{\gamma \kappa} \left\vert S \left( 0  \right)  \right\vert ^2
+ \frac{g^2}{4 \gamma \kappa} \left( D \left( 0 \right) - d \right)^2 
$
and
$
K_S
=
\frac{4 \kappa  d}{\gamma} \left\vert A \left( 0  \right) \right\vert ^2 
+ \left( \frac{4 \kappa }{\gamma} + 1 \right) \left\vert S \left( 0  \right) \right\vert ^2
+  \left( \frac{ \kappa }{\gamma} + \frac{1}{4} \right) \left( D \left( 0 \right) - d \right)^2  
$.

\end{itemize}

Suppose that either $d \geq 0$ or $d < 0$ with $\kappa >  \gamma$.
Then $2 \kappa > c_{sys}$ and 
\[
 \int_0^t   \hbox{\rm e}^{- 2  \kappa \left( t - u \right)} 
\left( \left\vert S \left( u \right) \right\vert^2 + \left\vert A \left( u \right) \right\vert^2   \right) du
<
\frac{K_A + K_S }{2 \kappa - c_{sys}} \left(  \hbox{\rm e}^{-  c_{sys} \, t } -  \hbox{\rm e}^{- 2  \kappa \, t } \right) 
<
\frac{K_A + K_S }{2 \kappa - c_{sys}}  \hbox{\rm e}^{-  c_{sys} \, t } .
\]
From (\ref{eq:8.31}) it follows that
\begin{equation}
 \label{eq:8.34}
 \Tr\left( \rho_{t} \, N  \right) 
\leq 
\left(\Tr\left( \rho_{0} \, N  \right) +  \frac{ \left\vert g \right\vert \left( K_A + K_S \right)}{ 2 \kappa - c_{sys} }
\right) 
\hbox{\rm e}^{-  c_{sys} \, t } .
\end{equation}
Consider $t  \geq s \geq 0$.
Applying 
(\ref{eq:8.34}) we get
\[
 \Tr\left( \rho_{t} \, N  \right) 
\leq 
\Tr\left( \rho_{0} \, N  \right) +  \left\vert g \right\vert \left( K_A + K_S \right) / \left( 2 \kappa - c_{sys} \right) ,
\]
and hence (\ref{eq:8.33}) gives
\begin{align*}
 \int_s^t \left\vert S \left( u \right) \right\vert \sqrt{ \Tr \left( \rho_u \, N  \right) +1 } \, du
 \leq
\left( 1 + \Tr\left( \rho_{0} \, N  \right) + \frac{ \left\vert g \right\vert  \left( K_A + K_S \right) }{ 2 \kappa - c_{sys} } \right)^{1/2}
\int_s^t \left\vert S \left( u \right) \right\vert  du
\\
\leq
\left(
2 \sqrt{K_S} \left( 1 + \Tr\left( \rho_{0} \, N  \right) + \frac{ \left\vert g \right\vert  \left( K_A + K_S \right) }{ 2 \kappa - c_{sys} } \right)^{1/2}
\right)
\frac{\hbox{\rm e}^{- c_{sys} \, s /2 } - \hbox{\rm e}^{- c_{sys} \, t / 2}}{c_{sys}}  .
\end{align*}
Using (\ref{eq:8.33})  we also obtain 
\begin{equation}
\label{eq:8.61}
\left( \left\Vert  \sigma^{-} \right\Vert +   \left\Vert  \sigma^{+} \right\Vert \right)
 \int_s^t \left\vert A \left( u \right) \right\vert du 
\leq
2  \int_s^t \left\vert A \left( u \right) \right\vert du 
\leq
4  \sqrt{K_A} \frac{\hbox{\rm e}^{- c_{sys} \, s /2 } - \hbox{\rm e}^{- c_{sys} \, t / 2}}{c_{sys}} .
\end{equation}
Then,
from  (\ref{eq:8.38}) and (\ref{eq:8.34})  we get
\begin{align*}
 \Tr \left( \left\vert  \rho_t  - \varrho_{\infty} \right\vert  \right)
 & \leq 
12 \, \hbox{\rm e}^{- \gamma  \left(t-s \right) }    \left( 1 +  \left\vert d \right\vert \right)
+
4 \, \hbox{\rm e}^{- \kappa  \left(t-s \right) - \frac{c_{sys}}{2} s}  
\sqrt{ \Tr\left( \rho_{0} \, N  \right) +  \frac{ \left\vert g \right\vert  \left( K_A + K_S \right)}{ 2 \kappa - c_{sys} } }
\\
&  \quad +
\frac{8 \left\vert g \right\vert }{c_{sys}}
\left(
 \sqrt{K_S} \left( 1 + \Tr\left( \rho_{0} \, N  \right) + \frac{ \left\vert g \right\vert  \left( K_A + K_S \right) }{ 2 \kappa - c_{sys} } \right)^{1/2}
 +
  \sqrt{K_A} 
\right)
\hbox{\rm e}^{- \frac{ c_{sys} }{2 } s  } .
\end{align*}

In case $d \geq 0$, taking $t = 3s/2$ yields
\begin{align*}
 \Tr \left( \left\vert  \rho_{3 s / 2}  - \varrho_{\infty} \right\vert  \right)
 & \leq 
12 \, \hbox{\rm e}^{- \gamma  s / 2}    \left( 1 +  \left\vert d \right\vert \right)
+
4 \, \hbox{\rm e}^{- \frac{c_{sys}}{2} s}  
\sqrt{ \Tr\left( \rho_{0} \, N  \right) +  \frac{ \left\vert g \right\vert  \left( K_A + K_S \right)}{ 2 \kappa - c_{sys} } }
\\
&  \hspace{-10pt} +
\frac{8 \left\vert g \right\vert }{c_{sys}}
\left(
 \sqrt{K_S} \left( 1 + \Tr\left( \rho_{0} \, N  \right) + \frac{ \left\vert g \right\vert  \left( K_A + K_S \right) }{ 2 \kappa - c_{sys} } \right)^{1/2}
 +
  \sqrt{K_A} 
\right)
\hbox{\rm e}^{- \frac{c_{sys} }{ 2} s  } , 
\end{align*}
and so for all $t \geq 0$,
\begin{align*}
 \Tr \left( \left\vert  \rho_{ t}  - \varrho_{\infty} \right\vert  \right)
 & \leq 
 \hbox{\rm e}^{- \frac{c_{sys} }{ 3} t  }
 \left( 
 12 \left( 1 +  \left\vert d \right\vert \right) 
+
4 
\sqrt{ \Tr\left( \rho_{0} \, N  \right) +  \frac{ \left\vert g \right\vert  \left( K_A + K_S \right)}{ 2 \kappa - c_{sys} } }
+
\frac{8 \left\vert g \right\vert }{c_{sys}} \sqrt{K_A} 
\right.
\\
& \hspace{80pt}
\left. 
 +
\frac{8 \left\vert g \right\vert }{c_{sys}}
 \sqrt{K_S} \left( 1 + \Tr\left( \rho_{0} \, N  \right) + \frac{ \left\vert g \right\vert  \left( K_A + K_S \right) }{ 2 \kappa - c_{sys} } \right)^{1/2}
\right).
\end{align*}

In case $d < 0$ with $\kappa >  \gamma$, choosing  $t = 2 s$ we deduce that 
\begin{align*}
 \Tr \left( \left\vert  \rho_{2 s}  - \varrho_{\infty} \right\vert  \right)
 & \leq 
12 \, \hbox{\rm e}^{- \gamma  s }    \left( 1 +  \left\vert d \right\vert \right)
+
4 \, \hbox{\rm e}^{- \kappa  s - \frac{c_{sys}}{2} s}  
\sqrt{ \Tr\left( \rho_{0} \, N  \right) +  \frac{ \left\vert g \right\vert  \left( K_A + K_S \right)}{ 2 \kappa - c_{sys} } }
\\
&  \quad +
\frac{8 \left\vert g \right\vert }{c_{sys}}
\left(
 \sqrt{K_S} \left( 1 + \Tr\left( \rho_{0} \, N  \right) + \frac{ \left\vert g \right\vert  \left( K_A + K_S \right) }{ 2 \kappa - c_{sys} } \right)^{1/2}
 +
  \sqrt{K_A} 
\right)
\hbox{\rm e}^{- \frac{ c_{sys} }{2 } s  } ,
\end{align*}
and consequently 
\begin{align*}
 \Tr \left( \left\vert  \rho_{ t}  - \varrho_{\infty} \right\vert  \right)
 & \leq 
 \hbox{\rm e}^{- \frac{c_{sys} }{ 4} t  }
 \left( 
 12 \left( 1 +  \left\vert d \right\vert \right) 
+
4 
\sqrt{ \Tr\left( \rho_{0} \, N  \right) +  \frac{ \left\vert g \right\vert  \left( K_A + K_S \right)}{ 2 \kappa - c_{sys} } }
+
\frac{8 \left\vert g \right\vert }{c_{sys}} \sqrt{K_A} 
\right.
\\
& \hspace{3cm}
\left. 
 +
\frac{8 \left\vert g \right\vert }{c_{sys}}
 \sqrt{K_S} \left( 1 + \Tr\left( \rho_{0} \, N  \right) + \frac{ \left\vert g \right\vert  \left( K_A + K_S \right) }{ 2 \kappa - c_{sys} } \right)^{1/2}
\right).
\end{align*}
for any $t \geq 0$.

On the other hand, we assume that  $d < 0$ and $\kappa \leq  \gamma$.
Then
\[
 \int_0^t  \hbox{\rm e}^{- 2  \kappa \left( t - u \right)} 
\left( \left\vert S \left( u \right) \right\vert^2 + \left\vert A \left( u \right) \right\vert^2   \right) du
\leq
2 t \,  \left( K_A + K_S \right) \exp \left(- 2  \kappa \, t \right)  ,
\]
and so  (\ref{eq:8.31}) leads to 
\begin{equation}
\label{eq:8.39}
 \Tr\left( \rho_{t} \, N  \right) 
\leq 
\exp \left(- 2  \kappa \, t \right)  \Tr\left( \rho_{0} \, N  \right) 
+
2 \left\vert g \right\vert \left( K_A + K_S \right)  t \, \exp \left(- 2  \kappa \, t \right)  .
\end{equation}
Since 
$
t \, \exp \left(- 2  \kappa \, t \right) \leq 1 / \left( 2 \, \hbox{\rm e}  \, \kappa  \right)
$,
according to (\ref{eq:8.33}) we have that for all $t  \geq s \geq 0$,
\[
\int_s^t \left\vert S \left( u \right) \right\vert \sqrt{ \Tr \left( \rho_u \, N  \right) +1 } \, du
 \leq
\sqrt{K_S} \sqrt{ \Tr\left( \rho_{0} \, N  \right)  + \frac{ \left\vert g \right\vert \left( K_A + K_S \right) }{ \kappa \, \hbox{\rm e} }}
  \frac{\hbox{\rm e}^{- \kappa \, s  } - \hbox{\rm e}^{- \kappa \, t }}{\kappa} .
\]
Moreover, (\ref{eq:8.61}) gives
\[
\left( \left\Vert  \sigma^{-} \right\Vert +   \left\Vert  \sigma^{+} \right\Vert \right)
 \int_s^t \left\vert A \left( u \right) \right\vert du 
\leq
2  \sqrt{K_A} \frac{\hbox{\rm e}^{- \kappa \, s  } - \hbox{\rm e}^{- \kappa \, t }}{\kappa}  .
\]
Therefore, (\ref{eq:8.38}) yields
\begin{align*}
 \Tr \left( \left\vert  \rho_t  - \varrho_{\infty} \right\vert  \right)
 & \leq 
12 \, \hbox{\rm e}^{- \gamma  \left( t-s \right) }    \left( 1 +  \left\vert d \right\vert \right)
+
4 \, \hbox{\rm e}^{- \kappa  \left(  t-s \right) }  
\sqrt{ \Tr\left( \rho_{0} \, N  \right) 
+
 \left\vert g \right\vert  \left( K_A + K_S \right)  /  \left(  \kappa \, \hbox{\rm e} \right)  }
\\
& \quad +
4 \left\vert g \right\vert 
\left(
\sqrt{K_S} \sqrt{ \Tr\left( \rho_{0} \, N  \right)  + \left\vert g \right\vert \left( K_A + K_S \right) /  \left(  \kappa \hbox{\rm e} \right) }
+
  \sqrt{K_A}
\right)
\hbox{\rm e}^{- \kappa \, s  } / \kappa.
\end{align*}
Hence
\begin{align*}
 \Tr \left( \left\vert  \rho_{2 s}  - \varrho_{\infty} \right\vert  \right)
 & \leq 
12 \, \hbox{\rm e}^{- \gamma  s }    \left( 1 +  \left\vert d \right\vert \right)
+
4 \, \hbox{\rm e}^{- \kappa  s }  
\sqrt{ \Tr\left( \rho_{0} \, N  \right) 
+
 \left\vert g \right\vert  \left( K_A + K_S \right)  /  \left(  \kappa \hbox{\rm e} \right)  }
\\
& \quad +
4 \left\vert g \right\vert 
\left(
\sqrt{K_S} \sqrt{ \Tr\left( \rho_{0} \, N  \right)  + \left\vert g \right\vert \left( K_A + K_S \right) /  \left(  \kappa \hbox{\rm e} \right) }
+
  \sqrt{K_A}
\right)
\hbox{\rm e}^{- \kappa \, s  } / \kappa ,
\end{align*}
which implies 
\begin{align*}
 \Tr \left( \left\vert  \rho_t  - \varrho_{\infty} \right\vert  \right)
 & \leq 
4 \hbox{\rm e}^{- \frac{c_{sys}}{4} t  }
 \left(
 \left( 1 + \frac{\left\vert g \right\vert \sqrt{K_S} }{\kappa} \right)
 \sqrt{ \Tr\left( \rho_{0} \, N  \right)  + \frac{ \left\vert g \right\vert \left( K_A + K_S \right) }{ \kappa \, \hbox{\rm e} }}
\right. 
\\
& \hspace{6.8cm}
\left.
+ 3   \left( 1 +  \left\vert d \right\vert \right)
+
\frac{ \left\vert g \right\vert \sqrt{K_A} }{ \kappa}
\right) .
\end{align*}
This completes the proof of (\ref{eq:LTB.1}).

We are now in position to show (\ref{eq:8.35}).
We decompose $A$ as 
\[
A = A \, P + P  A  \left( I - P \right) + \left( I - P \right)  A  \left( I - P \right) ,
\]
where $P$ is the orthogonal projection of $\ell^2 \left(\mathbb{Z}_+ \right) \otimes \mathbb{C}^2 $ onto the linear span of $e_0 \otimes e_+$ and $e_0 \otimes e_-$,
i.e.,
$
P  \, x 
=
\langle e_0 \otimes e_+, x \rangle \, e_0 \otimes e_+ 
+
\langle e_0 \otimes e_-, x \rangle  \, e_0 \otimes e_- 
$.
From (\ref{eq:I10}) it follows 
\begin{align*}
\quad
 \Tr \left( \varrho_{\infty}  \, A \, P \right)  
& =
\sum_{n = 0}^{+ \infty} \sum_{\eta = \pm}
\langle e_n \otimes e_{\eta}, \varrho_{\infty}  \, A \, P   e_n \otimes e_{\eta} \rangle 
=
\langle \varrho_{\infty} e_0 \otimes e_{\eta},   A  \,  e_0 \otimes e_{\eta} \rangle
\\
&
 =
 \frac{d+1}{2} \langle e_0 \otimes e_+, A \, e_0 \otimes e_+ \rangle
+
\frac{1-d}{2} \langle e_0 \otimes e_-, A \, e_0 \otimes e_- \rangle .
\end{align*}
We can extend $P  A  \left( I - P \right) $ to the  bounded linear operator 
\[
P  A  \left( I - P \right)  x 
=
\langle A^{\star} e_0 \otimes e_+,  \left( I - P \right)  x \rangle \, e_0 \otimes e_+
+
\langle A^{\star} e_0 \otimes e_-,  \left( I - P \right)  x \rangle \, e_0 \otimes e_- .
\]
Using (\ref{eq:I10}) yields
$
\Tr \left( \varrho_{\infty}  P  A  \left( I - P \right) \right)  
=
\sum_{n = 0}^{+ \infty} \sum_{\eta = \pm}
\langle  \varrho_{\infty} e_n \otimes e_{\eta},  P  A  \left( I - P \right)   e_n \otimes e_{\eta} \rangle 
=
0 
$.
Applying  (\ref{eq:LTB.1}) we deduce that for all $t \geq 0 $,
 \begin{align}
 \nonumber
& 
\left\vert 
\Tr \left( \rho_t \left( A \, P + P  A  \left( I - P \right) \right) \right)
-
\frac{d+1}{2} \langle e_0 \otimes e_+, A \, e_0 \otimes e_+ \rangle
-
\frac{1-d}{2} \langle e_0 \otimes e_-, A \, e_0 \otimes e_- \rangle
\right\vert
\\
 \label{eq:8.36}
& =
\left\vert 
\Tr \left( \rho_t \left( A \, P + P  A  \left( I - P \right) \right) \right)
-
\Tr \left( \varrho_{\infty} \left( A \, P + P  A  \left( I - P \right) \right) \right)
\right\vert
\\
\nonumber
& \leq 
\left( \left\Vert  A \, P \right\Vert + \left\Vert P  A  \left( I - P \right) \right\Vert
\right)
K_{sys} \left( \left\vert g \right\vert \right) \exp \left( - \delta_{sys} \, t \right) .
\end{align}

According to \eqref{eq:LTB.c} we have 
\[
\max \left\{\left\Vert   \frac{1}{2} \left( A + A^{\star} \right)  x  \right\Vert, \left\Vert  \frac{i}{2} \left( A^{\star} - A  \right)  x  \right\Vert \right\}
\leq 
\widetilde{K}   \left\Vert  x  \right\Vert_{N}
\qquad
\forall x \in \mathcal{D}\left( N \right) .
\]
Hence, for any $ x \in \mathcal{D}\left( N \right)$,
$
\left\vert
 \langle  x ,   A  \, x  \rangle 
 \right\vert
 =
 \left\vert
 \langle  x ,   \frac{1}{2} \left( A + A^{\star} \right) x  \rangle + i \langle  x ,   \frac{i}{2} \left( A^{\star} - A  \right) x  \rangle
 \right\vert
 \leq
4 \widetilde{K}   \left\Vert  x  \right\Vert_{N} 
$
(see, e.g., proof of Theorem VI.1.38 of \cite{Kato}).
Therefore,
\begin{align*}
\left\vert
 \langle   \left( I - P \right)  x  ,   \left( I - P \right)  A  \left( I - P \right)  x  \rangle 
 \right\vert
&  \leq
4 \widetilde{K}   \left(  \left\Vert  \left( I - P \right)  x   \right\Vert^2 +  \langle  \left( I - P \right)  x  ,  N  \left( I - P \right)  x  \rangle \right)
\\
&  \leq
8 \widetilde{K}    \, \langle  \left( I - P \right)  x  ,  N  \left( I - P \right)  x  \rangle 
=
8 \widetilde{K}    \, \langle  x  ,  N  x  \rangle 
\end{align*}
for any $x \in  \mathcal{D}\left( N \right) $,
and so  
\begin{equation*}
\left\vert \Tr \left( \rho_t  \left( I - P \right)  A  \left( I - P \right) \right) \right\vert
\leq
8 \widetilde{K}    \, \Tr \left( \rho_t N \right)
\hspace{3cm} \forall t \geq 0 .
\end{equation*}
Then, 
using (\ref{eq:8.34}), (\ref{eq:8.39}) and (\ref{eq:8.36}) we obtain (\ref{eq:8.35}),
because
$ \left\Vert A P \right\Vert \leq \widetilde{K}$ 
and 
$ \left\Vert P A \left( I - P \right) \right\Vert 
=
\left\Vert  \left( I - P \right) A^* P  \right\Vert 
\leq 
\widetilde{K} 
$.
\end{proof}

\subsection{Proof of Theorem \ref{th:LimitCycle}}
\label{sec:Proof:LimitCycle}

For the sake of completeness, 
we now study the local stability of the nonzero equilibrium points of \eqref{eq:Lorenz} with $\omega = 0$.
As in the physical literature (see, e.g., \cite{Khanin2006,NingHaken1990}),
we combine  a change of  variables with linear stability analysis.

\begin{lemma}
\label{lem:LorenzBifurcation}
Let $C_\mathfrak{b} > 1$.
Consider \eqref{eq:Lorenz} with $\omega = 0$,
$d\in \left]-1,1 \right[$,  $ g \in \mathbb{R} \smallsetminus \left\{ 0 \right\}$
and  $\kappa,\gamma>0$.
Suppose that $\kappa \leq 3 \gamma $ 
or that $\kappa > 3 \gamma $ and $ \kappa^2 +  5 \kappa \gamma >  \gamma \left( \kappa - 3 \gamma \right) C_\mathfrak{b}$. 
Then,
there exist constants $\epsilon, \lambda, K > 0$ such that for all $t \geq 0$,
\[
\left\vert 
A \left( t \right)  
-
\frac{\gamma \sqrt{ C_\mathfrak{b} - 1} }{\sqrt{2} \left\vert g \right\vert} 
\frac{ A  \left( 0 \right) }{  \left\vert  A  \left( 0 \right)  \right\vert } 
\exp \left(  \mathrm{i}  g \int_0^{+ \infty} \Im \left( \frac{S  \left( s \right) }{  A \left( s \right) } \right) ds \right) 
\right\vert 
\leq K \exp \left(  - \lambda t \right) ,
\]
\[
\left\vert 
S\left( t \right)  
-
\frac{\kappa}{g}
\frac{\gamma \sqrt{ C_\mathfrak{b} - 1} }{\sqrt{2} \left\vert g \right\vert} 
\frac{ A  \left( 0 \right) }{  \left\vert  A  \left( 0 \right)  \right\vert } 
\exp \left(  \mathrm{i}  g \int_0^{+ \infty} \Im \left( \frac{S  \left( s \right) }{  A \left( s \right) } \right) ds \right) 
\right\vert 
\leq K \exp \left(  - \lambda t \right)
\]
and
$
\left\vert 
D \left( t \right)  
-
d / C_\mathfrak{b}
\right\vert 
\leq K \exp \left(  - \lambda t \right)
$
provided that 
\[
\max \left\{ 
  \left\vert    \left\vert A \left( 0 \right)  \right\vert  -  \frac{\gamma \sqrt{ C_\mathfrak{b} - 1} }{\sqrt{2} \left\vert g \right\vert}  \right\vert  ,  
 \left\vert  S \left(  0 \right)   -   \frac{ \kappa }{ g } \frac{\gamma \sqrt{ C_\mathfrak{b} - 1} }{\sqrt{2} \left\vert g \right\vert}  \frac{A \left( 0 \right) }{ \left\vert A \left( 0 \right)  \right\vert }\right\vert,
  \left\vert D \left( 0\right) -  \frac{d}{C_\mathfrak{b}} \right\vert 
\right\}
< \epsilon .
\]
\end{lemma}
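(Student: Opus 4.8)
The plan is to exploit the rotational (gauge) symmetry $(A,S,D)\mapsto(\mathrm{e}^{\mathrm{i}\vartheta}A,\mathrm{e}^{\mathrm{i}\vartheta}S,D)$ of \eqref{eq:Lorenz} with $\omega=0$. The essential difficulty is that, under this symmetry, the nonzero fixed points of \eqref{eq:Lorenz} form an entire circle $\{A=u_\ast\mathrm{e}^{\mathrm{i}\vartheta},\,S=(\kappa/g)A,\,D=d/C_\mathfrak{b}\}$, where $u_\ast=\gamma\sqrt{C_\mathfrak{b}-1}/(\sqrt2\,|g|)$; hence the linearization of \eqref{eq:Lorenz} at any one of them carries a zero eigenvalue along the tangent to the circle, and the classical principle of linearized stability does not apply directly. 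To remove this neutral direction I would pass to the rotation-invariant coordinates $u=|A|$, $w=S/A$ and $D$, which are well defined as long as $A\neq0$.

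A direct computation from \eqref{eq:Lorenz} gives the closed reduced system
\begin{equation*}
u'=u\bigl(-\kappa+g\,\Re w\bigr),\qquad
w'=(\kappa-\gamma)w+gD-gw^2,\qquad
D'=-4g\,u^2\,\Re w-2\gamma(D-d),
\end{equation*}
together with the phase equation $\theta'=g\,\Im w$, where $\theta=\arg A$. Writing $w=p+\mathrm{i}q$, the point $(u_\ast,p_\ast,q_\ast,D_\ast)=(u_\ast,\kappa/g,0,d/C_\mathfrak{b})$ is now an \emph{isolated} equilibrium. The key step is the linearization there: the $q$-component decouples at first order with eigenvalue $-(\kappa+\gamma)<0$, while the $(u,p,D)$-block has characteristic polynomial $\lambda^3+a_1\lambda^2+a_2\lambda+a_3$ with $a_1=\kappa+3\gamma$, $a_2=2\gamma(\kappa+\gamma C_\mathfrak{b})$ and $a_3=4\kappa\gamma^2(C_\mathfrak{b}-1)$, after substituting $g^2u_\ast^2=\gamma^2(C_\mathfrak{b}-1)/2$. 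The Routh--Hurwitz criterion for a cubic requires $a_1>0$, $a_3>0$ and $a_1a_2>a_3$; here $a_1>0$ is automatic, $a_3>0$ is exactly $C_\mathfrak{b}>1$, and a short simplification yields $a_1a_2-a_3=2\gamma\bigl(\kappa^2+5\kappa\gamma-\gamma(\kappa-3\gamma)C_\mathfrak{b}\bigr)$, so the third condition is precisely the standing hypothesis (and holds automatically when $\kappa\le3\gamma$, since then $\gamma(\kappa-3\gamma)C_\mathfrak{b}\le0$). Thus all four eigenvalues of the reduced linearization have strictly negative real part.

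By the principle of linearized stability there exist $\epsilon',\lambda,K>0$ such that any trajectory of the reduced system starting within $\epsilon'$ of $(u_\ast,\kappa/g,0,D_\ast)$ converges to it at rate $\mathrm{e}^{-\lambda t}$; in particular $u(t)$ stays bounded away from $0$, so the change of variables is legitimate for all $t\ge0$. Since $(A,S,D)\mapsto(|A|,\Re(S/A),\Im(S/A),D)$ is smooth near $A=u_\ast\neq0$, I would fix $\epsilon$ small enough that the smallness hypothesis on $(A(0),S(0),D(0))$ forces the reduced initial datum within $\epsilon'$; this immediately gives $|D(t)-d/C_\mathfrak{b}|\le K\mathrm{e}^{-\lambda t}$. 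Finally I reconstruct the phase: from $\theta'=g\,\Im w=g\,\Im(S/A)$ and $|{\Im w(t)}|\le K\mathrm{e}^{-\lambda t}$ it follows that $\theta(t)\to\theta_\infty:=\arg A(0)+g\int_0^{\infty}\Im(S(s)/A(s))\,ds$ with $|\theta(t)-\theta_\infty|\le K\mathrm{e}^{-\lambda t}$, so $A(t)=u(t)\mathrm{e}^{\mathrm{i}\theta(t)}\to u_\ast\mathrm{e}^{\mathrm{i}\theta_\infty}$ exponentially and $S(t)=w(t)A(t)\to(\kappa/g)u_\ast\mathrm{e}^{\mathrm{i}\theta_\infty}$; using $\mathrm{e}^{\mathrm{i}\arg A(0)}=A(0)/|A(0)|$ identifies these limits with the three target expressions. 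The genuinely delicate points are the symmetry reduction, which disposes of the spurious zero eigenvalue, and the bookkeeping that transfers the $\epsilon'$-neighbourhood back to the original variables and assembles the exponential bounds for $A$ and $S$ from those for $(u,w,D)$ and the phase; the eigenvalue computation, although it is the crux matching the hypothesis, is routine once the reduced system is in hand.
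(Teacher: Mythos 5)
Your proposal is correct and follows essentially the same route as the paper's proof: the paper likewise quotients out the rotational symmetry (writing $A = r\,\mathrm{e}^{\mathrm{i}\phi}$, $S = (S_R + \mathrm{i}S_I)\,\mathrm{e}^{\mathrm{i}\phi}$, $D = D_R + d$), linearizes the resulting four-dimensional reduced system at its isolated nonzero equilibrium, finds the decoupled eigenvalue $-(\kappa+\gamma)$ together with exactly your cubic $\lambda^3 + (\kappa+3\gamma)\lambda^2 + 2\gamma(\kappa+\gamma C_\mathfrak{b})\lambda + 4\kappa\gamma^2(C_\mathfrak{b}-1)$, and reconstructs the phase by integrating $\phi' = g S_I/r$, just as you integrate $\theta' = g\,\Im w$. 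The only differences are cosmetic: you use the invariant coordinate $w = S/A$ rather than $S\mathrm{e}^{-\mathrm{i}\phi}$, and you invoke the standard Routh--Hurwitz inequality $a_1 a_2 > a_3$, whereas the paper reproves that criterion by hand (substituting $\lambda = u + \mathrm{i}v$ and applying Descartes' rule of signs), both computations producing the same threshold $\kappa^2 + 5\kappa\gamma > \gamma(\kappa-3\gamma)C_\mathfrak{b}$.
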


\begin{proof}
Consider the change of functions 
$ 
A \left( t \right) 
 =
 r \left( t \right) \hbox{\rm e}^{\mathrm{i}  \, \phi \left( t \right) }
$,
$
S \left( t \right) 
=
 \left( S_R  \left( t \right)  + \mathrm{i}  \,S_I  \left( t \right)  \right) \hbox{\rm e}^{\mathrm{i}  \, \phi \left( t \right) }
 $
 and 
$
D \left( t \right) 
=
D_R \left( t \right) + d
$,
where
the unknown   $A \left( t \right)$, $S \left( t \right)$, $D \left( t \right)$ are replaced by 
the real functions 
$r \left( t \right) $, $\phi \left( t \right)$, $S_R  \left( t \right)$, $S_I  \left( t \right)$, $D_R \left( t \right)$.
Let  $A \left( 0 \right) \neq 0 $. Then,
from \eqref{eq:Lorenz}  it follows that 
 \begin{equation}
 \label{eq:L4}
 \left\{
 \begin{aligned}
 r ^{\prime} \left( t \right) 
 & =
 - \kappa \, r \left( t \right)  + g \, S_R  \left( t \right)
\\
S_R ^{\prime} \left( t \right)
 & =
 - \gamma S_R  \left( t \right) + g \, r \left( t \right) \left(  D_R \left( t \right) + d \right)
 + g \, S_I  \left( t \right)^2 / r \left( t \right)
 \\
S_I ^{\prime} \left( t \right)
 & =
 - \gamma S_I  \left( t \right) - g \, S_I  \left( t \right) S_R  \left( t \right) / r \left( t \right)
  \\
D_R ^{\prime} \left( t \right)
 & =
 -  2 \gamma D_R  \left( t \right) - 4g \, r \left( t \right) S_R  \left( t \right) 
 \end{aligned}
 \right. .
\end{equation}
and
\begin{equation}
\label{eq:L8}
\phi ^{\prime} \left( t \right)
 =
g \, S_I  \left( t \right)   / r \left( t \right) .
\end{equation}

Since $C_\mathfrak{b} > 1$,
\eqref{eq:L4} has the fix point:
$
r = r_0
$,
$
S_R =   \kappa \, r_0 / g
$,
$
S_I = 0
$
and
$
D_R = d \left( 1 / C_\mathfrak{b} - 1 \right)
$,
where
$
r _0 = \frac{\gamma}{\sqrt{2} \left\vert g \right\vert} \sqrt{ C_\mathfrak{b} - 1}
$.
The Jacobian matrix of the function describing the right-hand side of \eqref{eq:L4} evaluated at this fix point is equal to
\[
J 
=
\begin{pmatrix} 
- \kappa                  & g               & 0                             & 0
\\
\kappa \gamma /g  &  - \gamma & 0                             & g \, r_0
\\
0                             & 0               & -\gamma - \kappa   & 0
\\
- 4 \kappa \, r_0      & -4 g \, r_0  & 0                             & -2 \gamma 
\end{pmatrix} .
\]
As
$
\det \left( J - \lambda \, I \right)
=
\left( \lambda + \gamma + \kappa \right)
\left( 
\lambda^3 + \left(  3 \gamma + \kappa \right) \lambda^2
+ 
\left( 2 \gamma^2 C_\mathfrak{b} + 2 \gamma \kappa \right) \lambda
+
8 g^2 \kappa r_0^2
\right) 
$,
the eigenvalues of J are $- \gamma  - \kappa $, which is less than $0$,
and the zeros of the polynomial 
\begin{equation}
\label{eq:L5}
\lambda^3 + \left(  3 \gamma + \kappa \right) \lambda^2
+ 
\left( 2 \gamma^2 C_\mathfrak{b} + 2 \gamma \kappa \right) \lambda
+
8 g^2 \kappa r_0^2 .
\end{equation}
Since the coefficients of \eqref{eq:L5} are positive,
the real roots of \eqref{eq:L5} are negative.
Substituting $\lambda = u +  \mathrm{i} v$, with $u, v \in \mathbb{R}$,
into \eqref{eq:L5}
we deduce that $u +  \mathrm{i} v$, with  $v \neq 0$,
is a root of \eqref{eq:L5} iff
$
v^2 
=
3 u^2 + 2 \left( 3 \gamma + \kappa  \right) u + 2   \gamma^2 C_\mathfrak{b} + 2 \kappa \gamma
$
and
\begin{equation}
\label{eq:L6}
\begin{aligned}
 0
& =
8 u^3 
+
\left( 24 \gamma + 8 \kappa \right) u^2
+
\left( 16 \gamma \kappa + 4 C_\mathfrak{b} \gamma^2 + 18 \gamma^2 + 2 \kappa^2 \right) u
\\
& \quad
+
2 \gamma^2 C_\mathfrak{b} \left( 3 \gamma - \kappa \right) + 2 \gamma \kappa^2 + 10 \gamma^2 \kappa .
\end{aligned}
\end{equation}
Applying Descartes' rule of signs we obtain that \eqref{eq:L6} has one positive root whenever 
\[
2 \gamma^2 C_\mathfrak{b} \left( 3 \gamma - \kappa \right) + 2 \gamma \kappa^2 + 10 \gamma^2 \kappa < 0 .
\]
Therefore,
all the roots of  \eqref{eq:L6} are strictly negative iff
$2 \gamma^2 C_\mathfrak{b} \left( 3 \gamma - \kappa \right) + 2 \gamma \kappa^2 + 10 \gamma^2 \kappa > 0$,
which is equivalent to 
 $\kappa > 3 \gamma $ and $ \kappa^2 +  5 \kappa \gamma >  \gamma \left( \kappa - 3 \gamma \right) C_\mathfrak{b}$.
In this case,
the real parts of all eigenvalues of J are less than $0$,
and so the nonzero equilibrium point of \eqref{eq:L4} is locally exponentially stable  
(see, e.g., Section 23.4  of \cite{Arnold1992}).
Therefore, 
there exist constants $\epsilon, \lambda, K > 0$ such that for all $t \geq 0$
\[
\max \left\{ 
  \left\vert r \left( t \right)  - r_0 \right\vert ,  \left\vert  S_R \left( t \right)  -  \kappa \, r_0 / g \right\vert,
 \left\vert  S_I  \left( t \right)  \right\vert, \left\vert  D_R  \left( t \right)  - d \left( 1 / C_\mathfrak{b} - 1 \right) \right\vert 
\right\}
\leq K \exp \left(  - \lambda t \right)
\]
whenever
$
\max \left\{ 
  \left\vert r \left( 0  \right)  - r_0 \right\vert ,  \left\vert  S_R \left(  0 \right)  -  \kappa \, r_0 / g \right\vert,
 \left\vert  S_I  \left( 0 \right)  \right\vert, \left\vert  D_R  \left( 0\right)  - d \left( 1 / C_\mathfrak{b} - 1 \right) \right\vert 
\right\}
< \epsilon 
$.
This implies
$
 \left\vert g \, S_I  \left( t \right)   / r \left( t \right)  \right\vert  \leq K \exp \left(  - \lambda t \right)
$.
Using \eqref{eq:L8} gives 
\[
\left\vert
\hbox{\rm e}^{  \mathrm{i}  \, \phi \left( t \right) }
-
\hbox{\rm e}^{ 
 \mathrm{i}  \left( \phi  \left( 0 \right) + g \int_0^{+ \infty}  \frac{S_I  \left( s \right) }{  r \left( s \right) } ds \right) }
 \right\vert  
\leq
 \left\vert \phi  \left( t \right) -  \phi  \left( 0 \right) -  g \int_0^{+ \infty}  \frac{S_I  \left( s \right) }{  r \left( s \right) } ds \right\vert 
\leq K \hbox{\rm e}^{  - \lambda t } .
\]
This leads to the assertion of the lemma.
\end{proof}

\begin{proof}[Proof of Theorem \ref{th:LimitCycle}]
Consider the the unitary transformation:
\begin{equation*}
\widetilde{ \rho }_t : =
\exp \left(  \mathrm{i}  \omega  \left( N +  \sigma^3 / 2 \right) t \right)
\rho_t
\, \exp \left( - \mathrm{i}  \omega  \left( N +  \sigma^3 / 2 \right) t \right)
\hspace{1cm}
\forall t \geq 0 .
\end{equation*}
From a careful computation we obtain that 
$\rho_t $ is a $N$-weak  solution  to  (\ref{eq:Laser1}) 
iff
$\widetilde{ \rho }_t $ is a $N$-weak solution to the non-linear equation \eqref{eq:Laser6},
i.e., equation (\ref{eq:Laser1}) with $\omega = 0$,
as  in the proof of Theorem \ref{th:FreeS-LaserE}.

Now, 
we shift the analysis from  (\ref{eq:Laser6}) to the linear  quantum master equation (\ref{eq:AuxiliarGKSL}),
in a similar way to that in the proof of Theorem \ref{th:LongTime}.
Set 
$ A \left( t \right) =  \Tr\left(  \widetilde{ \rho }_{t} \, a \right)$, 
$ S \left( t  \right)  = \Tr\left( \widetilde{ \rho }_{t} \, \sigma^{-}  \right)$ 
and 
$ D \left( t  \right) = \Tr\left(   \widetilde{ \rho }_{t}  \, \sigma^{3}\right)$
for all $t \geq 0$.
Then,
$ A \left( t \right) $, $ S \left( t  \right) $ and $ D \left( t  \right) $
satisfy (\ref{eq:Lorenz}) with  $\omega = 0$ (see, e.g., \cite{FagMora2019}),
and so 
$\left( \ \widetilde{ \rho }_{t} \right)_{t \geq 0}$ coincides with the unique $N$-weak solution to (\ref{eq:AuxiliarGKSL})
with  $\omega = 0$,
$\alpha \left( t \right) = g \,  S\left( t  \right)$,
$\beta \left( t \right) = g \,  A\left( t  \right)$, 
and initial datum $\rho_0$.
This leads us to study the long-time behavior of $\widetilde{ \rho }_{t} $ 
by means of Corollary \ref{cor:Ineq-Conv}.

Using Corollary \ref{cor:Ineq-Conv} with 
$\alpha \left( t \right) = g \,  S\left( t  \right)$,
$\beta \left( t \right) = g \,  A\left( t  \right)$, 
$
\alpha_0 
= 
\frac{1}{ \sqrt{2} \left| g \right| }
\left( z_{\infty} \kappa \gamma \sqrt{ C_\mathfrak{b} -1 } \right) 
$
and 
$
\beta_0
= 
\frac{1}{\sqrt{2} \left| g \right|}
\left( z_{\infty} g \gamma \sqrt{ C_\mathfrak{b} -1 } \right) 
$
we deduce that for all  $t  \geq s \geq 0$,
 \begin{align*}
 \Tr \left( \left\vert   \widetilde{ \rho }_t  -  \widetilde{ \varrho }_{\infty} \right\vert  \right)
& \leq 
\Tr \left( \left\vert   \rho^R_{t-s} \left(  \widetilde{ \rho }_s  \right)   -  \widetilde{ \varrho }_{\infty} \right\vert  \right)
+
4 \left\vert g \right\vert  \int_s^t \left\vert 
S \left( u \right) -  \frac{\kappa}{g}
\frac{\gamma \sqrt{ C_\mathfrak{b} - 1} }{\sqrt{2} \left\vert g \right\vert} z_{\infty}
\right\vert 
\sqrt{ \Tr \left(  \widetilde{ \rho }_u \, N  \right) +1 } \, du
\\
& \quad
+ 
2 \left\vert g \right\vert  \left( \left\Vert  \sigma^{-} \right\Vert +   \left\Vert  \sigma^{+} \right\Vert \right)
 \int_s^t \left\vert A \left( u \right) - \frac{\gamma \sqrt{ C_\mathfrak{b} - 1} }{\sqrt{2} \left\vert g \right\vert}  z_{\infty} \right\vert du ,
 \end{align*}
where 
$\rho^R_t \left( \cdot \right)$ is the one-parameter semigroup of contractions 
described by the $N$-weak solutions of  (\ref{eq:3.2g})
and 
\[
\widetilde{ \varrho }_{\infty}
=
\ketbra{ \mathcal{E} \left( \frac{ z_{\infty} \gamma \sqrt{ C_\mathfrak{b} -1 } }{ \sqrt{2} \left| g \right| }  \right)} 
{ \mathcal{E} \left( \frac{ z_{\infty} \gamma \sqrt{ C_\mathfrak{b} -1 }}{ \sqrt{2} \left| g \right| } \right)}
\otimes 
\begin{pmatrix}
 \frac{1}{2} \left( 1 + \frac{ d }{ C_\mathfrak{b} }  \right)
 &
 \frac{z_{\infty} \kappa \gamma }{ \sqrt{2} g \left| g \right|} \sqrt{ C_\mathfrak{b} -1 }
 \\
 \frac{ \overline{z_{\infty}} \kappa \gamma }{ \sqrt{2} g \left| g \right|} \sqrt{ C_\mathfrak{b} -1 }
 &
 \frac{1}{2} \left( 1 - \frac{d }{ C_\mathfrak{b} }  \right) 
\end{pmatrix} 
\]
with
$
z_{\infty}
=
\frac{ A  \left( 0 \right) }{  \left\vert  A  \left( 0 \right)  \right\vert } 
\exp \left(  \mathrm{i}  g \int_0^{+ \infty} \Im \left( \frac{S  \left( s \right) }{  A \left( s \right) } \right) ds \right)  
$.
Applying Lemma \ref{lem:LorenzBifurcation},
together with the upper bound of the term 
$ \Tr \left( \left\vert   \rho^R_{t-s} \left(  \widetilde{ \rho }_s  \right)   -  \widetilde{ \varrho }_{\infty} \right\vert  \right) $
provided by Theorem \ref{th:ConvAuxiliarQME},
we obtain that there exist constants $\epsilon, \lambda, K > 0$ such that for all $t \geq s \geq 0$:
\begin{equation}
\label{eq:LC1}
\begin{aligned}
 \Tr \left( \left\vert   \widetilde{ \rho }_t  -  \widetilde{ \varrho }_{\infty} \right\vert  \right)
& \leq 
12 \, \hbox{\rm e}^{- \gamma  \left( t - s \right) }    \left( 1 +  \left\vert d \right\vert \right)
+
\hbox{\rm e}^{- \kappa  \left( t - s \right)  }  \left( 
\frac{ \sqrt{2} \gamma \sqrt{ C_\mathfrak{b} - 1} }{  \left\vert  g  \right\vert }
+
4  \sqrt{ \Tr \left(  \widetilde{ \rho }_s  N  \right) } \right)
\\
& \quad +
K \left\vert g \right\vert  \int_s^t  \hbox{\rm e}^{- \lambda u } 
\sqrt{ \Tr \left(  \widetilde{ \rho }_u \, N  \right) +1 } \, du
+ 
K \left\vert g \right\vert  \left( \left\Vert  \sigma^{-} \right\Vert +   \left\Vert  \sigma^{+} \right\Vert \right)
 \int_s^t  \hbox{\rm e}^{- \lambda u } du 
 \end{aligned}
\end{equation}
in case 
\begin{equation}
\label{eq:LC2}
\max \left\{ 
  \left\vert    \left\vert A \left( 0 \right)  \right\vert  -  \frac{\gamma \sqrt{ C_\mathfrak{b} - 1} }{\sqrt{2} \left\vert g \right\vert}  \right\vert  ,  
 \left\vert  S \left(  0 \right)   -   \frac{ \kappa }{ g } \frac{\gamma \sqrt{ C_\mathfrak{b} - 1} }{\sqrt{2} \left\vert g \right\vert}  \frac{A \left( 0 \right) }{ \left\vert A \left( 0 \right)  \right\vert }\right\vert,
  \left\vert D \left( 0\right) -  \frac{d}{C_\mathfrak{b}} \right\vert 
\right\}
< \epsilon .
\end{equation}

Now, we examine the long-time behavior of $\Tr\left( \widetilde{ \rho }_{t} \, N  \right) $.
According to Lemma \ref{lem:EvolNumero} we have
\begin{align*}
 \Tr\left( \widetilde{ \rho }_{t} \, N  \right) 
& =
\hbox{\rm e}^{- 2  \kappa \, t} \Tr\left( \rho_{0} \,  N  \right) 
+
2 g \, \int_0^t \hbox{\rm e}^{- 2  \kappa \left( t - s \right)}  \Re \left(  \overline{ S \left( s \right)}  \, \Tr \left( \widetilde{ \rho }_s \, a  \right) \right) ds
\\
& =
\hbox{\rm e}^{- 2  \kappa \, t} \Tr\left( \rho_{0} \,  N  \right) 
+
2 g   \int_0^t \hbox{\rm e}^{- 2  \kappa \left( t - s \right)}  \Re \left(  \overline{ S \left( s \right)}  \,  A \left( s \right)  \right) ds .
\end{align*}
Since
\begin{align*}
 \left\vert
 \overline{ S \left( s \right)}  \,  A \left( s \right) 
 -
 \frac{\kappa}{g}
\frac{\gamma^2 \left(  C_{\mathfrak{b}} - 1 \right) } {2  \, g^2 }  
\right\vert
& \leq
\left\Vert \sigma^{-} \right\Vert \left\vert 
A \left( s \right) 
- 
\frac{\gamma \sqrt{ C_\mathfrak{b} - 1} }{\sqrt{2} \left\vert g \right\vert}  z_{\infty}
\right\vert
\\
& \quad 
+ 
\frac{\gamma \sqrt{ C_\mathfrak{b} - 1} }{\sqrt{2} \left\vert g \right\vert} 
\left\vert S \left( s \right)   
- 
\frac{\kappa}{g} \frac{\gamma \sqrt{ C_\mathfrak{b} - 1} }{\sqrt{2} \left\vert g \right\vert}  z_{\infty}
\right\vert ,
\end{align*}
from Lemma \ref{lem:LorenzBifurcation} we get 
\[
\left\vert 
\Tr\left( \widetilde{ \rho }_{t} \, N  \right) 
-
\frac{\gamma^2 \left(  C_{\mathfrak{b}} - 1 \right) } {2  \, g^2 } 
\right\vert
\leq
\hbox{\rm e}^{- 2  \kappa \, t} \Tr\left( \rho_{0} \,  N  \right)
+
2 g K  \int_0^t \hbox{\rm e}^{- 2  \kappa \left( t - s \right)} \hbox{\rm e}^{-  \lambda s }  ds 
\]
whenever \eqref{eq:LC2} holds.
As 
$
\Tr\left( \widetilde{ \varrho }_{\infty} \,  N  \right)
=
\gamma^2 \left(  C_{\mathfrak{b}} - 1 \right) / \left( 2  \, g^2 \right) 
$
we have 
\begin{equation}
\label{eq:LC3}
 \left\vert 
\Tr\left( \widetilde{ \rho }_{t} \, N  \right) 
-
\Tr\left( \widetilde{ \varrho }_{\infty} \,  N  \right)
\right\vert
\leq
 K\left( \Tr\left( \rho_{0} \, N  \right)  \right) \, \hbox{\rm e}^{- 2  \kappa \, t}
\hspace{2cm}
\forall t \geq 0 .
\end{equation}

We are in position to show the exponential convergence of $ \rho_t -  \varrho_c \left( \omega t - \theta_{\infty} \right)$ to $0$. 
According to \eqref{eq:LC3} we have
$
\Tr\left( \widetilde{ \rho }_{t} \, N  \right) 
\leq 
 K\left( \Tr\left( \rho_{0} \, N  \right)  \right) \, \hbox{\rm e}^{- 2  \kappa \, t}
 +
 \Tr\left( \widetilde{ \varrho }_{\infty} \,  N  \right)
$,
and so 
taking $s = t/ 2 $ in  \eqref{eq:LC1} we get 
\begin{equation}
\label{eq:LC7}
 \Tr \left( \left\vert   \widetilde{ \rho }_t  -  \widetilde{ \varrho }_{\infty} \right\vert  \right)
 \leq
 K\left( \Tr\left( \rho_{0} \, N  \right)  \right)  \exp \left(  - \lambda t \right) 
 \hspace{1cm}
\forall t \geq 0 ,
\end{equation}
where, by abuse of notation, we recall that $\lambda$ is a strictly positive constant
and $ K\left( \cdot \right) $ is a non-decreasing non-negative function.
Therefore, for all $t \geq 0$:
\begin{align*}
&
 \Tr \left( \left\vert  
 \rho_t 
 - 
 \exp \left( - \mathrm{i}  \omega  \left( N +  \sigma^3 / 2 \right) t \right) 
 \widetilde{ \varrho }_{\infty}
 \exp \left(  \mathrm{i}  \omega  \left( N +  \sigma^3 / 2 \right) t \right)
 \right\vert  \right)
 \\
 &
\leq 
 \left\Vert \exp \left( - \mathrm{i}  \omega  \left( N +  \sigma^3 / 2 \right) t \right) \right\Vert 
\left\Vert \exp \left( \mathrm{i}  \omega  \left( N +  \sigma^3 / 2 \right) t \right) \right\Vert 
  \Tr \left( \left\vert   \widetilde{ \rho }_t  -  \widetilde{ \varrho }_{\infty} \right\vert  \right)
 \leq
 K\left(  \Tr\left( \rho_{0} \, N  \right)  \right)\hbox{\rm e}^{  - \lambda t }.
\end{align*}
Similar to the final part of the proof of Theorem \ref{th:FreeS-LaserE}
we get 
\[
\exp \left( - \mathrm{i}  \omega  \left( N +  \sigma^3 / 2 \right) t \right) 
 \widetilde{ \varrho }_{\infty}
 \exp \left(  \mathrm{i}  \omega  \left( N +  \sigma^3 / 2 \right) t \right)
 =
 \varrho_c \left( \omega t - \theta_{\infty} \right) ,
\]
where $\varrho_c $ is defined by (\ref{eq:Def_orbita}).
This gives  \eqref{eq:LC4}.

Let $P_n$ be  the orthogonal projection of $\ell^2 \left(\mathbb{Z}_+ \right) \otimes \mathbb{C}^2 $ onto the linear span of 
$e_0 \otimes e_{\pm}, \ldots, e_n \otimes e_{\pm} $, i.e.,
$
P_n = \sum_{j=0}^n \ketbra{ e_j }{ e_j }
$.
By the triangular inequality,
\begin{align*}
  \left\vert  
  \Tr \left(   \rho_t \, A \right)
  -
   \Tr \left( \varrho_c \left( \omega t - \theta_{\infty} \right) \, A \right)
\right\vert  
& \leq
 \left\vert  
  \Tr \left(  \rho_t \, A \, P_n  \right)
  -
   \Tr \left(\varrho_c \left( \omega t - \theta_{\infty} \right) \, A \, P_n \right)
\right\vert
\\
& \quad
+
\left\vert  
  \Tr \left(  \rho_t \, P_n \, A  \left( I - P_n \right)  \right)
  -
   \Tr \left( \varrho_c \left( \omega t - \theta_{\infty} \right) \, P_n \, A   \left( I - P_n \right)   \right)
\right\vert
\\
&  \quad
\hspace{-2cm}
+
\left\vert  
  \Tr \left( \rho_t \left( I - P_n \right)  A  \left( I - P_n \right)   \right)
-
\Tr \left( \varrho_c \left( \omega t - \theta_{\infty} \right) \left( I - P_n \right)  A  \left( I - P_n \right)   \right)
\right\vert .
\end{align*}
Using \eqref{eq:LC6} and $ \left\Vert N P_n \right\Vert  =  n  $ gives 
$
\left\Vert P_n A \left( I - P_n \right) \right\Vert 
=
\left\Vert  \left( I - P_n \right) A^* P_n  \right\Vert 
\leq 
 \left( n +1 \right)  \widetilde{K}
$
and
$ \left\Vert A P_n \right\Vert \leq \left( n +1 \right)  \widetilde{K}$.
Hence,
\begin{align*}
 \left\vert  
  \Tr \left( \rho_t \,  A    \right)
  -
   \Tr \left( \varrho_c \left( \omega t - \theta_{\infty} \right) \, A  \right)
\right\vert  
& \leq
2 \left( n +1 \right) \widetilde{K} \,
 \Tr \left( \left\vert  
 \rho_t 
 - 
\varrho_c \left( \omega t - \theta_{\infty} \right)
 \right\vert  \right)
 \\
 &  \quad
\hspace{-2.3cm}
 +
\left\vert  
  \Tr \left(  \rho_t \left( I - P_n \right)  A  \left( I - P_n \right)   \right)
\right\vert
+
\left\vert  
  \Tr \left(  \varrho_c \left( \omega t - \theta_{\infty} \right) \left( I - P_n \right)  A  \left( I - P_n \right)   \right)
\right\vert ,
\end{align*}
and so \eqref{eq:LC4} yields  
\begin{align}
 \label{eq:LC9}
 \left\vert  
  \Tr \left( \rho_t \,  A    \right)
  -
   \Tr \left( \varrho_c \left( \omega t - \theta_{\infty} \right) \, A  \right)
\right\vert  
& \leq
2  \left( n +1 \right) \widetilde{K} \,
K \left( \Tr\left( \rho_{0} \, N  \right)  \right) \exp \left(  - \lambda t \right)
\\
\nonumber
&  \quad
\hspace{-3cm}
 +
\left\vert  
  \Tr \left(  \rho_t \left( I - P_n \right)  A  \left( I - P_n \right)   \right)
\right\vert
+
\left\vert  
  \Tr \left(  \varrho_c \left( \omega t - \theta_{\infty} \right) \left( I - P_n \right)  A  \left( I - P_n \right)   \right)
\right\vert .
\end{align}

Next,
we estimate the last two terms of the right-hand  side of (\ref{eq:LC9}).
Using \eqref{eq:LC6} we get 
\[
\left\vert
 \langle   \left( I - P_n \right)  x  ,   \left( I - P_n \right)  A  \left( I - P_n \right)  x  \rangle 
 \right\vert
\leq
4 \widetilde{K}   \left(  \left\Vert  \left( I - P_n \right)  x   \right\Vert^2 +  \langle  \left( I - P_n \right)  x  ,  N  \left( I - P_n \right)  x  \rangle \right)
\]
for all $x \in  \mathcal{D}\left( N \right) $
(see, e.g.,  the last part of the proof of Theorem \ref{th:LongTime}).
Hence
\begin{align*}
 \left\vert  
  \Tr \left(  \rho_t \left( I - P_n \right)  A  \left( I - P_n \right)   \right)
\right\vert
& \leq
4 \widetilde{K}   \left(
 \Tr \left(  \rho_t \left( I - P_n \right)   \right) +  \Tr \left(  \rho_t \, N \left( I - P_n \right)   \right)
 \right)
 \\
 & =
 4 \widetilde{K}   \left(
 \Tr \left(  \widetilde{ \rho }_t \left( I - P_n \right)   \right) +  \Tr \left(  \widetilde{ \rho }_t  \, N \left( I - P_n \right)   \right)
 \right)
\end{align*}
and
\begin{align*}
 \left\vert  
  \Tr \left(  \rho_t ^{\theta_{\infty}}  \left( I - P_n \right)  A  \left( I - P_n \right)   \right)
\right\vert
&
\leq
4 \widetilde{K}   \left(
 \Tr \left(  \rho_t ^{\theta_{\infty}} \left( I - P_n \right)   \right) +  \Tr \left(  \rho_t ^{\theta_{\infty}} \, N \left( I - P_n \right)   \right)
 \right)
 \\
 & =
 4 \widetilde{K}   \left(
 \Tr \left(  \widetilde{ \varrho }_{\infty} \left( I - P_n \right)   \right) +  \Tr \left(  \widetilde{ \varrho }_{\infty} \, N \left( I - P_n \right)   \right)
 \right)
\end{align*}
since  
$ \left( I - P_n \right) N    =  N \left( I - P_n \right)$.
Combining \eqref{eq:LC3} with
\[
 \left\vert  \Tr \left(  \widetilde{ \rho }_t N \left( I - P_n \right)   \right) \right\vert
\leq
\left\vert  \Tr \left(  \widetilde{ \rho }_t N  \right) -  \Tr \left(   \widetilde{ \varrho }_{\infty} N  \right) \right\vert
+
\left\vert   \Tr \left(   \widetilde{ \varrho }_{\infty} N   \left( I - P_n \right)  \right)  \right\vert
+
\left\vert   \Tr \left(  \left(  \widetilde{ \varrho }_{\infty} -  \widetilde{ \rho }_t  \right) N  P_n   \right)  \right\vert 
\]
we obtain
\[
\left\vert  \Tr \left(  \widetilde{ \rho }_t N \left( I - P_n \right)   \right) \right\vert
\leq
K\left( \Tr\left( \rho_{0} \, N  \right)  \right) \, \hbox{\rm e}^{- 2  \kappa \, t}
+
\left\vert   \Tr \left(   \widetilde{ \varrho }_{\infty} N   \left( I - P_n \right)  \right)  \right\vert
+
n  \,  \Tr \left(    \left\vert\widetilde{ \varrho }_{\infty} -  \widetilde{ \rho }_t    \right\vert \right)   .
\]
Moreover,
$
\left\vert  \Tr \left(  \widetilde{ \rho }_t \left( I - P_n \right)   \right) \right\vert
\leq
\left\vert  \Tr \left(  \widetilde{ \varrho }_{\infty}  \left( I - P_n \right)   \right) \right\vert
+
 \Tr \left(   \left\vert\widetilde{ \varrho }_{\infty} -  \widetilde{ \rho }_t   \right\vert \right)   
$
since
\[
\left\vert  \Tr \left(  \widetilde{ \rho }_t \left( I - P_n \right)   \right) \right\vert
=
\left\vert  1 -  \Tr \left(  \widetilde{ \varrho }_t \, P_n \right) \right\vert
\leq
\left\vert  \Tr \left(  \widetilde{ \varrho }_{\infty}  \left( I - P_n \right)   \right) \right\vert
+
\left\vert   \Tr \left(  \left(  \widetilde{ \varrho }_{\infty} -  \widetilde{ \rho }_t  \right)  P_n   \right)  \right\vert .
\]
Finally, applying \eqref{eq:LC7} we get
\begin{align}
\label{eq:LC8}
& 
\left\vert  
  \Tr \left(  \rho_t \left( I - P_n \right)  A  \left( I - P_n \right)   \right)
\right\vert
+
\left\vert  
  \Tr \left(  \varrho_c \left( \omega t - \theta_{\infty} \right) \left( I - P_n \right)  A  \left( I - P_n \right)   \right)
\right\vert
 \\
 \nonumber
& \leq
 8 \widetilde{K} \,  \Tr \left(  \widetilde{ \varrho }_{\infty}  \left( I - P_n \right)   \right) 
+
8 \widetilde{K}  \,  \Tr \left(   \widetilde{ \varrho }_{\infty} N   \left( I - P_n \right)  \right) 
+
 \left( n +1 \right)  \widetilde{K} \, K\left( \Tr\left( \rho_{0} \, N  \right)  \right)   \exp \left(  - \lambda t \right) ,
\end{align}
where, for simplicity of notation, $\lambda$ is a strictly positive constant
and $ K\left( \cdot \right) $ is a non-decreasing non-negative function that does not depend on $A$.

From the definition of $\widetilde{ \varrho }_{\infty}$ we deduce that
\begin{align*}
  \Tr \left(   \widetilde{ \varrho }_{\infty} N   \left( I - P_n \right)  \right) 
 & =
 \frac{\gamma^2  \left( C_\mathfrak{b} -1 \right)}{ 2 g^2} 
 \exp \left(  - \frac{\gamma^2  \left( C_\mathfrak{b} -1 \right)}{ 2 g^2}  \right)
 \sum_{k=n}^{+ \infty} \frac{\left(  \frac{\gamma^2  \left( C_\mathfrak{b} -1 \right)}{ 2 g^2}  \right) ^k }{k !}
 \\
 & <
 \frac{\gamma^2  \left( C_\mathfrak{b} -1 \right)}{ 2 g^2} 
 \frac{\left(  \frac{\gamma^2  \left( C_\mathfrak{b} -1 \right)}{ 2 g^2}  \right) ^n }{n !}
\end{align*}
and
\[
  \Tr \left(   \widetilde{ \varrho }_{\infty}  \left( I - P_n \right)  \right) 
  =
 \exp \left(  - \frac{\gamma^2  \left( C_\mathfrak{b} -1 \right)}{ 2 g^2}  \right)
 \sum_{k=n+1}^{+ \infty} \frac{\left(  \frac{\gamma^2  \left( C_\mathfrak{b} -1 \right)}{ z2 g^2}  \right) ^k }{k !}
 <
 \frac{\left(  \frac{\gamma^2  \left( C_\mathfrak{b} -1 \right)}{ 2 g^2}  \right) ^{\left( n+1 \right)} }{ \left( n+1 \right) !} .
\]
Fix a natural number $\widetilde{n} \in \mathbb{N}$ satisfying  
$ \widetilde{n} >  \exp \left( \lambda \right)  \gamma^2  \left( C_\mathfrak{b} -1 \right)  / \left(  2 g^2 \right) $.
For any $n > \widetilde{n}$,
\begin{align*}
\Tr \left(   \widetilde{ \varrho }_{\infty} N   \left( I - P_n \right)  \right) 
+
\Tr \left(   \widetilde{ \varrho }_{\infty}  \left( I - P_n \right)  \right) 
& <
 \frac{\gamma^2  \left( C_\mathfrak{b} -1 \right)}{ g^2} 
 \frac{\left(  \frac{\gamma^2  \left( C_\mathfrak{b} -1 \right)}{ 2 g^2}  \right) ^n }{n !} 
 \\
& 
<
\frac{\gamma^2  \left( C_\mathfrak{b} -1 \right)}{ g^2} 
\frac{ \left(  \frac{\gamma^2  \left( C_\mathfrak{b} -1 \right)}{ 2 g^2}  \right)^{ \widetilde{n}} }{\widetilde{n} ! } 
\left(  \frac{\gamma^2  \left( C_\mathfrak{b} -1 \right)}{ 2 \, g^2 \, \widetilde{n} } \right)^{ \left( n - \widetilde{n}  \right) } .
\end{align*}
Using \eqref{eq:LC8} gives
\begin{align*}
&
\left\vert  
  \Tr \left(  \rho_t \left( I - P_n \right)  A  \left( I - P_n \right)   \right)
\right\vert
+
\left\vert  
  \Tr \left(  \varrho_c \left( \omega t - \theta_{\infty} \right) \left( I - P_n \right)  A  \left( I - P_n \right)   \right)
\right\vert
\\
&
<
\widetilde{K} \, K  \exp \left( - \left( n - \widetilde{n}  \right) \ln \left(  \frac{ 2 g^2 \widetilde{n} }{  \gamma^2  \left( C_\mathfrak{b} -1 \right) }  \right) \right)
+
\left( n +1 \right)  \widetilde{K} \, K \left( \Tr\left( \rho_{0} \, N  \right)  \right)   \exp \left(  - \lambda t \right) 
\end{align*}
for all  $n > \widetilde{n}$.
Now,
taking $n = t + \widetilde{n} $ in \eqref{eq:LC9} we get
\[
 \left\vert
 \Tr \left( \rho_t \,  A    \right)
  -
   \Tr \left( \varrho_c \left( \omega t - \theta_{\infty} \right) \, A  \right)
\right\vert  
 \leq
\left( t + \widetilde{n} +1 \right) 
\widetilde{K}  \, K \left( \Tr\left( \rho_{0} \, N  \right)  \right)  \hbox{\rm e}^{  - \lambda t }
+
\widetilde{K}  \, K \, \hbox{\rm e}^{  - t \ln \left(  \frac{ 2 \, g^2 \, \widetilde{n} }{  \gamma^2  \left( C_\mathfrak{b} -1 \right) }  \right) } .
\]
Since $ \widetilde{n} >  \exp \left( \lambda \right)  \gamma^2  \left( C_\mathfrak{b} -1 \right)  / \left(  2 g^2 \right) $,
$
 \left\vert
 \Tr \left( \rho_t \,  A    \right)
  -
   \Tr \left( \varrho_c \left( \omega t - \theta_{\infty} \right) \, A  \right)
\right\vert  
 \leq
\widetilde{K}  \, K \left( \Tr\left( \rho_{0} \, N  \right)  \right)  \hbox{\rm e}^{  -\frac{ \lambda}{2} t }
$ for all $ t \geq 0 $. 
\end{proof}

\subsection{Proof of Lemma \ref{lem:VecindadCL}}
\label{sec:Proof:VecindadCL}

\begin{proof}
According to (\ref{eq:DistLC}) we have that there exists $\vartheta \in \left[ 0, 2 \pi \right[$ such that
$ \mathfrak{d}_{N} \left( \varrho , \varrho_c \left( \vartheta \right) \right) < \varepsilon$.
Since $ a \, \mathcal{E} \left( \zeta \right)  =  \zeta \, \mathcal{E} \left( \zeta \right)$
for any  $\zeta \in \mathbb{C}$,
a direct computation gives
$
\Tr\left(  \varrho_c \left( \vartheta \right) \, a \right)   
=
\frac{ \gamma \sqrt{ C_\mathfrak{b} -1 } }{ \sqrt{2} \left| g \right|} 
\hbox{\rm e}^{- \mathrm{i} \vartheta} 
$,
$
\Tr\left(   \varrho_c \left( \vartheta \right) \, \sigma^{3} \right)
=
\frac{ d }{ C_\mathfrak{b} }
$,
and
$
\Tr\left(   \varrho_c \left( \vartheta \right) \sigma^{-}   \right)
=
\hbox{\rm e}^{- \mathrm{ i } \vartheta}  \frac{ \kappa \gamma }{ \sqrt{2} g \left| g \right|} \sqrt{ C_\mathfrak{b} -1 } 
$.
From Definition \ref{def_distancia} we now deduce that 
$
\left\vert  \Tr\left(   \varrho \, \sigma^{3} \right) -   \frac{ d }{ C_\mathfrak{b} }  \right\vert 
< \sqrt{2} \, \varepsilon
$,
$
\left\vert  \Tr\left(  \varrho \, \sigma^{-}  \right) 
- 
\frac{\kappa }{ g }  \frac{  \gamma \sqrt{ C_\mathfrak{b} -1 } }{ \sqrt{2}  \left| g \right|}  \hbox{\rm e}^{- \mathrm{ i } \vartheta}    \right\vert 
< \sqrt{2} \, \varepsilon 
$,
and
\begin{equation}
\label{eq:Lemma2.7.1}
\left\vert  \Tr\left(   \varrho \, a \right) -   
\frac{ \gamma \sqrt{ C_\mathfrak{b} -1 } }{ \sqrt{2} \left| g \right|} 
\hbox{\rm e}^{- \mathrm{i} \vartheta}    \right\vert 
< \sqrt{2} \, \varepsilon ,
\end{equation}
because 
$ \left\Vert \sigma^{-}  \right\Vert =  \left\Vert \sigma^{3}  \right\Vert = 1$
and 
$ \left\Vert a \, x \right\Vert^2 = \langle x , N \, x \rangle \leq \left\Vert N \, x \right\Vert^2 $ ,
$ \left\Vert a^\dagger \, x \right\Vert^2 = \langle x , (N+1) \, x \rangle  \leq  \left\Vert x \right\Vert^2 + \left\Vert N \, x \right\Vert^2 $
 for all $x \in \mathcal{D}\left( N \right) $.
Using (\ref{eq:Lemma2.7.1}) gives 
\[
\left\vert  \left\vert \Tr\left(   \varrho \, a \right)  \right\vert  -   
\frac{ \gamma \sqrt{ C_\mathfrak{b} -1 } }{ \sqrt{2} \left| g \right|} 
 \right\vert 
\leq 
\left\vert  \Tr\left(    \varrho \, a \right) -   
\frac{ \gamma \sqrt{ C_\mathfrak{b} -1 } }{ \sqrt{2} \left| g \right|} 
\hbox{\rm e}^{- \mathrm{i} \vartheta}    \right\vert 
< \sqrt{2} \, \varepsilon .
\]
As $ \Tr\left( \varrho \, a \right) \neq 0$,
$
\left\vert   \Tr\left(   \varrho \, a \right)   -   
\frac{ \gamma \sqrt{ C_\mathfrak{b} -1 } }{ \sqrt{2} \left| g \right|} \frac{ \Tr\left(   \varrho \, a \right) }{\left\vert \Tr\left(   \varrho \, a \right)  \right\vert}
 \right\vert 
 =
\left\vert  \left\vert \Tr\left(  \varrho \, a \right)  \right\vert  -   
\frac{ \gamma \sqrt{ C_\mathfrak{b} -1 } }{ \sqrt{2} \left| g \right|} 
 \right\vert 
< 
\sqrt{2} \, \varepsilon 
$,
and so  (\ref{eq:Lemma2.7.1}) yields 
$
\left\vert 
\frac{ \gamma \sqrt{ C_\mathfrak{b} -1 } }{ \sqrt{2} \left| g \right|} 
\hbox{\rm e}^{- \mathrm{i} \vartheta}  
-
 \frac{ \gamma \sqrt{ C_\mathfrak{b} -1 } }{ \sqrt{2} \left| g \right|} \frac{ \Tr\left(   \varrho \, a \right) }
 {\left\vert \Tr\left(  \varrho \, a \right)  \right\vert}
 \right\vert 
< 2 \, \sqrt{2} \, \varepsilon
$.
Applying the triangle inequality we get
\[
\left\vert  \Tr\left(   \varrho \, \sigma^{-} \right)    -   \frac{ \kappa }{ g } \frac{\gamma \sqrt{ C_\mathfrak{b} - 1} }{\sqrt{2} \left\vert g \right\vert}  
 \frac{  \Tr\left(  \varrho \, a \right) }{ \left\vert  \Tr\left(  \varrho  \, a \right)  \right\vert }\right\vert
 <
 \sqrt{2} \varepsilon  \left( 1 + 2 \, \kappa / g  \right) .
\]
On the other hand, 
applying the triangular inequality we obtain the second assertion of the lemma.
\end{proof}

\section*{Acknowledgements}
 The authors thank the two anonymous referees
 for suggestions and comments that improved the presentation.

%

\providecommand{\noopsort}[1]{}\providecommand{\singleletter}[1]{#1}%

\end{document}